\newtheorem{Definition}{Definition}
\newtheorem{Lemma}{Lemma}
\newtheorem{Theorem}{Theorem}
\begin{document}
\title{Coherence in logical quantum channels}
\author{Joseph K. Iverson and John Preskill\\ \\ \textit{Institute for Quantum Information and Matter} \\ 
\textit{and}\\
\textit{Walter Burke Institute for Theoretical Physics} \\
\textit{California Institute of Technology, Pasadena CA 91125, USA}}
\date{23 April 2020}
\maketitle
\begin{abstract}
We study the effectiveness of quantum error correction against coherent noise. Coherent errors (for example, unitary noise) can interfere constructively, so that in some cases the average infidelity of a quantum circuit subjected to coherent errors may increase quadratically with the circuit size; in contrast, when errors are incoherent (for example, depolarizing noise), the average infidelity increases at worst linearly with circuit size. We consider the performance of quantum stabilizer codes against a noise model in which a unitary rotation is applied to each qubit, where the axes and angles of rotation are nearly the same for all qubits. 
In particular, we show that for the toric code subject to such independent coherent noise, and for minimal-weight decoding, the logical channel after error correction becomes increasingly incoherent as the length of the code increases, provided the noise strength decays inversely with the code distance. 
A similar conclusion holds for weakly correlated coherent noise. Our methods can also be used for analyzing the performance of other codes and fault-tolerant protocols against coherent noise. 
However, our result does not show that the coherence of the logical channel is suppressed in the more physically relevant case where the noise strength is held constant as the code block grows, and we recount the difficulties that prevented us from extending the result to that case.
Nevertheless our work supports the idea that fault-tolerant quantum computing schemes will work effectively against coherent noise, providing encouraging news for quantum hardware builders who worry about the damaging effects of control errors and coherent interactions with the environment. 
\end{abstract}
\pagebreak
\tableofcontents
\newpage

\section{Introduction}
Although there is no rigorous proof, much evidence supports the widely held belief that an ideal noiseless quantum computer would be able to solve problems that are intractable for classical digital computers. But in the real world, quantum computers are noisy. We therefore expect that quantum error correction will be needed to overcome the noise and reliably operate a large-scale quantum computer that can solve hard problems. Fortunately, the accuracy threshold theorem for quantum computation establishes that quantum computing is scalable, assuming that the noise is neither too strong nor too strongly correlated \cite{knillLaflammeZurek98, aharanovBen-Or99, gottesman97, aliferisGottesmanPreskill06, Raussendorf07}. 

Until we try it on a real device, though, we won't know for sure whether realistic noise is sufficiently benign for quantum error correction to work effectively. A general noise channel acting on $n$ qubits is extremely complex when $n$ is large, so it will not be practical to fully characterize the noise in a complex quantum device using any feasible experimental protocol. A commonly used metric for the performance of single-qubit and two-qubit quantum gates is the ``average infidelity'' $r=1-F$, where $F$ is the fidelity of the output from the gate relative to the output of an ideal gate, averaged uniformly over all possible input states. This quantity $r$ has the great virtue that it can be feasibly measured using randomized benchmarking  \cite{EmersonBenchmark2005,KnillBenchmark2008}, but as a characterization of the noise strength it has shortcomings. Assuming an uncorrelated noise model, threshold theorems guarantee scalability if a different metric, the diamond distance $D_\diamond$, is less than a critical value. Here $D_\diamond$ denotes the deviation of the noisy gate from the ideal gate as measured by the diamond norm. For an incoherent noise channel like a Pauli channel, the diamond distance $D_\diamond$ is equal to the average infidelity $r$; in contrast, for a highly coherent channel, $D_\diamond$ scales like the square root of $r$. If we know only $r$, and have no information about the coherence of the noise, we cannot estimate $D_\diamond$ accurately, and therefore cannot easily make sound predictions about how effectively any error-correcting code will combat the noise \cite{sanders2015bounding,kueng2016comparing,wallman2015estimating}. The situation is even worse for correlated noise models. 

Our purpose in this paper is to study further how well quantum error correction performs against coherent noise models. To make our analysis manageable, we will make some simplifying assumptions. For one, we will not actually consider quantum computation, but instead will focus on the easier task of operating a quantum memory. We envision encoding a quantum state in the memory using a quantum code; after the encoding step the memory is subjected to noise, and then the quantum state is decoded. As a further simplification, we will assume that the encoding and decoding are noiseless. Therefore, the performance of the code against the noise is captured by a \textit{logical channel}, the result of composing the encoding channel, noise channel, and decoding channel.

We will be interested in what happens to a quantum state which is stored in the memory for a long time, and undergoes many rounds of error correction --- that is, we want to characterize the effect of applying the logical channel many times in succession. For this purpose, we will need to understand the coherence properties of the logical channel. If the logical channel is incoherent, then the diamond distance of the decoded state from the ideal state grows linearly with the number of channel repetitions, while for an highly coherent logical channel, it can grow quadratically. Our main conclusion is that, even if the physical noise acting on the quantum memory is highly coherent, the coherence of the logical channel becomes strongly suppressed as the block length of the quantum error-correcting code increases, assuming that the noise is sufficiently weak and sufficiently weakly correlated. 

Although we can analyze the logical channel only in a simplified setting, and only for particular code families, we believe that the lessons learned apply more broadly. We expect, for example, that randomized benchmarking applied to \textit{logical} gates will accurately characterize logical noise even when the physical noise is highly coherent, at least for large code blocks. This also suggests that for concatenated coding schemes, in which the ``physical'' qubits of a higher-level code are themselves the logical qubits of a lower-level code, the average infidelity of the lower-level code should be a good predictor for the performance of the higher-level code. 

Our main conclusion is not unanticipated \cite{aliferisGottesmanPreskill06}, as the suppression of coherence in the logical channel has an intuitive explanation. To decode, one measures the error syndrome, and then applies a recovery operation conditioned on the syndrome. For a large code, many different syndromes are possible, and only the errors which are projected onto the same syndrome value can interfere constructively, while errors projected onto different syndrome values add stochastically. The stochastic average over many syndrome sectors suppresses coherence, leaving only small residual coherent effects arising from summing coherently over errors which are projected onto a given syndrome sector. That said, carefully analyzing the residual coherence in the logical channel involves daunting combinatorics. It turns out that further cancellations occur, resulting in even stronger suppression of logical coherence than might be naively expected.

This discussion about averaging over all syndrome sectors highlights an important issue. We will consider the logical channel obtained by averaging over error syndromes, and then study the coherence of the resulting channel. One could make a case for an alternative procedure: define a metric that characterizes coherence, evaluate that metric for the logical channel conditioned on each syndrome, and then average the value of the metric over syndromes by weighting each syndrome with its probability. To argue in favor of this alternative procedure one might note that the experimentalist who executes the error correction protocol could know the syndrome she measures in each run of the protocol, and might be interested in the properties of the logical channel conditioned on that knowledge \cite{Iyer17}. Our view is that properties of logical channels conditioned on the syndrome are potentially of interest for near-term experiments using relatively small codes, particularly because it might be feasible to postselect by retaining favorable syndromes and rejecting unfavorable ones. In future experiments using larger codes, though, syndrome histories will be quite complex, and it will be impractical to make useful inferences about the logical channel conditioned on syndrome information. For long computations using large codes, properties of the logical channel averaged over syndromes will most likely provide more usable guidance regarding the features of the protected quantum computation. 

We should also note that methods have been proposed to suppress the coherence of physical noise. One such method is randomized compiling, which, under certain assumptions, can transform any single-qubit noise channel into an incoherent depolarizing channel \cite{wallman2016noise}. The assumptions include a Markovian noise model, and gate independence of the noise for the ``easy" gates in the scheme. These assumptions may hold to a good approximation for some realistic cases, but they will not hold exactly. We may then ask how the residual coherence is affected by error correction, an issue that can be addressed using the methods in this paper. Other schemes for mitigating coherent noise have been proposed in \cite{Cai2020, chamberland2017hard, chamberland2018fault, debroy2018stabilizer}. These papers focus on the strength of the logical noise, whereas we study the character of the logical noise channel, specifically its degree of coherence.

Here we investigate the coherence of the logical channel in the case where the physical noise is fully coherent unitary noise. This problem has been previously studied \cite{greenbaum2017modeling, huang2018performance, beale2018coherence}, and we discuss this related work in Section \ref{subsec:related} below. Our work improves on these past results in that we consider a family of codes with an accuracy threshold (toric codes without boundaries) and prove bounds on the logical coherence which apply in the limit of a large code block. By specializing to a particular code family, we also find better bounds on the logical coherence for finite code length. Other authors have obtained numerical results for sufficiently small codes in the case where all physical qubits are rotated about a fixed axis \cite{Gutierrez16, bravyi2017correcting, suzuki2017efficient}, including analyses of logical channels conditioned on particular error syndromes \cite{Iyer17}. We focus instead on investigating asymptotic properties for large codes, using analytic methods. Some asymptotic statements about the performance of concatenated codes were proven in \cite{fern2006generalized}.

In our analysis we make extensive use of the chi-matrix formalism for describing quantum channels. The chi matrix arises when the action of a channel on an input density operator is expanded in terms of Pauli operators (tensor products of $2\times 2$ Pauli matrices) acting on the density operator from the left and from the right. A channel can be expressed as the sum of an ``incoherent part'' in which the Pauli operators on left and right are equal, and a ``coherent part'' in which the Pauli operators on left and right are distinct. Our main task will be to infer, in the case of stabilizer codes,  how the logical chi matrix which describes the logical channel after error correction is related to the physical chi matrix which describes the noise acting on physical qubits. 

Specifically, we study the logical channel for the toric code on an $L\times L$ lattice where $L$ is large, and where error correction is carried out using minimal-weight decoding. We estimate the coherent component of the logical chi matrix up to order $L+2\zeta$ in the rotation angle $\theta$, where $\zeta$ is any $L$-independent constant, and relate this coherent component to the incoherent component of the logical channel. Our main theorem states that the strength of the coherent part of the logical channel is bounded above by strength of the incoherent part times a factor of $1/\theta$. (Here $\theta$ is the rotation angle applied to each of the physical qubits --- our result also holds for rotation angles and axes that vary somewhat from qubit to qubit.)
From this statement, we may infer that when the logical channel is applied $m$ times in succession the average infidelity grows linearly with $m$. (There is a small contribution to the infidelity that grows quadratically with $m$, but this contribution is highly suppressed by a factor that scales as $L^{-L}$.) Stated differently, our result says that after $m$ applications of the logical channel, the accumulated distance from the identity channel, as measured by the diamond norm, grows linearly with $m$, apart from a correction which is negligible for large $L$. We emphasize that to reach this conclusion we assumed that the rotation angle $\theta$ scales with the block size as $1/L$. Therefore, unfortunately, we are not able to make a definitive statement about the coherence of the logical channel in the more physically relevant case where $L$ becomes large with $\theta$ fixed; the combinatoric task required exceeded our ability.

A related conclusion holds for a broad class of correlated noise models. We provide a detailed analysis of correlated noise for the simpler case of the quantum repetition code, under the assumption that the noise Hamiltonian commutes with the Pauli operator $X$ acting on each qubit, so that the repetition code provides effective protection against the noise model. In a model in which the rotations acting on pairs of qubits are strongly correlated, we find as expected that the correlations significantly enhance the probability of an uncorrectable logical error. However, the correlations enhance the coherent and incoherent parts of the logical chi matrix by comparable factors. Therefore, our conclusion that the coherence of the logical channel is heavily suppressed in the limit of large code length continues to apply despite the strong pairwise correlations in the noise. 

\subsection{Summary of the paper}\label{subsec:summary}
The rest of this paper is organized as follows. In Section \ref{subsec:overview} we present an overview of the proof of our main theorem, and in Section \ref{subsec:related} we compare our results to related work by previous authors. Section \ref{sec:channel-parameters} is a self-contained review of quantum channels, emphasizing metrics for characterizing coherence and relations among them. In particular, we prove a relationship between the chi matrix and the Pauli transfer matrix which had not been previously discussed to our knowledge. In Section \ref{sec:rep-code} we compute the logical channel for the repetition code assuming independent unitary noise, finding that the coherence of the logical channel becomes strongly suppressed as the code length increases.  Then in Section \ref{sec:rep-code-revisted} we analyze the repetition code again, this time using the chi-matrix formalism; we find that this analysis can be extended more easily to other stabilizer codes and other noise models. We consider the performance of the repetition code against two-body correlated noise in Section \ref{sec: Correlated unitary noise}, again concluding that the logical noise becomes incoherent in the limit of large code length. 

The heart of the paper is Section \ref{sec:toric-code}, where we build on lessons learned from the analysis of the repetition code to prove our main result, which asserts that, for an independent unitary noise model, the coherence of the logical channel is strongly suppressed by the toric code when the code block is large, assuming that the noise strength scales like $1/L$. The proof mainly consists of a combinatoric analysis which allows us to estimate the coherent and incoherent components of the logical chi matrix. We have divided the proof into a series of lemmas; figure \ref{fig: Proof guide} indicates how these lemmas fit together to build our main theorem, and Section \ref{subsec:overview} provides further guidance concerning the structure of the proof. 
Furthermore, our analysis of two-body correlated noise in the repetition code can be extended to the toric code assuming the noise is sufficiently weak for error correction to succeed with high probability; we therefore conclude that the coherence of the logical channel is highly suppressed even in the case of strongly correlated two-body noise. 

Section \ref{sec:conclusions} contains our conclusions. There we recount some of the obstacles that prevented us from extending our main theorem to the more physically relevant case where the noise strength is a constant independent of $L$. 

\begin{figure}
    \centering
    \includegraphics[height = 20cm]{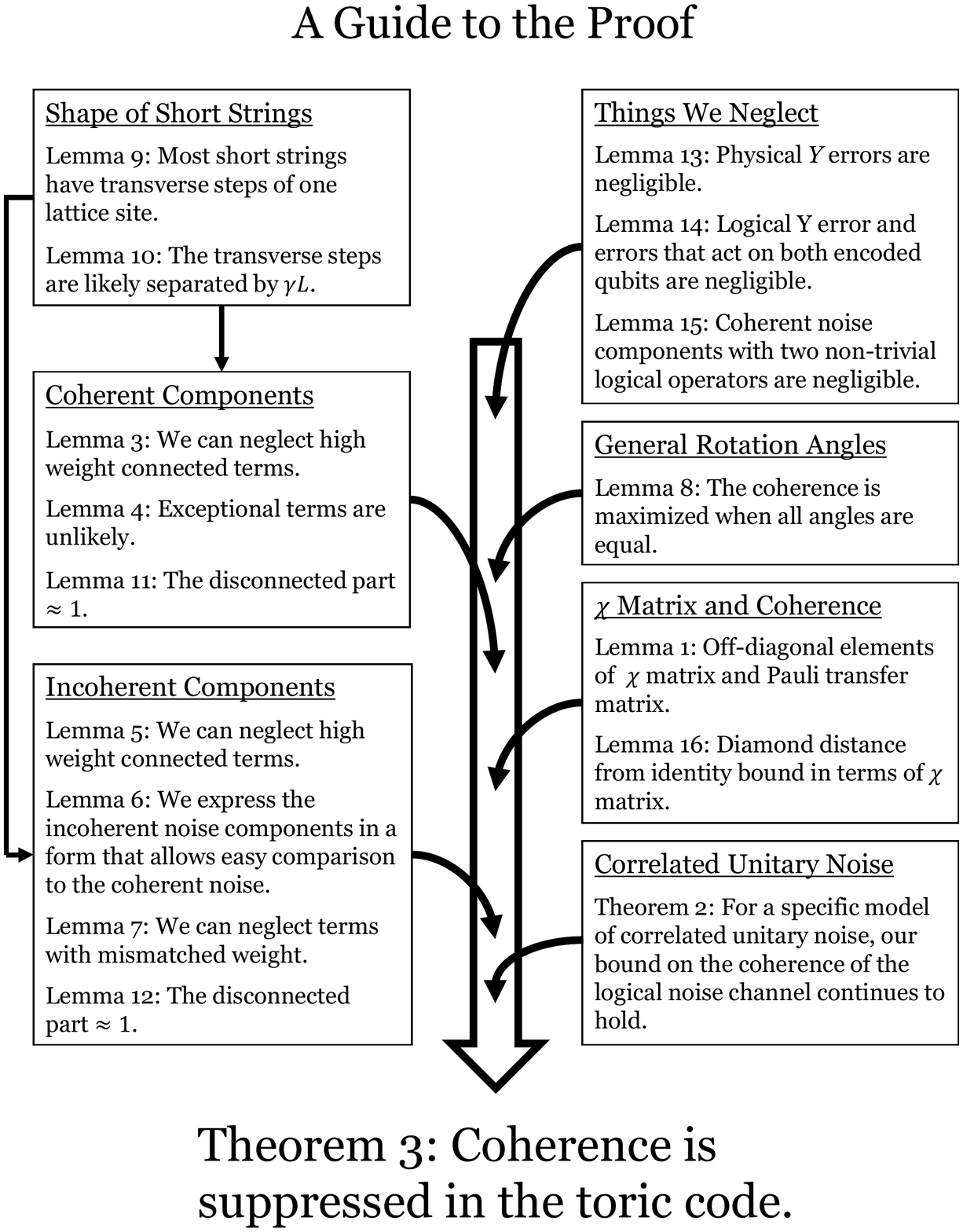}
    \caption{}
    \label{fig: Proof guide}
\end{figure}

\vskip .5cm

\subsection{Overview of the proof of Theorem \ref{theorem: Big Theorem}}\label{subsec:overview}

Here we provide some additional guidance regarding how the different parts of this paper fit together to build our main result, Theorem \ref{theorem: Big Theorem} in Section \ref{sec: Main Theorem}. The structure of our argument is also summarized in figure \ref{fig: Proof guide}.

As already noted, we study the logical channel acting on the code's protected qubits by deriving the chi matrix of this logical channel from the chi matrix of the  noise channel acting on the physical qubits. To interpret the meaning of the logical chi matrix, we find it convenient to relate the chi matrix to another formalism for describing quantum channels --- the Pauli transfer matrix. We explain some properties of the Pauli transfer matrix $N$ of a channel $\mathcal{N}$ in Section \ref{sec:channel-parameters}, relating $N$ to the diamond distance $D_\diamond(\mathcal{N})$ in equation (\ref{eq:diamond-bound-pauli-transfer}), and to the average infidelity $r_m$ of the $m$-times repeated channel $(\mathcal{N})^m$ in equations (\ref{eq:coherence-off-diagonal}) and (\ref{eq:coherence-angle-bound}). Using Lemma \ref{lemma:off-diagonal}, these expressions for the  diamond distance and the average infidelity in terms of the Pauli transfer matrix can be restated in terms of the chi matrix.

In Section \ref{sec:rep-code} we study the performance of the quantum repetition code against coherent noise, and prove Theorem \ref{Theorem: rep-code-epsilon-delta} using explicit computation of the logical channel combined with results derived in Section \ref{sec:channel-parameters}. This result shows that the logical channel is highly incoherent when the code block is large. An alternative proof of Theorem \ref{Theorem: rep-code-epsilon-delta}, making essential use of the chi matrix, is presented in Section \ref{sec:rep-code-revisted}, where we develop the key tools needed for the proof of Theorem \ref{theorem: Big Theorem}. We also prove Lemma \ref{lemma:minimization}, which is used to show that, for independent unitary noise acting on the physical qubits, the coherence of the logical channel for the repetition code is maximized when all qubits are rotated by the same angle. A similar idea can be adapted for analyzing the coherence of the logical channel for the toric code.

In Section \ref{sec: Correlated unitary noise}, we extend the analysis of the repetition code to the case of two-body correlated coherent noise, culminating in the proof of Theorem \ref{Theorem: Correlated noise}, showing that the coherence of the logical channel is heavily suppressed in this case as well. The proof is a computation of the logical channel for this case, achieved by a detailed combinatoric analysis. As expected, the noise correlations enhance the probability of a decoding error, but it turns out that both the coherent and incoherent parts of the logical channel are enhanced, so that the relationship between the two is not changed much compared to the case of uncorrelated coherent noise. The same reasoning used to prove Theorem \ref{Theorem: Correlated noise} can also be applied to the toric code to show that, in that case as well, two-body correlations in the noise do not enhance the coherence of the logical channel. 

Our analysis of the performance of the toric code against coherent noise, culminating in the proof of Theorem \ref{theorem: Big Theorem}, is in Section \ref{sec:toric-code}. To prove the theorem we compute first the coherent part of the logical channel, and then the incoherent part, after which we can make an inference about how the two are related. For this purpose, upper bounds on the logical noise strength would not suffice. Instead, we compute both the coherent and incoherent part of the logical channel up to an error which we show is small if the physical noise is sufficiently weak.

Our arguments in Section \ref{sec:toric-code} make use of observations, discussed in Section \ref{sec:rep-code-revisted}, which apply to any stabilizer code. We may assign a ``standard error'' $E_s$ to each error syndrome $s$, and define a decoder which returns the damaged state to the code space by applying $E_s^\dagger$ when the syndrome is measured to be $s$. This $E_s$ is a Pauli operator acting on the code block. Furthermore, each logical Pauli operator $\tilde{L}_a$ acting on the code may by convention be associated with a particular standard physical Pauli operator $L_a$ --- the choice of $L_a$ is not unique, and therefore must be fixed by convention, because we have the freedom to multiply $L_a$ by an element of the code's stabilizer group without changing its logical action. Once the standard error for each syndrome, and the physical Pauli operator corresponding to each logical Pauli operator, are determined, any physical Pauli operator acting on the code block has a unique decomposition of the form (up to a phase factor) $\sigma(s,a,x)= E_s L_a G_x$, where $E_s$ is a standard error, $L_a$ is a standard logical Pauli operator, and $G_x$ is an element of the code stabilizer.  

In the chi matrix formalism, the result $\mathcal{N}(\rho)$ of applying noisy channel $\mathcal{N}$ to density operator $\rho$ is expanded as a sum of terms of the form $\sigma(s,a,x)\, \rho \, \sigma(s',a',x')^\dagger$. As explained in Section \ref{sec:recovery-chi}, if $\rho$ is a logical density operator, then a term of this form is annihilated by the error recovery operation for $s\ne s'$, and for $s=s'$ is mapped to $L_a\rho L_{a'}^\dagger$, up to a phase. (That phase is important, and we will need to keep track of it carefully.) Recovery is successful if $L_a$ and $L_{a'}$ are both logical identity operators. The terms in the logical channel with $L_a = L_{a'}$ are said to be incoherent, and the terms with $L_a \ne L_{a'}$ are said to be coherent. 

The key point is that we have a conceptually simple algorithm for computing the chi matrix for the logical channel, and for identifying its coherent and incoherent parts. To find the coefficient of $L_a\rho L_{a'}^\dagger$ in the logical channel, we just need to sum up the coefficients of all terms in the physical chi matrix of the form $\sigma(s,a,x) \, \rho \, \sigma(s,a',x')^\dagger$, being mindful of phase factors, for all possible values of $s, x, x'$. Unfortunately, in general this algorithm is too complex to carry out in practice, but under suitable conditions we can estimate logical chi matrix with sufficient accuracy for our purposes.

For the case of the toric code, we can begin by noting some helpful simplifications. We choose standard errors defined by minimal-weight decoding. Because of the code's CSS structure, we can analyze the logical $X$ and logical $Z$ errors separately, and in fact a single analysis applies to errors of both types.  We don't need to worry about logical $Y$ errors or about logical errors acting nontrivially on more than one of the code's logical qubits (Lemma \ref{Lemma: Other logical maps} in Appendix \ref{app: X1X2 or Y1 terms}) because these are so highly suppressed; the same goes for coherent errors in which both $L_a$ and $L_{a'}$ are nontrivial (Lemma \ref{Lemma: More general coherent terms} in Appendix \ref{app: a and b nontrivial}).  We can assume that the coherent noise rotates physical qubits about an axis in the $X{-}Z$ plane (Lemma \ref{lemma: Y rotations} in Appendix \ref{app: Physical Y Errors}); otherwise the logical noise would be even less coherent. We are left with the task of estimating two nontrivial elements of the logical chi matrix --- the coherent term $\tilde Z_1\rho \tilde I$ term and the incoherent term $\tilde Z_1\rho \tilde Z_1$, where $\tilde Z_1$ denotes the logical $Z$ operator acting on one of the code's two encoded qubits. In the proof of Theorem \ref{theorem: Big Theorem}, we estimate both quantities using a series of approximations, and verify that these approximations are trustworthy when the physical noise is sufficiently weak. 

First consider the coherent part of the logical chi matrix. We need to sum up all the terms in the physical chi matrix which contribute to $\tilde Z_1\rho \tilde I$ after the action of the decoding map. Each such term has the form $E_s Z_1 G_x \rho \, G_y^\dagger E_s^\dagger$, where $E_s$ denotes a standard correctable Pauli error, $G_x$, $G_y$ are Pauli operators in the code stabilizer, and $Z_1$ is the standard physical Pauli operator whose logical action matches $\tilde Z_1$. For the purpose of our computation, we may assume that all the Pauli operators are of the $Z$ type --- that is, each applies $Z$ to a subset of the qubits and applies $I$ to the complementary set. For the purpose of enumerating all such contributions, it is convenient to note that the product $G_y^\dagger Z_1 G_x$ of the Pauli operators acting on the density operator from the right and from the left is a logical operator, one commuting with the code stabilizer. This logical operator can be decomposed into a connected path that winds once around on the periodically identified square lattice --- what we call a ``logical string'' --- and a collection of homologically trivial closed loops on the lattice --- what we call the ``disconnected part'' of the logical Pauli operator. 

We can therefore enumerate all the contributions to $\tilde Z_1\rho \tilde I$ by this procedure: (1) Consider all possible logical strings. (2) For each logical string, consider all possible ``partitions'' of that string into an uncorrectable error acting from the left and a correctable error acting from the right. (3) For each logical string and partition, consider all possible choices for the disconnected part. We compute $\tilde Z_1\rho \tilde I$ by summing all these contributions. Though we can't perform this sum exactly, we can approximate the sum and estimate the resulting errors. 

It is for the purpose of approximating this sum that we need the assumption that the rotation angle $\theta$ scales like $1/L$, where $L$ is the linear system size. Under this assumption, we show that we make a small error by truncating the sum to include only relatively short logical strings (Lemma \ref{Lemma: Coherent path counting} in Section \ref{sec: Path Counting}) which have a typical shape (Lemmas \ref{Lemma: shape of strings A} and \ref{Lemma: shape of strings B} in Appendix \ref{app: Shape of string}). Summing over all partitions of a fixed logical string is similar to the computation we performed for the repetition code, but with a few new subtleties. Specifically, there are some ``exceptional'' partitions such that the uncorrectable error acting from the left actually has lower weight than the correctable error acting from the right. Fortunately, we can show that we make a small error by ignoring this effect (Lemma \ref{Lemma: exceptional terms} in Section \ref{sec: sum over partitions}), simplifying the sum over partitions. 

For a fixed connected logical string and partition of that string, we need to sum over disconnected closed loops and partitions of those loops. Performing this sum is almost equivalent to adding up all possible error patterns weighted by their probabilities, which trivially sums to unity. The only complication is that, for some closed loops that closely approach the logical string, and for some special partitions, the additional loop can flip how the error is decoded. It turns out, though, that we make only a small error by ignoring this effect (Lemma \ref{lemma: added error exceptional terms} in Appendix \ref{app: Disconnected errors}). 

With all the above simplifications in hand, we can estimate the coherent part of the logical chi matrix. In particular, the sum over partitions for a fixed logical string can be evaluated much as in the proof of Theorem \ref{Theorem: rep-code-epsilon-delta} for the repetition code. It then remains to estimate the incoherent part and compare the two. 

In the incoherent part, $\tilde Z_1$ acts from both the left and the right; therefore, there are two logical strings to keep track of, one on each side. These two logical strings have segments in common, determined by the intersection of the string with the standard error, but are free to fluctuate independently away from those segments (figure \ref{fig: connected incoherent example} of Section \ref{section: Incoherent Sum}). To approximate the sum over contributions from these logical strings to the incoherent part of the logical chi matrix, we may truncate the sum as in the computation of the coherent part, limiting our attention to relatively short strings with a typical shape (Lemma \ref{Lemma: High Weigh Incoherent Terms} in Section \ref{section: Incoherent Sum} and Lemma \ref{Lemma: Incoherent string sum bound} in Section \ref{sec: Incoherent sum over strings}), and ignoring complications arising from the disconnected part of the error (Lemma \ref{lemma: incoherent disconnected part} in Appendix \ref{app: incoherent disconnected part}). Furthermore, we may also ignore contributions with ``mismatched weight,'' confining our attention to minimal-weight uncorrectable errors on the logical string acting from both the left and the right (Lemma \ref{Lemma: Mismatched weights} in Section \ref{sec: Mismatched weights}). With these approximations, the incoherent part of the logical chi matrix may be expressed in a form which can be conveniently compared with the coherent part. 

As for the repetition code, we can justify considering unitary noise such that all physical qubits are rotated by the same angle --- rotating different qubits by different angles only makes the logical channel less coherent (Lemma \ref{Lemma: general rotation angles} in Section \ref{sec: more general rotation angles}). For such a coherent noise model with uniform rotation angles, we compare the coherent and incoherent parts of the logical chi matrix, proving Theorem \ref{theorem: Big Theorem} (Section  \ref{sec: Main Theorem}). Using the findings from Section \ref{sec:channel-parameters}, these results can be translated into statements about the diamond distance of the logical channel and about the average infidelity of the $m$-times repeated logical channel. We also observe (Section \ref{sec: toric code with correlations}), that our analysis of the performance of the repetition code against two-body correlated coherent noise (Theorem \ref{Theorem: Correlated noise}) is applicable with few modifications to the toric code as well.

Our conclusion that the coherence of the logical channel is heavily suppressed applies in the limit of large code size $L$, and under the assumption that the physical qubits are rotated by an angle $\theta$ scaling like $1/L$. In Section \ref{sec:conclusions} we discuss the difficulties that have prevented us from extending the result to larger values of $\theta$.

\subsection{Related work} \label{subsec:related}

The performance of stabilizer codes against fully coherent unitary noise has been previously studied in  \cite{greenbaum2017modeling, huang2018performance, beale2018coherence}. 
Huang, Doherty, and Flammia \cite{huang2018performance} derived an inequality which relates the diamond distance $D_\diamond$ of the logical channel from the identity to the rotation angle $\theta$ for independent unitary noise, finding
\begin{equation}\label{eq:flammia}
    D_\diamond \le c_{n,k} |\sin\theta|^d;
\end{equation}
here $d$ is the code distance, $n$ is the code length, and $k$ is the number of encoded qubits. Their result applies to any stabilizer code, but $c_{n,k}$ grows exponentially with $n$ (it is bounded above by $2^{3n+k+1}$), so their result is not very informative for large codes. In contrast, we derive a bound relating the coherent and incoherent components of the logical channel which does not involve any exponentially large factors. We achieve this improved result by specializing to the toric code, and by assuming $\sin\theta < 1/L$. Furthermore, to obtain equation (\ref{eq:flammia}) the authors of \cite{huang2018performance} bounded a sum of contributions to the logical channel using the triangle inequality, hence obtaining a bound that would apply even if all the terms in the sum had a common phase. Instead, we sum the contributions with the appropriate phases; the resulting cancellations among terms yield a much smaller result than we would have obtained by merely invoking the triangle inequality. We are able to carry out this more detailed analysis because our assumption $\sin\theta < 1/L$ allows us to restrict our attention to short logical strings, for which approximating the sum becomes a manageable task. 

Beale, Wallman, Gutti\'errez, Brown, and Laflamme \cite{beale2018coherence} also studied the performance of stabilizer codes against independent unitary noise, and they concluded that the coherence of the logical channel is suppressed. For a fixed code length, they study the limit of small rotation angle $\theta$. If the logical channel is expanded in powers of $\theta$, then for sufficiently small $\theta$ the leading term in this expansion dominates, and they draw their conclusions by analyzing this leading term. In effect, they (like us) investigate the case in which the noise strength deceases as the code length increases, but their assumption about the noise strength is much stronger than ours. We (unlike them) include all corrections to the logical channel higher order in $\theta$ that are needed to accurately approximate the logical channels for $\sin\theta < 1/L$, albeit only for the special case of the toric code. 


Bravyi, Engelbrecht, K\"onig, and Peard \cite{bravyi2017correcting} have studied the performance of the toric code against independent unitary noise numerically, using a clever mapping from qubits to Majorana fermions, for code distance up to $d=37$, and they found that the coherence of the logical channel becomes negligible as the code length increases, provided that the rotation angle $\theta$ is smaller than a nonzero constant threshold value $\theta_0$. Their numerical method applies to a noise model in which all qubits are rotated about the $Z$ axis, which according to our analysis is the worst case that maximizes the coherence of the logical channel. The numerical results support a value of $\theta_0$ greater than $0.25$ and less than $0.32$, while for the largest code sizes they consider our analytic results apply only for $\theta$ less than about $0.027$. They characterize the coherence of the logical channel by sampling from the distribution governing the logical rotation angle $\theta_{\textrm{logical}}$  conditioned on the measured error syndrome, finding that this distribution becomes strongly peaked around $\theta_{\textrm{logical}}=0$ for large code length when $\theta_{\textrm{physical}}$ is smaller than $\theta_0$. They also consider, as we do, the logical channel averaged over syndromes, and show that the ``twirled'' logical channel has an error probability close to the error probability of the untwirled logical channel for large code length, a further indication of suppressed logical coherence. Their numerical findings appear to be at least notionally consistent with our analytic results, though it is difficult to make a quantitative comparison because our formulas are accurate only for asymptotically large $L$ and for $L\sin\theta$ sufficiently small compared to 1. 

\section{Channel parameters}\label{sec:channel-parameters}
\subsection{Pauli transfer matrix}
We will use the \textit{Pauli transfer matrix} representation to describe channels acting on $n$ qubits. For this purpose we expand the density operator $\rho$ in the Pauli operator basis $\{\sigma^i\}$:
\begin{equation}
\rho = \sum_{i = 0}^{d^2-1} \rho_i \sigma^i,
\end{equation}
where
\begin{equation}
\mathrm{Tr} \left( \sigma^i \sigma^j\right) = \frac{1}{d} \delta^{ij},
\end{equation}
and $\sigma^0 = (id)/d$. Here $d=2^n$ is the Hilbert-space dimension, and $id$ denotes the $d\times d$ identity matrix. Note that $\mathrm{Tr}(\rho) = \rho_0$. A linear map $\mathcal{N}$ acting on density operators defines a $d^2\times d^2$ matrix (the Pauli transfer matrix associated with $\mathcal{N}$) according to
\begin{equation}
\mathcal{N}(\rho) = \sum_{i,j} \left(N_{ij} \rho_j\right) \sigma^i.
\end{equation}
This matrix is real if $\mathcal{N}$ maps Hermitian operators to Hermitian operators. If the map $\mathcal{N}$ is trace preserving, then $\sum_i N_{0 i} \rho_i=\rho_0$; hence $N_{0i} = \delta_{0i}$. If the map $\mathcal{N}$ is unital (that is, $\mathcal{N}(id) = id$), then $\sum_i N_{ij} \delta_{j0} = \delta_{i0}$; hence $N_{i0} = \delta_{i0}$. Thus the matrix representing the map $\mathcal{N}$ may be expressed as 
\begin{equation}
N = \left(
\begin{array}{cccc}
1 & 0 & 0 & \cdots \\
\\
N_n & & N_u & \\
\\
\end{array} \right).
\end{equation}
We say that the $(d^2-1) \times (d^2-1)$ matrix $N_u$ is the unital part of $\mathcal{N}$ and that the length-$(d^2-1)$ vector $N_n$ is its nonunital part. Altogether the trace-preserving map $\mathcal{N}$ is specified by $d^2(d^2-1)$ parameters.

For a unitary map $\mathcal{N}(\rho) = U\rho U^\dagger$, we have $N_n=0$ and (for $i\ne 0$)
\begin{equation}
U\sigma^i U^\dagger = \sum_{j=1}^{d^2-1} (N_u)_{ji} \sigma^j,
\end{equation}
where
\begin{equation}
d \,\mathrm{Tr}\left(U\sigma^iU^\dagger U \sigma^kU^\dagger\right)=\delta^{ik} = d \sum_{j,l\ne 0}(N_u)_{ji}(N_u)_{lk} \mathrm{Tr}\left(\sigma^j\sigma^l\right) =\sum_{j\ne 0} (N_u)_{ji}(N_u)_{jk};
\end{equation}
hence $N_u$ is an orthogonal matrix. 

The matrix representing $\mathcal{N}$ is diagonal if and only if the map is a convex sum of Pauli operators
\begin{equation}
\mathcal{N}(\rho) = \sum_i p_i \sigma^i \rho \sigma^i,
\end{equation}
in which case the diagonal entries are
\begin{equation}
N_{jj} = \sum_i p_i \xi_{ij},
\end{equation}
where $\sigma^i\sigma^j = \xi_{ij}\sigma^j\sigma^i$; that is, $\xi_{ij}$ is the sign $\pm 1$ determined by whether the Pauli operators $\sigma^i$ and $\sigma^j$ commute or anticommute. 

\subsection{Average infidelity}
The \textit{fidelity} $F$ of a channel $\mathcal{N}$ acting on a pure state $|\psi\rangle$ is defined by
\begin{equation}
F = \langle \psi| \mathcal{N}(\rho)|\psi\rangle,
\end{equation}
and $1-F$ is called the \textit{infidelity}. The \textit{average infidelity} $r$ of $\mathcal{N}$ is 
\begin{equation}
r =1-\int dU \,\mathrm{Tr}\left[ U\rho U^\dagger \mathcal{N}(U\rho U^\dagger)\right],
\end{equation}
where the integral is with respect to the normalized invariant Haar measure on the unitary group, and $\rho$ is any pure state. Equivalently, $r$ is the infidelity of the averaged channel
\begin{equation}
\bar {\mathcal{N}}(\rho) = \int dU\, U^\dagger \mathcal{N}(U\rho U^\dagger) U.
\end{equation}
We may just as well define $r$ as the infidelity of $\mathcal{N}$ averaged over a unitary 2-design. Hence $r$ can be measured in randomized benchmarking experiments, in which $U$ is chosen by sampling uniformly from the Clifford group, which is a unitary 2-design. 

The $d\times d$ unitary matrix $U$ defines an orthogonal $(d^2-1)\times (d^2 -1)$ matrix $N_u=O$ according to
\begin{equation}
U\sigma^i U^\dagger = \sum_{j=1}^{d^2-1} O_{ji} \sigma^j, \quad U^\dagger\sigma^i U = \sum_{j=1}^{d^2-1} O^T_{ji} \sigma^j,
\end{equation}
where $O^T$ denotes the transpose of $O$;
therefore
\begin{equation}
U^\dagger \mathcal{N}(U \sigma^i U^\dagger) U = \sum_{j=1}^{d^2-1} \left(O^T N_u O\right)_{ji}\sigma^j; \quad
U^\dagger \mathcal{N}(U \sigma^0 U^\dagger) U = \sigma^0 + \sum_{i=1}^{d^2 -1}(O^T N_n)_i \sigma^i .
\end{equation}
The uniform average of $U$ over the unitary group becomes a uniform average of $O$ over the orthogonal group. The nonunital part of $\mathcal{N}$ averages to zero, and the average of the unital part can be evaluated using
\begin{equation}
\int dO\, O^T_{ij} O_{kl} = \frac{1}{d^2-1}\delta_{jk}\delta_{il},
\end{equation}
which yields
\begin{equation}
\left (\bar N_u\right)_{ij} = \frac{\mathrm{Tr} (N_u) }{d^2-1} \delta_{ij}.
\end{equation}
Hence, the averaged channel is a completely depolarizing Pauli channel of the form
\begin{equation}
\bar {\mathcal{N}}(\rho) = p \rho +(1 - p)\left( \frac{id}{d}\right),
\end{equation}
where 
\begin{equation}
p = \frac{1}{d^2-1} \mathrm{Tr}(N_u).
\end{equation}
Note that if this averaged channel is applied $m$ times in succession, we obtain
\begin{equation}
\bar {\mathcal{N}}^m(\rho) = p^m \rho + (1-p^m) \left( \frac{id}{d}\right);
\end{equation} 
thus $p$ is called the \textit{benchmarking parameter} because it determines the rate of exponential decay of fidelity in benchmarking experiments. The average infidelity $r$ is given by
\begin{equation}
r = 1 - \langle \psi | \bar {\mathcal{N}}(|\psi\rangle\langle \psi |) |\psi\rangle = 1- \left(p + \frac{1-p}{d} \right)= \frac{d-1}{d} (1-p)
= \frac{1}{d(d+1)} \mathrm{Tr}( I_{d^2-1} - N_u)
\end{equation}
for any pure state $|\psi\rangle$. Here $I_{d^2-1}$ denotes the $(d^2-1) \times (d^2 -1)$ identity matrix. Because $N_{00} =1$, we may also express the infidelity as
\begin{equation}
r = \frac{1}{d(d+1)}\mathrm{Tr} \left( I_{d^2} - N \right),
\end{equation}
where $I_{d^2}$ denotes the $d^2 \times d^2$ identity. 

\subsection{Examples}

\subsubsection{Depolarizing channel}
We have seen that if $\mathcal{N}_p$ is the depolarizing channel with benchmarking parameter $p$, then $(\mathcal{N}_p)^m = \mathcal{N}_{p^m}$. Using the relation $r = \frac{d-1}{d}(1- p)$, we can express the infidelity $r_m$ of $(\mathcal{N}_p)^m$ in terms of the infidelity $r$ of $\mathcal{N}_p$, finding
\begin{equation}
r_m = \frac{d-1}{d} (1 - p^m) = mr - \frac{d}{2(d-1)} m(m-1) r^2 + \mathcal{O}(m^3r^3).
\end{equation}
If $mr$ is small, the infidelity accumulates linearly with $m$, the number of times the channel is applied. A similar remark applies to more general Pauli channels. 

We say that a channel with this property is \textit{incoherent}. The interpretation is that (up to a constant factor), the infidelity $r$ may be regarded as a \textit{probability} of error. If the channel is applied $m$ times, where $mr$ is small, any one of the $m$ instances of the channel could be faulty, so that the total probability of error is $mr$ + higher-order terms. 

\subsubsection{Qubit rotation}

In contrast, consider the case of a unitary rotation of a single qubit about the $X$-axis 
\begin{equation}
U^X(\theta) = \exp\left( - i \frac{\theta}{2} \sigma^X\right)
\end{equation}
which rotates the Bloch sphere by $\theta$. For this channel the Pauli transfer matrix is
\begin{equation}\label{eq:qubit-rotation}
N(\theta) = \left(
\begin{array}{cccc}
1 & 0 & 0 & 0 \\
0 & 1 & 0 & 0 \\
0 & 0 & \cos\theta & \sin\theta \\
0 & 0 & -\sin\theta & \cos\theta \\
\end{array} \right);
\end{equation}
therefore, the infidelity is 
\begin{equation}
r = \frac{1}{6} \,\mathrm{Tr}\left(I - N_u\right) = \frac{1}{3} \left(1 - \cos\theta\right) = \frac{1}{6}\, \theta^2 - \frac{1}{72} \theta^4 + \mathcal{O}(\theta^6).
\end{equation}
Applying this channel $m$ times, we obtain $N(\theta)^m= N(m \theta)$, a rotation by an angle $m$ times larger. Therefore, 
\begin{equation} 
r_m = \frac{1}{3} ( 1 - \cos m\theta ) = m^2 r - \frac{1}{2} m^2 (m^2-1)r^2 +\mathcal{O}(m^6 r^3).
\end{equation}
Here, for $m^2 r$ small, the infidelity accumulates quadratically with $m$; it is the rotation angle, rather than the error probability, that increases linearly. We say that a channel like this one, for which the infidelity increases faster than linearly with $m$, is \textit{coherent}.

\subsubsection{Rotation/Dephasing channels}
The distinction between a coherent and incoherent channel is not always clearcut, and we will need measures that quantify the degree of coherence. As an example, consider the case where a qubit either dephases in the $X$-basis (with probability $q_D$) or is rotated by angle $\theta$ about the $X$-axis (with probability $q_R$):
\begin{equation}
\mathcal{N}(\rho) = (1- q_D - q_R) \rho + q_D \sigma^X \rho \sigma^X + q_R U^X(\theta) \rho U^X(\theta)^\dagger.
\end{equation}
The Pauli transfer matrix is 
\begin{equation}\label{eq:Pauli-transfer-N}
N= \left(
\begin{array}{cccc}
I & 0 \\
0 & M \\
\end{array} \right),
\end{equation}
where $I$ is the $2\times 2$ identity, and $M$ is the $2\times 2$ matrix
\begin{equation}\label{eq:M-dephase-rotate}
M= \left(
\begin{array}{cccc}
1- \epsilon & \delta \\
-\delta & 1 -\epsilon\\
\end{array} \right),
\end{equation}
with
\begin{equation}
\epsilon = 2 q_D +q_R(1-\cos\theta), \quad \delta = q_R \sin \theta.
\end{equation}
The infidelity is 
\begin{equation}\label{eq:rm-dephase-rotate}
r = \frac{1}{6} \, \mathrm{Tr}\left( I - M\right) = \frac{1}{3}\, \epsilon = \frac{2}{3}\, q_D + \frac{1}{3}\, q_R(1 - \cos\theta).
\end{equation}

The eigenvalues of $M$ are
\begin{equation}
\lambda_{\pm} = 1-\epsilon \pm i \delta, 
\end{equation}
and therefore the infidelity of $\mathcal{N}^m$ is
\begin{align}\label{eq:Nm-epsilon-delta-bound}
r_m = \frac{1}{6} \,\mathrm{Tr}\left( I - M^m\right) &= \frac{1}{6}\left[ 2 - (1 - \epsilon + i\delta)^m - (1 - \epsilon - i \delta)^m\right]\notag\\
&= \frac{1}{3}\, m\epsilon - \frac{1}{6} m (m-1) \left( \epsilon^2 - \delta^2 \right) + \mathcal{O}(m^3\epsilon^3,m^3 \epsilon\delta^2, m^4\delta^4).
\end{align}
Here the degree of coherence depends on the relative value of $\epsilon$ and $\delta$. In the case of a unitary rotation, we have $\epsilon = \mathcal{O}(\delta^2)$, which means that the term growing quadratically with $m$ can dominate. On the other hand, for $\epsilon \ge \delta$, there is no quadratically growing term at all.  

A generalization of this channel will be useful in Section \ref{sec:rep-code}. Instead of a single rotation by $\theta$ occurring with probability $q_R$, we may consider an ensemble of possible rotations, where a rotation by $\theta_a$ occurs with probability $q_a$. In that case $r_m$ is still given by equation (\ref{eq:Nm-epsilon-delta-bound}), but now
\begin{equation}
\epsilon = 2 q_D +\sum_a q_a(1-\cos\theta_a), \quad \delta = \sum_a q_a \sin \theta_a.
\end{equation}

\subsection{Unitarity and the coherence angle}
We have seen that $N_u$ is an orthogonal matrix if (and only if) the channel $\mathcal{N}$ is unitary. Hence a deviation from orthogonality of $N_u$ indicates a deviation from unitarity of $\mathcal{N}$. With that in mind, following \cite{wallman2015estimating} we define the \textit{unitarity}  $u(\mathcal{N})$ of the channel $\mathcal{N}$ as
\begin{equation}
u(\mathcal{N}) = \frac{1}{d^2-1} \, \mathrm{Tr}\left( N_u^T N_u\right),
\end{equation}
which is 1 for unitary channels and strictly less than 1 for nonunitary channels. For a fixed value of the infidelity $r$, the unitarity achieves its minimum for the depolarizing channel \cite{wallman2014randomized}, where 
\begin{equation}
u(\mathcal{N}) = \frac{1}{d^2-1} \left(\mathrm{Tr}\, N_u^2 \right)= p^2 = \left( 1- \frac{dr}{d-1} \right)^2.
\end{equation}

The unitarity $u$ and the benchmarking parameter $p$ together provide a useful characterization of the coherence of a channel. We will be primarily interested in the case where the infidelity $r$ is small, so that the diagonal elements $\{N_{ii}\}$ of the Pauli transfer matrix are close to one, and it makes sense to expand in the small quantity $1 - N_{ii}$. Writing 
\begin{equation}
\left(N_u\right)_{ii}^2 = \left(1 - \left(1-\left(N_u\right)_{ii}\right)\right)^2 = 1 - 2\left(1-\left(N_u\right)_{ii}\right) + \left(1-\left(N_u\right)_{ii}\right)^2,
\end{equation}
we see that
\begin{equation}\label{eq:u-ready-to-expand}
u(\mathcal{N}) =  \frac{1}{d^2-1} \sum_{i,j} \left(N_u\right)_{ij}^2 = 1- 2(1-p) +\frac{1}{d^2-1}\sum_{i,j | i \neq j} \left(N_u\right)_{ij}^2 + \frac{1}{d^2-1} \sum_i\left(1-\left(N_u\right)_{ii}\right)^2.
\end{equation}
Expanding the square root of $u$ we find
\begin{equation}
\sqrt{u(\mathcal{N})}= p +\frac{1}{2(d^2-1)} \sum_{i,j | i \neq j} \left(N_u\right)_{ij}^2 + \cdots ,
\end{equation}
where the ellipsis indicates terms that are fourth order in the off-diagonal entries $\left(N_u\right)_{ij}$ and terms that are quadratic order in $\left(1- \left(N_u\right)_{ii}\right)$.

The \textit{coherence angle} $\Theta$ is defined as 
\begin{equation}
\Theta = \arccos\left( p / \sqrt{u}\right),
\end{equation}
which for $p$ and $u$ close to one, can be expressed as 
\begin{equation}\label{eq:coherence-off-diagonal}
\Theta^2 = 2\left(1 - \frac{p}{\sqrt{u}}\right) + \cdots = \frac{1}{d^2-1}\sum_{i,j | i \neq j} \left(N_u\right)_{ij}^2 + \cdots .
\end{equation}
Apart from a normalization factor, and neglecting the higher-order terms, $\Theta^2$ is the sum of squares of all off-diagonal terms in $N_u$. It quantifies the coherence in the channel. 

For the qubit rotation channel in equation (\ref{eq:qubit-rotation}), the coherence angle is related to the rotation angle $\theta$ by
\begin{equation}
\Theta^2 \approx \frac{2} {3}\sin^2 \theta \approx \frac{2}{3}\theta^2.
\end{equation}
For the dephasing/rotation qubit channel in equation (\ref{eq:M-dephase-rotate}), our truncated power series expansion used to derive equation (\ref{eq:coherence-off-diagonal}) is justified if $\epsilon$ is negligible compared to $\delta$, in which case we find
\begin{equation}
\Theta^2   \approx \frac{2}{3} q_R^2 \theta^2.
\end{equation}
For the depolarizing channel, $u=p^2$ and hence $\Theta = 0$.

In \cite{dugas2016efficiently}, Carignan-Dugas \textit{et al.} derived a bound on $r_m$, the infidelity when a \textit{unital} channel $\mathcal{N}$ is applied $m$ times in succession, in terms of the infidelity $r$ and coherence angle $\Theta$ of $\mathcal{N}$:
\begin{equation}\label{eq:coherence-angle-bound}
r_m \le mr + \frac{d-1}{2d}m(m-1)\Theta^2 + \cdots,
\end{equation}
where the ellipsis indicates terms higher order in $r$ and $\Theta^2$. 
In this sense (for unital channels), the coherence angle controls the quadratic growth of $r_m$ as a function of $m$, when $r$ and $\Theta^2$ are small.

\subsection{Diamond distance}    

In some versions of the quantum accuracy threshold theorem, the strength of Markovian noise is characterized by the deviation of a noisy gate from the corresponding ideal gate in the \textit{diamond norm} \cite{kitaev2002classical}. This diamond norm deviation is useful for quantifying the damage inflicted when the noisy gate acts on qubits which are entangled with other qubits in a quantum computer. The diamond norm $|\mathcal{E}\|_\diamond$ of a linear map $\mathcal{E}$ is defined as the $L^1$ norm of the extended map $\mathcal{E}\otimes \mathcal{I}$:
\begin{equation}
    \|\mathcal{E}\|_\diamond=\max_\rho \| \mathcal{E}\otimes \mathcal{I}(\rho)\|_1.
\end{equation}
If $\mathcal{E}$ acts on Hilbert space $\mathcal{H}$ with dimension $d$, then $\mathcal{I}$ denotes the identity acting on another Hilbert space $\mathcal{H}'$ with dimension $d$; the maximum is over all density operators on $\mathcal{H}\otimes \mathcal{H}'$. A measure of noise strength for a noisy channel $\mathcal{N}$ is the diamond distance of $\mathcal{N}$ from the identity channel,
\begin{equation}
    D_\diamond(\mathcal{N}) := \| \mathcal{N} - \mathcal{I}\|_\diamond.
\end{equation}
If $\mathcal{N}$ is applied $m$ times in succession, we have
\begin{equation}
    D_\diamond(\mathcal{N}^m) \le m D_\diamond(\mathcal{N}).
\end{equation}
    
Upper and lower bounds on the diamond distance can be expressed in terms of the benchmarking parameter $p(\mathcal{N}) = 1-r(\mathcal{N})d/(d-1)$ and the unitarity $u(\mathcal{N})$ \cite{kueng2016comparing}:
\begin{equation}\label{eq:kueng-bound}
\frac{\sqrt{d^2-1}}{2d}f(p,u) \le D_\diamond \le \frac{d \sqrt{d^2-1}}{2} f(p,u),
\end{equation}
where 
\begin{equation}
f(p,u) = \sqrt{(1-2p +u)}.
\end{equation}
For the depolarizing channel, we have $u=p^2$ and $f = 1-p = rd/(d-1)$; the diamond distance scales linearly with the infidelity $r$. But for a unitary channel, we have $u = 1$ and $f= \sqrt{2(1-p)}$; then the diamond distance scales like $\sqrt{r}$.

From equation (\ref{eq:u-ready-to-expand}), we see that
\begin{equation}
\label{eq: f^2 bound diamond distance}
f(p,u)^2 = 1-2p+u =\frac{1}{d^2-1}\left(\sum_{i,j | i \neq j} \left(N_u\right)_{ij}^2 +  \sum_i\left(1-\left(N_u\right)_{ii}\right)^2\right),
\end{equation}
which together with equation (\ref{eq:kueng-bound}) provides upper and lower bounds on the diamond distance written in terms of of Pauli transfer matrix elements:
\begin{align}\label{eq:diamond-bound-pauli-transfer}
D_\diamond(\mathcal{N}) & \geq \frac{1}{d} \left(\sum_{i,j | i \neq j} \left(N_u\right)_{ij}^2 +  \sum_i\left(1-\left(N_u\right)_{ii}\right)^2\right)^{1/2} \nonumber
\\
D_\diamond(\mathcal{N}) & \le d \left(\sum_{i,j | i \neq j} \left(N_u\right)_{ij}^2 +  \sum_i\left(1-\left(N_u\right)_{ii}\right)^2\right)^{1/2}.
\end{align}
We will be mostly interested in the upper bound on the diamond distance for a logical channel with a fixed number of encoded qubits; therefore the unfavorable scaling of the upper bound with the dimension $d$ need not cause us great concern.

\subsection{Coherence in the chi-matrix representation}

The Pauli transfer matrix representation is useful for proving the preceding relationships between channel components, the growth of average infidelity, and the dependence of the diamond distance from identity on the average infidelity. When we analyze error correction, we will make use of a different representation of the noise channel. Any channel $\mathcal{N}$ has an expansion in terms of Pauli operators. Consider a completely positive map $\mathcal{N}$ with Kraus operators $\{K_\alpha\}$, and expand each $K_\alpha$ as
\begin{equation}
K_\alpha = \sum_{i=0}^{d^2 -1} c_{\alpha i} \sigma^i,
\end{equation}
where all Pauli operators $\{\sigma^i\}$ are chosen to be Hermitian, and the $\{c_{\alpha i}\}$ are complex numbers.
Then
\begin{equation}\label{eq:N-Pauli-expand}
\mathcal{N}(\rho) = \sum_\alpha K_\alpha \rho K_\alpha^\dagger=\sum_{i,j=0}^{d^2-1} \chi_{ij} \sigma^i \rho \sigma^j,
\end{equation}
where 
\begin{equation}
\chi_{ij}= \sum_\alpha c_{\alpha i} c_{\alpha j}^*= \chi_{ji}^*.
\end{equation}
This is called the chi-matrix representation of the channel.
The map $\mathcal{N}$ is trace preserving if
\begin{equation}
\sum_{ij} \chi_{ij} \sigma^j \sigma^i = id,
\end{equation}
and unital if
\begin{equation}
\sum_{i,j} \chi_{ij} \sigma^i \sigma^j = id.
\end{equation}
Note that $\sigma^i \sigma^k \sigma^j = \pm \sigma^k$ if and only if $i=j$; therefore, in the Pauli transfer matrix language, the terms in equation (\ref{eq:N-Pauli-expand}) with $i=j$ contribute to the diagonal entries in $N_{ab}$, while the terms with $i\ne j$ contribute to the off-diagonal entries. 

To be more concrete, consider the single-qubit rotation about the $X$-axis $U^X(\theta) = \exp(\left(-i\theta \sigma^X/2\right)$, for which
\begin{align}
    \rho\to U^X(\theta)\rho U^X(\theta)^\dagger & = \cos^2(\theta/2) \rho + i \cos(\theta/2)\sin(\theta/2)\rho\sigma^X\notag\\
    & \quad -i\cos(\theta/2)\sin(\theta/2)\sigma^X\rho  + \sin^2(\theta/2) \sigma^X\rho\sigma^X;
\end{align}
hence
\begin{equation}
    \begin{pmatrix}
    \chi_{II} & \chi_{IX} \\
    \chi_{XI}& \chi_{XX}
    \end{pmatrix}
    = \begin{pmatrix}
    \frac{1}{2}(1+\cos\theta) & \frac{i}{2}\sin\theta\\
    -\frac{i}{2} \sin\theta & \frac{1}{2}(1-\cos\theta)
    \end{pmatrix}.
\end{equation}
More generally, for the channel with Pauli transfer matrix
\begin{equation}
N= \left(
\begin{array}{cccc}
1 & 0 & 0 & 0 \\
0 & 1 & 0 & 0 \\
0 & 0 & 1-\epsilon & \delta \\
0 & 0 & -\delta & 1-\epsilon \\
\end{array} \right),
\end{equation}
as in equation (\ref{eq:M-dephase-rotate}), we have 
\begin{equation}
    \begin{pmatrix}
    \chi_{II} & \chi_{IX} \\
    \chi_{XI}& \chi_{XX}
    \end{pmatrix}
    = \begin{pmatrix}
    1 - \epsilon / 2 & i\delta/2\\
    -i\delta/2 & \epsilon/ 2
    \end{pmatrix}.
\end{equation}

There is a simple general relationship between the off-diagonal entries of the Pauli transfer matrix $N_{ab}$ and the chi matrix $\chi_{ij}$, namely

\begin{Lemma}
\label{lemma:off-diagonal}
The off-diagonal elements of the Pauli transfer matrix $N_{ab}$ and the chi matrix $\chi_{ij}$ are related by
\begin{equation}
\sum_{a,b|a\ne b} N_{ab}^2 = d^2 \sum_{i,j|i\ne j} |\chi_{i j}|^2,
\end{equation}
where $d=2^n$ is the Hilbert space dimension.
\end{Lemma}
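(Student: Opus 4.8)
\section*{Proof proposal}

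The plan is to deduce the off-diagonal identity as the difference of two simpler ``total'' identities. First I would prove the \emph{full} relation $\sum_{a,b} N_{ab}^2 = d^2\sum_{i,j}|\chi_{ij}|^2$, where the sums run over all index pairs including the diagonal, and then separately the \emph{diagonal} relation $\sum_a N_{aa}^2 = d^2\sum_i |\chi_{ii}|^2$. Subtracting the second from the first immediately gives the claimed statement, since $\sum_{a,b|a\ne b}N_{ab}^2 = \sum_{a,b}N_{ab}^2-\sum_aN_{aa}^2$ and likewise for $\chi$. Throughout I take $\{\sigma^i\}$ to be the Hermitian Paulis normalized so that $(\sigma^i)^2=id$ and $\mathrm{Tr}(\sigma^i\sigma^j)=d\,\delta^{ij}$ (the same basis appearing in the chi matrix); one checks that rescaling the basis leaves the transfer-matrix element unchanged, and with this normalization it takes the convenient form $N_{ab}=\tfrac1d\mathrm{Tr}\!\big(\sigma^a\,\mathcal N(\sigma^b)\big)$. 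Recall also that $N_{ab}$ is real, because $\mathcal N$ is Hermiticity preserving.

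For the full relation I would pass to the Liouville (vectorized) representation, in which left multiplication $\rho\mapsto\sigma^i\rho$ becomes $\sigma^i\otimes id$ and right multiplication $\rho\mapsto\rho\,\sigma^j$ becomes $id\otimes(\sigma^j)^T$. The superoperator associated with $\mathcal N(\rho)=\sum_{ij}\chi_{ij}\sigma^i\rho\sigma^j$ is then the $d^2\times d^2$ matrix $\widehat{\mathcal N}=\sum_{ij}\chi_{ij}\,\sigma^i\otimes(\sigma^j)^T$. The key observation is that $N$ is exactly the matrix of $\widehat{\mathcal N}$ in the orthonormal operator basis $\{\sigma^a/\sqrt d\}$, so that $\sum_{a,b}N_{ab}^2=\|\widehat{\mathcal N}\|_F^2$. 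On the other hand, the operators $\{\sigma^i\otimes(\sigma^j)^T\}_{i,j}$ are mutually Hilbert--Schmidt orthogonal, each with squared Frobenius norm $\mathrm{Tr}\!\big((\sigma^i)^2\big)\,\mathrm{Tr}\!\big((\sigma^j)^2\big)=d^2$; expanding $\|\widehat{\mathcal N}\|_F^2$ in this orthogonal basis kills the cross terms and gives $\|\widehat{\mathcal N}\|_F^2=d^2\sum_{ij}|\chi_{ij}|^2$, as wanted.

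For the diagonal relation I would compute $N_{aa}$ directly. Using $\sigma^a\sigma^i\sigma^a=\xi_{ai}\,\sigma^i$, where $\xi_{ai}=\pm1$ is the commutation sign, together with $\mathrm{Tr}(\sigma^i\sigma^j)=d\,\delta^{ij}$, one finds $\mathrm{Tr}(\sigma^a\sigma^i\sigma^a\sigma^j)=\xi_{ai}\,d\,\delta^{ij}$, so only the diagonal entries of $\chi$ survive and $N_{aa}=\sum_i \xi_{ai}\,\chi_{ii}$. Squaring and summing over $a$ gives $\sum_a N_{aa}^2=\sum_{i,i'}\chi_{ii}\chi_{i'i'}\sum_a\xi_{ai}\xi_{ai'}$. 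The decisive input is the orthogonality of the commutation signs, $\sum_a\xi_{ai}\xi_{ai'}=d^2\,\delta_{ii'}$, which holds because $\xi_{ai}\xi_{ai'}=\xi_{a,\,i+i'}$ is a character of the Pauli group in the label $a$ and the symplectic form is nondegenerate, so the sum vanishes unless $i=i'$. Since $\chi_{ii}=\sum_\alpha|c_{\alpha i}|^2$ is real and nonnegative, this yields $\sum_a N_{aa}^2=d^2\sum_i\chi_{ii}^2=d^2\sum_i|\chi_{ii}|^2$, completing the argument.

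I expect the only delicate points to be bookkeeping ones. In the vectorized computation I must handle the transposes and complex conjugates of the (Hermitian) Paulis carefully, so that the Hilbert--Schmidt inner products of the form $\mathrm{Tr}\!\big(\overline{\sigma^j}\,(\sigma^l)^T\big)=d\,\delta^{jl}$ come out correctly and the cross terms genuinely vanish. In the diagonal computation the character-orthogonality relation $\sum_a\xi_{ai}\xi_{ai'}=d^2\delta_{ii'}$ rests on nondegeneracy of the symplectic form, which is the one structural fact beyond elementary Pauli orthogonality. Everything else reduces to repeated use of $\mathrm{Tr}(\sigma^i\sigma^j)=d\,\delta^{ij}$ and the reality of $N$ and of the diagonal $\chi_{ii}$, so the transpose/phase bookkeeping should be the main (and only mild) obstacle rather than any conceptual difficulty.
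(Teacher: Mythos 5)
Your proposal is correct, and it takes a genuinely different route from the paper. The paper attacks the off-diagonal sum head-on: for $a\ne b$ it writes $N_{ab}=\sum_{i\ne j}\eta_{ij}^{ab}\chi_{ij}$ with $\sigma^i\sigma^a\sigma^j=\eta_{ij}^{ab}\sigma^b$, splits into the Hermitian and anti-Hermitian cases of $\sigma^i\sigma^a\sigma^j$ (each occurring for exactly half the choices of $\sigma^a$) to extract the quadratic terms $d^2\left(|\chi_{ij}|^2+|\chi_{ji}|^2\right)$, and then runs a separate group-averaging argument (conjugating the defining relations by an arbitrary $\sigma^c$) to show that all cross terms $\chi_{ij}\chi_{kl}$ with $\{i,j\}\cap\{k,l\}=\emptyset$ cancel. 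You instead prove the full identity $\sum_{a,b}N_{ab}^2=d^2\sum_{i,j}|\chi_{ij}|^2$ as a statement of basis-independence of the Hilbert--Schmidt norm of the superoperator --- the cross-term cancellation is then automatic from the orthogonality of the $\sigma^i\otimes(\sigma^j)^T$ --- and handle the diagonal sum via $N_{aa}=\sum_i\xi_{ai}\chi_{ii}$ and the character-orthogonality relation $\sum_a\xi_{ai}\xi_{ai'}=d^2\delta_{ii'}$, before subtracting. The two arguments rest on the same structural facts about the Pauli group (your character orthogonality is exactly the paper's ``half commute, half anticommute'' counting in disguise), but yours is more modular and dispenses with the most laborious part of the paper's proof, the explicit cross-term cancellation; what the paper's computation buys in exchange is an explicit expression for each individual off-diagonal entry $N_{ab}$ in terms of the $\chi_{ij}$, which it reuses (e.g.\ in the qubit case worked out in Appendix~\ref{appendix:off-diagonal}). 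Your attention to the normalization mismatch between the paper's transfer-matrix convention ($\mathrm{Tr}(\sigma^i\sigma^j)=\delta^{ij}/d$) and the chi-matrix convention ($(\sigma^i)^2=id$) is warranted and correctly resolved by noting that $N_{ab}$ is invariant under uniform rescaling of the basis.
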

\noindent Because of this identity, we may quantify the coherence of a channel using the off-diagonal entries in either $N_{ab}$ or $\chi_{ij}$. The case $d=2$ is explained explicitly in Appendix \ref{appendix:off-diagonal}.
\begin{proof}
\noindent To prove the claim, note that, for any Hermitian Pauli operators $\sigma^i$, $\sigma^j$, $\sigma^a$, we have 
\begin{equation}\label{eq:eta-ij-ab}
\sigma^i \sigma^a \sigma^j = \eta_{ij}^{ab} \sigma^b
\end{equation}
for some Hermitian Pauli operator $\sigma^b$ and some phase $\eta_{ij}^{ab}$. By taking Hermitian adjoints of both sides, we also have
\begin{equation}\label{eq:eta-ij-ab-*}
\sigma^j \sigma^a \sigma^i= \eta_{ij}^{ab*}\sigma^b.
\end{equation}
The phase is $\eta_{ij}^{ab}=\pm 1$ if $\sigma^i \sigma^a \sigma^j$ is Hermitian, and it is $\eta_{ij}^{ab}=\pm i$ if $\sigma^i \sigma^a \sigma^j$ is anti-Hermitian. Furthermore, for each fixed $i\ne j$, as $\sigma^a$ ranges over the $d^2$ Hermitian Pauli operators, $\sigma^i\sigma^a\sigma^j$ is Hermitian for $d^2/2$ choices of $\sigma^a$, and anti-Hermitian for the remaining $d^2/2$ choices. (If $\sigma^i$ and $\sigma^j$ commute, then $\sigma^i\sigma^a\sigma^j$ is Hermitian if and only if $\sigma^a$ commutes with $\sigma^j\sigma^i$. If $\sigma^i$ and $\sigma^j$ anticommute, then $\sigma^i\sigma^a\sigma^j$ is Hermitian if and only if $\sigma^a$ anticommutes with $\sigma^j\sigma^i$.)
Note that $b\ne a$ if $i\ne j$.

The entries in the Pauli transfer matrix are (for $a\ne b$).
\begin{equation}
N_{ab} =\sum_{i,j|i\ne j}\eta_{ij}^{ab}\chi_{ij}= \sum_{i,j|i<j}\left( \eta_{ij}^{ab}\chi_{ij} + \eta_{ij}^{ab*} \chi_{ji}\right),
\end{equation}
where the sum is restricted to $\{i,j\}$ such that $\sigma^i \sigma^a \sigma^j \propto\sigma^b$.
The summand is $\left(\pm 1 \right) \left(\chi_{ij}+\chi_{ji}\right)$ if $\sigma^i\sigma^a\sigma^j$ is Hermitian, and it is $\left(\pm i \right) \left(\chi_{ij}-\chi_{ji}\right)$ if $\sigma^i\sigma^a\sigma^j$ is anti-Hermitian. Suppose now that, for fixed $i,j$, we collect all the terms in $\sum_{a\ne b} N_{ab}^2$ which are quadratic in $\{\chi_{ij},\chi_{ji}\}$. Because $\sigma^i\sigma^a\sigma^j$ is Hermitian for half the choices of $\sigma^a$ and anti-Hermitian for half the choices, we have
\begin{equation}
\frac{d^2}{2} \left(\chi_{ij} + \chi_{ji}\right)^2 - \frac{d^2}{2}\left(\chi_{ij}-\chi_{ji}\right)^2 = 2 d^2 \chi_{ij}\chi_{ji} =  d^2 \left(|\chi_{ij}|^2+ |\chi_{ji}|^2\right),
\end{equation}
where we have used $\chi_{ij}=\chi_{ji}^*$, which is required by complete positivity.

To complete the proof of the claim, we must verify that all the multilinear terms of the form $\chi_{ij}\chi_{kl}$ (where $\{i,j\}$ and $\{k,l\}$ are disjoint) cancel in the sum $\sum_{a\ne b} N_{ab}^2$. Such a cross term of the form 
\begin{equation}
\eta_{ij}^{ab}\eta_{kl}^{ab} \chi_{ij}\chi_{kl}
\end{equation}
arises in $N_{ab}^2$ when we have
\begin{align}
&\sigma^i \sigma^a \sigma^j = \eta_{ij}^{ab} \sigma^b,\notag\\
&\sigma^k \sigma^a \sigma^l= \eta_{kl}^{ab}\sigma^b.
\end{align}
We will consider all such terms with $i,j,k,l$ fixed, as we vary $\sigma^a$ and $\sigma^b$ over the possible Hermitian Pauli operators. Multiplying both sides on the left by Hermitian Pauli operator $\sigma^c$ we obtain
\begin{align}
\left(\sigma^c \sigma^i\sigma^c\right) \left(\sigma^c \sigma^a\right) \sigma^j & = \eta_{ij}^{ab} \left(\sigma^c\sigma^b\right),\notag\\
\left(\sigma^c \sigma^k\sigma^c\right) \left(\sigma^c \sigma^a\right) \sigma^l & = \eta_{kl}^{ab} \left(\sigma^c\sigma^b\right).
\end{align}
Given a standard sign choice for the $d^2$ Hermitian Pauli operators, we may write
\begin{equation}
\sigma^c \sigma^a = \phi_{ca}^{a'} \sigma^{a'}, \quad \sigma^c \sigma^b = \phi_{cb}^{b'} \sigma^{b'};
\end{equation}
here {\it e.g.} $\phi_{ca}^{a'}$ is a phase, which is $\pm 1$ if $\sigma^a$ and $\sigma^c$ commute and $\pm i$ if $\sigma^a$ and $\sigma^c$ anticommute. We also have
\begin{equation}
\sigma^c \sigma^i\sigma^c = \xi_{ic} \sigma^i, \quad \sigma^c \sigma^k\sigma^c = \xi_{kc} \sigma^k;
\end{equation}
here $\xi_{ic}=\pm 1$ is a sign indicating whether $\sigma^c$ and $\sigma^i$ commute or anticommute. Therefore 
\begin{align}
\sigma^i \sigma^{a'} \sigma^j & = \left(\xi_{ic}\phi_{ca}^{a'*}\phi_{cb}^{b'}\eta_{ij}^{ab}\right) \sigma^{b'},\notag\\
\sigma^k \sigma^{a'} \sigma^l & = \left(\xi_{kc}\phi_{ca}^{a'*}\phi_{cb}^{b'}\eta_{kl}^{ab}\right)\sigma^{b'},
\end{align}
and the corresponding cross term arising from $N_{a'b'}^2$ is 
\begin{equation}
\xi_{ic}\xi_{kc}\left(\phi_{ca}^{a'*} \phi_{cb}^{b'}\right)^2\eta_{ij}^{ab}\eta_{kl}^{ab} \chi_{ij}\chi_{kl}.
\end{equation}

Now suppose that either $\sigma^c$ commutes with both $\sigma^a$ and $\sigma^b$, or anticommutes with both; in either case $\left(\phi_{ca}^{a'*} \phi_{cb}^{b'}\right)^2=1$. As we vary $\sigma^c$ over the $d^2/2$ Pauli operators with this property, the sign $\xi_{ic}\xi_{kc}$ has the value $+1$ for the $d^2/4$ choices of $\sigma^c$ such that $\sigma^c$ commutes with both $\sigma^i$ and $\sigma^k$ or anticommutes with both, while  $\xi_{ic}\xi_{kc}$ has the value $-1$ for the $d^2/4$ choices of $\sigma^c$ such that $\sigma^c$ commutes with one of  $\sigma^i$ and $\sigma^k$ and anticommutes with the other. Therefore, as we vary $a'$ and $b'$ over these $d^2/2$ possible choices for $\sigma^c$, with $i,j,k,l$ fixed, the cross terms cancel. 

Alternatively, suppose that $\sigma^c$ commutes with one of $\sigma^a$ and $\sigma^b$ and anticommutes with the other; then $\left(\phi_{ca}^{a'*} \phi_{cb}^{b'}\right)^2=-1$. Again, as we vary $a'$ and $b'$ over the $d^2/2$ possible choices for $\sigma^c$, with $i,j,k,l$ fixed, $\xi_{ic}\xi_{kc}=+1$ for half of the terms and $\xi_{ic}\xi_{kc}=-1$ for the other half; therefore the cross terms cancel. This completes the proof. 
\end{proof}

\section{Logical channel for the repetition code}
\label{sec:rep-code}

From now on we will use the streamlined notation for single-qubit Pauli operators:
\begin{equation}
I = \left(
\begin{array}{cc}
1 & 0  \\
0 & 1 \\
\end{array} \right),\quad 
X = \left(
\begin{array}{cc}
0& 1 \\
1 & 0 \\
\end{array} \right), \quad
Y = \left(
\begin{array}{cc}
0 & -i \\
i & 0 \\
\end{array} \right),\quad 
Z = \left(
\begin{array}{cc}
1 & 0  \\
0 & -1 \\
\end{array} \right).
\end{equation}
Consider the repetition code, which protects one logical qubit against bit flip ($X$) errors, but provides no protection against phase ($Z$) errors. Let us analyze how well this code protects against coherent errors, in which each physical qubit in the code block rotates about the $X$-axis. Similar calculations were carried out in \cite{greenbaum2017modeling, huang2018performance}. Understanding this example will prepare us for an analysis of more general stabilizer codes.

To be as concrete as possible, we will start with the simplest interesting case, the 3-qubit repetition code spanned by $|000\rangle$ and $|111\rangle$. Our goal is to determine the logical channel that results when rotation errors applied to the physical qubits are followed by error correction. We will assume for now that the same rotation is applied to each of the three qubits; this will be generalized later. 

Suppose that each physical qubit is subjected to the unitary rotation
\begin{equation}\label{eq:x-rotation-3times}
U^X(\theta) = cI -is X, \quad c = \cos (\theta/2), \quad s = \sin (\theta/2);
\end{equation}
thus the product unitary map applied to the three physical qubits is
\begin{align}\label{eq:U-expand}
U^X(\theta)^{\otimes 3} &= c^3 III -ic^2 s\left(X II + IXI + IIX\right)\notag\\
& \quad - cs^2\left(X XI + X I X + IXX\right) +is^3 XXX.
\end{align}
To perform error correction we measure the operators $ZZI$ and $I ZZ$ to obtain two syndrome bits. If the syndrome is trivial (both measurements yield +1) no further action is required. If the syndrome is nontrivial, $X$ is applied to one of the three qubits, returning the state to the code space. Thus the terms in the expansion in equation (\ref{eq:U-expand}) with weight 0 or 1 (where the weight is the number of $X$'s) are error corrected to the logical operator $ \bar I= III$, while terms with weight 2 or 3 are error corrected to the logical operator $\bar X = XXX$. We conclude that the logical channel $\mathcal{N}_L$ is a convex combination of two unitary transformations,
\begin{equation}
\mathcal{N}_L(\rho) = p_0 U^X(\theta_0)\rho U^X(\theta_0)^\dagger + p_1 U^X(\theta_1) \rho U^X(\theta_1)^\dagger, 
\end{equation}
where
\begin{align}
p_0 & = c^6 +s^6, & \quad\theta_0/2 & =  \arctan(-s^3/c^3) \nonumber\\
p_1 & = 3\left(c^4 s^2 + c^2 s^4\right), & \quad\theta_1/2 & = \arctan(s/c) =\theta/2.
\end{align}
A logical rotation by $\theta_0$ is applied when the syndrome is trivial (weight $0$), and a logical rotation by $\theta_1$ is applied when the syndrome is nontrivial (weight $1$).

The logical channel has the form specified in equation (\ref{eq:M-dephase-rotate}), where
\begin{equation}
\epsilon = p_0(1-\cos\theta_0) + p_1(1-\cos\theta_1), \quad
\delta = p_0 \sin\theta_0 + p_1 \sin\theta_1 .
\end{equation}
These expressions for $\epsilon$ and $\delta$ can be simplified using trigonometric identities. In terms of $s/c = t = \tan\theta/2$, we have 
\begin{align}
p_0 & = c^6(1+t^6), & 1 - \cos\theta_0 & = \frac{2t^6}{1+t^6}, & \sin\theta_0 & = \frac{-2t^3}{1+t^6},\notag \\
p_1 & = 3c^6t^2(1+t^2), & 1-\cos\theta_1 & = \frac{2t^2}{1+t^2}, & \sin\theta_1 & = \frac{2t}{1+t^2};
\end{align}
therefore we find
\begin{equation}\label{eq:epsilon-delta-rep}
\epsilon = 2 s^6 + 6 c^2 s^4, \quad \delta = -2s^3c^3 + 6 s^3c^3 = 4s^3c^3.
\end{equation}
Expanding to leading order for small $\theta$, we have
\begin{equation}\label{eq:eps-delta-rep-leading}
\epsilon \approx \frac{3}{8} \theta^4, \quad \delta\approx \frac{1}{2} \theta^3.
\end{equation}
Here, because $\epsilon$ is higher order in $\theta$ than $\delta$, equation (\ref{eq:coherence-off-diagonal}) applies, and therefore the coherence angle is 
\begin{equation}
\Theta^2 \approx 2 \delta^2/3\approx \theta^6 / 6.
\end{equation}
From equation (\ref{eq:Nm-epsilon-delta-bound}), we see that if this logical channel $\mathcal{N}_L$ is applied $m$ times, the infidelity becomes
\begin{equation}\label{eq:rm-bound-3-qubit}
r_m\approx \frac{1}{3} m \epsilon +\frac{1}{6} m(m-1)\delta^2 \approx \frac{1}{8} m \theta^4 + \frac{1}{24} m(m-1) \theta^6.
\end{equation}
Note that the term quadratic in $m$ actually matches the upper bound in equation (\ref{eq:coherence-angle-bound}). 
Equation (\ref{eq:rm-bound-3-qubit}) reveals that the coherence of the logical channel is somewhat suppressed, as it takes a number of repetitions $m = \mathcal{O}(\theta^{-2})$ for the quadratically growing contribution to $r$ to ``catch up'' with the dominant linear term. 

Now let's do a similar analysis for the length-$n$ repetition code (where $n$ is odd), which corrects up to $(n-1)/2$ bit-flip errors. In this case the logical channel is a convex combination of $(n+1)/2$ unitary rotations,
\begin{equation}
\mathcal{N}_L(\rho) = \sum_{w=0}^{(n-1)/2} p_w U^X(\theta_w) \rho U^X(\theta_w)^\dagger
\end{equation}
where $w$ ranging from $0$ to $(n-1)/2$ indicates the weight of a correctable $X$ error occurring in the expansion of $\left(c - isX\right)^{\otimes n}$. When the $(n{-}1)$-bit syndrome is measured, syndromes pointing to a weight-$w$ error occur with total probability
\begin{equation}
p_w= \binom{n}{w} \left[c^{2(n-w)}  s^{2w}+ c^{2w} s^{2(n-w)}\right]
= \binom{n}{w} c^{2n}  t^{2w}\left[ 1+ t^{2(n-2w)}\right],
\end{equation}
and the logical rotation angle conditioned on a weight-$w$ syndrome is
\begin{align}
&\theta_w/2 = (-1)^{(n-1-2w)/2}\arctan\left[ (s/c)^{n-2w}\right]\notag\\
&\implies 1-  \cos\theta_w = \frac{2t^{2(n-2w)}}{1 + t^{2(n-2w)}}, \quad
\sin\theta_w = (-1)^{(n-1)/2}(-1)^w\frac{2t^{n-2w}}{1 + t^{2(n-2w)}}.
\end{align}
Summing over the weight of the syndrome we find
\begin{align}\label{eq:eps-delta-all-orders}
\epsilon  &= \sum_{w=0}^{(n-1)/2} p_w\left(1-\cos\theta_w\right) = \sum_{w=0}^{(n-1)/2}\binom{n}{w} \left( c^2 \right)^w \left( s^2\right)^{n-w}, \notag\\
\delta &= \sum_{w=0}^{(n-1)/2} p_w\sin\theta_w = (-1)^{(n-1)/2}c^n s^n\sum_{w=0}^{(n-1)/2} (-1)^w \binom{n}{w} = 2 \binom{n-1}{\frac{n-1}{2}} c^n s^n.
\end{align}
In Appendix \ref{appendix:sum} we use Stirling's approximation to evaluate the sum in the expression for $\epsilon$. Applying Stirling's approximation to our expression for $\delta$ as well, we have proven
\begin{Theorem}\label{Theorem: rep-code-epsilon-delta}
Consider the length-$n$ repetition code which protects against bit flip ($X$) errors, subject to the independent unitary noise map $U=\left(\left(\cos\theta/2\right) I -i \left(\sin\theta/2\right) X\right)^{\otimes n} $, where $\sin^2\theta/2 < 1/2$. Let $\mathcal{N}_L(\rho)= \mathcal{R}\left(U\rho U^\dagger\right)$ be the logical map, where $\rho$ is a code state and $\mathcal{R}$ decodes using majority voting. Then $\mathcal{N}_L$ has Pauli transfer matrix $N$ of the form given in equations (\ref{eq:Pauli-transfer-N}) and (\ref{eq:M-dephase-rotate}), with $\epsilon$ and $\delta$ given by
\begin{align}
    \epsilon & = \sqrt{\frac{2}{\pi n}}\left(\frac{\sin^{n+1}\theta}{\cos\theta}\right) \left(1 + \mathcal{O}\left(\frac{1}{n}\right)\right),\notag\\
  \delta &=  \sqrt{\frac{2}{\pi n}} \sin^n\theta \left(1 + \mathcal{O}\left(\frac{1}{n}\right)\right)= \left(\frac{\cos\theta}{\sin\theta}\right)\epsilon \left(1 + \mathcal{O}\left(\frac{1}{n}\right)\right).  
\end{align}
\end{Theorem}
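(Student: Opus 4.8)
The plan is to start from the exact expressions for $\epsilon$ and $\delta$ already obtained in equation (\ref{eq:eps-delta-all-orders}) and extract their large-$n$ behavior by Stirling's approximation, treating the two quantities separately: $\delta$ is a single central binomial coefficient, whereas $\epsilon$ is a sum that must first be reduced to its dominant contribution.

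For $\delta$ the work is essentially immediate. Writing $n = 2k+1$, so that $\binom{n-1}{(n-1)/2} = \binom{2k}{k}$, I would insert the standard estimate $\binom{2k}{k} = \frac{4^k}{\sqrt{\pi k}}\left(1 + \mathcal{O}(1/k)\right)$, use the half-angle identity $2cs = \sin\theta$ to rewrite $c^n s^n = (cs)^n = 2^{-n}\sin^n\theta$, and note that $\sqrt{\pi k} = \sqrt{\pi n/2}\,(1 + \mathcal{O}(1/n))$. The powers of $2$ cancel, leaving $\delta = \sqrt{2/\pi n}\,\sin^n\theta\,(1 + \mathcal{O}(1/n))$, with the relative error inherited entirely from Stirling.

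For $\epsilon$ the sum $\sum_{w=0}^{(n-1)/2}\binom{n}{w}(c^2)^w(s^2)^{n-w}$ is the lower tail of a binomial distribution with success probability $c^2$; the hypothesis $s^2 < 1/2$ forces $c^2 > 1/2$, so the distribution peaks at $w \approx c^2 n > (n-1)/2$ and the summand is \emph{increasing} across the whole range of summation. Hence the sum is dominated by its top term $w = (n-1)/2$, which I would factor out. The ratio of consecutive terms approaching the top is $\frac{a_{w-1}}{a_w} = \frac{w}{\,n-w+1\,}\,t^2 \to t^2$ as $w \to (n-1)/2$, with $t = s/c$, so the tail is asymptotically geometric with ratio $t^2 = \tan^2(\theta/2) < 1$, summing to $\sum_{j\ge 0} t^{2j} = (1-t^2)^{-1} = c^2/\cos\theta$. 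Applying Stirling to the top coefficient $\binom{n}{(n-1)/2}$ exactly as for $\delta$, and simplifying the surviving powers with $2cs = \sin\theta$ and $c^2 - s^2 = \cos\theta$, yields $\epsilon = \sqrt{2/\pi n}\,\sin^{n+1}\theta/\cos\theta\,(1 + \mathcal{O}(1/n))$; the quoted relation $\delta = (\cos\theta/\sin\theta)\,\epsilon\,(1+\mathcal{O}(1/n))$ then follows by dividing the two results.

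The one step needing genuine care — and the main obstacle — is justifying the geometric-tail approximation for $\epsilon$ to the claimed $\mathcal{O}(1/n)$ accuracy. Replacing each ratio $\binom{n}{(n-1)/2 - j}/\binom{n}{(n-1)/2}$ by its limit $1$, so that term $j$ becomes exactly $t^{2j}$ times the top term, incurs a relative error; writing that ratio as a product of $j$ factors each of the form $1 + \mathcal{O}(i/k)$ shows the error is $\mathcal{O}(j^2/n)$ for fixed $j$. Because the geometric weights $t^{2j}$ suppress large $j$, the weighted sum $\sum_j t^{2j}\,\mathcal{O}(j^2/n)$ converges and is itself $\mathcal{O}(1/n)$, so only $\mathcal{O}(1)$ terms matter and the total correction stays within the stated bound. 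I would present this as a uniform estimate rather than a term-by-term limit, since it is precisely the summability of $\sum_j j^2 t^{2j}$, guaranteed by $t^2 < 1$, that keeps the correction $\mathcal{O}(1/n)$ rather than something larger.
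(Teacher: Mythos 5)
Your argument is correct and is essentially the paper's own: Appendix \ref{appendix:sum} likewise reduces the sum for $\epsilon$ to a central binomial coefficient times the geometric series $\sum_r (s^2/c^2)^r = c^2/\cos\theta$, using the same error control you make explicit, namely that each term's deviation from the pure geometric value is $\mathcal{O}(r^2/n)$ and the weights $(s^2/c^2)^r$ make $\sum_r r^2 (s^2/c^2)^r$ convergent, so the total relative error is $\mathcal{O}(1/n)$. One bookkeeping caution: the displayed expression for $\epsilon$ in equation (\ref{eq:eps-delta-all-orders}) is missing an overall factor of $2$ (compare the $n=3$ result $\epsilon = 2s^6+6c^2s^4$ and the rederivation in Section \ref{subsec:rep-chi}), so following your recipe literally from that printed formula yields half the value stated in the theorem; with the factor of $2$ restored your computation lands exactly on the claimed asymptotics.
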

\noindent Therefore, using equation (\ref{eq:Nm-epsilon-delta-bound}) and approximations that are well justified (according to Theorem \ref{Theorem: rep-code-epsilon-delta}) when $n$ is large and $\sin^2\theta/2 < 1/2$, we can estimate the infidelity when the logical channel is applied $m$ times is succession, finding
\begin{equation}
r_m\approx \frac{1}{3} m \epsilon +\frac{1}{6} m(m-1)\delta^2 \approx  \frac{1}{3}\sqrt{\frac{2}{\pi n}}\left[  m  \left(\frac{\sin^{n+1}\theta}{\cos\theta}\right) + \frac{1}{\sqrt{2\pi n}} m(m-1) \sin^{2n}\theta\right].
\end{equation}
The scaling of the infidelity $r=\mathcal{O}(\theta^{n+1})$ arises because a bit flip error must have weight at least $w=(n+1)/2$ to cause a logical error. The scaling $\mathcal{O}(\theta^{2n})$ of the term quadratic in $m$ indicates that the coherence of the logical channel is suppressed when $\theta$ is small. It takes $m\approx \sqrt{2\pi n}/\theta^{n-1}$  successive applications of the logical channel $\mathcal{N}_L$ for the quadratic term in $r_m$ to become comparable to the linear term. This suppression arises because larger logical rotations occur with only smaller probability; for example a logical rotation by $\theta$ occurs  with probability $\mathcal{O}(\theta^{n-1})$. 

Keeping only the leading-order terms in equation (\ref{eq:eps-delta-all-orders}) we obtain
\begin{equation}\label{eq:epsilon-delta-length-n-methodA}
\epsilon  \approx 2 \binom{n}{\frac{n-1}{2}} \left(\frac{\theta}{2}\right)^{n+1},\quad 
\delta \approx 2 \binom{n-1}{\frac{n-1}{2}} \left(\frac{\theta}{2}\right)^n
\implies 
\delta \approx \frac{n+1}{n} \theta^{-1}\epsilon ,
\end{equation}
generalizing equation (\ref{eq:eps-delta-rep-leading}). 
We derived the relationship
\begin{equation}\label{eq:compare-epsilon-delta}
\epsilon \approx \frac{n}{n+1}\left(\theta\delta\right)
\end{equation}
using the identity
\begin{equation}\label{eq:binomial-identity}
\sum_{w=0}^{(n-1)/2}(-1)^{w} \binom{n}{w} = (-1)^{(n-1)/2} \binom{n-1}{\frac{n-1}{2}},
\end{equation}
which can be proved by induction. For drawing the conclusion that $\theta\delta/\epsilon$ is bounded above by an $n$-independent constant, the oscillating minus sign in this expression is important --- if not for the oscillating sign, the sum would be $2^{n-1}$, hence larger than equation (\ref{eq:binomial-identity}) by a factor which scales like $\sqrt{n}$. This would mean that average infidelity $r_m$ in equation (\ref{eq:Nm-epsilon-delta-bound}) would have a large quadratic component relative to the linear component as the code length $n$ becomes large. In other words, the logical noise channel would have significant coherence.

\section{Repetition code revisited}\label{sec:rep-code-revisted}

In this section we will compute the logical channel for the repetition code using a different method than in Section \ref{sec:rep-code}. This new method can be extended more easily to general stabilizer codes.

\subsection{Stabilizer formalism}

We now briefly review the structure of stabilizer codes, as this will be used in our analysis. Let $\{g_\alpha, \alpha = 1,2, \dots, n-k\}$ denote the $n-k$ stabilizer generators for an $[[n,k,d]]$ stabilizer code. These generators are mutually commuting Hermitian Pauli operators such that $g_\alpha^2 =I$.  The \textit{syndrome} $s(\sigma^i)$ of Pauli operator $\sigma^i$ is a length-$(n-k)$ binary vector such that $s(\sigma^i)_\alpha =s_{\alpha}^i$ where
\begin{equation}
g_\alpha \sigma^i = (-1)^{s_\alpha^i}\sigma^i g_\alpha.
\end{equation}
Note that the syndrome of a product of Pauli operators is additive: $s(\sigma^i\sigma^j)_\alpha=\sigma^i_\alpha + \sigma^j_\alpha$, where the addition is modulo 2. 

The code space is the simultaneous eigenstate with eigenvalue $1$ of all the stabilizer generators. If $|\bar\psi\rangle$ is a pure state in the code space, then
\begin{equation}
g_\alpha\left( \sigma^i|\bar \psi\rangle\right) = (-1)^{s_\alpha^i} \sigma^i g_\alpha|\bar\psi\rangle = (-1)^{s_\alpha^i} \sigma^i |\bar\psi\rangle.
\end{equation}
Therefore, the syndrome of $\sigma^i$ can be identified by measuring all of the stabilizer generators. Hence we may say that $s\left(\sigma^i\right)$ is the syndrome of the state $\sigma^i|\bar\psi\rangle$. A Pauli operator that commutes with the stabilizer generators preserves the code space and is said to be \textit{logical}.
We may define a complete set of orthogonal projectors $\{\Pi_s\}$ on the $n$-qubit Hilbert space, where $\Pi_s$ projects onto the subspace with syndrome $s$. Then
\begin{equation}
\Pi_s\Pi_t = \delta_{st}\Pi_s, \quad \sum_s \Pi_s = I.
\end{equation}
An encoded density operator $\bar \rho$ (one supported on the code space) has the property
\begin{equation}
\Pi_s \bar\rho \Pi_t = \delta_{s0}, \delta_{t0} \bar\rho,
\end{equation}
where $s=0$ denotes the trivial syndrome. 

To construct the error recovery map $\mathcal{R}$, we first perform an orthogonal measurement to identify the syndrome $s$. Then, for each syndrome $s$, a particular Pauli operator $E_s^\dagger$ is applied, which returns the measured state to the code space; therefore,
\begin{equation}
\mathcal{R}(\rho) = \sum_s E_s^\dagger \Pi_s \rho \Pi_s E_s.
\end{equation}
One says that $E_s$ is the \textit{standard error} associated with the syndrome $s$.
In the case of \textit{minimal-weight decoding}, $E_s$ is chosen to be a minimal-weight Pauli operator with syndrome $s$. By the weight $w(\sigma)$ of the $n$-qubit Pauli operator $\sigma$, we mean the number of qubits to which a nontrivial Pauli matrix $X$, $Y$, or $Z$ is applied, while $I$ is applied to the remaining $n-w$ qubits. 

By summing over all values of the syndromes $s$ to construct the error recovery channel we are averaging over all the possible outcomes of the syndrome measurement, with each syndrome weighted by its probability. We discussed in the introduction how to justify performing this average when computing the logical channel.

\subsection{Recovery in the chi-matrix representation}
\label{sec:recovery-chi}

For any such noise channel $\mathcal{N}$ acting on an encoded density operator $\bar \rho$, we would like to find the error corrected map $\mathcal{R}\circ \mathcal{N}(\bar \rho)$. Using the chi representation of the noise channel, it evidently suffices to compute
\begin{equation}
\mathcal{R}\left( \sigma^i \bar \sigma^k \sigma^j\right)
\end{equation}
for each pair of physical Pauli operators $\sigma^i$, $\sigma^j$ and each logical Pauli operator $\bar \sigma^k$. Because the syndrome is additive, we have
\begin{equation}
\Pi_s P_t \Pi_s = \delta_{t0} P_0 \Pi_s
\end{equation}
if $P_t$ is any physical Pauli operator with syndrome $t$,
and therefore
\begin{align}\label{eq:R=0-unless}
&\mathcal{R}\left(\sigma^i\bar\sigma^k\sigma^j\right) = \sum_s E_s^\dagger \Pi_s \sigma^i\bar\sigma^k\sigma^j \Pi_s E_s = 0\notag\\
&\quad \textrm{unless}\quad s\left(\sigma^i\bar\sigma^k\sigma^j\right)= s\left(\sigma^i\sigma^j\right) = 0.
\end{align}
That is, only the terms for which $\sigma^i$ and $\sigma^j$ have the same syndrome survive when the error recovery map is applied. This property will be crucial in our analysis of the logical channel.

Now let's understand the action of $\mathcal{R}$ in more detail. An $[[n,k,d]]$ stabilizer code has $4^k$ logical Pauli operators. The physical Pauli operator $L$ representing a logical Pauli operator is not unique, because $L$ and $LG$ act in the same way on the code space, where $G$ is any element of the stabilizer group. But let us by convention choose standard physical operators $\{L_a, a = 0, 1,2, \dots 4^k-1\}$ representing each of the logical Pauli operators. Since we have also assigned a standard error operator $E_s$ to each syndrome $s$, any Hermitian Pauli operator  has a unique decomposition of the form
\begin{equation}
\sigma(s,a,x) = \eta_{sax} E_s L_a G_x, \quad \eta_{sax}\in\{\pm1, \pm i\},
\end{equation}
where $G_x$ is an element of the stabilizer group, and $\eta_{sax}$ is a phase. Since there are $2^{n-k}$ stabilizer group elements (up to phases), $2^{n-k}$ distinct syndromes, and $2^{2k}$ logical Pauli operators, we see that this decomposition accounts for all $4^n$ physical Pauli operators. We conclude that if $\bar\rho$ is an encoded density operator, then
\begin{align}\label{eq:R-action}
\mathcal{R}\left(\sigma(s,a,x)\bar\rho \sigma(s',a',x') \right) &= \delta_{ss'} E_s^\dagger \left(\eta_{sax}E_s L_a G_x\right) \bar\rho\left( G_{x'}^\dagger L_{a'}^\dagger E_s^\dagger\eta_{s'a'x'}^*\right) E_s \notag\\
&= \delta_{ss'} \,\eta_{sax} \eta_{s'a'x'}^* \,L_a \bar \rho L_{a'}^\dagger,
\end{align}
where we have used the property that $\sigma(s',a',x')$ is Hermitian. 
In the logical channel, the terms with $L_a= L_{a'}$ are \textit{incoherent} -- they contribute to the on-diagonal elements of the logical Pauli transfer matrix. The terms with $L_a \ne L_{a'}$ are \textit{coherent} -- they contribute to the off-diagonal elements. 

When the noise channel $\mathcal{N}$ is weak, the dominant terms in the chi-matrix expansion in equation (\ref{eq:N-Pauli-expand}) are those such that $\sigma^i \sigma^j$ has minimal weight, and we have also seen that the only terms that survive when the recovery map is applied are those such that $\sigma^i \sigma^j$ is a logical operator (has trivial syndrome). Now let's suppose that the code distance is $d$ and that minimal-weight decoding is performed. This means that we choose $E_s$ such that $L_a = I$ (up to multiplication by an element of the stabilizer) whenever $\sigma(s,a,x)$ has weight no larger than $(d-1)/2$, assuming $d$ is odd. 

To get a contribution to the incoherent part of the logical channel, we will need both $\sigma^i$ and $\sigma^j$ to have weight at least $(d+1)/2$, so that the total weight must be at least $d+1$. In that case it is possible for both $\sigma^i$ and $\sigma^j$ to be error corrected to a nontrivial logical operator. But there are also weight-$d$ contributions to the coherent part of the logical channel, arising from the terms in which $w(\sigma^i)+w(\sigma^j) = d$, where $w(\sigma)$ denotes the weight of Pauli operator $\sigma$. In that case, one of the two Pauli operators has weight less than or equal to $(d-1)/2$, hence is error corrected to the identity, while the other has weight greater than or equal to $(d+1)/2$, hence is corrected to a nontrivial logical operator $L$. The resulting term in the logical channel is either $L\bar\rho$ or $\bar \rho L$ (up to a phase), depending on whether $\sigma^i$ or $\sigma^j$ has higher weight. 

If we choose the standard errors $\{E_s\}$ differently, then the action of the recovery operator may be modified. But it is evident from equation (\ref{eq:R-action}) that if we make the replacement $E_s\to E_s' = \phi_s E_s G_y$, where $G_y$ is an element of the stabilizer and $\phi_s$ is a phase, then $\mathcal{R}\left(\sigma\bar\rho\sigma'\right)$ is not changed. In particular, when we perform minimal-weight decoding, there may be more than one minimal-weight Pauli operator with syndrome $s$, so that the choice of $E_s$ is ambiguous. However, as long as any two minimal-weight Pauli operators $E_s$ and $E_s'$ with syndrome $s$ have the property that $E_s'^\dagger E_s$ is an element of the code stabilizer, then the logical channel will not depend on how the minimal-weight standard errors are chosen. This will certainly be the case if the code distance is $d$ and the standard errors have weight not larger than $(d-1)/2$, since then $E_s'^\dagger E_s$ has weight at most $d-1$ and cannot be a nontrivial logical operator.

\subsection{Analysis of repetition code using the chi-matrix formalism}
\label{subsec:rep-chi}
To illustrate this method, we return to the length-3 repetition code, where the noise channel is as in equation (\ref{eq:U-expand}). We write out the chi-matrix expansion of $\mathcal{N}(\rho)$ in equation (\ref{eq:N-Pauli-expand}), and then apply the recovery operator $\mathcal{R}$ to find the logical channel $\mathcal{N}_L = \mathcal{R}\circ \mathcal{N}$. The task of applying $\mathcal{R}$ is simplified by the observation that, if the state $\rho$ is supported on the code space, then $\mathcal{R}$ annihilates all terms in which $\sigma^i\sigma^j$ is not logical; that is, as indicated in equation (\ref{eq:R=0-unless}), $\sigma^i\sigma^j$ must commute with the stabilizer for the term to survive. We may write
\begin{equation}
\mathcal{N}(\rho) = \mathcal{N}_{\rm incoh}(\rho) +\mathcal{N}_{\rm coh}(\rho) + \mathcal{N}_{\rm null}(\rho),
\end{equation}
where $\mathcal{N}_{\rm null}$ is the sum of terms such that $\sigma^i\sigma^j$ is not logical (hence $\mathcal{R}\circ\mathcal{N}_{\rm null} = 0$ acting on encoded density operators), $\mathcal{N}_{\rm incoh}$ is the sum of terms such that $\sigma^i\sigma^j$ is the logical identity, and $\mathcal{N}_{\rm coh}$ is the sum of terms such that $\sigma^i\sigma^j$ is a nontrivial logical operator. Then $\mathcal{R}\circ\mathcal{N}_{\rm incoh}$ is the incoherent part of $\mathcal{N}_L$ and $\mathcal{R}\circ\mathcal{N}_{\rm coh}$ is its coherent part. Explicitly, 
\begin{align}\label{eq:pauli-expand-rep-incoh}
\mathcal{N}_{\rm incoh}(\rho) & = c^6 III \rho III + c^4s^2 \left( XII \rho XII + IXI \rho IXI + IIX \rho IIX \right) \nonumber
\\
&\quad + c^2 s^4\left( XXI \rho XXI + XIX \rho XIX + IXX \rho IXX \right) + s^6 XXX \rho XXX, 
\end{align}
and
\begin{align}\label{eq:pauli-expand-rep-coh}
\mathcal{N}_{\rm coh}(\rho) &= ic^3s^3 \left( XXX \rho III - III \rho XXX\right) + ic^3s^3 \left(XII \rho IXX + IXI \rho XIX + IIX\rho XXI  \right.
\nonumber
\\
&\quad   \left. - IXX \rho XII - XIX \rho IXI -XXI \rho IIX \right).
\end{align}
The code has two syndrome bits, given by the measured values of $ZZI$ and $IZZ$, and for a minimal-weight decoder we choose the standard errors to be
\begin{equation}
E_{00} = III, \quad E_{01}= IIX, \quad E_{10}= XII, \quad E_{11} = IXI, 
\end{equation}
while the nontrivial logical operator is $\bar X = XXX$. Each of the Pauli operators in equations (\ref{eq:pauli-expand-rep-incoh}) and (\ref{eq:pauli-expand-rep-coh}) can be expressed as a product of a standard error and a logical operator which is either $\bar I =III$ or $\bar X$, so the logical map becomes
\begin{align}
\mathcal{N}_{L,\rm incoh}(\rho) & = \mathcal{R}\circ \mathcal{N}_{\rm incoh}(\rho) = \left(c^6 +3c^4s^2\right)\rho +\left(3c^2s^4 + s^6\right)\bar X\rho \bar X,\notag\\
\mathcal{N}_{L,\rm coh}(\rho) & = \mathcal{R}\circ \mathcal{N}_{\rm coh}(\rho) = ic^3s^3\left([\bar X, \rho] - 3 [\bar X, \rho]\right).
\end{align}

To compare with our previous calculation of the logical channel, we note that
\begin{align}
\mathcal{N}_{L,\rm incoh}(\bar I) &= \left(c^2 + s^2\right)^3 \bar I = \bar I, \notag\\
 \mathcal{N}_{L,\rm incoh}(\bar X) &= \left(c^2 + s^2\right)^3 \bar X = \bar X, \notag\\
\mathcal{N}_{L,\rm incoh}(\bar Y) &= \left[\left(c^2 + s^2\right)^3 - 6 c^2 s^4 -  2 s^6\right] \bar Y= \left(1- 6 c^2 s^4 -  2 s^6\right)\bar Y, \notag \\
\mathcal{N}_{L,\rm incoh}(\bar Z) &=  \left[\left(c^2 + s^2\right)^3 - 6 c^2 s^4 -  2 s^6\right] \bar Z= \left(1- 6 c^2 s^4 -  2 s^6\right)\bar Z,
\end{align}
and
\begin{align}
\mathcal{N}_{L,\rm coh}(\bar I) & = \mathcal{N}_{L,\rm coh}(\bar X)= 0, \notag\\
 \mathcal{N}_{L,\rm coh}(\bar Y) &= -2ic^3s^3 [\bar X,\bar Y] = 4 c^3s^3 \bar Z, \notag\\
\mathcal{N}_{L,\rm coh}(\bar Z) &= -2ic^3s^3 [\bar X,\bar Z] = -4 c^3s^3 \bar Y.
\end{align}
In the notation of equation (\ref{eq:M-dephase-rotate}) we have found that the logical channel is parametrized by
\begin{equation}
\epsilon = 6 c^2 s^4 + 2 s^6, \quad \delta = 4 c^3s^3,
\end{equation}
in agreement with the result found in equation (\ref{eq:epsilon-delta-rep}).

Now consider the length-$n$ repetition code, for $n$ odd, where the noise is the product unitary transformation $U^X(\theta)^{\otimes n}$. The incoherent part $\mathcal{N}_{L, \rm incoh}$ of the logical channel arises from the diagonal terms $\{\sigma^i \rho \sigma^i\}$ in the chi-matrix expansion of $\mathcal{N}(\rho)$. Here $\sigma^i$ can be any one of the $2^n$ Pauli operators contained in $\{I, \sigma^X\}^{\otimes n}$. The code can correct $t = (n-1)/2$ $\sigma^X$ errors, so $\sigma^i$ is error corrected to $\bar I$ if its weight $w(\sigma^i)$ is $t$ or less, and is error corrected to $\bar X$ if its weight is $t+1$ or more. Therefore, if $\rho$ is an encoded density operator then
\begin{equation}
\mathcal{N}_{L,\rm incoh}(\rho) = \left( \sum_{w=0}^{t} \binom{n}{w} c^{2(n-w)} s^{2w} \right) \rho + \left( \sum_{w=t+1}^{n} \binom{n}{w} c^{2(n-w)} s^{2w}\right)\bar X \rho \bar X,
\end{equation}
where the binomial coefficient $\binom{n}{w}$ counts the number of weight-$w$ (or weight-($n{-}w$)) operators. Using
\begin{equation}
\sum_{w=0}^{n} \binom{n}{w} c^{2(n-w)} s^{2w} = \left(c^2+s^2\right)^n = 1,
\end{equation}
we see that $\mathcal{N}_{L,\rm incoh}(\bar I) = \bar I$ and $\mathcal{N}_{L,\rm incoh}(\bar X)= \bar X$, and furthermore
\begin{equation}
\mathcal{N}_{L,\rm incoh}(\bar Y) = \left(1 - 2\sum_{w=t+1}^{n} \binom{n}{w} c^{2(n-w)} s^{2w} \right)\bar Y;
\end{equation}
hence 
\begin{equation}
\epsilon = 2\sum_{w=(n+1)/2}^{n} \binom{n}{w} c^{2(n-w)} s^{2w} 
\end{equation}
in agreement with equation (\ref{eq:eps-delta-all-orders}).
To leading order in $s \approx \theta/2$ this becomes
\begin{equation}
\epsilon\approx 2 \binom{n}{\frac{n+1}{2}} \left(\frac{\theta}{2}\right)^{n+1},
\end{equation}
as in equation (\ref{eq:epsilon-delta-length-n-methodA}).

The coherent part $\mathcal{N}_{L,\rm coh}$ of the logical channel arises from the terms in the Pauli operator expansion of $\mathcal{N}(\rho)$ such that $\sigma^i\sigma^j = \bar X$. There are $2^n$ such terms --- $\sigma^i$ can be any operator among $\{I,X\}^{\otimes n}$, and $\sigma^j$ is then the complementary operator with $X$ and $I$ interchanged. If $\sigma^i$ has weight $\le t$, and so is error corrected to $\bar I$, then $\sigma^j$ has weight $\ge (t+1)$, and so is error corrected to $\bar X$. We obtain
\begin{align}
\mathcal{N}_{L,\rm coh}(\rho)& =\left(\sum_{w=0}^t \binom{n}{w} c^{w}(-is)^{n-w} c^{n-w} (is)^{w} \right) \bar X \rho+ \left(\sum_{w=0}^t \binom{n}{w} c^{n-w} (-is)^{w}c^{w}(is)^{n-w} \right)\rho \bar X\notag\\
&= (-i)^nc^n s^n \left(\sum_{w=0}^t (-1)^w \binom{n}{w} \right)[\bar X,\rho].
\end{align}
Therefore,
\begin{equation}
\mathcal{N}_{L,\rm coh}(\bar Y) = 2(-i)^{n-1}(cs)^n\left(\sum_{w=0}^t(-1)^w \binom{n}{w}  \right)\bar Z;
\end{equation}
hence 
\begin{equation}
\label{eq: Delta calculation chi matrix approach}
\delta =2(-i)^{n-1}\left(\sum_{w=0}^{(n-1)/2} (-1)^w \binom{n}{w} \right)(cs)^n\approx 2 \binom{n-1}{\frac{n-1}{2}}\left(\frac{\theta}{2}\right)^n,
\end{equation}
in agreement with equation (\ref{eq:epsilon-delta-length-n-methodA}).

\subsection{Inhomogeneous $X$-axis rotations}

Now let's consider the logical channel obtained by decoding the length-$n$ repetition code, in the case where the rotation angle varies from qubit to qubit. That is, the unitary noise channel is
\begin{equation}
U^X(\theta_1, \theta_2, \dots \theta_n) =\bigotimes_{\alpha =1}^n \left(c_\alpha - i s_\alpha \sigma^X\right),
\end{equation}
where $c_\alpha = \cos \theta_\alpha/2$ and $s_\alpha = \sin \theta_\alpha/2$.
As in our previous derivation for the case where all angles are equal, we can calculate the incoherent and coherent parts of the logical channel by expanding this tensor product and isolating the terms in $\mathcal{N}(\rho)$ of the form $\sigma^i\rho\sigma^j$ where $\sigma^i\sigma^j$ is either a trivial logical operator (for the incoherent part) or a nontrivial logical operator (for the coherent part). The only difference from the previous calculation is that, while previously all terms in the expansion of $U^X$ of with the same weight occurred with equal amplitudes, now operators of the same weight may have different amplitudes. 

Still, the derivation goes through in much the same way as before. Let $S$ denote a subset of the $n$ qubits,  let $|S|$ denote the size of $S$, and let $\bar S$ denote the subset complementary to $S$. Extending our previous argument to the case of unequal angles yields
\begin{align}
\epsilon &= 2 \sum_{S, |S|\ge t+1}\prod_{\alpha\in S}\prod_{\bar \alpha \in \bar S} c_{\bar \alpha}^2 s_\alpha^2\notag,\\
\delta  &=(-2i) \sum_{S, |S|\le t}\prod_{\alpha\in S}\prod_{\bar \alpha \in \bar S} c_{\bar\alpha}(-is_\alpha) c_\alpha (is_{\bar \alpha})\notag\\
&= (-2i)\prod_{\alpha = 1}^n (ic_\alpha s_\alpha)\sum_{S, |S|\le t} (-1)^{|S|}.
\end{align}
Note that the sum in the expression for $\delta$ does not depend on the angles. To leading order in the small $\{s_\alpha\}$, we find
\begin{align}
\epsilon &= 2\sum_{S, |S|= (n+1)/2}\prod_{\alpha\in S}s_\alpha^2+ \cdots\notag,\\
\delta &= 2 \binom{n-1}{\frac{n-1}{2}}\prod_{\alpha=1}^n s_\alpha,
\end{align}
where we have used the identity
\begin{equation}
\sum_{S, |S|\le \frac{n-1}{2}} (-1)^{|S|} = \sum_{w=0}^{(n-1)/2}(-1)^{w} \binom{n}{w} =(-1)^{(n-1)/2} \binom{n-1}{\frac{n-1}{2}}.
\end{equation}
As before we find $\epsilon = \mathcal{O}(s^{n+1})$ and $\delta = \mathcal{O}(s^n)$. Furthermore, the expression for $\delta$ is very simple --- the same as our previous formula, but with $s^n$ replaced by $\prod_\alpha s_\alpha$. 

The formula for $\epsilon$ depends in a more complicated way on the set of angles $\{\theta_\alpha\}$. But we can show that for fixed $\delta$, $\epsilon$ is minimized when all the $s_\alpha$ are equal. Therefore, we have a lower bound on $\epsilon$, namely
\begin{equation}
\epsilon \ge 2 \binom{n}{\frac{n+1}{2}} s^{n+1} + \cdots
\end{equation}
where the ellipsis indicates terms higher order in $s$, and we have defined
\begin{equation}
s^n = \prod_{\alpha=1}^n s_\alpha.
\end{equation}
Correspondingly, using
\begin{equation}
\binom{n}{\frac{n+1}{2} }= \frac{2n}{n+1} \binom{n-1}{\frac{n-1}{2}},
\end{equation}
we have the upper bound on $\delta$:
\begin{equation}\label{eq:delta-bound-inhomo-rot}
\delta \le \frac{n+1}{2n}\left(\frac{\epsilon}{s}\right)+\cdots.
\end{equation}
Therefore, for inhomogeneous as well as homogeneous rotations, we conclude that the coherent part of the logical channel is suppressed. In fact, the case where all rotation angles are equal is the worst case, where equation (\ref{eq:delta-bound-inhomo-rot}) is saturated. 

Now let's prove that $\epsilon$ is minimized (for fixed $\delta$), when all $\{s_\alpha\}$ are equal. 

\begin{Lemma}\label{lemma:minimization}
Consider minimizing the function 
\begin{equation}
f_m(x_1, x_2, \dots, x_n) = \sum_{S,|S|=m} \prod_{\alpha\in S}x_\alpha
\end{equation}
subject to the constraint $\prod_{\alpha=1}^n x_\alpha= c > 0$, where all $x_\alpha$ are nonnegative.  Here $S$ denotes a subset of the $n$ variables, and $|S|$ is the size of $S$. The minimum occurs for $x_1=x_2=\dots = x_n= c^{1/n}$.
\end{Lemma}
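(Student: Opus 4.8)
The plan is to prove the statement by a product-preserving smoothing (exchange) argument, paired with a compactness argument that upgrades the local comparison to a global minimum. First I would dispose of the degenerate cases: if $m=0$ then $f_0\equiv 1$, and if $m=n$ then $f_n=\prod_\alpha x_\alpha=c$, so in both cases $f_m$ is constant on the constraint surface and there is nothing to prove; hence assume $1\le m\le n-1$. Note also that the constraint $\prod_\alpha x_\alpha=c>0$ forces every $x_\alpha>0$, so the optimization takes place in the open positive orthant.

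The key computation is the smoothing step. Fix all variables except two, say $x_1$ and $x_2$, and let $e_j:=\sum_{T\subseteq\{3,\dots,n\},\,|T|=j}\prod_{\alpha\in T}x_\alpha$ denote the degree-$j$ elementary symmetric polynomial in the remaining $n-2$ variables (with $e_j:=0$ for $j<0$). Splitting the subsets $S$ according to whether they contain $1$, $2$, both, or neither gives
\begin{equation}
f_m=e_m+(x_1+x_2)\,e_{m-1}+x_1x_2\,e_{m-2}.
\end{equation}
Now impose the replacement that keeps $x_1x_2$ fixed, which is exactly what is needed to preserve $\prod_\alpha x_\alpha=c$. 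Then $e_m$ and $x_1x_2\,e_{m-2}$ are constant, and only the middle term varies, through $x_1+x_2$. By AM--GM, $x_1+x_2\ge 2\sqrt{x_1x_2}$ with equality iff $x_1=x_2$; and since the remaining variables are strictly positive and $0\le m-1\le n-2$, the coefficient $e_{m-1}$ is a nonempty sum of positive terms, hence strictly positive. Therefore replacing both $x_1$ and $x_2$ by their geometric mean $\sqrt{x_1x_2}$ strictly decreases $f_m$ unless $x_1=x_2$ already.

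To upgrade this to a global statement I would handle the non-compactness of the constraint surface by working on the truncated set $K_R:=\{x:\prod_\alpha x_\alpha=c,\ 0\le x_\alpha\le R\}$. For $x\in K_R$ each coordinate satisfies $x_\alpha=c/\prod_{\beta\ne\alpha}x_\beta\ge c/R^{\,n-1}$, so $K_R\subseteq[c/R^{n-1},R]^n$ is closed and bounded, hence compact, and $f_m$ attains a minimum on it at some $x^\ast$. If $x^\ast$ had two unequal coordinates, the smoothing step would produce a point still in $K_R$ (the geometric mean of two numbers in $[c/R^{n-1},R]$ lies in the same interval, and the product is unchanged) with strictly smaller value, contradicting minimality; hence $x^\ast$ has all coordinates equal, forcing $x^\ast=(c^{1/n},\dots,c^{1/n})$ and $f_m(x^\ast)=\binom{n}{m}c^{m/n}$, a value independent of $R$ once $R\ge c^{1/n}$. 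Finally, any point $x$ on the full constraint surface lies in $K_R$ for $R=\max(\max_\alpha x_\alpha,\,c^{1/n})$, so $f_m(x)\ge\binom{n}{m}c^{m/n}$, establishing that the global minimum is attained exactly at the symmetric point.

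The main obstacle is precisely this last passage: the smoothing inequality is an entirely local, two-variable comparison, while the constraint surface is unbounded, so one cannot simply invoke the existence of a minimizer. The compactness truncation is the crux --- one must verify both that the geometric-mean replacement keeps the point inside $K_R$ and that the minimum value is $R$-independent, so that enlarging the truncation recovers the global bound. As a sanity check, the resulting inequality $f_m\ge\binom{n}{m}c^{m/n}$ is exactly the product-normalized form of Maclaurin's inequality, which could alternatively be cited directly.
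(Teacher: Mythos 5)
Your proof is correct, and it takes a genuinely different route from the paper's. Both arguments begin from the same two-variable decomposition $f_m = e_m + (x_1+x_2)e_{m-1} + x_1x_2\,e_{m-2}$, but the paper then eliminates $x_1$ via the constraint, computes $\partial f_m/\partial x_2$, and identifies the symmetric point as the unique stationary point, concluding it is the minimum because $f_m$ is smooth and bounded below. You instead avoid calculus entirely: holding $x_1x_2$ fixed, only the middle term varies, and AM--GM shows the geometric-mean replacement strictly decreases $f_m$ unless $x_1=x_2$. Your compactness truncation $K_R$ then supplies what the paper's closing sentence leaves implicit --- a smooth function bounded below on a non-compact constraint surface with a unique stationary point need not attain its infimum there unless one also rules out escape to the boundary (in the paper's setting this follows because $f_m\to\infty$ as any coordinate tends to $0$ or $\infty$ along the surface, but this is not stated). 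So your argument is both more elementary (pure inequalities, no gradients) and more airtight on the existence of the minimizer; the paper's is shorter and makes the Lagrange structure visible. Your closing observation that the result is equivalent to Maclaurin's inequality in product-normalized form is also accurate and would be an acceptable one-line alternative.
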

\begin{proof}Note that $f_m$ is a symmetric function, invariant under permutations of its $n$ arguments, and can be decomposed as
\begin{align}
f_m(x_1, x_2, \dots, x_n) & = f_m(x_3, \dots x_n) + x_1 f_{m-1}(x_3, \dots x_n) \notag\\
& \quad + x_2 f_{m-1}(x_3, \dots x_n) + x_1 x_2 f_{m-2}(x_3, \dots x_n) .
\end{align}
 Using the constraint we write 
\begin{equation}
x_1 = \frac{c}{x_2 x_3 \dots x_n}, 
\end{equation}
and regard $f_m$ as a function of the $n{-}1$ independent variables $x_2, x_3, \dots, x_n$; then
\begin{equation}
\frac{\partial}{\partial x_2} \left(x_1x_2\right)= 0, \quad \frac{\partial}{\partial x_2} \left(x_1\right)= \frac{-x_1}{x_2}.
\end{equation}
Therefore, setting the gradient of $f_m$ equal to zero we find
\begin{equation}
\frac{\partial}{\partial x_2} f_m(x_1, x_2, \dots, x_n) = \left( 1 - \frac{x_1}{x_2}\right) f_{m-1}(x_3, \dots, x_n)=0.
\end{equation}
The constraint requires  that all $x_\alpha$ are positive; therefore $f_{m-1}(x_3, \dots x_n)$ is positive and we find that $x_1=x_2$. From the symmetry of $f_m$, we conclude that $x_1 = x_\alpha= c^{1/n}$ for $\alpha = 2,3, \dots, n$, when $f_m$ is stationary. This is the unique stationary point of $f_m(x_1, x_2, \dots x_n)$ when all $x_\alpha$ are positive; furthermore $f_m$ is smooth and bounded below. Therefore it must be the minimum of $f_m$. 
\end{proof}

\section{Correlated unitary noise}
\label{sec: Correlated unitary noise}

Now let's consider unitary noise acting on $n$ qubits which does not factorize into a product of single-qubit unitaries. Since we still wish to consider noise that can be corrected by the repetition code, assume that the $n$-qubit unitary $U$ has an expansion in terms of $X$-type Pauli operators:
\begin{equation}\label{eq:U-expand-psi}
U = \sum_S \psi(S) X(S),
\end{equation}
where $S$ denotes a subset of the $n$ qubits and $X(S) = \otimes_{\alpha\in S} X_\alpha $ is the $X$-type operator supported on $S$. ($X_\alpha$ means $X$ acting on the $\alpha$th qubit, and it is implicit that $I$ acts on qubit $\alpha$ for $\alpha\not\in S$.) Unitarity of $U$ implies
\begin{equation}
\sum_S |\psi(S)|^2 = 1,
\end{equation}
and
\begin{equation}
\sum_S \psi(S)^*\psi(S+S') = 0,
\end{equation}
where $S'$ is a nonempty set and $S+S'= S\cup S' \setminus S \cap S'$ is the disjoint union of $S$ and $S'$. To make the analysis of the noise more tractable, let's also suppose the noise is invariant under permutations of the $n$ qubits. In that case, $\psi(S) = \psi(|S|)$; that is, the amplitude $\psi$ depends only on the weight $w=|S|$ of the error operator $X(S)$. A tensor product of $n$ identical unitary $X$ rotations, $U= \left( cI - is X\right)^{\otimes n}$, is the special case where 
\begin{equation}
\psi(w) = c^{n}\left(\frac{-is}{c}\right)^{w},
\end{equation}
an exponential function of the weight $w$.

The symmetric unitary transformation may also expressed as $U= e^{-iH}$ where $H$ is a symmetric $n$-qubit Hamiltonian of the form
\begin{equation}
H = \sum_{w=0}^n h_w\left(\sum_{S,|S|=w} X(S)\right).
\end{equation}
We are assuming that there is no geometric locality constraint on the interactions among the qubits --- the strength of a weight-$w$ term in the Hamiltonian depends only on the weight, not on which set $S$ of $w$ qubits are interacting. Since $h_w$ is the coefficient of a sum of $\binom{n}{w}$ terms, it is implicit that $h_w$ decays as a function of $w$. It is natural to assume that $\binom{n}{w} h_w= \mathcal{O}(n)$, as only in that case do we expect (for $h_w$ sufficiently small) the probability of a logical error to drop rapidly as $n$ gets large. For example, if $h_2= \mathcal{O}(1)$, then each qubit has $\mathcal{O}(1)$ coupling strength with $n{-}1$ other qubits, so the strength of the noise acting on each qubit grows linearly in $n$, and error correction fails for $n$ sufficiently large. We will elaborate on this point in the discussion below of two-body correlated noise. In a more realistic noise model, the higher-weight terms in the Hamiltonian would have $\mathcal{O}(1)$ strength (independent of system size), but would decay sufficiently rapidly as the qubits separate that the effective single-qubit noise strength is also $\mathcal{O}(1)$ \cite{aharonov2006fault,preskill2013sufficient}.

The structure of the noise correlations is determined by how $h_w$ falls off as the weight $w$ increases. In particular, if $n^{w-1}h_w = \mathcal{O}\left(h_1^w\right)$, then $\psi(w)$ in equation (\ref{eq:U-expand-psi}) is a sum of $\mathcal{O}\left(h_1^w\right)$ terms; in that case the parameters of the logical channel will be $\epsilon = \mathcal{O}(h_1^{n{+}1})$ and $\delta = \mathcal{O}(h_1^{n})$, so the coherent and incoherent parts of the logical channel qualitatively resemble what we found for uncorrelated noise. On the other hand, in the extreme case where $h_n\ne 0$ and $h_w = 0$ for $0\le w \le n-1$, the code provides no protection against logical errors and there is no suppression of coherence. Instead we find $\delta = \mathcal{O}(h_n)$ and $\epsilon = \mathcal{O}(h_n^2)$ so that $\epsilon=\mathcal{O}(\delta^2)$ just as in equation (\ref{eq:qubit-rotation}).

To be concrete, consider the 3-qubit repetition code and noise Hamiltonian
\begin{equation}
H = h_1 \left(X_1+X_2 +X_3\right) + h_2\left(X_1X_2 + X_2X_3 + X_3X_1\right) + h_3\left(X_1X_2X_3\right).
\end{equation}The unitary noise has the expansion
\begin{align}
U= e^{-iH} &= \left( 1 + \cdots\right) I + \left(-ih_1+ \cdots \right) \left(X_1+X_2 +X_3\right) \notag\\
&\quad + \left(-ih_2 -h_1^2 + \cdots\right)\left(X_1X_2 + X_2X_3 + X_3X_1\right) \notag\\
&\quad + \left(i h_1^3 -3h_1h_2  -ih_3 + \cdots\right)X_1X_2X_3,
\end{align}
where only the leading terms are shown in the coefficient of each Pauli operator. Repeating the analysis of the logical channel as in Section \ref{subsec:rep-chi}, but now using this modified unitary noise operator, we find
\begin{align}
\mathcal{N}_{L,\rm incoh}(\rho)&= \rho +\left(3 h_1^4 + 3h_2^2 + h_3^2 + \cdots\right)\bar X\rho \bar X,\notag\\
\mathcal{N}_{L,\rm coh}(\rho)&= \left(ih_1^3 - i h_3\right)[\bar X, \rho] -3 h_1h_2 \left(\bar X\rho + \rho\bar X\right)\notag\\
&\quad - 3ih_1^3 [\bar X, \rho]  +3 h_1h_2 \left(\bar X\rho + \rho\bar X\right) +\cdots \notag\\
& = -i(2h_1^3 + h_3)[\bar X,\rho],
\end{align}
(showing only the leading terms), which yields
\begin{equation}\label{eq:3-qubit-correlated-eps-delta}
\tilde \chi_{\scriptscriptstyle XX} = \epsilon/2 = 3 h_1^4 + 3 h_2^2 +  h_3^2 + \cdots, \quad  \tilde \chi_{\scriptscriptstyle XI}= -i\delta/2 = -i\left(2 h_1^3 + h_3 + \cdots\right),
\end{equation}
where $\tilde \chi$ denotes the logical chi matrix after error correction. 
(We don't find any contribution to the coherent part of the logical channel depending only on $h_2$, because the $h_2$ term in the Hamiltonian has even $X$ parity, while the logical operator $\bar X$ has odd parity.)
Now whether coherence is suppressed hinges on the strength of the $h_3$ term in the Hamiltonian. In particular, if $h_3$ is large compared to $h_1^2$ and $h_2$, then highly correlated noise dominates, and coherence of the logical channel is unsuppressed. 

As another instructive example, consider the length-$n$ repetition code, where the Hamiltonian contains only single-qubit and two-qubit terms. We will compute the coherent and incoherent parts of the logical channel following the same reasoning as in Section \ref{subsec:rep-chi}. Again, we'll need to sum over all the possible values of the syndrome weight, which we'll now denote by $k$. For each value of $k$, we'll find a contribution to the chi matrix for the error-corrected logical channel, with logical operators acting on the encoded density operator $\rho$ from the left and from the right. Each such operator can be obtained in many ways as a product of one-body and two-body terms in the Hamiltonian, and we'll have to do some combinatorics to sum up those contributions. By computing the logical chi matrix, and comparing its coherent and incoherent parts, we can prove the following:

\begin{Theorem}
\label{Theorem: Correlated noise}
    Consider the bit flip code with $n$ qubits, and let the noise model be given by the $n$-qubit unitary map
    \begin{equation}
        U = \exp (-i H) ,\quad {\rm where}\quad \quad H = \sum_{i} h_1 X_i + \sum_{i< j}h_2 X_i X_j
    \end{equation}
    After error correction, the logical noise channel satisfies the following bound relating the coherent and incoherent components: 
    \begin{equation}\label{eq:lemma_chi}
        \tilde{\chi}_{\scriptscriptstyle XX} \geq \frac{2n}{n+1} (\tan h_1) |\tilde{\chi}_{\scriptscriptstyle XI}|,
    \end{equation}
    where $\tilde \chi$ denotes the logical chi matrix. 
    Equation (\ref{eq:lemma_chi}) holds for any odd $n$, and for any $h_1$, but we have made the approximation $n h_2 \ll 1$, neglecting a multiplicative $\left(1 + \mathcal{O}(n h_2)\right)$  correction on the right-hand side.
\end{Theorem}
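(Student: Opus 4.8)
The plan is to adapt the chi-matrix-plus-recovery method of Section \ref{subsec:rep-chi} to the correlated unitary $U=e^{-iH}$, exploiting the fact that every term in $H$ is an $X$-type Pauli and so all terms mutually commute. This lets me factor
\begin{equation}
U=U_1U_2,\quad U_1=\prod_{i=1}^n\left(\cos h_1-i\sin h_1\,X_i\right),\quad U_2=\exp\left(-ih_2\sum_{i<j}X_iX_j\right),
\end{equation}
and expand $U$ in $X$-type operators. By permutation symmetry the expansion coefficient $\psi(S)$ depends only on $w=|S|$, so $U=\sum_w\psi(w)\sum_{|S|=w}X(S)$. As in Section \ref{subsec:rep-chi}, the recovery map sends $X(S)\rho X(S')$ to a logical term only when $X(S)X(S')$ is logical, i.e. $S'=S$ (the incoherent part) or $S'=\bar S$ (the coherent part), and corrects $X(S)$ to $\bar I$ or $\bar X$ according to whether $|S|\le(n-1)/2$ or $|S|\ge(n+1)/2$. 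Tracking the phases of equation (\ref{eq:R-action}), this yields
\begin{equation}
\tilde\chi_{\scriptscriptstyle XX}=\sum_{w=(n+1)/2}^n\binom{n}{w}\,|\psi(w)|^2,\qquad
\tilde\chi_{\scriptscriptstyle XI}=\sum_{w=(n+1)/2}^{n}\binom{n}{w}\,\psi(w)\,\psi(n-w)^*,
\end{equation}
exactly paralleling the uncorrelated computation of $\epsilon/2$ and $-i\delta/2$ in equation (\ref{eq: Delta calculation chi matrix approach}).

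The combinatorial core is to evaluate $\psi(w)$ to leading order in the small parameter $nh_2$. Expanding $U_2\approx 1-ih_2\sum_{i<j}X_iX_j+\cdots$, each inserted factor $X_iX_j$ shifts the weight of an operator produced by $U_1$ by $0$ or $\pm2$. I would organize the sum over all products of one- and two-body Hamiltonian terms that produce a fixed weight-$w$ Pauli, keeping the uncorrelated amplitude $\cos^{n-w}h_1\,(-i\sin h_1)^w$ from $U_1$ as the leading piece and treating the $h_2$ insertions as corrections; this is the syndrome-weight bookkeeping alluded to just before the theorem statement. The key structural fact is a parity selection rule: each $h_1$ term carries odd $X$-weight and each $h_2$ term carries even $X$-weight, so producing weight $w$ requires a number of $h_1$ factors congruent to $w\bmod 2$.

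With $\psi(w)$ in hand, I would deduce the inequality from two independent sources of slack. First, $\tilde\chi_{\scriptscriptstyle XX}$ is a sum of manifestly nonnegative terms, so retaining only the dominant $w=(n+1)/2$ term gives the lower bound $\tilde\chi_{\scriptscriptstyle XX}\ge\binom{n}{(n+1)/2}|\psi((n+1)/2)|^2$. Second, in the coherent sum every contribution has total $h_1$-count congruent to $n\equiv1\pmod 2$ by the parity rule, so the two-body terms cannot produce $\bar X$ at leading order and enter $\tilde\chi_{\scriptscriptstyle XI}$ only through the neglected $(1+\mathcal{O}(nh_2))$ correction, whereas they \emph{do} enhance $\tilde\chi_{\scriptscriptstyle XX}$ through the $h_2^2$ contribution to $|\psi((n+1)/2)|^2$ (as in the $n=3$ case where $\tilde\chi_{\scriptscriptstyle XX}=3h_1^4+3h_2^2$). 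Evaluating the remaining all-$h_1$ coherent sum reproduces the alternating sum $\sum_{w=0}^{(n-1)/2}(-1)^w\binom{n}{w}$ whose cancellation is captured by the identity (\ref{eq:binomial-identity}), giving coherent magnitude $\binom{n-1}{(n-1)/2}(\sin h_1\cos h_1)^n$; combining $\binom{n}{(n+1)/2}=\tfrac{2n}{n+1}\binom{n-1}{(n-1)/2}$ with the observation that the dominant incoherent term $\binom{n}{(n+1)/2}\sin^{n+1}h_1\cos^{n-1}h_1$ exceeds this coherent magnitude by precisely the factor $\tfrac{2n}{n+1}\tan h_1$ yields equation (\ref{eq:lemma_chi}).

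The main obstacle I anticipate is controlling the two-body corrections inside the coherent sum. The leading value of $\tilde\chi_{\scriptscriptstyle XI}$ relies on the delicate cancellation in the alternating sum (\ref{eq:binomial-identity}) --- without it the sum would be larger by a factor $\sim\sqrt{n}$ --- and I must verify that the $h_2$ insertions, summed over all the $\binom{n}{2}$-fold ways of placing them, perturb this cancellation only at relative order $nh_2$ rather than destroying it. Establishing that the net two-body correction to the coherent part is genuinely subleading (and not of a sign that could reverse the bound), while the corresponding corrections to the incoherent part can only reinforce it, is the combinatorial step on which the $\ge$ in equation (\ref{eq:lemma_chi}) ultimately rests.
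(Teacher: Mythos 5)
Your setup is essentially the paper's: you expand $U$ in $X$-type Paulis, use permutation symmetry to reduce everything to weight-dependent amplitudes $\psi(w)$, apply the recovery map to obtain $\tilde\chi_{\scriptscriptstyle XX}=\sum_{w\ge(n+1)/2}\binom{n}{w}|\psi(w)|^2$ and $\tilde\chi_{\scriptscriptstyle XI}=\sum_{w\ge(n+1)/2}\binom{n}{w}\psi(w)\psi(n-w)^*$, and treat the $h_2$ insertions combinatorially in a collisionless approximation. Your $q=0$ (all-$h_1$) endgame --- the alternating-sum cancellation and the ratio $\binom{n}{(n+1)/2}/\binom{n-1}{(n-1)/2}=2n/(n+1)$ --- is correct and reproduces the uncorrelated computation.

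The gap is in how you dispose of the two-body contributions to the coherent part. You assert that they enter $\tilde\chi_{\scriptscriptstyle XI}$ only through the $(1+\mathcal{O}(nh_2))$ correction already conceded in the theorem statement, but your parity rule does not support this: it only forbids pure-$h_2$ products from producing an odd-weight operator. Mixed terms with $n-2q$ factors of $h_1$ and $q$ factors of $h_2$ (even $q\ge 2$) do contribute to $\tilde\chi_{\scriptscriptstyle XI}$, and after the signed sum over the left/right placement of the insertions, the order-$t_2^q$ coefficient exceeds the $q=0$ coefficient by a factor $\sim n^{3q/2}(t_2/t_1^2)^q/(q/2)!$ (equation (\ref{eq:Omega-q-correction})); the series exponentiates to roughly $\exp\left(n^3t_2^2/(8t_1^4)\right)$, which can be arbitrarily large even when $nh_2\ll 1$ (take, e.g., $h_2\sim h_1^2\,n^{-5/4}$). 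So replacing $\tilde\chi_{\scriptscriptstyle XI}$ by its all-$h_1$ value can underestimate it by an unbounded factor, and the inequality (\ref{eq:lemma_chi}) cannot be obtained that way. What actually rescues the theorem --- and what is missing from your argument --- is that the same enhancement occurs in $|\psi((n+1)/2)|^2$, and that the ratio of the order-$t_2^q$ coefficients of the coherent and incoherent parts is bounded by $(n+1)/(2n)$ separately for every even $q$. Establishing that requires evaluating, not merely estimating, alternating sums of the form $\sum_{k_R}(-1)^{k_R}\binom{2m-2q}{m-2k_R}\binom{q}{k_R}$ over the number of $h_2$ insertions acting from the right at fixed total $q$; the paper does this in closed form via Dixon's identity for the hypergeometric function ${}_3F_2$, yielding the term-by-term comparison in equation (\ref{eq:correlated-Hamiltonian}). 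Without that step, or an equivalent exact evaluation, the $\ge$ in (\ref{eq:lemma_chi}) does not follow.
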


Theorem~\ref{Theorem: Correlated noise} implies that, even for this correlated unitary noise model, the coherence of the logical noise channel is heavily suppressed for large $n$. In fact, the ratio of the coherent to incoherent components of the logical noise channel is similar to what we found for the  uncorrelated case, where $h_1\approx \theta/2$; compare to equation (\ref{eq:compare-epsilon-delta}). 

\begin{proof}

The proof of Theorem \ref{Theorem: Correlated noise} is contained in the next few subsections. We'll compute first the coherent component of the logical channel, then the incoherent component, and finally we'll compare the two to obtain equation (\ref{eq:lemma_chi}).

The unitary operator $U= e^{-iH}$ can be expressed as
\begin{equation}
    U = \prod_i(c_1 -i s_1X_i)\prod_{i< j }(c_2 -is_2 X_iX_j) = c_1^n c_2^{n(n-1)/2}\prod_i(1 -i t_1X_i)\prod_{i< j }(1 -it_2 X_iX_j),
\end{equation}
where $s_1 = \sin h_1$, $c_1 = \cos h_1$, $t_1 = \tan h_1$, and likewise for $h_2$. In our computations, we will suppress the prefactor $c_1^n c_2^{n(n-1)/2}$, which is implicit in all formulas, and we will expand $U$ in a \textit{collisionless approximation}. That is, we will neglect terms in the expansion in which operators such as $X_i$ and $X_iX_j$ or $X_k X_i$ and $X_i X_j$ act on a qubit in common. The terms we are neglecting are systematically suppressed by powers of $n h_2$ compared to the terms we are keeping. More precisely, these corrections can be absorbed into a multiplicative renormalization of $h_1$ and $h_2$ by a factor $\left(1 + \mathcal{O}(n h_2)\right)$.

\subsection{Coherent component} 
Let us look first at the coherent component $\tilde \chi_{\scriptscriptstyle XI}$ of the logical chi matrix. For each syndrome of weight $k$, the physical error contributing to this logical component consists of an uncorrectable $X$ error of weight $n-k$ on the left of $\rho$ and a correctable $X$ error of weight $k$ on the right, where $k$ ranges from $0$ to $(n-1)/2$. The operators on the left and right are supported on disjoint sets of qubits. When we write these operators as products of one-body and two-body terms we will need to count the number of ways of dividing a set of $2 p$ $X$ errors into distinct combinations of $p$ two body terms. We denote this number by $\kappa_p$ where
\begin{equation}\label{eq:kappa-define}
    \kappa_p = \frac{(2p)!}{2^p p!}.
\end{equation}

Let us count the terms with $k_L$ factors of $t_2$ on the left and $k_R$ factors of $t_2$ on the right. In addition, there will be some number $w$ of factors of $t_1$ on the right and $n-2k_L-2k_R-w$ factors of $t_1$ on the left to fill out the coherent term. First we choose the $2k_L$ qubits on the left where the $t_2$ terms act; these qubits can be chosen in $\binom{n}{2k_L}$ ways. Once these $2k_L$ qubits have been chosen, there are $\kappa_{k_L}$ ways to divide up the qubits into pairs where the two-body terms act. Next, we  choose the $2k_R$ qubits on the right where the $t_2$ terms act. Because the operators on the left and right are supported on disjoint sets of qubits, these $2k_R$ qubits can be chosen in $\binom{n-2k_L}{2k_R}$ ways. Once these $2k_R$ qubits have been chosen, there are $\kappa_{k_R}$ ways to divide up the qubits into pairs where the two-body terms act. Of the remaining $n-2k_L - 2k_R$ qubits where no two-body terms act, we choose $w$ qubits on the left where the one-body terms acts; these can be chosen in $\binom{n-2k_L-2k_R}{w}$ ways. 
As usual, this contribution to the logical channel has a phase, which is determined by including a factor of $-i$ for each term in the Hamiltonian which acts from the left, and a factor of $i$ for each term in the Hamiltonian which acts from the right. By combining all these factors, we find a contribution to $\tilde \chi_{\scriptscriptstyle XI}$
\begin{equation}\label{eq:fixed-w-k}
    \left(i \right)^{n-w-2k_R-k_L} (-i)^{w+k_R} t_1^{n-2k_L-2k_R} t_2^{k_L+k_R} \binom{n}{2k_L} \binom{n-2k_L}{2k_R} \binom{n-2k_L-2k_R}{w} \kappa_{k_L} \kappa_{k_R} .
\end{equation}

Next we sum over $w$, taking care to note the $w$-dependent phase in equation (\ref{eq:fixed-w-k}). Fortunately, this sum can be evaluated explicitly using an identity satisfied by binomial coefficients, just as we saw in Section \ref{sec:rep-code}. The sum ranges from $w=0$ to $w=(n-1)/2-k_R$, so we have
\begin{equation}\label{eq:sum-w-binomial}
    \sum_{w=0}^{(n-1)/2-2k_R} (-1)^w \binom{n-2k_L-2k_R}{w} = (-1)^{(n-1)/2-2k_R} \binom{n-2k_L-2k_R-1}{(n-1)/2-2k_R}
    .
\end{equation}

To complete the evaluation of $\tilde \chi_{\scriptscriptstyle XI}$, it remains to sum over $k_L$ and $k_R$ in 
\begin{equation}\label{eq:sum-chi-XI}
   \tilde  \chi_{\scriptscriptstyle XI} = \sum_{k_L, k_R} \Omega(k_L,k_R) t_2^{k_L+k_R} t_1^{n-2k_L-2k_R},
\end{equation}
where from equations (\ref{eq:fixed-w-k}) and (\ref{eq:sum-w-binomial}) we have 
\begin{equation}
    \Omega(k_L,k_R) =  \left(i \right)^{n-k_L-k_R} (-1)^{k_R} (-1)^{(n-1)/2} \binom{n}{2k_L} \binom{n-2k_L}{2k_R} \binom{n-2k_L-2k_R-1}{(n-1)/2-2k_R} \kappa_{k_L} \kappa_{k_R} .
\end{equation}
In the sum in equation (\ref{eq:sum-chi-XI}), $2k_R$ can be any nonnegative integer less than or equal to $(n-1)/2$, and $2(k_R+k_L)$ can be any nonnegative integer less than or equal to $n-1$.

Our goal is to compare this coherent component with the incoherent component, which can also be expressed as a sum. Instead of performing an unrestricted sum over $k_L$ and $k_R$, we will consider the sum over $k_L$ where $k_L+k_R=q$ is fixed. This collects all the terms in  $\tilde \chi_{\scriptscriptstyle XI}$ of order $t_2^q$. Then we will follow a similar path to compute the incoherent component $\tilde \chi_{\scriptscriptstyle XX}$ to order $t_2^q$, so that we can compare the coherent and incoherent components in each order.

Let us isolate the parts of $\Omega(k_L,q-k_L)$ that depend on $q$ only (not on $k_R$), and let us introduce the shorthand $m=(n-1)/2$, finding
\begin{equation}
\label{eq: Omega}
    \Omega(q-k_R,k_R) = 
    \frac{(i)^{n-q} (-1)^{m} (m+1) }{ (n-2q) 2^q} \binom{n}{m} \frac{(m!)^2}{(2m-2q)!q!} 
    \times
    (-1)^{k_R} \binom{2m-2q}{m-2k} \binom{q}{k}
    ,
\end{equation}
where we have used equation (\ref{eq:kappa-define}).
Now we need to sum $k_R$ from $k_R=0$ to $k_R=q$, and then sum $q$ from $q=0$ to $q=(n-1)/2$.

We observe that, due to the oscillating sign $(-1)^{k_R}$, the sum over $k_R$ vanishes when $q$ is odd. This cancellation occurs because if we replace $k_R$ by $q- k_R$, the summand remains the same except for a change in phase $(-1)^q$. What's happening is that for each term contributing to $\tilde \chi_{\scriptscriptstyle XI}$ with $l$ factors of $i t_2$ on the right and $q-l$ factors of $-i t_2$ on the left, there is a corresponding term with $q-l$ factors of $i t_2$ on the right and $l$ factors of $-i t_2$ on the left. These two terms have equal magnitude but opposite sign, if $q$ is odd.
Similar cancellations occur in the computation of the incoherent component $\tilde \chi_{\scriptscriptstyle XX}$.

\subsection{Incoherent component} 
Now we can use similar reasoning to compute the incoherent component $\tilde \chi_{\scriptscriptstyle XX}$ of the logical channel. In this case, though, we will not perform a sum over all syndromes; instead we will keep only the contribution of lowest order in $t_1$ and $t_2$, arising from the syndrome of highest weight. This will suffice for deriving the lower bound in equation (\ref{eq:lemma_chi}), because the contributions to $\tilde \chi_{\scriptscriptstyle XX}$ higher order in $t_1$ and $t_2$ are nonnegative. Furthermore, keeping only the lowest-order term is a good approximation when $t_1$ and $t_2$ are sufficiently small. 

For $n$ odd, this leading-order contribution arises from terms with $X$ acting $(n+1)/2$ times from both the left and the right. In a term with $k_L$ factors of $t_2$ on the left and $k_R$ factors of $t_2$ on the right, there will also be $(n+1)/2-2k_L$ factors of $t_1$ on the left, and $(n+1)/2-2k_R$ factors of $t_1$ on the right. Summing over $k_L$ and $k_R$, and arguing as in our discussion of the coherent contribution, we find 
\begin{equation}
    \tilde{\chi}_{\scriptscriptstyle XX} = \sum_{k_L, k_R} \Delta(k_L,k_R) t_2^{k_L+k_R} t_1^{n+1-2k_L-2k_R} + \cdots.
\end{equation}
Here 
\begin{equation}
    \Delta(k_L,k_R) = (i)^{m+1-k_L} (-i)^{m+1-k_R} \binom{n}{m} \binom{m+1}{2k_L} \binom{m+1}{2k_R} \kappa_{k_L} \kappa_{k_R},
\end{equation}
we have defined $m=(n-1)/2$, and the ellipsis indicates nonnegative higher-order corrections. 
We can again introduce $q=k_L+k_R$ and isolate the portion of $\Delta(q-k_R, k_R)$ that depends only on $q$:
\begin{equation} \label{eq: Delta}
    \Delta(q-k_R,k_R)  =  \frac{(i)^{q}}{2^q} \binom{n}{m} \frac{ ((m+1)!)^2}{(2m-2q+2)! q!} 
    \times (-1)^{k_R} \binom{2m-2q+2}{m-2k+1} \binom{q}{k};
\end{equation}
here $k_R$ is to be summed from $0$ to $q$, followed by a sum over $q$ from $0$ to $(n+1)/2$.
As for the coherent component, the sum over $k_R$ with $q$ fixed vanishes when $q$ is odd, due to the oscillating minus sign $(-1)^{k_R}$.

\subsection{Comparing the coherent and incoherent components} 
Now we are ready to compare $\tilde{\chi}_{\scriptscriptstyle XI}$ and $\tilde{\chi}_{\scriptscriptstyle XX}$. In both cases there is a sum over $k_R$ to perform for each even value of $q$, and by inspecting (\ref{eq: Omega}) and (\ref{eq: Delta}) we see that the $k_R$-dependent factors in $\Omega(q,k_R)$ and $\Delta(q, k_R)$ are nearly the same; the factor in $\Delta$ is obtained from the factor in $\Omega$ if we replace $m$ by $m+1$.
Because this factor grows rapidly with $m$, we see that the factor in $\Delta$ is larger than the factor in $\Omega$ for each value of $q$ and $k_R$, but that by itself does not suffice for comparing  $\tilde{\chi}_{\scriptscriptstyle XI}$ and $\tilde{\chi}_{\scriptscriptstyle XX}$, due to the alternating sign $(-1)^{k_R}$ in the sum over $k_R$.

To compare the coherent and incoherent logical noise components properly we must perform the sum over $k_R$. We will make use of the generalized hypergeometric function $\tensor[_3]{F}{_2}$. This function is defined 
\begin{equation}
\label{eq: 3F2 definition}
    \tensor[_3]{F}{_2} \left[ \begin{matrix} a, & b, & c\\ & d,& e\end{matrix}  \, ; \, \, z \right] = \sum_{k=0}^\infty \frac{(a)_k (b)_k (c)_k z^k}{(d)_k (e)_k k!},
\end{equation}
where $(a)_k$ denotes the Pochhammer function or the rising factorial
\begin{equation}
    (a)_k = a (a+1) (a+2) (a+3) \dots (a+k-1).
\end{equation}
If $a$ is a negative integer, then 
\begin{equation}
    \frac{(a)_k}{k!} = (-1)^k \binom{-a}{k}.
\end{equation}
and the sum over $k$ in equation (\ref{eq: 3F2 definition}) terminates --- instead of $0$ to $\infty$, the sum runs from $0$ to $-a$. The same is true if $b$ or $c$ is a negative integer. 

Using this definition of $\tensor[_3]{F}{_2}$, we can write the sum over $k_R$ of $\Omega$ or $\Delta$ in terms of $\tensor[_3]{F}{_2}$. We will have to distinguish the two cases $2q<m$ and $2q \geq m$, although we will see at the end that the final expressions will coincide for the two cases. Take the second term in equation (\ref{eq: Omega}). Supposing that $2q < m$, we can write
\begin{align}
\label{eq: Hypergeometric form of Omega}
    \sum_{k_R} (-1)^{k_R} \binom{2m-2q}{m-2k} \binom{q}{k} &= \binom{2m-2q}{m} \sum_{k_R} \binom{q}{k} \frac{m (m-1) \dots (m-2k_R+1)}{(m-2q+2k_R) \dots (m-2q+1)} \nonumber
    \\
    &= \tensor[_3]{F}{_2} \left[ \begin{matrix} -q, & \frac{1-m}{2}, & \frac{-m}{2}\\ & \frac{m+1}{2}-q, & \frac{m}{2}-q+1 \end{matrix}  \, ; \, \, 1 \right] \binom{2m-2q}{m}.
\end{align}
Then we can apply Dixon's identity for the hypergeometric function $\tensor[_3]{F}{_2}$. This reads
\begin{equation}
    \tensor[_3]{F}{_2} \left[ \begin{matrix} a, & b, & -c\\ & 1+a-b,& 1+a+c \end{matrix}  \, ; \, \, 1 \right] = \frac{\Gamma(1+ \frac{a}{2}) \Gamma(1+\frac{a}{2}-b-c) \Gamma(1+a-b) \Gamma(1+a-c)}{\Gamma(1+a) \Gamma(1+a-b-c) \Gamma(1+\frac{a}{2}-b) \Gamma(1+\frac{a}{2}-c)};
\end{equation}
\textit{c.f.} equation (2.3.3.5) in \cite{Slater66}.
Applying this formula to equation (\ref{eq: Hypergeometric form of Omega}) we get
\begin{align}
\label{eq: Apply Dixon's theorem}
   & \sum_{k_R} (-1)^{k_R} \binom{2m-2q}{m-2k} \binom{q}{k}\nonumber\\ 
   &\quad = \frac{(-q/2)!}{(-q)!} \frac{\Gamma(m-q/2-1/2) \Gamma(m/2-q-1/2) \Gamma(m/2-q)}{\Gamma(m-q-1/2) \Gamma(m/2-q/2-1/2) \Gamma(m/2-q/2)}  
    \times \binom{2m-2q}{m}.
\end{align}

We need to do something about the first factor on the right hand side $(-q/2)!/(-q)!$ because the gamma function has poles at each negative integer. However, this ratio can still be defined: 
\begin{equation}
    \frac{(-q/2)!}{(-q)!} = \lim_{q/2 \rightarrow \mathrm{Integer}} \frac{\Gamma(-q/2+1)}{\Gamma(-q+1)} = (-q)_{q/2} = (-1)^{q/2} \frac{q!}{(q/2)!}.
\end{equation}
We can substitute this into equation (\ref{eq: Apply Dixon's theorem}) and we find that we can simplify the expression 
\begin{equation}
\label{eq: Simplify Dixon's result}
    \sum_{k_R} (-1)^{k_R} \binom{2m-2q}{m-2k_R} \binom{q}{k_R} =\frac{(-1)^{q/2} (2m-q)! q!}{(m-q/2)!(q/2)! m!}.
\end{equation}
Up until now we have assumed $2q <m$. If we instead assume $2q \leq m$ we find that the intermediate steps look different, but we arrive at the same final answer as in equation (\ref{eq: Simplify Dixon's result}).

Now we can compute the sum of equation (\ref{eq: Omega}) as $k_R$ goes from $0$ to $q$ using what we found in equation (\ref{eq: Simplify Dixon's result}). We can also apply our result to perform the sum over $k$ for equation (\ref{eq: Delta}). This gives:
\begin{align}\label{eq:sum-Omega-Delta}
  \Omega(q)\equiv  \sum_{k_R} \Omega(q-k_R,k_R) & =  \frac{(2m-q)!(m+1)!}{\left(m-q/2\right)! (q/2)! (2m+1-2q)! 2^q}\binom{n}{m}, \notag
    \\
\Delta(q)\equiv    \sum_{k_R} \Delta(q-k_R,k_R) & =  \frac{(2m+2-q)!(m+1)!}{\left(m-q/2+1\right)! (q/2)! (2m+2-2q)! 2^q}\binom{n}{m}.
\end{align}    
The ratio of these quantities is
\begin{equation}\label{eq:correlated-Hamiltonian}
\frac{\Omega(q)}{\Delta(q)}
  =  \frac{(2m+2-2q)(m-q/2+1)}{(2m-q+2)(2m-q+1)}=\frac{n+1 - 2q}{2n -2q} \leq  \frac{n+1}{2n}.
\end{equation}
Now we can sum over $q$; because all terms are nonnegative and the bound holds for every $q$, we conclude
\begin{equation}
    \tilde{\chi}_{\scriptscriptstyle XX} > \frac{2n}{n+1} t_1 \tilde{\chi}_{\scriptscriptstyle XI},
\end{equation}
thus proving Theorem \ref{Theorem: Correlated noise}. 
\end{proof}

\subsection{Summary}

By setting $q=0$, we can check that the result in equation (\ref{eq:sum-Omega-Delta}) matches what we found in Section \ref{sec:rep-code} for the uncorrelated case. It is also instructive to consider the expansion of $\tilde\chi_{\scriptscriptstyle XI}$ in powers of $t_2$, under the assumption $q\ll m$. From equation (\ref{eq:sum-Omega-Delta}) we see that
\begin{equation}\label{eq:Omega-q-correction}
    \Omega(q) = \left(\frac{m^{q/2}(2m)^{2q}}{2^q (q/2)!(2m)^q}+\cdots\right)\Omega(0)
    =\left(\frac{m^{3q/2}}{ (q/2)!}+\cdots\right)\Omega(0),
\end{equation}
where the ellipsis indicates $\mathcal{O}(q/m)$ corrections.

Restoring the factors of $t_1$ and $t_2$ from equation (\ref{eq:sum-chi-XI}), we see this expansion in $t_2$ generates a multiplicative correction to $\tilde \chi_{\scriptscriptstyle XI}$ which exponentiates:
\begin{equation}
    \sum_{q \,{\rm even}}
    \frac{1}{(q/2)!}\left(\frac{m^3t_2^2}{t_1^4}+\cdots\right)^{q/2}\approx \exp\left(m^3 t_2^2/ t_1^4\right)
    .
\end{equation}
Since the sum over $q$ is dominated by terms with $m^3 t_2^2/ t_1^4 \sim q$, this exponential series should be a good approximation for $m^3 t_2^2 t_1^4 \ll m$, or $m t_2 \ll t_1^2$, since in that case neglecting the terms higher order in $q/m$ can be justified. Under this condition, the two-body terms in the Hamiltonian in equation (\ref{eq:correlated-Hamiltonian}) make a small contribution to the total energy, suppressed by $\mathcal{O}(t_1)$ compared to the one-body terms. Recall that we also needed $m t_2 \ll 1$ to justify the collisionless approximation used in the proof of Theorem \ref{Theorem: Correlated noise}; this condition is subsumed by $m t_2 \ll t_1^2$ if $t_1 = \mathcal{O}(1)$.

We see that there is a regime
\begin{equation}
    \frac{1}{m}\gg \frac{t_2}{t_1^2}\gg \frac{1}{m^{3/2}}
\end{equation}
in which our approximations are reliable, yet the multiplicative corrections to $\tilde \chi_{\scriptscriptstyle XI}$ are large. That large corrections occur, even when the two-body terms make a small contribution to the total energy, is not a surprise; we have found as expected that the noise correlations can substantially enhance the probability of a logical error. The important point established by Theorem~\ref{Theorem: Correlated noise} (at least for the simple noise model we have analyzed) is that even when the correlated noise produces large corrections to the logical channel, the corrections occur in both the coherent part and the incoherent part of the channel, so that our conclusion that the coherence is strongly suppressed for large $n$ continues to apply. 

It is not immediately obvious why the leading power of $m$ in equation (\ref{eq:Omega-q-correction}) should be $m^{3q/2}$, because higher powers of $m$ occur in $\Omega(q-k_R, k_R)$ and $\Delta(q-k_R, k_R)$ for each fixed $k_R$ and $q$. It turns out that these higher powers of $m$ all cancel when we do the sum over $k_R$. In Appendix \ref{app: Correlated noise} we explain why these cancellations occur, providing a useful check on our results. 

\section{The toric code against coherent noise}\label{sec:toric-code}
We now analyze the logical channel for the two-dimensional toric code on an $L\times L$ square lattice, where $L$ is odd. We'll consider uncorrelated unitary noise acting on the $2L^2$ qubits, and suppose that error correction is performed using minimal-weight decoding. Our goal is to show that, when the noise is sufficiently weak, the coherence of the logical noise channel is highly suppressed for large $L$.

Our analysis will draw heavily on the tools we developed in our study of the repetition code. Before proceeding further, we will review some notation. 

\subsection{The Toric Code}

We will consider the 2D toric code, which is defined on a square lattice with qubits placed on edges. We choose a square patch of lattice with side length $L$ and identify opposite edges. (The toric code can be constructed on a lattice with boundaries, but for simplicity we choose periodic boundary conditions.) The stabilizer group for the toric code is generated by the $X$ and $Z$ generators shown in figure \ref{fig: Toric code stabilizers}. The logical operators of the toric code are topologically non-trivial loops that wrap around the torus. Figure \ref{fig: Toric code logical operators} shows two logical operators.

\begin{figure}
    \centering
    \includegraphics[width=7cm]{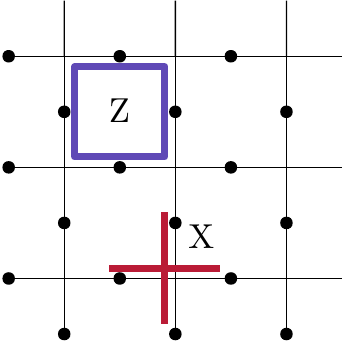}
    \caption{In blue is a $Z$-type stabilizer generator for the toric code. There are $Z$ generators on every plaquette in the lattice. In red is an $X$-type stabilizer generator. There are $X$ generators at every vertex of the lattice.}
    \label{fig: Toric code stabilizers}
\end{figure}

\begin{figure}
    \centering
    \includegraphics[width=7cm]{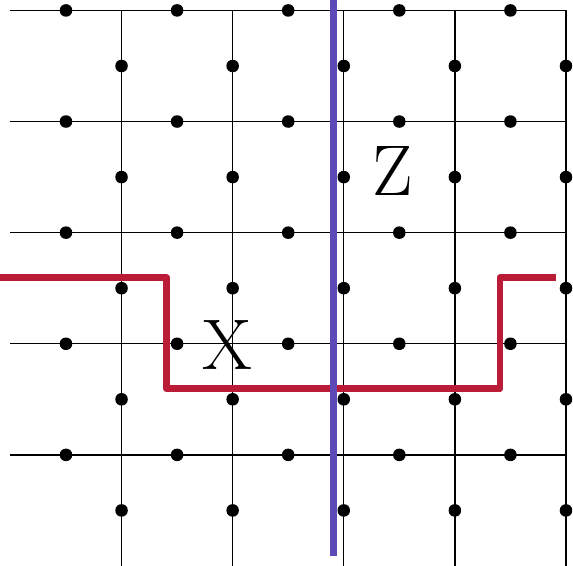}
    \caption{This figure illustrates two of the logical Pauli operators in the toric code. $X$ and $Z$ logical operators are shown for one of the two encoded qubits. Each logical operator is a topologically non-trivial loop that wraps around the torus. The logical operators for the other encoded qubit are the similar, but rotated by 90 degrees.}
    \label{fig: Toric code logical operators}
\end{figure}

The toric code is parameterized by the linear dimensions of the lattice; when the side length is $L$, the code distance (the minimum weight of a nontrivial logical operator) is $L$. We will also sometimes refer to $L$ as the code ``size.'' The number of physical qubits in the code block is $2L^2$, and there are two encoded logical qubits. To analyze the logical channel, we must choose a decoding procedure. Decoding the toric code is a well-studied problem and many good algorithms are known \cite{dennis2002topological, Duclos-Cianci2010, Bravyi2014}. We will choose minimal-weight decoding, in which the applied recovery operation has the lowest possible weight consistent with the measured error syndrome. This recovery operation can be computed efficiently on a classical computer \cite{edmonds1965paths}, and corrects the error with a success probability that is exponentially close to 1 when $L$ is large and the noise is both sufficiently weak and sufficiently weakly correlated.

\subsection{Notation}

We will use the chi matrix to describe the physical noise channel $\mathcal{N}$ acting on the $2L^2$ qubits in the code block:
\begin{equation}
    \label{eq: Chi matrix definition}
    \mathcal{N}( \rho) = \sum_{i,j} \chi_{ij} \sigma^i \rho \, \sigma^j,
\end{equation}
where $\{\sigma^i\}$ is a basis of Pauli operators.
\begin{Definition}
    When we speak of a ``noise term'' we will mean a component of the chi matrix for the physical noise channel acting on the qubits in the code block. We will find it convenient to use the notation $(\sigma^i \rho \, \sigma^j)$ for the number $\chi_{ij}$, the coefficient of $\sigma^i \rho \, \sigma^j$ in the chi-matrix expansion in equation (\ref{eq: Chi matrix definition}).
\end{Definition}

We may choose the index that labels a Pauli operator to be $(s,a,x)$, where $\sigma({s,a,x})= E_s L_a G_x$; here $s$ denotes the error syndrome, $E_s$ is the standard error associated with the syndrome $s$, $L_a$ is a standard choice for the physical Pauli operator that acts  as the logical Pauli operator $\tilde L_a$, and $G_x$ is an element of the code stabilizer. To compute the logical chi matrix, we sum over the syndrome $s$ and the stabilizer elements, observing that the standard error $E_s$ is removed by the recovery procedure. Hence we find that a term in the logical chi matrix can be expressed in our notation as
\begin{equation}
    \label{eq: General logical component sum}
    \tilde{\chi}_{\scriptscriptstyle{a b}} \equiv (\tilde{L}_a \tilde{\rho} \tilde{L}_b^\dagger ) = \sum_{s,x,y} (E_s L_a G_x \rho \, G_y^\dagger L_b^\dagger E_s^\dagger).
\end{equation}
We say that the diagonal components of the logical chi matrix $\tilde \chi_{ab}$ with $a=b$ are ``incoherent'' noise terms. and that the off-diagonal terms with $a\ne b$ are ``coherent.''

\subsection{Coherent and Incoherent Logical Components}

We are going to analyze the coherent and incoherent sums separately at first. Using path counting, and assuming the noise is sufficiently weak, we will prove that in both cases the logical chi matrix is dominated by ``short logical strings'' (logical Pauli operators of relatively low weight), those with  length $\leq L+2\zeta$ for a constant $\zeta$. Then, by summing up the contributions due to these short logical strings, we will derive an inequality relating the coherent and incoherent components of the logical channel. 

Our argument will use equation (\ref{eq: General logical component sum}), where we have expressed the the logical chi matrix as a sum of terms in the physical chi matrix. In the next several sections we will analyze the sums contributing to coherent and incoherent components of $\tilde\chi_{ab}$. We will make a series of approximations to simplify the sums by neglecting certain terms. In the end we will demonstrate that the two sums are related by a constant factor. 

\subsection{The Coherent Sum}
\label{section: The Coherent Sum}

First, consider the coherent sum. The coherent components of the logical noise channel are sums of terms from the physical noise channel. We want to upper bound the magnitude of these coherent logical components. Before we go any further, we will make some simplifications. For one, we will neglect certain coherent logical noise components. We focus on the components of the logical noise $\tilde{\chi}_{\scriptscriptstyle{a b}}$, where exactly one of the operators $L_a$ and $L_b$ is identity and the other is either an $X$ or a $Z$ error on one of the two encoded qubits. These components of the logical noise channel can be expressed as a sum over physical noise terms:
\begin{equation}
    \tilde{\chi}_{\scriptscriptstyle{a I}} = \sum_{s,x,y} (E_s L_a G_x \rho \, G_y^\dagger E_s^\dagger),
    \label{eq: Coherent Term}
\end{equation}
where $L_a$ is either an $X$ or $Z$ logical error on one of the two encoded qubits of the toric code. In Appendix \ref{app: a and b nontrivial} we prove that we can neglect the coherent terms with non-trivial logical operators on both sides of $\rho$, and in Appendix \ref{app: X1X2 or Y1 terms} we prove that we can neglect $Y$ logical operators and operators that act non-trivially on both encoded qubits. The proof comes down to showing that terms with a non-trivial error on both sides of $\rho$, that act on both encoded qubits, or that apply a $Y$ to one of the logical qubits, have high weight relative the terms we keep. A further simplification concerns the structure of the noise model. Our result applies to a noise model in which the single-qubit unitary operator acting on each qubit has an axis of rotation and angle of rotation that varies somewhat from qubit to qubit. However, we will prove that the most coherent logical channel is one in which the same unitary operator is applied to each qubit, so we may confine our attention to that case for the purpose of deriving a bound on the relative strength of the coherent and incoherent parts of the logical channel. 

We will make use of another way of writing the coherent sum. Each coherent term in the form of equation (\ref{eq: Coherent Term}) can be unambiguously associated with a logical string. The product of the Pauli operators acting on the left-hand and right-hand sides is the logical operator $L_a G_x G_y$, which in general consists of a connected logical string wrapping around the code block, accompanied by some number of closed loops. To be concrete, if $L_a$ is a logical $X$ error, then the logical string contains only physical $X$ errors, the closed loops are either loops of $X$ errors which are disjoint from the logical string, or closed loops of $Z$ errors which may or may not intersect with the logical string or with the closed loops of $X$ errors (the intersections are the $Y$ errors).

\begin{Definition}
\label{Def: Connected coherent}
    For a given noise term $(E_s L_a G_x \rho \, G_y^\dagger E_s^\dagger)$, we can extract a connected logical string by removing the topologically trivial loops from $L_a G_x G_y$. Call this logical string $\mathcal{L}$. We define the ``connected part" of the noise term as the restriction to the qubits in $\mathcal{L}$. The connected part of $(E_s L_a G_x \rho \, G_y^\dagger E_s^\dagger)$ is a noise term given by
    \begin{equation}
        ((E_s L_a G_x) |_{\mathcal{L}} \, \rho  \, (G_y^\dagger E_s^\dagger)|_{\mathcal{L}}) ,
    \end{equation}
    where the symbol $|_{\mathcal{L}}$ denotes the restriction of an operator to the support of $\mathcal{L}$.
\end{Definition}

\begin{Definition}
\label{Def: Disconnected Coherent}
    For a noise term $(E_s L_a G_x \rho \, G_y^\dagger E_s^\dagger)$ the ``disconnected part" is the part of the noise term not in the connected part. Once again, we can define a continuous logical string $\mathcal{L}$ by removing all topologically trivial closed loops from $L_a G_x G_y$. The disconnected part of $(E_s L_a G_x \rho \, G_y^\dagger E_s^\dagger)$ is given by 
    \begin{equation}
        ((E_s L_a G_x) |_{\mathcal{L}^C} \, \rho  \, (G_y^\dagger E_s^\dagger)|_{\mathcal{L}^C}) ,
    \end{equation}
    where the symbol $|_{\mathcal{L}^C}$ denotes the restriction of an operator to the qubits in the complement of the support of $\mathcal{L}$.
\end{Definition}

Furthermore, we will be able to assume that all of the physical single-qubit errors in the connected part are $X$- or $Z$-type. For example, in the case of a logical $X$-type error, we may neglect terms in which a closed loop of $Z$ errors intersects with the logical string. To justify this assumption, we show in Appendix \ref{app: Physical Y Errors} that allowing $Y$ errors along the logical string will only make the logical noise channel less coherent. 

A coherent term contributing to the logical chi matrix element $\tilde \chi_{Z_1 I}$, which includes disconnected errors, is illustrated in figure \ref{fig: simple added error}. The disconnected part includes identity on the qubits without errors in addition to the the disconnected errors. 
\begin{figure}
    \centering
    \includegraphics[height = 7cm]{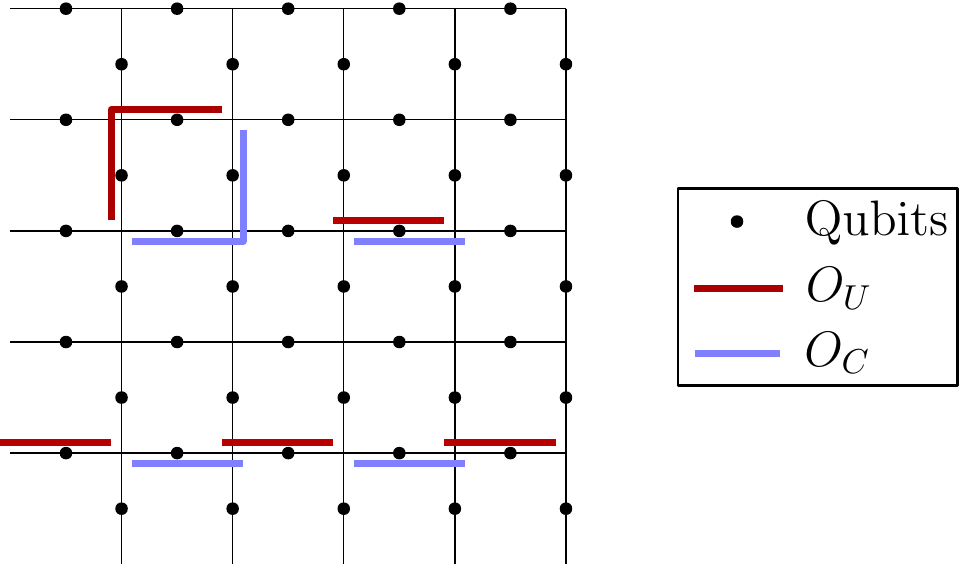}
    \caption{In this coherent term, the uncorrectable error $O_U$ (acting on the density operator from the left) is in red, while the correctable error $O_C$ (acting from the right) is in blue. Only $Z$ errors are shown. The connected logical string consists of the five qubits near the bottom that are split between red and blue. In addition, there are disconnected errors in the form of the closed loop containing two red edges and two blue edges and the pair of cancelling single-qubit errors acting on the left (in red) and right (in blue).
    }
    \label{fig: simple added error}
\end{figure}
$Z$ errors acting on the density operator from the left are shown in red, and $Z$ errors acting from the right are shown in blue. Because the errors acting from the left and right have the same syndrome $s$, the product of the left and right logical operators is logical. The connected logical string crosses the code block near the bottom of the figure. Associated with the syndrome $s$ is the corresponding standard error $E_s$, the $Z$ error of minimal weight with that syndrome. (If the minimal-weight error is not unique, we arbitrarily choose $E_s$ to be one of the errors of minimal weight by convention.) To evaluate the logical chi matrix element  $\tilde \chi_{Z_1 I}$ as in equation (\ref{eq: General logical component sum}), we need to sum over the syndrome $s$ and the stabilizer elements $G_x$ and $G_y$. To facilitate estimating the sum, it will be helpful to organize it in an appropriate way. 

To this end, we introduce the following definition:
\begin{Definition} 
\label{Def: Partition}
    For a logical string $\mathcal{L}$ with no topologically trivial closed loops, the word ``partition" denotes a connected noise term $(O_1 \rho \, O_2)$ such that $O_1 O_2 = \mathcal{L}$ and $O_1$ and $O_2$ are disjoint. In other words, each partition is a way of dividing the single-qubit errors in $\mathcal{L}$ into two subsets, $O_1$ and $O_2$. By definition $O_1$ and $O_2$ share the same syndrome. Because the code size $L$ is odd, exactly one of $O_1$ and $O_2$ will be corrected to a logical operator with the same action as $\mathcal{L}$ and the other will be corrected to the logical identity.
\end{Definition}
For each fixed logical string, the sum over all partitions of the logical string will produce the full set of connected terms derived from that logical string. The sum over partitions, for a fixed logical string, is directly analogous to the sum over syndromes we encountered in our analysis of the repetition code in Section \ref{subsec:rep-chi}. In the case of the toric code, we compute the coherent part $\tilde \chi_{\scriptscriptstyle Z_1 I}$ of the logical channel by summing over all possible logical strings, and for each choice of logical string we sum over all partitions of the logical string. In addition, for each chosen logical string, we sum over the possible disconnected pieces, the additional closed loops of $Z$ errors which are disjoint from the logical string. 

Schematically, the coherent component of the logical chi matrix is
\begin{equation}
    \label{eq: coherent sum string form}
 \tilde\chi_{\scriptscriptstyle Z_1 I} = \sum_{\mathrm{strings}} \sum_{\mathrm{partitions}} \left( \text{Connected Part} \right) \left( \text{Disconnected  Sum} \right).
\end{equation}
This form will allow us to approximate the coherent sum. Assuming that the noise is sufficiently weak, we will prove that we can truncate the sum over logical strings, including only short strings.  Furthermore, most of the short logical strings have a particular shape. To complete the argument, we will show that the disconnected sum is approximately the same for each short logical string and for each partition of the logical string.

\subsection{Counting of Logical Strings}
\label{sec: Path Counting}

We want to find an upper bound on the magnitude of the coherent component of the logical noise channel. We have already put the sum over physical noise terms into a convenient form by factoring out the disconnected piece of each term. Next, we will simplify the sum by restricting the set of connected pieces we need to consider; we will neglect the long logical strings in favor of those strings with length no larger than than $L+2\zeta$, where $\zeta$ is an $L$-independent constant. To justify this truncation we will require a strong assumption on how the physical noise strength scales with $L$; namely, the single-qubit rotation angles must scale as $1/L$.

In equation (\ref{eq: coherent sum string form}), we wrote the contribution of a given logical string to the coherent logical noise as a product of a connected and disconnected part as described in Definitions \ref{Def: Connected coherent} and \ref{Def: Disconnected Coherent}. The connected part summed over partitions as defined in Definition \ref{Def: Partition}. The sum over partitions contains $2^{w-1}$ terms for a weight-$w$ logical string (one containing $w$ lattice edges). Suppose that the unitary rotation $U^Z(\theta) = \exp\left(-i \frac{\theta}{2}Z\right)$ is applied to each physical qubit in the toric code block. We can upper bound the sum using the number of terms times the magnitude of each term. Then, the contribution of each logical string is upper bounded by $2^{w-1} ( |\sin (\theta/2)| \cos(\theta/2))^w$ times the factor from the disconnected part. We will prove in Section \ref{sec: disconnected pieces} and Appendix \ref{app: Disconnected errors} that the disconnected piece is $1$ plus a higher weight correction that we can neglect for short logical strings.

There is a regime where we can upper bound the number of logical strings as a function of the string's length. Asymptotically, the number $c_w$ of self-avoiding random walks with length $w$ was proven in \cite{Hammersley1960} to satisfy
\begin{equation}
\label{eq: path counting formula}
    c_w = \mu^{w + o(w)} ,
\end{equation}
where $\mu \approx 2.64$ for the 2D square lattice. We can start a walk from a fixed point along one edge of the toric code. Logical strings will be the self-avoiding walks that wrap around the torus and end at the starting point. We can use equation (\ref{eq: path counting formula}) to show that the contribution to the coherent logical noise from logical strings of length $\ell$ is exponentially decaying with $\ell$ as long as $|\theta| < \arcsin{1/\mu} \approx 0.39$. This statement apples only for logical strings with length much greater than the minimum of $L$, the code distance. We do not have a precise estimate indicating at what length above $L$ the number of logical strings begins to scale like equation (\ref{eq: path counting formula}). This means we do not know at what string length $\ell$ the contribution will begin to decay exponentially, and therefore we do not know where to truncate the sum if we wish to use equation (\ref{eq: path counting formula}) to bound the terms we are neglecting. In any case, in our subsequent analysis we will truncate the sum over the string length $\ell$ at $L+2\zeta$ for some constant $\zeta$. In this regime the asymptotic estimate in equation (\ref{eq: path counting formula}) is not helpful and we will not make use of it. Instead, we will assume that $|\theta|$ is sufficiently small that we can use the following lemma to bound the terms we neglect.

\begin{Lemma}
    Suppose that $|\sin \theta| < 1 / L$. In equation (\ref{eq: coherent sum string form}), we wrote $\tilde\chi_{\scriptscriptstyle{Z_1I}}$ as a sum over logical strings. If we truncate the sum to include only logical strings of length $w \leq L+2\zeta$, then magnitude of the difference between the truncated sum and the complete sum is
    \begin{equation}
        \leq \alpha L^{2\zeta+1} |\sin \theta|^{L+2\zeta} ,
    \end{equation}
    where $\alpha = (1-L |\sin \theta|)^{-1}$.
    \label{Lemma: Coherent path counting}
\end{Lemma}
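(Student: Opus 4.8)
The plan is to bound the magnitude of the discarded tail---the contributions of logical strings with length $w > L + 2\zeta$---by a triangle inequality applied string by string. This is the right tool here: unlike in the main sum, where tracking phases and exploiting cancellations is essential, the neglected terms need only be shown small in \emph{total} magnitude, so passing to $|\cdot|$ loses nothing we care about. Combining the setup preceding equation (\ref{eq: coherent sum string form}) with the fact (established in Section \ref{sec: disconnected pieces} and Appendix \ref{app: Disconnected errors}) that the disconnected sum attached to each short string equals $1$ up to a negligible higher-weight correction, the contribution of a single weight-$w$ logical string is at most $2^{w-1}\bigl(|\sin(\theta/2)|\cos(\theta/2)\bigr)^{w} = \tfrac12|\sin\theta|^{w}$: there are $2^{w-1}$ partitions, and each carries one factor of $|\sin(\theta/2)|$ (from the $Z$) and one of $\cos(\theta/2)$ (from the $I$) per qubit of the string. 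Writing $N_w$ for the number of weight-$w$ logical strings, the quantity to control is therefore $\sum_{w>L+2\zeta} N_w\,\tfrac12|\sin\theta|^{w}$.

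The key step is a \emph{polynomial} bound on $N_w$ valid for $w$ just above the minimal length $L$. This is exactly the regime in which the asymptotic self-avoiding-walk estimate (\ref{eq: path counting formula}) is useless, so I would argue directly from the geometry of near-minimal logical strings. A horizontally-winding self-avoiding cycle has minimal length $L$ (the $L$ straight strings, one per row), and each increment of $2$ in length beyond $L$ corresponds to a bounded detour that can be attached at $O(L)$ positions. Parametrizing a string of length $w = L + 2j$ by an overall position together with the placement and shape of the excess length $2j$, one obtains a crude but sufficient bound of the form $N_{L+2j} \le L^{2j+1}$; the exponential proliferation of self-avoiding walks never appears because the excess $2j$ is small. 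The detailed verification that these strings have a typical shape is supplied by the companion shape-of-strings lemmas; here only the counting bound is needed.

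With $N_{L+2j}\le L^{2j+1}$ in hand, the remainder is a geometric-series estimate. Since strings winding once around an odd-$L$ torus have length $L + 2j$ with $j \ge 0$, truncating at $w \le L + 2\zeta$ discards exactly the terms with $j \ge \zeta+1$, so
\begin{equation}
\sum_{w>L+2\zeta} N_w\,\tfrac12|\sin\theta|^{w} \;\le\; \tfrac12\, L\,|\sin\theta|^{L}\sum_{j\ge\zeta+1}\bigl(L|\sin\theta|\bigr)^{2j}.
\end{equation}
The hypothesis $|\sin\theta| < 1/L$ makes $q = (L|\sin\theta|)^2 < 1$, so the series sums to $q^{\zeta+1}/(1-q)$; using $1/(1+L|\sin\theta|)\le 1$ gives $1/(1-q) \le (1-L|\sin\theta|)^{-1} = \alpha$, and pulling out $(L|\sin\theta|)^2 < 1$ collapses the bound to
\begin{equation}
\tfrac12\,\alpha\, L^{2\zeta+1}|\sin\theta|^{L+2\zeta}\,(L|\sin\theta|)^{2} \;\le\; \alpha\, L^{2\zeta+1}|\sin\theta|^{L+2\zeta},
\end{equation}
which is the claimed estimate, with a factor of $2$ to spare that comfortably absorbs the neglected disconnected corrections.

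The main obstacle is the second step: establishing the polynomial counting bound on near-minimal logical strings. Everything else is bookkeeping plus a convergent geometric sum controlled by $L|\sin\theta|<1$. The genuine difficulty is that the clean asymptotic count of self-avoiding walks applies only for $w \gg L$, whereas the truncation length $L+2\zeta$ sits right at the bottom of the length spectrum; a hands-on combinatorial description of how the excess length is distributed along a winding cycle is required, and this is precisely where the companion shape lemmas do the real work.
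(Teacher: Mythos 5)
Your proposal is correct and follows essentially the same route as the paper's proof: bound each string's contribution by the number of partitions times the magnitude of each term ($2^{w-1}(|\sin(\theta/2)|\cos(\theta/2))^{w}$), bound the number of near-minimal logical strings polynomially by placing the $(w-L)/2$ steps up and down at $O(L)$ positions each, set the disconnected factor to $1$ via the companion lemmas, and close with a geometric series controlled by $L|\sin\theta|<1$. Your version of the tail sum (in $q=(L|\sin\theta|)^2$, starting at the first genuinely discarded length $L+2\zeta+2$) is marginally tighter than the paper's, but the argument is the same.
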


\begin{proof}

We begin by fixing a point along one edge of the code block, which can be chosen in $L$ ways. We will count the number of logical strings that wrap around the torus and pass through that fixed point on the edge. Let $\ell$ denote the logical string length. At minimum length $\ell = L$, there is only one logical string. At length $\ell = L+2$, if the logical string runs left to right across the code, then the string features one step up and one step down. There are $L(L-1)$ such logical strings. At longer string lengths, there are many steps up and down.
We can upper bound the number of such logical strings by supposing we choose any of the $L$ positions to place each of the steps up and steps down. We divide by $((\ell-L)/2)!$ to capture the fact that the $(\ell-L)/2$ steps up are all identical, and the same for the steps down. This encompasses all possible combinations of steps up and down including cases where the several steps up are placed at the same point creating a step up of more than one. It does not encompass strings that backtrack, but in Lemma \ref{Lemma: shape of strings A} we show that among strings of length $L+2 \zeta$, those that feature backtracking are suppressed by $\mathcal{O}(1/L)$. Also, 
The number of strings grows most quickly near minimum and eventually approaches the asymptotic value, where the number of strings grows like $\mu^L$. In the asymptotic regime, the number of strings grows much slower than $L^{\ell-L}$. We conclude that
\begin{equation}
\label{eq: Logical strings with length l}
    \text{number of logical strings of length $\ell$} \leq L L^{\ell-L}.
\end{equation}

In equation (\ref{eq: coherent sum string form}), for each logical string in the sum the contribution to the logical noise is a sum over partitions of the connected part times a disconnected part. 
We will discuss the sum over partitions in detail in Section \ref{sec: sum over partitions}, but for now it is enough that we know that the sum over partitions contains $2^{\ell-1}$ terms for each connected logical string of length $\ell$. These terms have different phases and in general the sum can be complicated. We can obtain a simple bound by multiplying the number of terms by the magnitude of each term, in other words treating all the phases as if they are the same. For each weight $w$ string,
\begin{equation}
    \sum_{\mathrm{partitions}} (\text{Connected Part}) \leq 2^\ell (|\sin \theta/2| \cos \theta /2)^\ell.
\end{equation}

We still have to handle the disconnected piece. In Section \ref{sec: disconnected pieces} we will argue that the disconnected sum decreases as the length of the logical string increases. Furthermore, the disconnected part equals $1$ up to corrections which are small for logical strings with length $\leq L+2 \zeta$ for a constant $\zeta$. This means that we can upper bound the coherent logical noise component $\tilde\chi_{\scriptscriptstyle{Z_1 I}}$ by
\begin{equation}
    \label{eq: path counting}
    \sum_{\ell=L}^{\ell_{\rm max} } L^{\ell-L+1} |\sin \theta|^\ell ,
\end{equation}
where $\ell_{\rm max}$ is the longest $Z_1$ logical string supported on the code. If $|\sin \theta |< 1/L$, the contribution from logical strings of length $\ell$ decreases exponentially with $\ell$.

If we truncate the sum over logical strings to those with weight $w\le L+2 \zeta$, the error we make is equal to the total contribution of strings with weight $w>L+2 \zeta$. The contribution at weight $w$ is exponentially decreasing with $w$, so we can bound the sum over the long logical strings using
\begin{equation}
    \label{eq: sum over tail}
    \sum_{\ell=c}^{\infty} \beta^\ell = \frac{\beta^c}{1-\beta} = \alpha \beta^c \quad 0 < \beta < 1 ,
\end{equation}
where $\alpha = \frac{1}{1-\beta}$.
We conclude that the absolute error we make by truncating the series is
\begin{equation}
\label{eq: Path counting coherent tail bound}
    \leq \alpha L^{2\zeta+1} |\sin \theta|^{L+2\zeta}
\end{equation}
where $\alpha = (1-L |\sin \theta|)^{-1}$.
Therefore, the error due to truncation is exponentially small in both $L$ and $\zeta$.

\end{proof}

In Lemma \ref{Lemma: Coherent path counting}, we proved an upper bound on the absolute magnitude of the error due to truncation in the coherent sum. However, so far we have not described any lower bound on the terms that we have kept, arising from the logical strings with length $\le L + 2\zeta$.
Therefore, we have not yet justified that the error we have neglected is small relative to the coherent noise contributions that we kept. However, we will prove in Section \ref{sec: Mismatched weights} that the \emph{incoherent} logical noise component is at least $L \binom{L}{\frac{L+1}{2}} (\frac{\sin \theta}{2} )^{L+1}$; compared to this incoherent component the contribution in equation (\ref{eq: Path counting coherent tail bound}) to the coherent component due to strings of length $> L + 2\zeta$ is suppressed by a factor $(L\sin\theta)^{2\zeta}$. This means that the error we make in truncating the sum in Lemma \ref{Lemma: Coherent path counting} is negligible compared to the incoherent component, an observation which will be helpful for showing that the coherence of the logical channel is suppressed.  For now, we will restrict our attention to connected logical strings with length $\leq L+2\zeta$ for a constant $\zeta$. We will refer to these as ``short logical strings."
\begin{Definition}
    A ``short logical string" is a nontrivial logical Pauli operator with no topologically trivial closed loops and length $\leq L+2\zeta$, where $L$ is the code size and $\zeta$ is our chosen cutoff constant.
\end{Definition}

\subsection{Sum Over Partitions}
\label{sec: sum over partitions}

In the previous section we restricted our attention to short logical strings, which have length $\leq L+2 \zeta$ where $L$ is the code size and $\zeta$ is a constant. We can go further by characterizing the shape of a logical string, and arguing that logical strings with shape meeting certain criteria give a dominant contribution to the logical channel. 

\begin{Definition}
\label{Def: typical shape}
    Among short logical strings, we will speak of those with ``typical shape." This means two things. First, supposing that the logical string in question runs left to right across the code block, then the steps up and down along the string are by one lattice spacing at a time. Furthermore, the string contains no backtracking steps that moving from right to left. Second, the individual steps up and down are separated from each other by at least $\gamma \sqrt{L}$, where $\gamma$ is a small constant we may choose. This constant $\gamma$ will appear in the error term in many of our subsequent estimates.
\end{Definition}
In Lemmas \ref{Lemma: shape of strings A} and \ref{Lemma: shape of strings B} in Appendix \ref{app: Shape of string} we prove that most short strings have a typical shape. Among short strings with length $\leq L+2 \zeta$, the fraction of atypical strings relative to the total number of logical strings of the same length is 
\begin{equation}\label{eq:atypical-fraction}
    \frac{\textrm{Atypical strings}}{\textrm{Total strings}} = \frac{8 \gamma \zeta^2}{\sqrt{L}} +\mathcal{O}\left( \frac{1}{L} \right)
\end{equation}
Figure \ref{fig: weight 15 string A} illustrates a string with typical shape for some small $\gamma$. Short logical strings with typical shape are simple, which makes our analysis easier, particularly when we discuss the sum over partitions.

Let's revisit the sum over partitions for a fixed connected logical string. That is, for a given logical string contributing to $\tilde\chi_{\scriptscriptstyle{Z_1I}}$, we wish to enumerate all the ways to divide the $Z$ errors along the string into an uncorrectable error acting on the density operator from the left and a correctable error acting from the right. This sum over partitions of a fixed logical string is analogous to the sum we encountered when we summed over syndromes in our analysis of the repetition code. In the case of the repetition code of length $n$, there is just one length-$n$ ``logical string'' to consider, and summing over syndromes is equivalent to summing over all ways of choosing a (correctable) error acting on the right that has weight at most $(n-1)/2$ (where $n$ is odd). 

In the toric code, although the sum over partitions is similar to the sum over syndromes in the repetition code, there is a complication. 
\begin{Definition}
    \label{Def: Exceptional term}
    An ``exceptional term" is a partition of a connected logical string $\mathcal{L}$ such that the uncorrectable error has lower weight than the correctable error.
\end{Definition}
In some cases, depending on the geometry of the logical string, we will have some number of exceptional terms. These exceptional terms complicate our analysis of the logical channel. Fortunately, because we need only consider contributions to the the logical channel arising from short logical strings when the noise is weak enough, we will be able to fully characterize the exceptional terms and show they are negligible.

How exceptional terms can occur is illustrated in figure \ref{fig: weight 7 uncorrectable A}. Here, for the toric code with $L=9$, we consider the logical string of length 15 shown in figure \ref{fig: weight 15 string A}, and we have chosen a partition such that the uncorrectable error shown in red has weight 7, while the correctable error shown in blue has weight 8. Note that the minimal-weight standard error associated with the error syndrome on the logical string has weight 6 --- it follows nearly the same path as the correctable error, but achieves a lower weight than the correctable error by taking a ``shortcut'' across the blue notch on the logical string. Another example of an exceptional term for this same logical string is shown in figure \ref{fig: weight 6 uncorrectable A}, where this time the weight of the uncorrectable error is 6, and the minimal-weight error has weight 5. Again, the minimal-weight error takes a shortcut, avoiding the excursions up and down followed by the correctable error.

For all these examples, the correctable error contains the qubits along the logical string that make the furthest excursions up and down. This turns out to be a universal rule, at least among the typical short logical strings ---  for exceptional terms, the uncorrectable error has no support on the outermost steps along the string. 
In the next lemma we count the number of exceptional terms and find that relative to the total number of partitions of a typical short logical string, these exceptional terms are exponentially unlikely in $L$.

\begin{figure}
    \centering
    \includegraphics[height = 8cm]{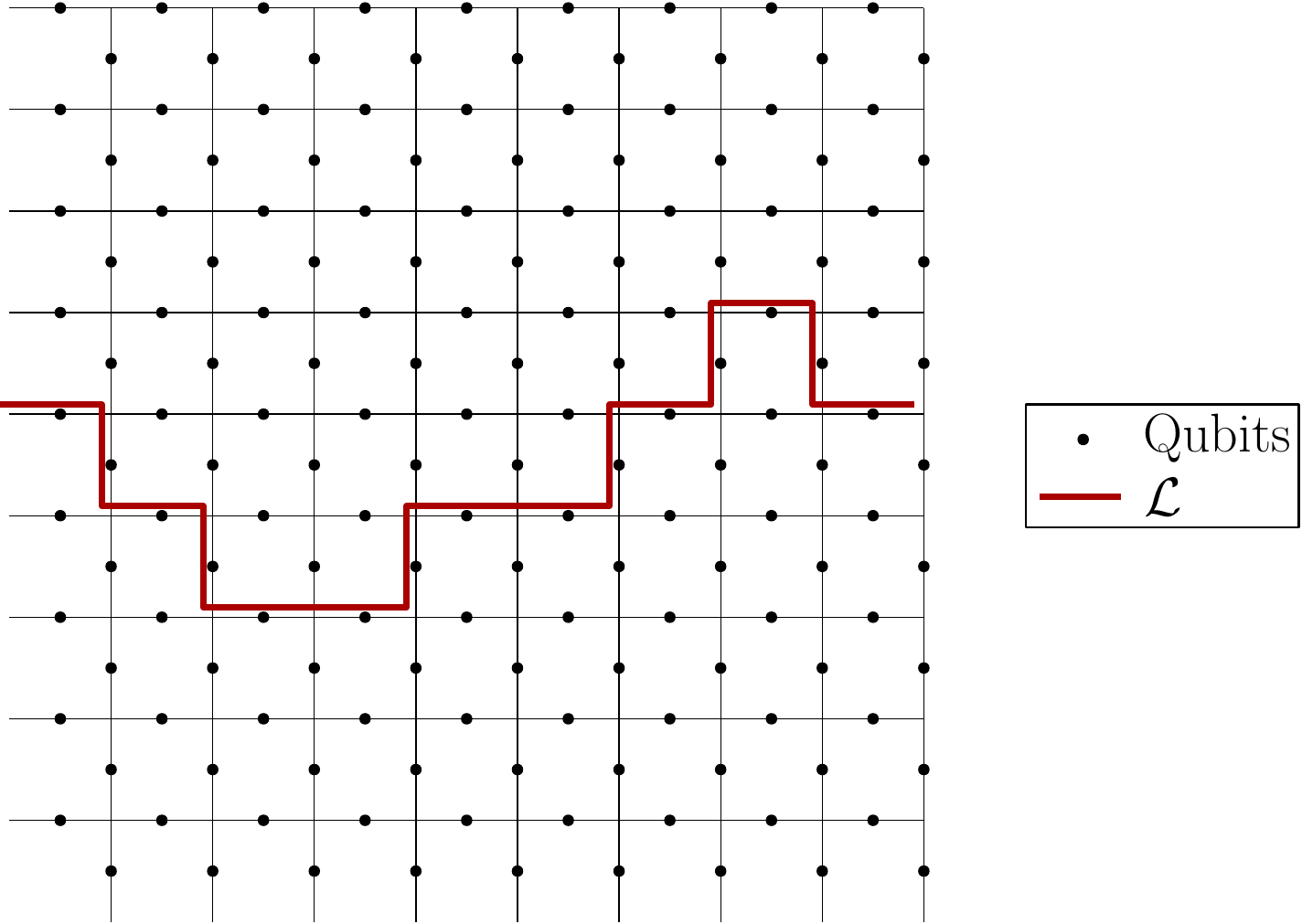}
    \caption{Here we have one logical string $\mathcal{L}$ of length 15 in an $L = 9$ toric code. Imagine the code size growing while $\zeta$ remains fixed. The likely strings will be those where the steps up and down are widely separated.}
    \label{fig: weight 15 string A}
\end{figure}

\begin{figure}
    \centering
    \includegraphics[height = 8cm]{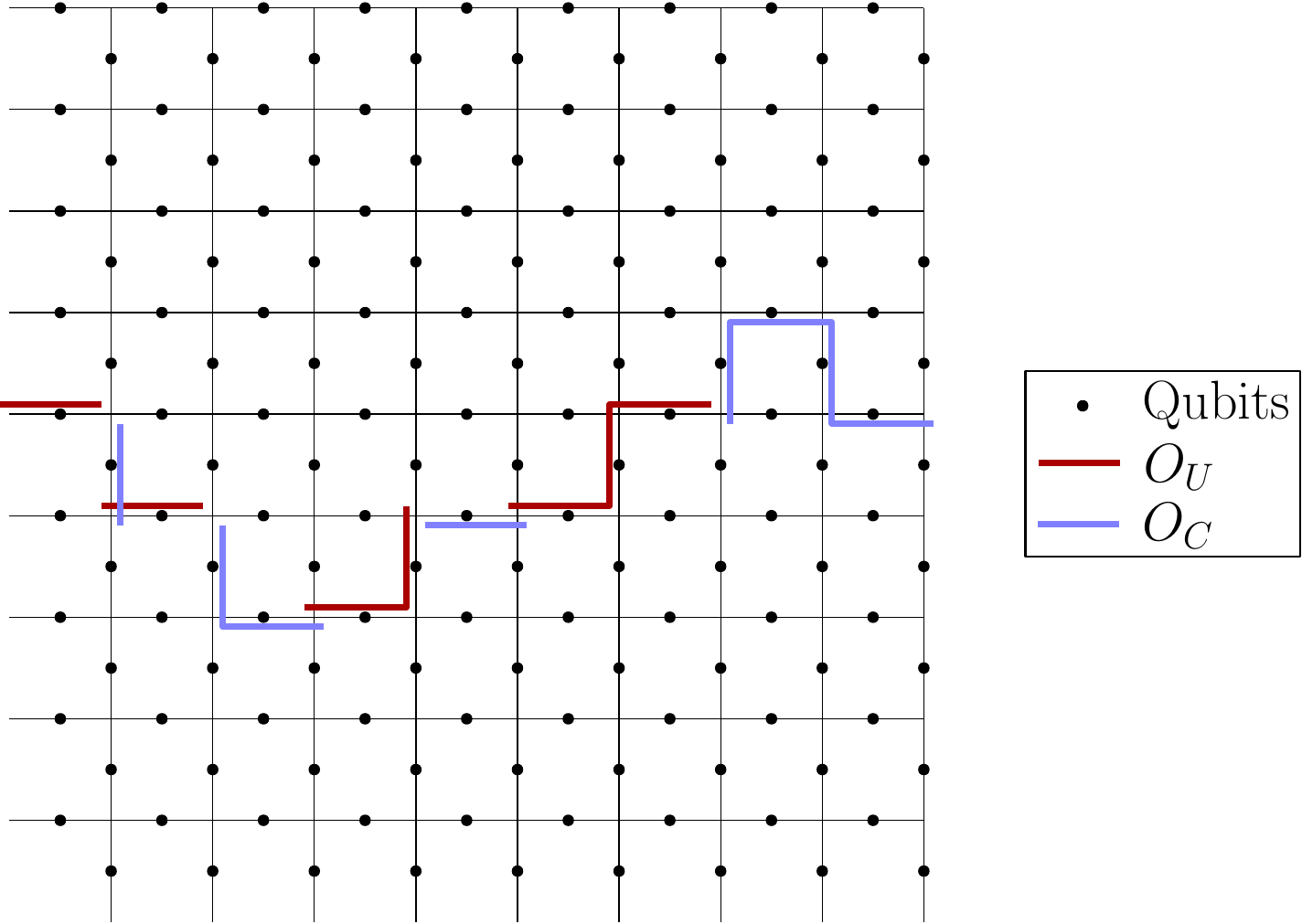}
    \caption{Now we choose a subset of 7 of the errors in the logical string in figure \ref{fig: weight 15 string A}. The uncorrectable error $O_U$ is in red and the correctable error $O_C$ is in blue. All three errors along the ``cap" in the top right appear on the correctable side. For this reason, the correctable error has weight 8, which is higher than the uncorrectable error with weight 7. We call this a weight-7 exceptional term.}
    \label{fig: weight 7 uncorrectable A}
\end{figure}

\begin{figure}
    \centering
    \includegraphics[height = 8cm]{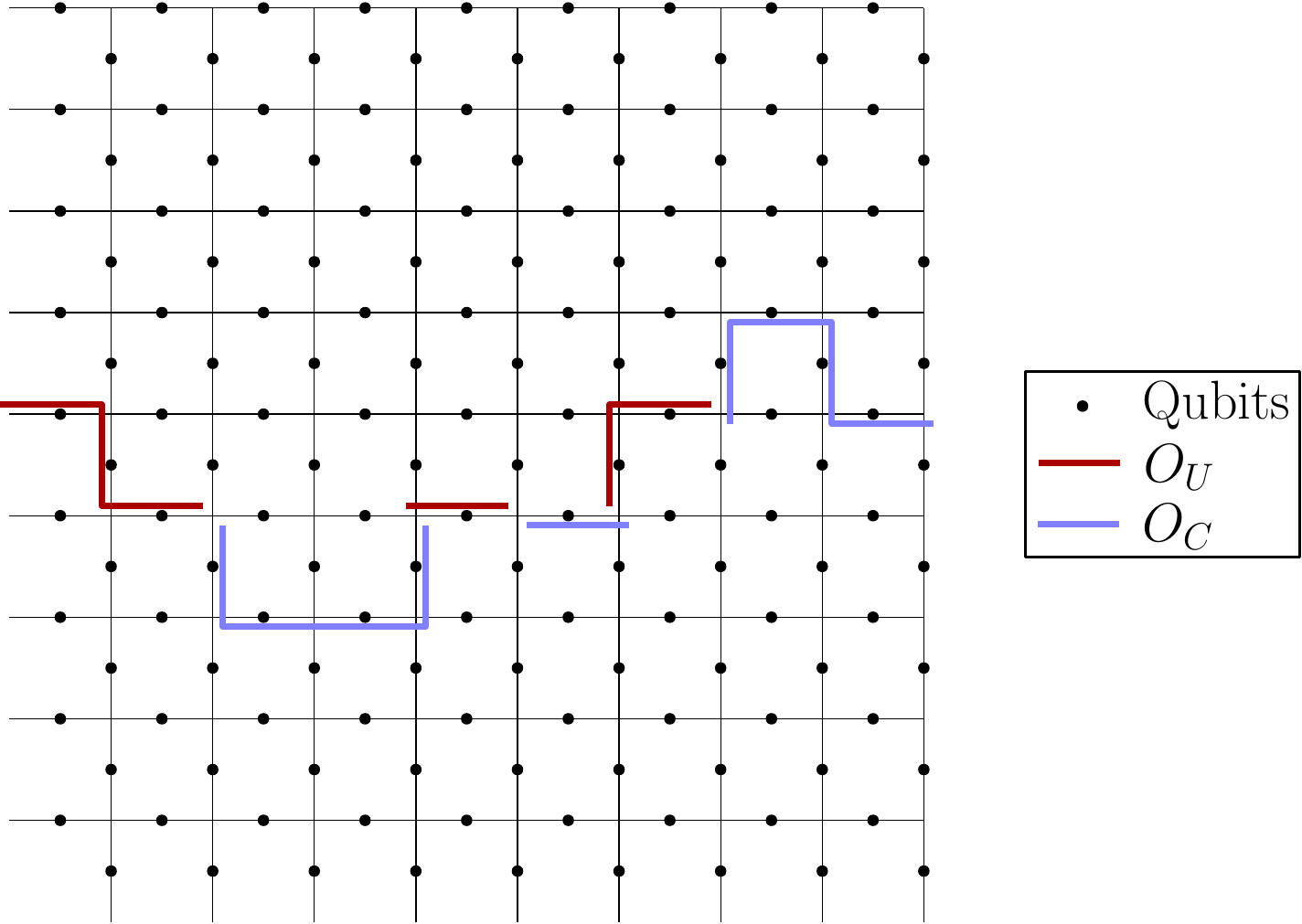}
    \caption{Again, one possible partition of the logical string in figure \ref{fig: weight 15 string A} is illustrated. The uncorrectable error $O_U$ is in red, and the correctable error $O_C$ is in blue. The correctable error includes all the errors along both the cap in the top right and the bottommost cap of the logical string. For this reason, the correctable error has weight 9, while the uncorrectable error has weight 6. Therefore, we call this partition a weight-6 exceptional term.}
    \label{fig: weight 6 uncorrectable A}
\end{figure}

\begin{Lemma}
Fix a logical string of length $\ell \leq L+2\zeta$, where $\zeta$ is a specified $L$-independent constant, with a typical shape according to Definition \ref{Def: Exceptional term}. This means that if the string runs left to right across the code block, it has steps up and down by one lattice spacing at a time and the steps are separated by at least $\gamma \sqrt{L}$ for some constant $\gamma$. To keep the fraction of atypical strings small in equation (\ref{eq:atypical-fraction}), we will choose $\gamma$ to be a sufficiently small constant. Now consider all the ways of partitioning this typical logical string into a correctable error and an uncorrectable error. Then, the fraction of exceptional partitions relative to all partitions of this string is bounded by 
\begin{equation}
    \frac{\mathrm{Exceptional}}{\mathrm{Total}} <  (\zeta+1) \exp (\gamma^2) \left(2  \right)^{-\gamma \sqrt{L}} \left(1+\mathcal{O}(1/L)\right).
\end{equation}
Exceptional terms are exponentially rare for typical short logical strings and large $L$.
\label{Lemma: exceptional terms}
\end{Lemma}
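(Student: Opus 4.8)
The plan is to bound the number of exceptional partitions by the number of partitions in which the uncorrectable error $O_U$ avoids the ``outermost'' edges of the string, and then to show that this avoidance constraint forces at least $\sim\gamma\sqrt{L}$ edges onto the correctable side $O_C$, costing a factor $2^{-\gamma\sqrt{L}}$ relative to the $2^{\ell-1}$ total partitions. Since the total number of partitions of a length-$\ell$ string is $2^{\ell-1}$ by Definition \ref{Def: Partition}, it suffices to count exceptional partitions and divide.

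First I would pin down the geometric characterization of exceptional terms already suggested by figures \ref{fig: weight 7 uncorrectable A} and \ref{fig: weight 6 uncorrectable A}. Given a partition $(O_U \rho \, O_C)$ of a typical short string $\mathcal{L}$, the common syndrome $s$ lives on the vertices where $O_U$ and $O_C$ meet, and the standard error $E_s$ is the minimal-weight chain with that syndrome. For each excursion (``cap'') of the string that lies entirely in $O_C$, $E_s$ can ``straighten'' the cap, replacing the up--over--down detour by the straight bulk path at the base height and saving weight $2$; this makes $E_s$ homologous to $O_C$, so that $O_C$ is corrected to the identity and $O_U$ to the logical operator. Writing $c_U$ and $c_C$ for the numbers of caps lying entirely in $O_U$ and in $O_C$, the correctable side is the one whose straightened representative is lighter, so the lighter side $O_U$ (with $w(O_U)<w(O_C)$) fails to be the correctable one precisely when $w(O_C)-w(O_U) < 2(c_C-c_U)$. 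Thus an exceptional partition requires $c_C>c_U\ge 0$, and in particular at least one full cap sitting in $O_C$ with $O_U$ having no support on its outermost (top) edges --- the ``universal rule'' noted above.

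The main point is that, for a typical short string, such a cap is wide. By Definition \ref{Def: typical shape}, consecutive steps are separated by at least $\gamma\sqrt{L}$, so the top run of any cap contains at least $\sim\gamma\sqrt{L}$ edges. Hence, for a fixed cap, the partitions in which $O_U$ avoids that cap's top run number at most $2^{\ell-1-\gamma\sqrt{L}}$, a fraction $2^{-\gamma\sqrt{L}}$ of the $2^{\ell-1}$ total partitions. Since each excursion costs two units of length and $\ell-L\le 2\zeta$, there are at most $\zeta$ caps; a union bound over which cap forces $O_U$ off its top run, together with the non-exceptional $c_C=c_U$ borderline, gives the prefactor $\zeta+1$. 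The remaining factor $\exp(\gamma^2)$, along with the $(1+\mathcal{O}(1/L))$, is the correction incurred in estimating the precise top-run length and the weight window $0<w(O_C)-w(O_U)<2\zeta$ via Stirling's approximation; since these only further restrict the count, they can be bounded generously to yield the stated inequality.

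I expect the geometric characterization of the second step to be the main obstacle. The clean ``straighten each cap independently, save $2$ per cap'' picture is only valid when the minimal-weight matching pairs defects locally; one must verify that no rerouting through distant defects --- which is exactly what the $\gamma\sqrt{L}$ separation is designed to prevent --- can make some other homology class lighter, and that a cap carrying even one top edge in $O_U$ genuinely cannot be straightened and so cannot produce an exceptional term. Establishing this rigorously amounts to controlling the minimum-weight perfect matching on the string's defects, and it is here that the typical-shape hypothesis (unit steps, wide separation, no backtracking) does the real work; the subsequent counting is then routine.
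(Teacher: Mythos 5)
Your proposal is correct and follows essentially the same route as the paper's proof: the same geometric characterization of exceptional partitions via the chain $|O_C|>|O_U|>|E_s|$ forcing a full ``cap'' into $O_C$, the same use of the $\gamma\sqrt{L}$ separation to make avoiding a cap's top run cost a factor $2^{-\gamma\sqrt{L}}$, and the same union bound over the $\mathcal{O}(\zeta)$ caps. The only cosmetic difference is that you count globally over all $2^{\ell-1}$ partitions (getting $2^{-\gamma\sqrt{L}}$ directly), whereas the paper counts within the near-balanced weight class via the ratio $\binom{\ell-\gamma\sqrt{L}}{(\ell-1)/2}/\binom{\ell}{(\ell-1)/2}$ and Stirling's approximation, which is exactly where the $\exp(\gamma^2)$ factor you attribute somewhat vaguely to ``corrections'' actually originates.
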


\begin{proof}
Take a logical string of length $\ell \leq L+2\zeta$ with typical shape. Each step is separated from the others by at least $\gamma \sqrt{L}$ for some $\gamma$. Now, consider taking a subset of $\frac{\ell-1}{2}$ of the qubits in the logical string. We would expect such a subset to be correctable. If not, this partition is exceptional.

Choose a partition of a connected logical string and let $O_U$ be the uncorrectable error and $O_C$ be the correctable error. $O_U$ and $O_C$ share a syndrome by definition. Denote that syndrome by $s$. The decoding algorithm, which in our case is minimal-weight decoding, applies some correction to this syndrome to return it to the code space. Call this correction $E_s$. $E_s$ is by definition a correctable error in the code, and therefore, because we are using minimal-weight decoding, $E_s$ must have lower weight than $O_U$. The fact that we chose our code size $L$ to be odd ruled out the case where the two might be equal. Now if the partition we are considering happens to exceptional, this means by definition that $O_C$ has higher weight than $O_U$, and we have
\begin{equation}
\label{eq: exceptional term condition}
    \text{Exceptional Term $(O_U \rho \, O_C)$} \implies \qquad |O_C| > |O_U| > |E_s| .
\end{equation}
We will use this condition to bound the number of exceptional terms for a given connected logical string.

What does it mean for $O_C$ to have higher weight than $E_s$? For connected logical strings of typical shape as in Definition \ref{Def: typical shape}, this happens only if on some subset or subsets of the logical string, the correctable error $O_C$ contains errors on qubits arranged in a ``cap." By this we mean a configuration of errors where the errors form three edges of a rectangle. The minimal-weight decoder will choose the fourth edge of the rectangle as part of the correction $E_s$. This is illustrated in figure \ref{fig: weight 7 uncorrectable A} and figure \ref{fig: weight 6 uncorrectable A}. If the connected logical string has length greater than $L$, then it has steps up and down if it crosses the code block left to right. In every exceptional term, the correctable error $O_C$ will contain the outermost qubits around some of the steps, forming a cap.

Now that we have a simple necessary condition for an exceptional term, we will bound the number of exceptional terms for each short logical string with a typical shape according to Definition \ref{Def: typical shape}. Start with a logical string of length $\ell$. Consider first the partitions into $\frac{\ell+1}{2}$ and $\frac{\ell-1}{2}$. Of course, those partitions for which the weight-$\frac{\ell+1}{2}$ error is correctable will be exceptional. Every exceptional term like this will have the property that the correctable error contains some number of ``caps" where all of the qubits around three sides of a rectangle are part of the correctable error. To bound the number of exceptional terms we will count the number of partitions with this property. 

Each partition of a weight-$\ell$ connected logical string into weight-$\frac{\ell+1}{2}$ and $\frac{\ell-1}{2}$ errors is formed by choosing $\frac{\ell-1}{2}$ out of the $\ell$ errors in the logical string. This is what we mean by a partition. We want to count the number of ways of choosing these errors such that the correctable error (of weight $\frac{\ell+1}{2}$ because we are counting exceptional terms) contains all the errors along a ``cap". This means that the subset of $\frac{\ell-1}{2}$ errors contains no errors along one or more of the ``caps." A typical short logical string running left to right across the code consists of horizontal segments separated by single steps up and down. The outermost of these steps form ``caps." The number of such ``caps" depends on the particular pattern of steps in the logical string. However, we can bound the number of exceptional terms by counting the number of ways of choosing no qubits along one of the horizontal segments of length $\gamma \sqrt{L}$. This is because every ``cap" consists of an outermost horizontal segment combined with the up and down steps on either side. This counting gives
\begin{equation}
    \text{Ways of choosing no qubits along a horizontal segment of length $\gamma \sqrt{L}$} = \binom{(\ell-\gamma \sqrt{L})}{\frac{\ell-1}{2}} .
\end{equation}
We want the number of ways of choosing no qubits along at least one of the horizontal segments. There are $\leq 2 \zeta$ steps up and down along the logical string. Therefore, there are $\leq 2 \zeta$ horizontal segments. We can use a union bound to write
\begin{equation}
    \text{Number of weight-$\left( \frac{\ell+1}{2}, \frac{\ell-1}{2} \right)$ exceptional terms} \leq 2\zeta \binom{\ell-\gamma \sqrt{L}}{\frac{\ell-1}{2}} .
\end{equation}
This is relative to the total number of $\left( \frac{\ell+1}{2}, \frac{\ell-1}{2} \right)$ partitions for our logical string of length $\ell$, which is
\begin{equation}
    \text{Total number of $\left( \frac{\ell+1}{2}, \frac{\ell-1}{2} \right)$ partitions} = \binom{\ell}{\frac{\ell-1}{2}} .
\end{equation}
We can expand the ratio of exceptional terms to the total using Stirling's approximation. This gives
\begin{align}
    2\zeta\binom{\ell-\gamma \sqrt{L}}{\frac{\ell-1}{2}}/\binom{\ell}{\frac{\ell-1}{2}} & = \frac{2\zeta \left( \ell - \gamma \sqrt{L} \right)! \frac{\ell+1}{2}! }{\ell! \left( \frac{\ell+1}{2} - \gamma \sqrt{L}\right)!} \nonumber
    \\
    & \approx 2\zeta \sqrt{\frac{(\ell+1)(\ell-\gamma \sqrt{L})}{2\ell\left(\frac{\ell+1}{2}-\gamma \sqrt{L} \right)}} \frac{(\ell-\gamma \sqrt{L})^{\ell-\gamma \sqrt{L}} \left(  \frac{\ell+1}{2} \right)^{\frac{\ell+1}{2}}}{\ell^{\ell} \left( \frac{\ell+1}{2} - \gamma \sqrt{L} \right)^{\frac{\ell+1}{2}- \gamma \sqrt{L}}} .
\end{align}
This approximation holds up to corrections $\mathcal{O}(1/\ell)$. We can rewrite this as
\begin{equation}
    = 2\zeta \sqrt{\frac{(\ell+1)(\ell-\gamma \sqrt{L})}{\ell(\ell+1-2 \gamma \sqrt{L})}} \left( 1- \frac{\gamma \sqrt{L}}{\ell} \right)^\ell \left(1- \frac{2 \gamma \sqrt{L}}{\ell+1} \right)^{-\frac{\ell+1}{2}} \left( \frac{\ell-\gamma  \sqrt{L}}{\frac{\ell+1}{2}-\gamma \sqrt{L}} \right)^{- \gamma \sqrt{L}} .
\end{equation}
Next we square the $(1-\frac{\gamma \sqrt{L}}{\ell})$ term in order to combine terms:
\begin{equation}
    = 2\zeta \sqrt{\frac{(\ell-\gamma \sqrt{L})}{\ell}} \left(\frac{1- \frac{2 \gamma \sqrt{L}}{\ell} + \left( \frac{\gamma \sqrt{L}}{\ell} \right)^2}{1- \frac{2 \gamma \sqrt{L}}{\ell+1}}  \right)^{\ell/2}  \left( \frac{\ell-\gamma \sqrt{L}}{\frac{\ell+1}{2}-\gamma \sqrt{L}} \right)^{- \gamma \sqrt{L}} .
\end{equation}
We upper bound the term inside the radical and also the term raised the power $\ell/2$:
\begin{equation}
    < 2\zeta \left(1 + \frac{\gamma^2}{\ell - 2 \gamma \sqrt{\ell}}  \right)^{\ell/2} \left( \frac{\ell-\gamma \sqrt{L}}{\frac{\ell+1}{2}-\gamma \sqrt{L}} \right)^{- \gamma \sqrt{L}} .
\end{equation}
The second of the three terms term is exponentially decaying to $\exp(\gamma^2/2)$. As long as $\ell\geq 4$, we can bound it by
\begin{equation}
    \left( 1+ \frac{\gamma^2}{\ell-2 \gamma \sqrt{\ell}} \right)^{\ell/2} < \exp \gamma^2 .
\end{equation}
Now, we bound $\frac{\ell-\gamma L}{(\ell+1)/2-\gamma L}>2$ and assemble one term raised to the power $L$ and another to the power $(\ell-L)/2$:
\begin{equation}
\label{eq: Exceptional terms error bound}
    < 2\zeta \exp(\gamma^2) 2^{- \gamma \sqrt{L}} .
\end{equation}
We chose some small value for $\gamma$ in Lemma \ref{Lemma: shape of strings B}, and then the number of exceptional terms with a weight-$\frac{\ell-1}{2}$ logical error on one side and a weight-$\frac{\ell+1}{2}$ correctable error on the other is exponentially small in $L$.

For the chosen connected logical string of weight $\ell$, we have calculated the fraction of exceptional terms among the partitions into $\frac{\ell-1}{2}$ and $\frac{\ell+1}{2}$. We will also have exceptional terms among the partitions into other weights, possibly all the way down to partitions into weight $\frac{L+1}{2}$ and $\ell-\frac{L+1}{2}$. Above, we applied the condition in equation (\ref{eq: exceptional term condition}) that for every exceptional term the correctable error must have higher weight than the minimal-weight correction. If we apply this same method to bound the number of exceptional terms among partitions into $\frac{\ell-3}{2}$ and $\frac{\ell+3}{2}$, we find that the correctable error must be at least $4$ longer than the minimal-weight correction. This means we want to count the number of configurations where at least two of the ``caps" are contained in the correctable error. This is clearly far fewer than the number of configurations where one ``cap" is contained. Therefore, the ratio of exceptional terms to total partitions is bounded by the ratio we found for partitions into $\frac{\ell-1}{2}$ and $\frac{\ell+1}{2}$.

In the end we see that number of weight-$\frac{\ell-1}{2}$ exceptional terms is exponentially small in $L$ for fixed $\zeta$ and further that the weight-$\frac{\ell-3}{2}$ exceptional terms are exponentially small in $L$ relative to the higher-weight exceptional terms, and so on. Then for large $L$, exceptional terms are negligible.

\end{proof}

Lemma \ref{Lemma: exceptional terms} allows us to approximate the sum over partitions for a typical, short logical string $\mathcal{L}$. Neglecting exceptional terms, the sum over partitions resembles the calculation of what we called $\delta$ in the repetition code in equations (\ref{eq:eps-delta-all-orders}) and (\ref{eq: Delta calculation chi matrix approach}). Let $\mathcal{L}$ have length $\ell$. Each partition contributes $\left( \frac{\sin \theta}{2} \right)^\ell$ with a phase. The sum over partitions is given by
\begin{align}
\label{eq: Sum over partitions}
    \sum_{\textrm{partitions}} (\textrm{Connected Part}) & = \left( \sum_{j=0}^{\frac{\ell-1}{2}} i^\ell (-1)^j \binom{\ell}{j} \left( \frac{\sin \theta}{2}\right)^\ell \right) (1 + \epsilon) \nonumber
    \\
    & = \left( i \binom{\ell -1 }{\frac{\ell-1}{2}} \left( \frac{\sin \theta}{2}\right)^\ell \right) (1 + \epsilon) ,
\end{align}
where $\epsilon$ is the error from exceptional terms, which is upper bounded
\begin{equation}
    |\epsilon| < 4\zeta \exp(\gamma^2) 2^{-\gamma \sqrt{L}} .
\end{equation}
This is two times the the expression in equation (\ref{eq: Exceptional terms error bound}), because each exceptional term contributes to the sum over partitions with the opposite sign relative to a non-exceptional term. 

\subsection{The Disconnected Part}
\label{sec: disconnected pieces}

In the preceding subsections, we analyzed the coherent component of the logical noise channel, expressed as a sum over many physical noise terms. So far we have only considered the connected logical string associated with each coherent term. In this subsection, we will analyze the disconnected errors in more detail, and describe more rigorously how they affect the evaluation of the coherent terms in the logical channel. In Section \ref{section: The Coherent Sum} we described how to decompose a contribution to $\tilde\chi_{\scriptscriptstyle{Z_1 I}}$ into a connected piece and some number of disconnected pieces. The left and right hand side of each coherent term can be expanded as the product of the errors contained in the connected logical string and the errors outside of it; schematically,
\begin{equation}
    \label{eq: Disconnected part factors out}
    (\mathrm{Conn}_L \mathrm{Disc}_L\, \rho \, \mathrm{Conn}_R \mathrm{Disc}_R) = (\mathrm{Conn}_L \, \rho \, \mathrm{Conn}_R) \, \mathrm{Disconnected}.
\end{equation}
The factor ``Disconnected" means the contribution to the coherent term from disconnected components that appeared in equation (\ref{eq: coherent sum string form}). The product of the two (disjoint) factors $ \mathrm{Conn}_L$ and  $\mathrm{Conn}_R$ yields the connected logical string, with no additional disjoint loops included. The connected factor includes $\sin \theta/2 \cos \theta/2$ for each qubit along the connected logical string. The disconnected factor includes $(\cos \theta/2)^2$ on every qubit not in the connected logical string in addition to a sum over all possible disconnected errors.

\begin{figure}
    \centering
    \includegraphics[height = 8cm]{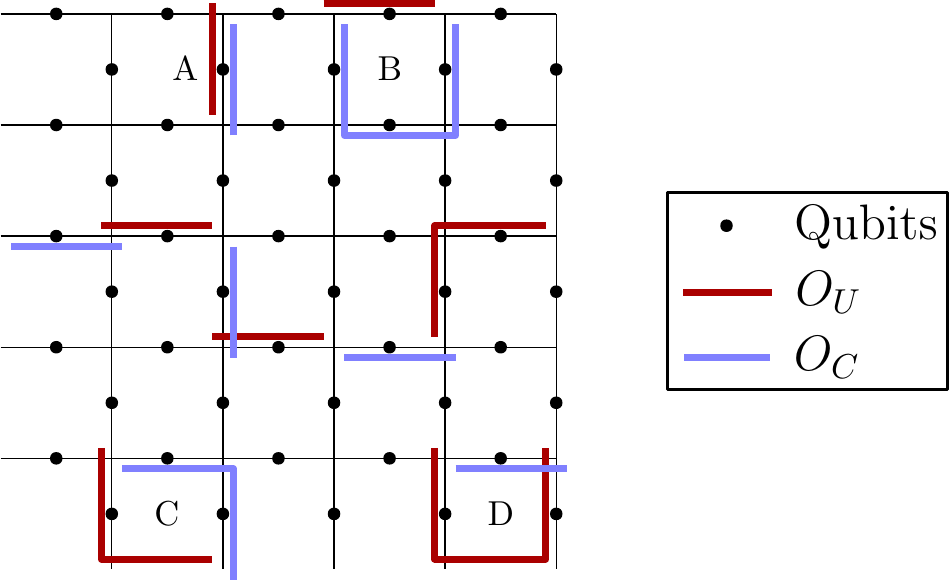}
    \caption{Here we have a partition $(O_U \rho \, O_C)$ of a connected logical string adorned in four different ways by added errors. The errors in red are the uncorrectable part, $O_U$, of the partition, while the errors in blue form the correctable part, $O_C$. The four added errors are labelled A, B, C, and D. In A, the same error has been added to both $O_U$ and $O_C$. In B and D, three errors are added to one side of the partition and one to the other. This produces a minus sign. In C, two errors are added to each side.}
    \label{fig: added error examples}
\end{figure}

Fix a partition $(O_U \rho \, O_C)$ of a short, typical logical string, and consider dressing it with disconnected errors. We can distinguish two types of added errors: incoherent and coherent. If the disconnected error is $D_L$ acting on the density operator from the left, and $D_R$ acting from the right, then if a particular qubit is hit by the same error contained in both $D_L$ and $D_R$, we say that the disconnected error acting on that qubit is incoherent. If a particular qubit is hit by distinct errors contained in $D_L$ and $D_R$, then the error is coherent. The product $D_LD_R$ of the errors added on right and left must be a non-identity stabilizer operator, i.e. a closed loop or a set of disjoint closed loops. (Here, because we are investigating the encoded $Z$ errors in the logical channel, only the $Z$-type physical errors are considered.) The two types of added error --- incoherent and coherent --- are shown in figure \ref{fig: added error examples}, where (A) is an incoherent-type added error and (B)-(D) are coherent-type.

Let us first treat the case of incoherent-type added errors, where $D_L = D_R \equiv D$. These are the ones with the same disconnected error added to both operators in the partition, for example (A) from figure \ref{fig: added error examples}. These terms do not change the phase of the original partition, and they multiply the magnitude by $(\sin \theta/2)^{2m}$ if $m$ is the weight of the error added on each side. The disconnected part contains $\cos^2 \theta/2$ on each qubit corresponding to no disconnected errors plus many configurations of disconnected errors. The incoherent-type added errors on each qubit in the disconnected part supply the $\sin^2 \theta/2$ term to give 1 on the qubits not contained in the connected logical string. This reasoning applies to each incoherent-type added error that does not change how the operators $O_U$ and $O_C$ are  decoded. In other words, if $D$ is the disconnected error we add to $O_U$ and $O_C$, we require that $D O_U$ is an uncorrectable error.

We must be careful because in some cases the added incoherent-type errors can change how the correctable and uncorrectable errors in the partition are decoded. The added error can ``flip" the uncorrectable error to a correctable one. This means that the noise term that contributes to the logical $\tilde{\chi}_{\scriptscriptstyle{Z_1 I}}$ component is not $(D O_U \rho D O_C)$ as we would have expected but is instead $(D O_C \rho D O_U)$. This term has the opposite sign relative to the expected term. This is only possible when the added error $D$ is located very near the connected logical string and only for special partitions. We prove in Lemma \ref{lemma: added error exceptional terms} in Appendix \ref{app: Disconnected errors} that the contribution from these disconnected terms is negligible.

What of the coherent-type added errors? Again, fix a partition of a connected logical string. Let $O_U$ and $O_C$ be the correctable and uncorrectable errors. Now consider choosing a stabilizer operator or a closed loop, $\ell$ that is disjoint from the connected logical string. Let the length of the loop be $|\ell|$. Now choose a subset of $p$ of the qubits in the loop, and let the disconnected error $D_L$ act on these $p$ qubits from the left, while the disconnected error $D_R$ acts on the remaining $|\ell|-p$ qubits from the right. Suppose further that the qubits in the loop and the partition are such that the uncorrectable error $O_U$ plus the additional error $D_L$ remains uncorrectable. This need not always be true; we will consider the case where the $O_U D_L$ is correctable in a moment. 

Supposing that the disconnected error $D_L$ does not change the decoding, we can perform a sum over all the ways of choosing the $p$ errors in $D_L$ from among the $|\ell|$ errors in the loop. The number of ways of choosing $p$ errors is given by a binomial coefficient, and the magnitude of each term is suppressed by $(\sin \theta/2)^{|\ell|}$ relative to the original partition of the connected logical string without any additional disconnected errors added. The phase of each term depends, as always, on the relative weight of the errors on the right and the left. The disconnected part contributes a phase of $(i)^p (-i)^{|\ell|-p}$, and $\ell$ is a closed loop so $|\ell|$ is even. The sum yields
\begin{equation}
    (\text{connected part}) \sum_{p=0}^{|\ell|} (-1)^{p} (-1)^{|\ell|/2} \binom{|\ell|}{p} (\sin \theta/2)^{|\ell|} =0 .
\end{equation}
When we sum over all ways of forming disconnected terms out of the original loop $\ell$, the sum is 0. This holds for any loop such that the disconnected part does not change how the connected part is decoded.

In the examples we considered in figure \ref{fig: added error examples}, the additional disconnected errors did not change how the connected part was decoded. This is the same condition we encountered in the discussion of incoherent-type added errors. In certain cases the error $D_L$ that we add to the $O_U$ side of the partition can be such that $D_L O_U$ is a correctable operator. This means the partition is ``flipped" by the disconnected error. We account for this case in Lemma \ref{lemma: added error exceptional terms} and prove that the contribution to the logical noise from these special disconnected terms is negligible for short logical strings.

Using Lemma \ref{lemma: added error exceptional terms}, we can neglect the added errors that change how the partition is decoded. Then, we can conclude that the net contribution from coherent-type added errors is 0 and the incoherent-type added errors contribute a $\sin^2 \theta/2$ factor on each qubit not in the connected logical string. This implies that the ``Disconnected Sum" term in equation (\ref{eq: coherent sum string form}) is equal to 1 plus a small correction. This implies that
\begin{equation}
\label{eq: disconnected = 1 bound}
    \tilde{\chi}_{\scriptscriptstyle{Z_1 I}} = \left[ \sum_{\mathcal{L}} \sum_{\mathrm{partitions}} (O_U \rho \, O_C)  \right](1 + \mathcal{E}) + \text{High Weight} ,
\end{equation}
where $\mathcal{L}$ is a connected, short, typical logical string, partitions refers to the partitions of $\mathcal{L}$ denoted $(O_U \rho \, O_C)$, and $\mathcal{E}$ is a noise term. The error term satisfies
\begin{equation}
    \mathcal{E} \leq \frac{16 \gamma \zeta^2}{\sqrt{L}} + \mathcal{O}(1/L) .
\end{equation}
This error term is from Lemma \ref{lemma: added error exceptional terms} and comes from the added errors that change how the partition is decoded. The term ``High Weight" in equation (\ref{eq: disconnected = 1 bound}) is the error from Lemma \ref{Lemma: Coherent path counting} corresponding to the contributions of logical strings with length $>L+2 \zeta$. We have not yet justified that this error is small relative to the short strings. This is because we do not have a lower bound on the short strings. The justification comes from our subsequent discussion of the incoherent logical noise components.

\subsection{Incoherent Sum}
\label{section: Incoherent Sum}

Now that we have simplified the sum for the coherent components of the logical noise channel, factored out the disconnected pieces, and performed the sum over syndromes for the connected pieces, we turn our attention to the incoherent logical noise components. We start by making several of the same simplifications we made in the coherent sum. Of the incoherent logical components $(\tilde{L_a} \tilde{\rho} \tilde{L_a})$, we neglect all the terms where $L_a$ is a logical $Y$ operator or acts non-trivially on both encoded qubits. We retain only the terms where $L_a$ is a logical $X$ or $Z$ on one of the two encoded qubits. The reason is the same as for the coherent sum. The neglected terms are much higher weight, such that the path counting excludes them. Then we have the sum
\begin{equation}
    \label{eq: Incoherent Term}
    \left( \tilde{L}_a \tilde{\rho} \tilde{L}_a^\dagger \right) = \sum_{s,x,y} \left( E_s L_a G_x \rho \, G_y^\dagger L_a^\dagger E_s^\dagger \right) ,
\end{equation}
where $L_a$ is an $X$ or $Z$ logical operator on one of the encoded qubits and identity on the other. Again, we suppose that all the angles are equal to some fixed $\theta$ for each single-qubit rotation. We will extend to general rotations in Lemma \ref{Lemma: general rotation angles}.

Again, we will divide each term into connected and disconnected pieces. In this discussion of the incoherent logical noise components, Definition \ref{Def: Connected coherent} must be modified. The noise terms that enter into the incoherent logical noise contain an uncorrectable error on both sides of $\rho$. We will need to consider two logical strings in our definition.
\begin{Definition}
\label{Def: Connected incoherent}
    The ``connected part" of a noise term $(E_s L_a G_x \rho \, G_y^\dagger L_a^\dagger E_s^\dagger)$ is a noise term defined in the following way: let $\mathcal{L}_1$ equal $L_a G_x$ with all topologically trivial closed loops removed and $\mathcal{L}_2$ equal $L_a G_y$ with all trivial closed loops removed. Then, let $A$ denote the set of qubits $\subset \mathcal{L}_1 \cup \mathcal{L}_2$ where either $E_s L_a G_x$ or $E_s L_a G_y$, or both, act non-trivially. The connected part of $(E_s L_a G_x \rho \, G_y^\dagger L_a^\dagger E_S^\dagger)$ is given by 
    \begin{equation}
        ((E_s L_a G_x)|_{A} \, \rho \, ( G_y^\dagger L_a^\dagger E_s^\dagger)|_{A}) .
    \end{equation}
    $|_{A}$ denotes the restriction of an operator to the set of qubits $A$.
\end{Definition}

If the incoherent term is $(O_U \rho \, O_U^\prime)$ then this definition captures the set of qubits in the support of $O_U$ or $O_U^\prime$ that lie along the two logical strings formed by $O_U$ and $E_s$ and $O_U^\prime$ and $E_s$ pruned of all trivial closed loops. Figure \ref{fig: incoherent connected and disconnected} illustrates the connected and disconnected part of a noise term that enters into the incoherent logical noise. The connected part of the noise term in the figure features factors of $\sin \theta/2 \cos \theta/2$ for the qubits that appear in exactly one of $O_U$ or $O_U^\prime$ and $\sin^2 \theta/2$ for the qubits that appear in both $O_U$ and $O_U^\prime$. We can lower bound the connected part of each incoherent noise term by $(\sin \theta/2 \cos \theta/2)^{|O_U|+|O_U^\prime|}$. This will be useful later on when we sum over many possible choices for the operators $O_U$ and $O_U^\prime$.

\begin{figure}
    \centering
    \includegraphics[height = 8cm]{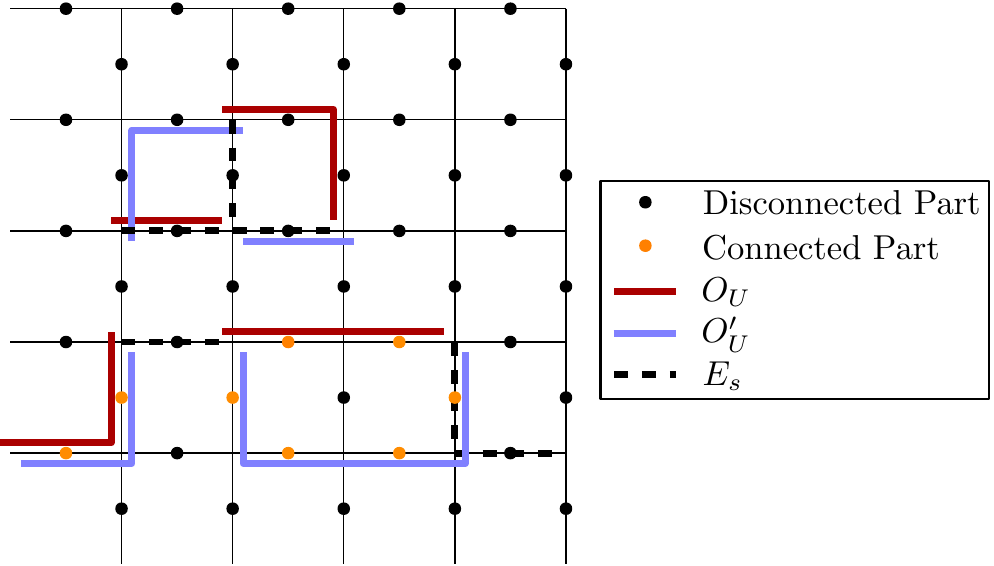}
    \caption{A noise terms $(O_U \rho \, O_U^\prime)$ is shown with $O_U$ in red and $O_U^\prime$ in blue. The standard correction $E_s$ chosen by the minimal-weight decoder is drawn as a dotted black line. The connected part of this noise term is signified by the orange qubits, while the disconnected part contains the black qubits.}
    \label{fig: incoherent connected and disconnected}
\end{figure}

\begin{Definition}
\label{Def: Disconnected Incoherent}
    The ``disconnected part" of a noise term $(E_s L_a G_x \rho \, G_y^\dagger L_a^\dagger E_s^\dagger)$ is the restriction of the noise term to the qubits not in the connected part. In Definition \ref{Def: Connected incoherent} we constructed the set $A$, which contained the qubits in the connected part. The disconnected part is given by 
    \begin{equation}
        ((E_s L_a G_x)|_{A^C} \, \rho \, (G_y^\dagger L_a^\dagger E_s^\dagger)|_{A^C}) ,
    \end{equation}
    where $|_{A^C}$ denotes the restriction of an operator to the complement of the set $A$.
\end{Definition}

For the example in figure \ref{fig: incoherent connected and disconnected}, the disconnected part features factors of $\sin \theta/2 \cos \theta/2$ for the six qubits along the trivial closed loop near the top of the figure and $\cos^2 \theta/2$ for the rest of the qubits. For a given connected part, we can imagine adding disconnected errors to form many different noise terms. The connected part contains factors of $\sin \theta/2 \cos \theta/2$ for each qubit that appears in one of the uncorrectable errors and $\sin^2 \theta/2$ for each qubit that appears in both errors. The disconnected term includes $\cos^2 \theta/2$ for each every qubit not in the connected part plus a sum over all possible coherent and incoherent-type disconnected error. Just as in Section \ref{sec: disconnected pieces}, when the disconnected errors do not change how the connected term is decoded, the incoherent-type errors give $\cos^2 \theta/2 + \sin^2 \theta/2 =1$ on qubits not in the connected part. The coherent-type disconnected errors, which form loops split between left and right, sum to zero because of the alternating signs.

Just as in the case of the coherent logical noise components, some disconnected errors will not be allowed because they change how the connected term is decoded. We will set the disconnected part equal to $1$ plus an error term that comes from these disallowed disconnected errors. In Lemma \ref{lemma: incoherent disconnected part}, we justify this by proving that the error term is small. This is analogous to Lemma \ref{lemma: added error exceptional terms}, where we prove that the disconnected part of the coherent logical noise components is equal to $1$ up to small corrections.

We want to continue to follow a similar argument to the one for the coherent terms. The next step is restricting the set of connected terms we consider. We will break up each error into connected and disconnected pieces and restrict ourselves to noise terms with low-weight connected part, where the total weight of the connected part is bounded by $L+2 \zeta+1$; here $ \zeta$ is the same $L$-independent cutoff as in the coherent sum. Just as for the analysis of the coherent logical noise in Section \ref{sec: Path Counting}, we will require $\theta$ to scale like $1/L$ to justify this truncation of the noise terms contributing to the connected part. 
 
\begin{Lemma}
    Consider an incoherent logical noise component, say $\tilde{\chi}_{\scriptscriptstyle{Z_1 Z_1}}$. We write this logical noise component as a sum over physical noise terms $(O_U \rho \, O_U^\prime)$. Then, if $| \sin \theta| < 1/L$, we can truncate the sum to include only those noise terms where $|O_U| + |O_U^\prime|\leq L+2 \zeta+1$, where $\zeta$ is the same cutoff constant as in Lemma \ref{Lemma: Coherent path counting}. In other words,
    \begin{equation}
        \tilde{\chi}_{\scriptscriptstyle{Z_1 Z_1}} = \sum_{O_U, O_U^\prime : ~|O_U| + |O_U^\prime| \leq L+2\zeta+1} (O_U \rho \, O_U^\prime) \times \mathrm{Disconnected} \times \left( 1 + \mathcal{O}\left( (L \sin \theta)^{2\zeta} \right) \right).
    \end{equation}
    \label{Lemma: High Weigh Incoherent Terms}
\end{Lemma}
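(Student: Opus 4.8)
The plan is to follow the coherent path-counting argument (Lemma \ref{Lemma: Coherent path counting}) as closely as possible, adapting it to the two-string geometry of the incoherent terms. First I would use the connected/disconnected split of Definitions \ref{Def: Connected incoherent} and \ref{Def: Disconnected Incoherent} together with Lemma \ref{lemma: incoherent disconnected part} to reduce the claim to a statement about connected parts alone: the disconnected factor attached to any connected term equals $1$ up to the corrections handled there, so it suffices to bound the sum of connected noise terms $(O_U\rho\,O_U')$ with $|O_U|+|O_U'|>L+2\zeta+1$. For a single such term, each qubit lying in exactly one of $O_U,O_U'$ contributes a factor $\sin(\theta/2)\cos(\theta/2)$ and each qubit lying in both contributes $\sin^2(\theta/2)$; since the connected weight $w=|O_U|+|O_U'|$ counts a shared qubit twice, the magnitude of the term is bounded above by $|\sin(\theta/2)|^{w}$.

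The heart of the argument is then to count, for each $w$, the number of connected incoherent terms of weight $w$, in analogy with equation (\ref{eq: Logical strings with length l}). The new feature relative to the coherent case is that an incoherent term is specified not by one logical string but by the two logical strings that $O_U$ and $O_U'$ form together with the common standard error $E_s$. These two strings coincide along the segments pinned by $E_s$ and fluctuate independently away from it. I would bound the count by fixing a point on an edge ($L$ choices, as before), noting that each string individually is counted by the bound $L\cdot L^{\ell-L}$ of equation (\ref{eq: Logical strings with length l}), and that the independent excursions of the two strings beyond the minimal length may be placed independently. Writing the excess weight as $w-L-1\ge 0$ and distributing it between the two strings, this yields a bound of the schematic form $L\binom{L}{(L+1)/2}\,(cL)^{\,w-L-1}$ for a constant $c$, so that the product of count and magnitude behaves like $L\binom{L}{(L+1)/2}(\sin(\theta/2))^{L+1}\,\big(\mathcal{O}(L|\sin\theta|)\big)^{\,w-L-1}$.

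With this bound in hand the truncation estimate is a geometric series. The hypothesis $|\sin\theta|<1/L$ makes the per-unit ratio $\mathcal{O}(L|\sin\theta|)$ strictly less than one (after shrinking the allowed range of $\theta$ by a constant if $c>1$), so summing over $w>L+2\zeta+1$ gives a tail bounded by $L\binom{L}{(L+1)/2}(\sin(\theta/2))^{L+1}\cdot\mathcal{O}\big((L|\sin\theta|)^{2\zeta}\big)$. Comparing with the lower bound $L\binom{L}{(L+1)/2}(\sin(\theta/2))^{L+1}$ on the leading incoherent component (established in Section \ref{sec: Mismatched weights}), the neglected tail is suppressed relative to the retained terms by the stated factor $\mathcal{O}((L\sin\theta)^{2\zeta})$, which is exactly the error quoted in the lemma.

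The step I expect to be the main obstacle is the two-string count in the second paragraph. Getting an honest upper bound requires care in three places: relating the combinatorial count of string pairs to the weight $w=|O_U|+|O_U'|$ when shared qubits are double-counted in $w$ but single-counted in the geometry; ensuring that treating the excursions of the two strings as independent does not undercount configurations where the strings cross or braid near $E_s$; and controlling, as in the coherent case, the contributions of atypically shaped strings and of backtracking excursions, which should again be suppressed by powers of $1/L$ and can be folded into the error term. Everything else --- the per-term magnitude bound, the reduction via the disconnected part, and the final geometric summation --- is routine once the counting bound is established.
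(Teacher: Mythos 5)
Your proposal is correct and follows essentially the same route as the paper's proof: reduce to connected terms via the disconnected-part lemma, bound each connected term by $|\sin(\theta/2)|^{|O_U|+|O_U'|}$, bound the number of connected terms of total weight $w$ by a path-counting estimate of the form $\mathrm{poly}(L)\cdot L^{w-L}$, sum the resulting geometric tail using $L|\sin\theta|<1$, and compare against the leading incoherent contribution $L\binom{L}{(L+1)/2}(\sin(\theta/2))^{L+1}$ to obtain the relative error $\mathcal{O}((L\sin\theta)^{2\zeta})$. The only difference is bookkeeping in the count --- the paper organizes it as (choice of the string $\mathcal{L}=E_sO_U$ of length $\le 2w-1$, bounded by $L^{2w-L}$) $\times$ (choice of $O_U\subset\mathcal{L}$, bounded by $2^{2w-2}$) $\times$ (choice of $O_U'$ as deviations from a minimal representative, bounded by $\mathcal{O}(L^{w'-w})$), whereas you count the two strings with independent excursions --- but both are crude overcounts yielding the same geometric decay, and the paper likewise waves away a polynomial factor in $w+w'$ at this step.
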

\begin{proof}
    We split each noise term into connected and disconnected parts. We show in Lemma \ref{lemma: incoherent disconnected part} that the disconnected part is decreasing as the weight of the connected part increases. Moreover, the disconnected part is approximately equal to $1$ for connected terms with total weight $\leq L+2\zeta+1$. Therefore, we need only consider the connected part as we proceed to truncate the sum and upper bound the error. 
    
    Let us denote the connected part of a noise terms that enter into the logical $\tilde{\chi}_{\scriptscriptstyle{Z_1 Z_1}}$ component by $(O_U \rho \, O_U^\prime)$. All such noise terms have the shape drawn in figure \ref{fig: connected incoherent example}.
    \begin{figure}
        \centering
        \includegraphics[height = 10cm]{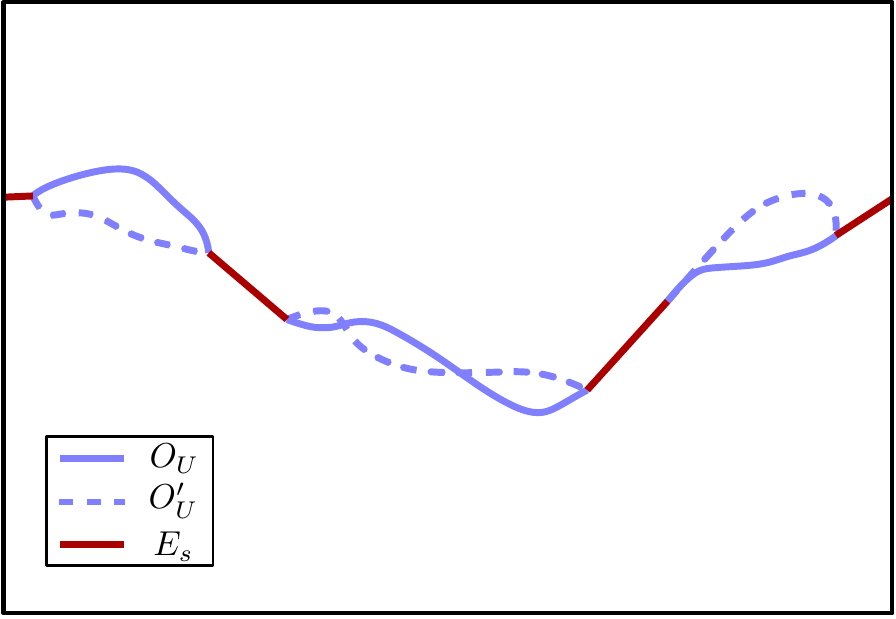}
        \caption{This is a connected incoherent noise term, $(O_U \rho \, O_U^\prime)$. The operator $O_U$ is drawn with the solid blue line. The operator $O_U^\prime$ is drawn with the dashed blue line. The standard correction $E_s$ for the syndrome shared by $O_U$ and $O_U^\prime$ is drawn with the solid red line. Each connected incoherent term is a configuration of loops where each loop is formed by two segments with shared endpoints, one segment from $O_U$ and one from $O_U^\prime$. The operator $E_s$ links together the loops like beads on a string, so that $O_U E_s$ and $O_U^\prime E_s$ are both continuous logical strings spanning the code.}
        \label{fig: connected incoherent example}
    \end{figure}
    The connected part is supported on a series of loops. The loops are joined by the minimal-weight correction. Denote the minimal-weight correction of $O_U$ and of $O_U^\prime$ by $E_s$. Let $w$ be the weight of $O_U$ and $w^\prime$ the weight of $O_U^\prime$. Suppose $w\le w^\prime$. If not, then swap $O_U$ and $O_U^\prime$ in what follows. Let $E_s O_U = \mathcal{L}$. Then, $\mathcal{L}$ is a connected $Z_1$ logical string with length at most $2w-1$. The number of such logical strings is upper bounded by $L^{2w-L}$ using our bound on the number of logical strings in equation (\ref{eq: Logical strings with length l}).
    
    The number of ways of choosing a weight $w$ operator $O_U$ as a subset of $\mathcal{L}$ is upper bounded by $2^{2w-2}$. Now that $O_U$ is fixed, the number of ways of choosing the operator $O_U^\prime$ is upper bounded by $\mathcal{O}(L^{w^\prime - w})$. 
    This is because the lowest-weight operator with the same syndrome and logical action has weight $\leq w$. Then, $O_U^\prime$ consists of this lowest-weight operator combined with a number of additional deviations like we considered to derive equation (\ref{eq: Logical strings with length l}). 
    (Here we are neglecting a factor which is polynomial in $w$ and $w'$; bounding the exponential dependence on $w'-w$ will suffice for what follows.)
    All together we have the following upper bound on the number of noise terms with fixed $w$ and $w^\prime$:
    \begin{equation}
    \label{eq: bound on number of terms of high-weight incoherent}
        \leq 2^{2w-2} L^{2w-L} L^{w^\prime - w} .
    \end{equation}
    Each of these terms has magnitude at most $(\sin \theta/2)^{w+w^\prime}$, which is positive because $w+w^\prime$ is even. As in Lemma \ref{Lemma: Coherent path counting}, we will truncate the sum and keep only those connected noise terms with $w+w^\prime \leq L+2\zeta + 1$. If we let $w+w^\prime = w_{\mathrm{total}}$, for each $w_{\mathrm{total}}$ there are several combinations of $w$ and $w^\prime$ with the same total. Because $w$ and $w^\prime$ must be $> (L+1)/2$, there are less than $w_{\mathrm{total}}-L$ combinations. We perform a sum over $w_{\mathrm{total}}$ from $L+2\zeta+1$ up to the maximum weight. Therefore, if we let $\epsilon$ denote the contribution from the higher weight connected terms to $\tilde{\chi}_{\scriptscriptstyle{Z_1 Z_1}}$, then $\epsilon$ is bounded by
    \begin{equation}
         \epsilon \leq \mathcal{O}((2 \sin (\theta/2))^{L+2\zeta+1} L^{2\zeta+2} ).
    \end{equation}
    Here we have estimated the sum over $w_{\mathrm{total}}$ using the same method as in the derivation of equation (\ref{eq: Path counting coherent tail bound}). We will compare this error $\epsilon$ to the contribution from the lowest weight noise terms. These terms have $w = w^\prime = (L+1)/2$, and contribute at least $\xi$, where
    \begin{equation}
        \xi = L (\sin \theta )^{L+1} .
    \end{equation}
    Then, the relative error associated with our truncation is given by
    \begin{equation}
        \frac{\epsilon}{\xi} \leq \mathcal{O}((L \sin \theta )^{2\zeta}) .
    \end{equation}
    We have neglected a polynomial factor in $L$ in our counting of noise terms. Nevertheless, as long as $L |\sin \theta| <1$, the relative error is exponentially small in $ \zeta$, and the higher-weight connected terms are negligible.
    
\end{proof}

\subsection{The Incoherent Sum Over Strings}
\label{sec: Incoherent sum over strings}

The connected part of the incoherent components is not as simply expressed as a sum over strings as the coherent components because each uncorrectable error $E_s L_a G_x$ can generally be completed to many different logical strings by multiplying by different correctable errors. Nevertheless, we can rewrite the sum in a similar way. This will form our primary tool for comparing coherent and incoherent logical noise components. We will write a sum over each logical string with logical action $L_a$. For each string, we will sum over different ways of choosing the uncorrectable subset $O_U$. We will restrict the subsets we consider for each logical string in order to control the over-counting factor that we will describe shortly. Then for each operator $O_U$, we will sum over all possible uncorrectable operators $O_U^\prime$ with the same syndrome and logical action. Fix a connected logical string $\mathcal{L}$ with length $\ell$ and choose an uncorrectable subset of the logical string $O_U$ with weight $w$. We impose two constraints on $O_U$: first that $w \geq (\ell+1)/2$ and second that the complement of $O_U$ has the same weight as the minimal-weight correction $E_s$. The complement of $O_U$ is $O_U \mathcal{L}$, which we will denote $O_C$. Note that the name $O_C$ is chosen in analogy to the way the errors were labelled in the coherent noise terms, but $O_C$ is not a part of the incoherent noise term, which is notated $(O_U \rho \, O_U^\prime)$. Now that $O_U$ is fixed, choose a second uncorrectable error $O_U^\prime$ with the same syndrome and with weight $w^\prime$. This will produce every incoherent connected term $(O_U \rho \, O_U^\prime)$. However, each uncorrectable error $O_U$ will appear many times as a subset of many different logical strings. This is the over-counting we mentioned above. 

Each operator $O_U$ can be completed to a logical string in many ways. Because of the constraints we imposed on the subset $O_U$, the complement, which we called $O_C$, must have the same weight as the minimal-weight correction to $O_U$, denoted $E_s$. Let $\{O_C^\prime\}$ be the set of possible complements. Then, each possible complement $O_C^\prime \in \{O_C^\prime\}$ defines a logical string $O_U O_C^\prime$, which will appear in the sum over strings. For each operator $O_U$ in the sum over strings, we need to divide by the number of complements $O_C^\prime$ with weight $|E_s|$. Each incoherent logical noise component can be written as a sum over connected logical strings $\mathcal{L}$ times a disconnected factor. This form of the sum will allow us to compare with equation (\ref{eq: coherent sum string form}). We sum over logical strings, and for each logical string we sum over possible choices of $O_U$ and $O_U^\prime$. We divide by the over-counting factor for each $O_U$. This gives
\begin{equation}
\label{eq: incoherent sum string form}
    \tilde{\chi}_{\scriptscriptstyle{Z_1 Z_1}}  = \sum_{\mathcal{L}} \sum_{O_U \subset \mathcal{L}} \frac{1}{| \{O_C^\prime \}|} \sum_{O_U^\prime} (O_U \rho \, O_U^\prime) \times \text{Disconnected}
    ,
\end{equation}
where 
\begin{equation}
    |\mathcal{L}| = \ell, \quad |O_U| \geq \frac{\ell+1}{2}, \quad |O_C^\prime| = |E_s|.
\end{equation}
To reiterate, equation (\ref{eq: incoherent sum string form}) expresses an incoherent logical noise component as a sum over connected logical strings. For each string $\mathcal{L}$ with weight $\ell$ we sum over all uncorrectable subsets $O_U$ of weight $\geq (\ell+1)/2$ such that the complement $O_C$ has weight equal to the minimal-weight correction of $O_U$, namely $E_s$. For each $O_U$ we must divide by the number of complements $O_C^\prime$ with the same syndrome and weight as $E_s$ in order to cancel the over-counting. $\{O_C^\prime \}$ is the set of such operators, and $|\{O_C^\prime \} |$ is its cardinality. Finally, we sum over all uncorrectable operators $O_U^\prime$ with the same syndrome to produce the complete set of incoherent terms. We will prove the following lemma, which provides a lower bound on the contribution of each logical string to the incoherent logical noise component. We will apply this lemma to lower bound the incoherent logical noise strength in terms of the coherent logical noise strength.

\begin{Lemma}
    As long as $|\sin \theta| < 1/L$, we can apply Lemma \ref{Lemma: High Weigh Incoherent Terms}. This means that in equation (\ref{eq: incoherent sum string form}) we can restrict to the case where $|O_U|+|O_U^\prime| \leq L+2 \zeta+1$. Let us also suppose that $|O_U| = |O_U^\prime|$. This assumption will be justified by Lemma \ref{Lemma: Mismatched weights}. Then, we can pick a connected logical string $\mathcal{L}$ with $|\mathcal{L}| = \ell$ such that $\ell \leq L+2\zeta$. $\mathcal{L}$ is a $Z_1$ logical string if we are calculating the $\tilde{\chi}_{\scriptscriptstyle{Z_1 Z_1}}$ logical noise component. $O_U$ is subset of $\mathcal{L}$ such that $O_U$ is corrected to a logical $Z_1$ operator and $|O_U| = (\ell+1)/2$. $O_U^\prime$ is a operator with the same weight, syndrome, and logical action as $O_U$. For each fixed $\mathcal{L}$ with length $\ell \leq L+2\zeta$ the following holds:
    \begin{equation}
        \sum_{O_U} \frac{1}{|\{O_C^\prime \}|} \sum_{O_U^\prime} 1 \geq \sum_{O_U} 1 .
    \end{equation}
    \label{Lemma: Incoherent string sum bound}
\end{Lemma}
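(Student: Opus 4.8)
The plan is to reduce the stated sum inequality to a single combinatorial comparison and then prove that comparison by an explicit injection. First I would note that it suffices to prove the per-$O_U$ bound
\begin{equation}
|\{O_U'\}| \;\ge\; |\{O_C'\}|,
\end{equation}
since granting this every summand $\tfrac{1}{|\{O_C'\}|}\sum_{O_U'}1$ on the left is at least $1$, and there are exactly $\sum_{O_U}1$ of them, which is the right-hand side. Note that both sets are nonempty in a canonical way: the reference complement $O_C=\mathcal{L}\setminus O_U$ lies in $\{O_C'\}$, and $O_U$ itself lies in $\{O_U'\}$, so the inequality at least holds with the trivial pair in place.

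Next I would characterize the two sets geometrically. Writing $Z$-type operators as edge subsets with $\oplus$ the symmetric difference, any two operators with syndrome $s=s(O_U)$ differ by a cycle. Thus $\{O_U'\}$ is the set of minimal-weight (weight $\tfrac{\ell+1}{2}$) representatives of the $Z_1$ class with boundary $s$, obtained from $O_U$ by re-routing its connected segments along alternative lattice geodesics between the same pairs of defects; while $\{O_C'\}$ is the set of minimal-weight (weight $\tfrac{\ell-1}{2}$) trivial-class operators that are disjoint from $O_U$ and close it into a single simple $Z_1$ loop $O_U\cup O_C'$. Because $\mathcal{L}$ is a short string of typical shape, its at most $2\zeta$ unit steps are separated by $\ge\gamma\sqrt L$ (Lemmas \ref{Lemma: shape of strings A} and \ref{Lemma: shape of strings B}), so both counts factorize into products of independent local geodesic multiplicities, one factor for each interval of $\mathcal{L}$ that carries a step.

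With the factorization in hand I would build the injection $\{O_C'\}\hookrightarrow\{O_U'\}$ locally: each completion $O_C'$ is specified by a choice, near each deficient interval, of one geodesic bridging that gap, and I map this choice to the $Z_1$ error $O_U'$ obtained by applying the corresponding re-routing to the segment of $O_U$ bordering the same feature. The input that makes the map well defined and forces the count the right way is that $O_U$ is the weight-$\tfrac{\ell+1}{2}$ majority of $\mathcal{L}$: its segments are at least as long as, and border all the same steps as, the gaps of $O_C$, so each local segment offers at least as many equal-length geodesics as its matching gap. This is the two-dimensional analogue of the repetition-code identity
\begin{equation}
\binom{\ell}{\frac{\ell+1}{2}}=\frac{2\ell}{\ell+1}\binom{\ell-1}{\frac{\ell-1}{2}}\;\ge\;\binom{\ell-1}{\frac{\ell-1}{2}},
\end{equation}
and the well-separation guarantees that the re-routed pieces stay disjoint and distinct, so different $O_C'$ yield different $O_U'$ of the correct weight, syndrome, and logical class.

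The main obstacle I anticipate is exactly this last step: proving that the two counts genuinely factorize and that each local segment multiplicity dominates the matching gap multiplicity, all while respecting the global constraints that $O_C'$ must close into one simple loop and that $O_U'$ must stay minimal weight. The hard intervals are those containing a step, where off-string geodesics proliferate for \emph{both} classes, and where a poorly placed gap could in principle acquire more freedom than its neighboring segment. This is where the typical-shape hypothesis and the exceptional-term control (Lemma \ref{Lemma: exceptional terms}) do the real work, confining all relevant re-routings to well-separated neighborhoods and excluding the atypical configurations that would reverse the local comparison; these same approximations are already permitted here because Lemma \ref{Lemma: High Weigh Incoherent Terms} lets us restrict to $|O_U|+|O_U'|\le L+2\zeta+1$.
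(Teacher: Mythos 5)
Your opening reduction --- that it suffices to prove the per-term bound $|\{O_U'\}| \ge |\{O_C'\}|$ for every individual $O_U$ --- is where the argument breaks, because that per-term inequality is false. The paper's own figures \ref{fig: multiplicity 1} and \ref{fig: multiplicity 2} exhibit a typical, short logical string (length 13 in an $L=9$ code) with a legitimate partition for which $|\{O_U'\}| = 2$ while $|\{O_C'\}| = 4$, so the summand $|\{O_U'\}|/|\{O_C'\}|$ equals $1/2 < 1$. No injection $\{O_C'\}\hookrightarrow\{O_U'\}$ can exist for such a partition, and the difficulty you flag at the end (``a poorly placed gap could in principle acquire more freedom than its neighboring segment'') is not an edge case excluded by the typical-shape hypothesis --- it is generic whenever the correctable side of the partition happens to contain the contiguous runs of qubits around several of the vertical steps. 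The typicality and exceptional-term lemmas control the \emph{weights} of the two sides, not the relative multiplicities of their minimal-weight stabilizer-equivalent representatives.

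The missing idea is a pairing argument on the set of partitions rather than a term-by-term comparison. The paper matches each ``bad'' $O_U^{(1)}$ (one with $|\{O_C'^{(1)}\}| > |\{O_U'^{(1)}\}|$) to a partner $O_U^{(2)}$ obtained by swapping all but one qubit between the uncorrectable and correctable sides, so that $O_U^{(2)}$ is $O_C^{(1)}$ plus one extra error. The partner then satisfies $|\{O_U'^{(2)}\}| \ge |\{O_C'^{(1)}\}| > |\{O_U'^{(1)}\}| \ge |\{O_C'^{(2)}\}|$, and the two summands together obey
\begin{equation}
\frac{|\{O_U'^{(1)}\}|}{|\{O_C'^{(1)}\}|}+\frac{|\{O_U'^{(2)}\}|}{|\{O_C'^{(2)}\}|}\;\ge\; c+\frac{1}{c}\;\ge\;2,
\end{equation}
which recovers the claimed bound after summing over the pairs. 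Your local-geodesic picture of where the multiplicities come from (contiguous runs around vertical steps) is accurate and is essentially the paper's Section \ref{sec: Incoherent sum over strings} discussion, but it needs to feed into this pairing, not into a per-term domination that the geometry does not support.
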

\begin{proof}

    For each short logical string $\mathcal{L}$ with length $\ell$, we partition it into an uncorrectable operator $O_U$ of weight $w = (\ell+1)/2$ and a correctable operator $O_C$ of length $|O_C|=|E_s|$. Then, we consider the alternative uncorrectable and correctable paths, $O_U^\prime$ and $O_C^\prime$, with weight $w$ and $|E_s|$, respectively. The logical string $\mathcal{L}$ is short, so we can use Lemmas \ref{Lemma: shape of strings A} and \ref{Lemma: shape of strings B}. Say the logical string runs right to left across the code. We observe by studying figure \ref{fig: multiplicity 2} that we have multiple possible strings of the same weight exactly when both a vertical error and some number of adjacent horizontal errors are contained in either the correctable or uncorrectable part. 
    
    Suppose that for some partition consisting of an uncorrectable operator $O_U^{(1)}$ and a correctable operator $O_C^{(1)}$, denote the operators with the same weight, syndrome, and logical action by $O_U^{\prime \, (1)}$ and $O_C^{\prime \, (1)}$. Suppose $|\{O_C^{\prime \, (1)}\}| > |\{O_U^{\prime \, (1)}\}|$. We construct a new operator $O_U^{(2)}$ by exchanging all but one of the errors in $O_U^{(1)}$ with the errors in $O_C^{(1)}$, so that $O_U^{(2)}$ is equal $O_C^{(1)}$ plus one additional error. This is shown in figure \ref{fig: multiplicity 1}, where we have kept the error on the farthest left vertical segment fixed and flipped the rest relative to the term in figure \ref{fig: multiplicity 2}. For every $O_U$ there are $w$ possible mappings, one for each of the $w$ choices of the single-qubit error that remains fixed. In the same way, every $O_U$ is mapped onto by $w$ different mappings acting on $w$ other operators with the same logical action as $O_U$. Then, there exists a convention that selects exactly one partner for each $O_U$. 
    
    \begin{figure}
        \centering
        \includegraphics[height = 10cm]{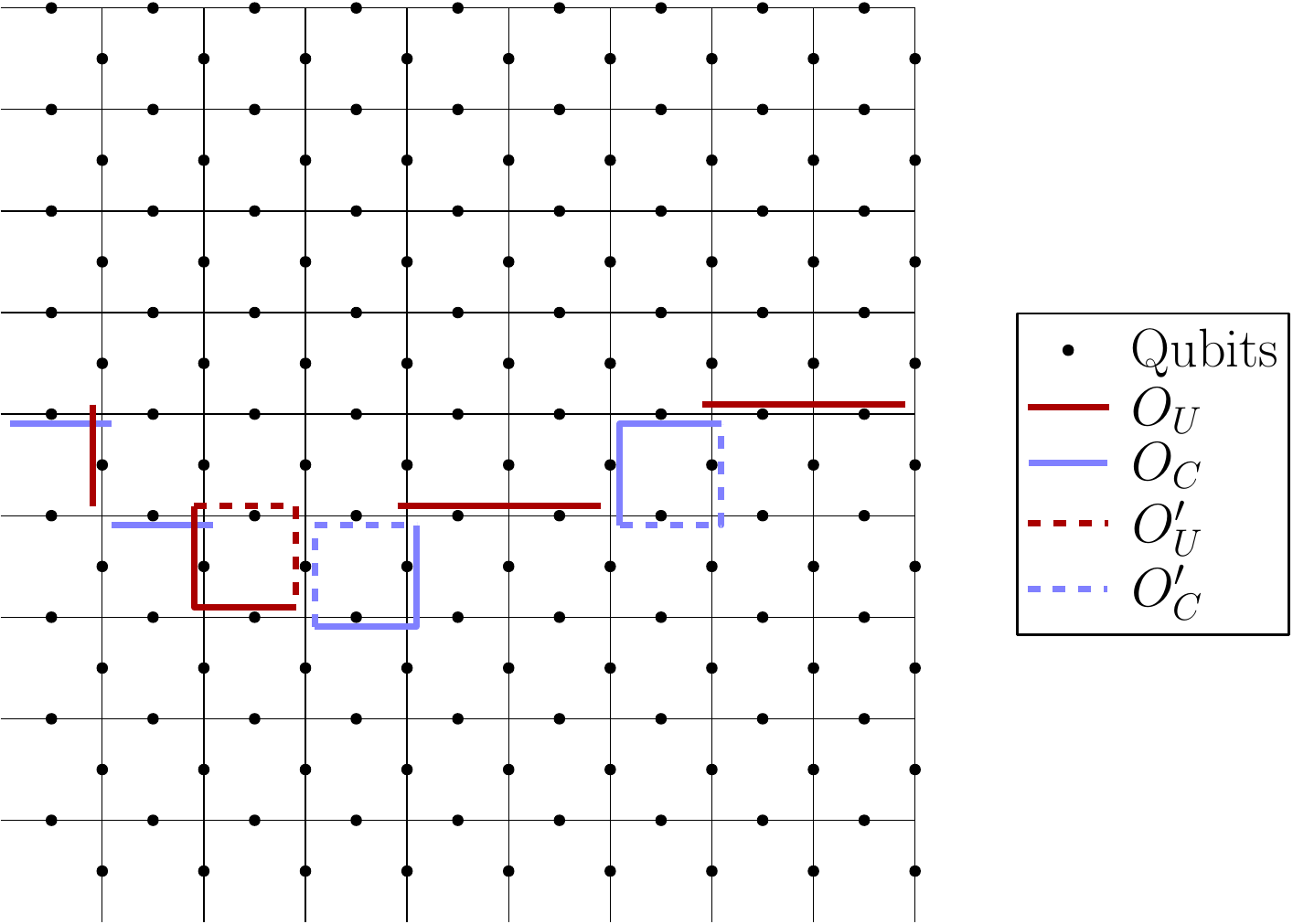}
        \caption{A partition of a length-$13$ logical string is shown in a toric code with $L=9$. The operator $O_U$ is shown in solid red. The operator $O_C$ is shown in solid blue. The alternative operators with the same weight, syndrome, and logical action, which we denoted $O_U^\prime$ and $O_C^\prime$, are drawn with dotted lines. For this partition $|\{O_U^\prime\}| = 2$ and $|\{O_C^\prime \}| = 4$.}
        \label{fig: multiplicity 1}
    \end{figure}
    
    \begin{figure}
        \centering
        \includegraphics[height = 10cm]{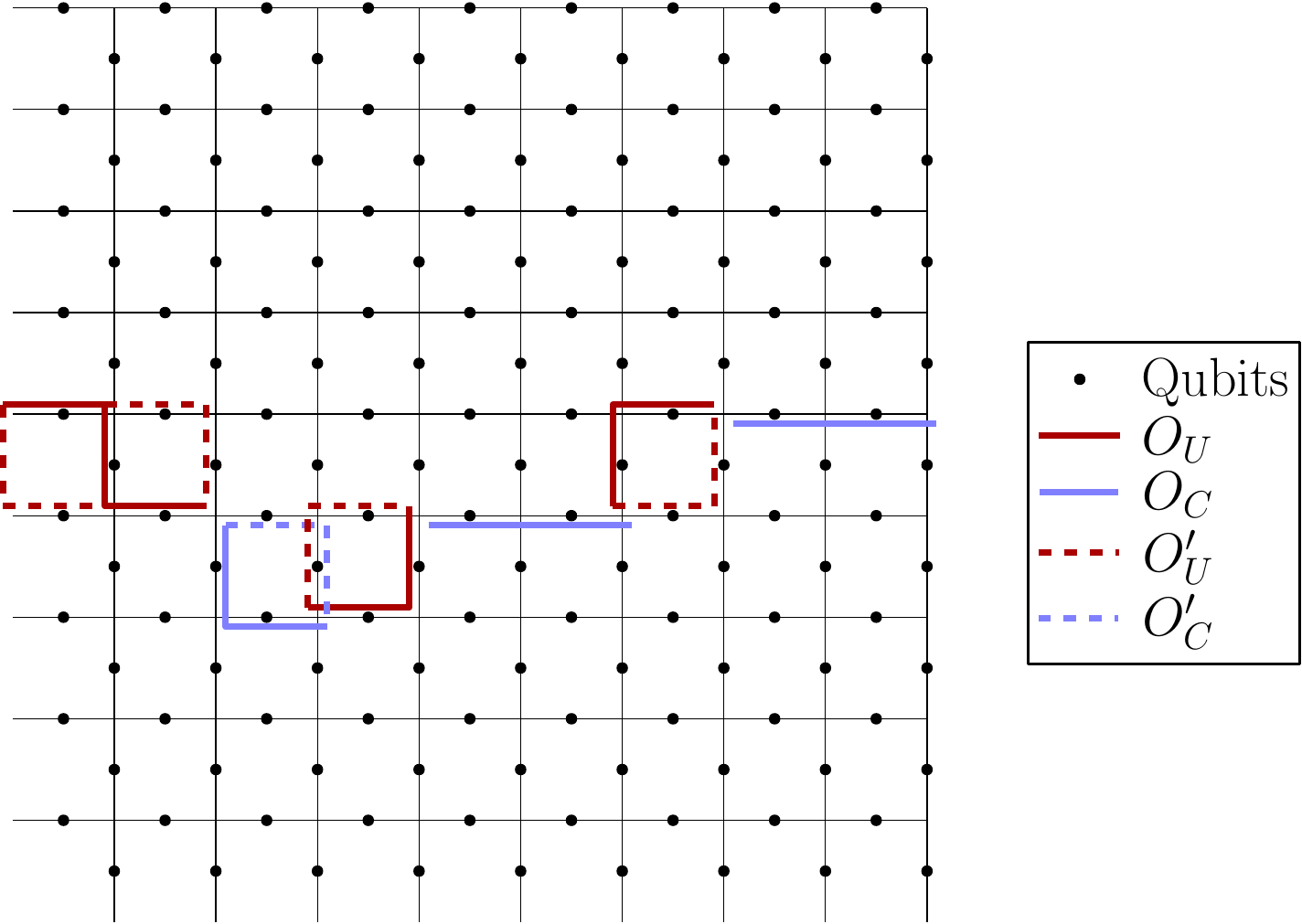}
        \caption{This is a partner of the partition shown in figure \ref{fig: multiplicity 1}. It is another partition of the same logical string, and the errors in $O_U$ and $O_C$ are interchanged except for one qubit. In this case, that qubit is the one that lies on the farthest left vertical segment. The error on that qubit is part of $O_U$ in both partitions. Once again, the operator $O_U$ is in red and the operator $O_C$ is in blue. The alternative operators with the same weight, syndrome, and logical action are given by the dashed lines. For this partition $|\{ O_U^\prime \}| = 12$ and $|\{ O_C^\prime \}| = 2$.}
        \label{fig: multiplicity 2}
    \end{figure}
    
    We assumed that for the $O_U^{(1)}$ that we started with
    \begin{equation}
        |\{O_C^{\prime \, (1)} \}| > |\{O_U^{\prime \, (1)} \}| .
    \end{equation}
    The mapping we described constructs a partner $O_U^{(2)}$ such that 
    \begin{equation}
        |\{O_U^{\prime \, (2)} \}| \geq |\{O_C^{\prime \, (1)} \}| > |\{O_U^{\prime \, (1)} \}| \geq |\{O_C^{\prime \, (2)} \}| .
    \end{equation}
    Then for each pair $O_U^{(1)}$ and $O_U^{(2)}$, we can lower bound the contribution to the incoherent logical noise using
    \begin{align}
        & \frac{|\{O_U^{\prime \, (1)} \}|}{|\{O_C^{\prime \, (1)} \}|}+ \frac{|\{O_U^{\prime \, (2)} \}|}{|\{O_C^{\prime \, (2)} \}|} \nonumber \\
        & \geq \frac{|\{O_U^{\prime \, (1)} \}|}{|\{O_C^{\prime \, (1)} \}|} + \frac{|\{O_C^{\prime \, (1)} \}|}{|\{O_U^{\prime \, (1)} \}|} \nonumber \\
        & = \frac{|\{O_U^{\prime \, (1)} \}|^2 + |\{O_C^{\prime \, (1)} \}|^2 }{|\{O_U^{\prime \, (1)} \}| |\{O_C^{\prime \, (1)} \}|} \nonumber \\
        &\geq 2 .
    \end{align}
    Finally, we apply the lower bound to the entire sum over $O_U$ to conclude
    \begin{equation}
        \sum_{O_U} \frac{|\{O_U^\prime\}|}{|\{O_C^\prime\}|} \geq \sum_{O_U} 1 .
    \end{equation}
    The number of terms in the sum over $O_U$ is at most $\binom{\ell}{w}$, where $\ell$ is the length of the logical string $\mathcal{L}$. For typical, short logical strings the binomial coefficient will be the number of terms in the sum over $O_U$ up to a small correction.
    
\end{proof}

\subsection{Noise Terms with Mismatched Weight}
\label{sec: Mismatched weights}

We have already shown that we can neglect the high-weight noise terms in the incoherent logical noise components, and we can also write the incoherent logical noise components as a sum over logical strings. Next, we will show that among the low-weight noise terms, we may neglect the terms with different weight errors on each side of $\rho$. This is crucial to our proof that the coherence of the logical noise is suppressed. We will construct a lower bound on the incoherent logical noise components and an upper bound on the coherent logical noise components. The noise terms with mismatched weight enter with a phase of $-1$ whenever the difference between the weights on left and right $= 2 \mod 4$. A large contribution from noise terms with mismatched weight could spoil our lower bound on the incoherent logical noise. Fortunately, no such contribution occurs.

\begin{Lemma}
    \label{Lemma: Mismatched weights}
    If $|\sin \theta| < 1/L$, then the incoherent logical noise component $\tilde{\chi}_{\scriptscriptstyle{Z_1 Z_1}}$ can be written
    \begin{equation}
    \label{eq: mismatched weight lemma statement}
        \tilde{\chi}_{\scriptscriptstyle{Z_1 Z_1}} \geq \sum_{\mathcal{L}} \sum_{O_U} \frac{1}{|\{O_C^\prime\}|} \sum_{O_U^\prime} (O_U \rho \, O_U^\prime) \times \mathrm{Disconnected} \times \left(1 + \mathcal{O}\left( L \sin \theta)^{2\zeta} \right) \right) .
    \end{equation}
    The sum over $\mathcal{L}$ includes all typical, short logical strings with length $\ell$ such that $\ell \leq L+2 \zeta$. The sum over $O_U$ includes uncorrectable subsets of $\mathcal{L}$ with weight $(\ell+1)/2$ such that the complement $O_C$ has minimal weight. The sum over $O_U^\prime$ has the same syndrome and the same weight as $O_U$. The error term comes from the high-weight terms we neglected in Lemma \ref{Lemma: High Weigh Incoherent Terms}.
\end{Lemma}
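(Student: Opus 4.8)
The plan is to show that although the mismatched-weight noise terms (those with $|O_U|\neq|O_U'|$) enter the incoherent sum with alternating signs, their net contribution never drives $\tilde{\chi}_{\scriptscriptstyle Z_1 Z_1}$ below the matched-weight sum on the right-hand side of \eqref{eq: mismatched weight lemma statement}, up to the truncation error already isolated in Lemma \ref{Lemma: High Weigh Incoherent Terms}. First I would assemble the reductions already in hand: since $|\sin\theta|<1/L$, Lemma \ref{Lemma: High Weigh Incoherent Terms} lets us discard all terms with $|O_U|+|O_U'|>L+2\zeta+1$ at the cost of the stated relative error $\mathcal{O}((L\sin\theta)^{2\zeta})$, and \eqref{eq: incoherent sum string form} together with Lemma \ref{Lemma: Incoherent string sum bound} organizes the survivors as a sum over typical short logical strings $\mathcal{L}$ of length $\ell\le L+2\zeta$, over uncorrectable subsets $O_U\subset\mathcal{L}$ with $|O_U|=(\ell+1)/2$, and over all uncorrectable $O_U'$ sharing the syndrome of $O_U$.

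Next I record the phase. For the uniform $Z$-rotation the coefficient of $(O_U\rho\,O_U')$ is proportional to $t^{|O_U|+|O_U'|}\,i^{\,|O_U'|-|O_U|}$ with $t=\tan(\theta/2)$. Because $O_U$ and $O_U'$ share a syndrome and both decode to $\tilde Z_1$, their product is a $Z$-type stabilizer, whose support is a union of homologically trivial loops of even length; hence $|O_U'|-|O_U|$ is even and the phase collapses to $(-1)^{(|O_U'|-|O_U|)/2}$. Moreover $|O_U|=(\ell+1)/2$ is the minimal uncorrectable weight for its syndrome, so $|O_U'|\ge|O_U|$ and the matched terms $|O_U'|=|O_U|$ are uniformly positive, while a mismatched term of weight defect $2k$ carries sign $(-1)^k$. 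The heart of the argument is to control the signed sum over $O_U'$ with $O_U$ held fixed. Writing $O_U'=O_UG$ for a $Z$-stabilizer $G$, the connected configurations decompose into independent ``bubbles'' — the loops along which $O_U'$ departs from $O_U$ between consecutive syndrome defects — and within each bubble the sum over how the loop is split between left and right is precisely an alternating binomial sum of the type that produced $\delta$ for the repetition code in \eqref{eq: Delta calculation chi matrix approach}, governed by the identity \eqref{eq:binomial-identity}. I would show these alternating sums cancel the bulk of the mismatched contributions, so that what survives of the mismatched terms is of the order of, and does not oppose, the truncation remainder; the matched terms then remain as a genuine positive lower bound, giving \eqref{eq: mismatched weight lemma statement}.

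The main obstacle is exactly this cancellation. A triangle-inequality bound on the mismatched terms is hopeless: the number of weight-$(|O_U|+2)$ completions $O_U'$ grows like $\mathcal{O}(L^2)$ while each is suppressed only by $t^2=\mathcal{O}(1/L^2)$, so the naive magnitude of the first mismatched correction is $\mathcal{O}(1)$, not small. Only the alternating signs rescue the bound, so the real work is to exhibit a sign-reversing pairing (equivalently, a partial alternating-binomial identity) that collapses the mismatched sum, and then to track carefully the \emph{boundary} of that pairing — the configurations at the truncation weight $L+2\zeta+1$ where a weight-raising partner is unavailable — since it is precisely these uncancelled boundary terms that reproduce the $\mathcal{O}((L\sin\theta)^{2\zeta})$ error and nothing larger. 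A secondary subtlety is that $O_U'$ need not be a subset of $\mathcal{L}$, so the bubble decomposition must be set up for the pair of logical strings $E_sO_U$ and $E_sO_U'$ as in Definition \ref{Def: Connected incoherent} and figure \ref{fig: connected incoherent example}, and one must check that the over-counting factor $1/|\{O_C'\}|$ from \eqref{eq: incoherent sum string form} interacts correctly with the weight-raising pairing so that no matched term is lost.
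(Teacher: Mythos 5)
There is a genuine gap, and it is in the step you yourself identify as the heart of the argument. The sum you need to control is over $O_U'$ with $O_U$ held fixed, and in that sum there is no ``splitting of a loop between left and right'': $O_U$ fixes one arc of each bubble, and the only freedom is the choice of the complementary arc taken by $O_U'$. That is a sum over \emph{paths of varying length} between two fixed defect endpoints, weighted by $(\pm i\tan(\theta/2))^{\mathrm{length}}$ with path-counting multiplicities, not a sum over subsets of a fixed edge set. It therefore does not reduce to the alternating binomial identity (\ref{eq:binomial-identity}) that produced $\delta$ in equation (\ref{eq: Delta calculation chi matrix approach}); that identity is the structure of the \emph{coherent} sum over partitions, where the whole logical string is divided between left and right. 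Since your entire mechanism for taming the $\mathcal{O}(1)$ naive magnitude of the first mismatched correction rests on this misidentified cancellation, the proof does not go through as written. A second, independent gap: by fixing $|O_U|=(\ell+1)/2$ at the outset you assume away the mismatched pairs in which \emph{neither} operator is minimal (e.g. $|O_U|=(\ell+3)/2$, $|O_U'|=(\ell+7)/2$, which carries a negative sign); these are not reachable from your setup even after symmetrizing left and right, yet they must be shown not to push $\tilde{\chi}_{\scriptscriptstyle Z_1Z_1}$ below the matched-weight sum. Note also that invoking Lemma \ref{Lemma: Incoherent string sum bound} to restrict to $|O_U|=|O_U'|=(\ell+1)/2$ is circular, since that lemma explicitly assumes the conclusion of the present one.

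For contrast, the paper's proof does not rely on sign cancellation for the dominant (typical-string) contributions at all. After symmetrizing to $w>w'$, it observes that an $O_U$ of weight $w$ admitting a lower-weight partner $O_U'$ must contain \emph{every} qubit of a ``cap,'' and for a typical string (Definition \ref{Def: typical shape}) each cap has width at least $\gamma\sqrt{L}$; the same counting as in Lemma \ref{Lemma: exceptional terms} then shows such $O_U$ form a fraction $\sim 2^{-\gamma\sqrt{L}(w-w')}$ of all weight-$w$ subsets, so the mismatched terms are rare in number rather than cancelled in sign. (Your claim that a counting bound is ``hopeless'' is a misdiagnosis: the $\mathcal{O}(L^2)$ weight-$(w{+}2)$ completions you count correspond to short detour caps, which make the ambient logical string atypical; once the typical/atypical split is imposed, the counting closes.) Only for the atypical strings — already a vanishing fraction by Lemma \ref{Lemma: shape of strings B} — does the paper use an alternating-sign argument, and there it is the sum over the weight $w$ with a monotonically decreasing multiplicity factor, bounded as in equation (\ref{eq: sum over w monotonic function}), not a per-bubble binomial identity. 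If you want to salvage your route, you would need to actually exhibit the factorized cancellation over complementary arcs and separately dispose of the non-minimal-$O_U$ pairs; as it stands, both pieces are missing.
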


\begin{proof}

    Using Lemma \ref{Lemma: High Weigh Incoherent Terms}, we can truncate the sum over noise terms in the incoherent logical noise component $\tilde{\chi}_{\scriptscriptstyle{Z_1 Z_1}}$ to include only those noise terms with total weight $\leq L+2 \zeta+1$. In doing so we make an error that is exponentially small in the cutoff $ \zeta$, assuming that the single-qubit angle of rotation $\theta$ satisfies $|\sin \theta|< 1/L$. We will use equation (\ref{eq: incoherent sum string form}) to express the incoherent logical noise components as a sum over strings. We will begin by reviewing how we construct that form of the sum.

    We denote the weight of $O_U$ by $w$ and $O_U^\prime$ by $w^\prime$. We upper bound the mismatched-weight terms where $w \neq w^\prime$ by letting $w>w^\prime$ and multiplying by two. As in equation (\ref{eq: incoherent sum string form}), we can generate the complete set of connected incoherent terms with fixed $w$ and $w^\prime$ by summing over connected logical strings $\mathcal{L}$. Denote the length of the logical string by $|\mathcal{L}| = \ell$. To produce the incoherent terms with fixed $w + w^\prime$, it will suffice to sum over logical strings with $\ell < w+w^\prime$. We already restricted to low-weight terms, so $w+w^\prime \leq L+2 \zeta+1$. For each logical string, we sum over the uncorrectable subsets $O_U$ with weight $w$. We will also require that the complement of $O_U$, which we denoted $O_C$, has minimal weight. This is to control the over-counting of each incoherent term $(O_U \rho \, O_U^\prime)$. Then for each $O_U$, we sum over the operators $O_U^\prime$ with weight $w^\prime$ that have the same syndrome and logical action as $O_U$. As discussed in Section \ref{sec: Incoherent sum over strings}, we must also divide by an over-counting factor $1/|\{O_C\}|$ that is a function of $O_U$ and equals one over the number of times $O_U$ appears in the sum over $\mathcal{L}$. The contribution to the incoherent logical noise is lower bounded by
    \begin{equation}
    \label{eq: incoherent contribution lower bound from w, w prime terms}
        \text{Contribution from string $\mathcal{L}$} \geq \sum_{O_U} \frac{|\{O_U^\prime\}|}{|\{O_C^\prime\}|} \left(\frac{\sin\theta}{2} \right)^{w+w^\prime}.
    \end{equation}
    The inequality comes from the cosine factors. If the operators $O_U$ and $O_U^\prime$ act on the same set of qubits, then we have $(\sin \theta/2)^{2w}$ with no cosine factors in the connected part. The lower bound corresponds to the case where $O_U$ and $O_U^\prime$ act on disjoint sets of qubits, and we pick up a cosine factor on each qubit in the connected part. We also have an upper bound:
    \begin{equation}
    \label{eq: incoherent contribution upper bound from w, w prime terms}
        \text{Contribution from string $\mathcal{L}$} \leq \sum_{O_U} \frac{|\{O_U^\prime\}|}{|\{O_C^\prime\}|} \left(\sin(\theta/2) \right)^{w+w^\prime}.
    \end{equation}
    In this bound, we have $\left(\sin(\theta/2) \right)^{w+w^\prime}$. This corresponds to the case when $O_U$ and $O_U^\prime$ act on the same qubits, yielding no cosine terms.
    
    Consider first the terms with $w=w^\prime$. These are generated from logical strings of length $\ell \leq 2w-1$. Some strings $\mathcal{L}$ with length $\ell$ such that $L \leq \ell \leq L+2 \zeta$ have typical shape and some do not. We will prove first that the contribution from a given string of atypical shape is no greater than that of a string of typical shape, in fact it will be less. We will conclude that we can safely neglect the contribution from strings of atypical shape (because there are fewer such strings). This is the same simplification we made in our discussion of the coherent logical noise components in Section \ref{sec: sum over partitions}.
    
    We require that $O_U$ is an uncorrectable error, so we cannot choose any subset of $w$ qubits in $\mathcal{L}$. We ignore the subsets that correspond to exceptional partitions like we discussed in Section \ref{sec: sum over partitions}. Now, if we have two connected logical strings of the same length, one with a typical shape and one with an atypical shape, we want to compare the terms with $w=w^\prime$. The first thing we notice is that exceptional terms are exponentially unlikely for the string with typical shape, while for the atypical string, exceptional terms may be a significant fraction of the total partitions. This tells us that in the sum over $O_U$ there are many more terms for the typical string than for the atypical string. We have argued about the number of terms in the sum in equations (\ref{eq: incoherent contribution lower bound from w, w prime terms}) and (\ref{eq: incoherent contribution upper bound from w, w prime terms}), but we must also consider the magnitude of each term, which is given by the ratio of $|\{O_U^\prime\}|$ over $|\{O_C^\prime\}|$. 
    
    We must argue that after summing the ratio of $\{O_U^\prime\}$ and $\{O_U^\prime\}$ over $O_U$, the result is less for an atypical string than for a typical string. $\{O_U^\prime\}$ here is the set of uncorrectable operators with the same weight and syndrome as $O_U$ and $\{O_C^\prime\}$ is the set of correctable weight-$\ell - w$ operators with the same syndrome. Suppose the logical string runs left to right across the code. The set $\{O_U^\prime\}$ contains more than one element whenever $O_U$ contains a set of contiguous qubits around one or more of the vertical steps in the logical string. This was discussed in detail in Section \ref{sec: Incoherent sum over strings}. $|\{O_U^\prime\}|$ and $|\{O_C^\prime\}|$ are large when either $O_U$ or $O_C$ contain contiguous sets of qubits around the vertical steps. The typical logical string has at least $\gamma \sqrt{L}$ horizontal steps around each of the vertical steps. The atypical string does not. This means that the typical string has more possible sets of qubits around each vertical step that make $|\{O_U^\prime\}|$ or $|\{O_C^\prime\}|$ large. Therefore, $|\{O_U^\prime\}|$ and $|\{O_C^\prime\}|$ will tend to be larger for the typical string. The ratio of $|\{O_U^\prime\}|$ to $|\{O_C^\prime\}|$ is what determines the contribution to the incoherent logical noise. In Lemma \ref{Lemma: Incoherent string sum bound}, we showed how we can match up terms such that for each $O_U$ such that $|\{O_U^\prime \}|/ |\{O_C^\prime\}| = c$, the partner has $|\{O_U^\prime \}|/ |\{O_C^\prime\}| \geq 1/c$. If $c$ is large, then $c+\frac{1}{c} \gg 2$. It follows that because the typical string has more operators $O_U$ in the sum and the $|\{O_U^\prime \}|/ |\{O_C^\prime\}|$ factors tend to be larger, the contribution to the incoherent logical noise is smaller for an atypical string than the contribution from a typical string. When we combine this fact with the fact the atypical strings represent an small minority, this means we can neglect the atypical strings among the $w=w^\prime$ terms in the incoherent logical noise. The error is given by Lemma \ref{Lemma: shape of strings B}.
    
    Now, consider the mismatched-weight terms that are the subject of this lemma. Fix $w+w^\prime$ and suppose $w > w^\prime$. For each $\mathcal{L}$ we can construct a number of incoherent terms with mismatched weight depending on the length and shape of $\mathcal{L}$. Let $|\mathcal{L}| = \ell$. Once again $O_U$ is an uncorrectable subset of the logical string $\mathcal{L}$ with weight $w$. Then, for each $O_U$ we have the possibility that there may exist an operator $O_U^\prime$ with the same syndrome as $O_U$ and lower weight. We sum over the set of $O_U$ such that for each $O_U$ there exists an $O_U^\prime$ with weight $w^\prime$. If we sum over all logical strings of length $<w+w^\prime$, we produce every connected incoherent term with $|O_U| = w$ and $|O_U^\prime| = w^\prime$. We will proceed by fixing a logical string and upper bounding the sum over $w$ of the noise terms derived from this logical string with $w+w^\prime$ fixed. In this sum the terms will alternate sign as $w$ increases. The terms with $w = w^\prime$ have a positive sign. As we seek to bound the contribution of these mismatched-weight terms to the incoherent logical noise, there are two things we need to bound. First, we must understand the combinatorics that govern the number of operators $O_U$ that permit lower-weight $O_U^\prime$. Second, we must bound the factor $|\{O_U^\prime \}|/|\{O_C^\prime\}|$ for each such $O_U$.
    
    Suppose that the logical string $\mathcal{L}$ has a typical shape. To be concrete, consider the set of operators $O_U$ with weight $w = (\ell+3)/2$. If there exist lower-weight $O_U^\prime$, then $O_U$ must contain all of the qubits around a ``cap", which is similar condition to the one we discussed in Section \ref{sec: sum over partitions}. Each cap has width at least $\gamma \sqrt{L}$ because the string is typical. This means that such $O_U$ are exponentially few relative to the total set of uncorrectable $O_U$ with weight $w$. This is the same calculation as in Lemma \ref{Lemma: exceptional terms}. We compare these terms to the terms with $|O_U| = (\ell+1)/2 =|O_U^\prime|$. There are exponentially more of these terms where $|O_U| = |O_U^\prime|$. This means that in equations (\ref{eq: incoherent contribution lower bound from w, w prime terms}) and (\ref{eq: incoherent contribution upper bound from w, w prime terms}) the sum over $O_U$ contains exponentially more terms when $w = w^\prime$ for a typical logical string. The summand also tends to be less for the $w > w^\prime$ terms. The argument is similar to the one we used earlier when we were discussing the $w = w^\prime$ terms from typical and atypical string. $|\{O_U^\prime\}|$ is large when $O_U$ contains many qubits around several of the different steps up and down along the logical string. In this case, the terms with $w>w^\prime$ feature operators $O_U$ that contain at least one of the ``caps" along the logical string. This removes at least two the vertical steps. These steps cannot contribute to $|\{O_U^\prime \}|$. Then by the argument we used above, the ratio $|\{O_U^\prime \}|/|\{O_C^\prime \}|$ tends to less for the $w>w^\prime$ terms relative to the $w=w^\prime$. We chose $w = (\ell+3)/2$, but we could have chosen any $w> (\ell+1)/2$ and any $w^\prime < w$. We would find that there are $2^{\gamma \sqrt{L} (w - w^\prime)}$ fewer of the mismatched weight terms. We have a factor of $2^{\gamma \sqrt{L}}$ for each cap contained in $O_U$. We conclude that the mismatched-weight terms are negligible for strings of typical shape.
    
    Finally, consider a logical string $\mathcal{L}$ with an atypical shape. Fix $w+w^\prime$. We already neglected the contribution of atypical strings to the $w=w^\prime$ terms. We seek a bound on the contribution from the terms with $w>w^\prime$ for atypical strings. We will compare two sets of terms for fixed $\mathcal{L}$ with length $\ell$. On the one hand, take the terms with $w= w_1$ for some $w_1 > (\ell-1)/2$ and $w^\prime = w_2 < w_1$. On the other hand, take the terms with $w = w_1 + 1$ and $w^\prime = w_2 -1$. We will show that the latter set of terms contribute less than the former. This will tell us that the sum over mismatched-weight terms for the fixed string $\mathcal{L}$ is bounded by the contribution from terms with $|O_U| = |O_U^\prime|$.

    When $O_U^\prime$ has lower weight than $O_U$, $O_U$ must contain all the qubits along a cap. If $O_U - O_U^\prime = 2j$, then $O_U$ must contain at caps with total height at least $j$. Because the logical string $\mathcal{L}$ has an atypical shape, these caps may have width one or height greater than one. It will not be exponentially unlikely that all qubits around a small cap are contained in $O_U$. For the terms with $w=w_1$ and $w^\prime = w_2$, relative to $O_U^\prime$, $O_U$ contains all the qubits around $w_1-w_2$ of the caps. These terms will be a fraction of the $\binom{\ell}{w_1}$ subsets of $w_1$ qubits in the logical string $\mathcal{L}$. We compare these terms to the ones with $w=w_1+1$ and $w^\prime = w_2-1$ keeping our logical string $\mathcal{L}$ fixed. These terms include all the qubits around an additional cap. On one of the caps, instead of containing at least one and less than all of the qubits, $O_U$ contains all of the qubits around that cap. This stricter condition of $O_U$ means that the fraction of the total $\binom{\ell}{w_1+1}$ weight $w_1$ subsets of $\mathcal{L}$ that feature an $O_U^\prime$ operator with weight $w_2-1$ is smaller than the fraction of the total $\binom{\ell}{w_1}$ weight $w_1$ subsets of $\mathcal{L}$ that feature an $O_U^\prime$ operator with weight $w_2$. This means that in equations (\ref{eq: incoherent contribution lower bound from w, w prime terms}) and (\ref{eq: incoherent contribution upper bound from w, w prime terms}) in the sum over $O_U$ for our fixed logical string the number of possible $O_U$ at a given weight $w > (\ell+1)/2$ is given by a binomial coefficient times a function that decreases monotonically as $w$ increases. As for the summand $|\{O_U^\prime \}|/|\{O_U^\prime \}|$, we apply the same reasoning as above. For each cap contained in $O_U$, there are fewer vertical steps to create many operators $O_U^\prime$. This implies that the summand will tend to be smaller as $w$ increases. The sum over the different values of $w$ has the form
    \begin{equation}
    \label{eq: sum over w monotonic function}
        \sum_{w=c}^{\ell} (-1)^w f(w) \binom{\ell}{w} < \binom{\ell}{c}  ,
    \end{equation}
    where $c\geq (l+1)/2$ and $f$ is a monotonically decreasing function. The inequality in equation (\ref{eq: sum over w monotonic function}) is proven by pairing the adjacent terms in the sum, positive and negative, to produce small positive contributions bounded by the contributions in the case where $f(w)=1$ for all $w$. It follows that the sum over the mismatched-weight terms derived from the logical string $\mathcal{L}$ is positive, and moreover is bounded by the $w=w^\prime$ terms. We already argued that the $w=w^\prime$ terms from atypical strings are negligible relative to those terms from typical strings. Finally, we can lower bound the incoherent logical noise component by neglecting the atypical strings and for the typical strings, neglecting the mismatched-weight terms. This yields equation (\ref{eq: mismatched weight lemma statement}).
\end{proof}

We are left with only the incoherent terms that have the same weight of uncorrectable error on each side and the weight is $\leq \frac{L+2 \zeta+1}{2}$. These terms all have the same phase $+1$, so the incoherent terms with different weights will add constructively. This gives us lower bounds on the logical incoherent noise strength. Each logical string with length $\ell \leq L+2 \zeta$ contributes at least $\binom{\ell}{\frac{\ell+1}{2}} \left(\frac{\sin \theta}{2}\right)^{\ell+1}$. When $\ell$ is much larger than the minimum logical string length, $L$, the number of logical strings is given by equation (\ref{eq: path counting formula}). 

In particular, the incoherent logical noise components must be larger than the lowest order term. This at last completes the argument begun in Section \ref{sec: Path Counting} about neglecting the contribution to the coherent logical noise from connected logical strings with length $>L+2\zeta$ for a cut-off constant $\zeta$. In Lemma \ref{Lemma: Coherent path counting}, we proved that the contribution from long logical strings is upper bounded by $\alpha L^{2 \zeta+1} |\sin \theta|^{L+2 \zeta}$, where $\alpha$ is $(1-L|\sin \theta|)^{-1}$. This bound is exponentially small in $\zeta$ relative to the lowest order incoherent logical noise component, $L \binom{L}{\frac{L+1}{2}} \left( \frac{\sin \theta}{2}\right)^{L+1}$. Our aim is to compare the logical coherent and incoherent noise components, and we have shown that the contribution to the coherent logical noise from long strings is small relative to the incoherent logical noise components. Therefore, we can safely neglect the long connected logical strings. The same applies for the truncation error in Lemma \ref{Lemma: High Weigh Incoherent Terms}. The truncation error is negligible relative to the lowest order incoherent terms for large enough $\zeta$.

\subsection{More General Rotation Angles}
\label{sec: more general rotation angles}

In Sections \ref{section: The Coherent Sum} and \ref{section: Incoherent Sum}, we simplified the problem by assuming that all qubits are rotated by the same single-qubit unitary rotation. Now we want to extend our result to more general single-qubit rotations. We will allow the magnitude of the rotation angle to vary from qubit to qubit and will also allow different axes of rotation for different qubits. Here we will assume that each rotation axis is contained in the $X{-}Z$ plane. Physical $Y$ errors are treated in Appendix \ref{app: Physical Y Errors}, where we prove that rotations partly along the $Y$-axis produce less coherent logical noise channels than those arising from rotations along axes in the $X{-}Z$ plane.

The idea of the proof is the same as that of Lemma \ref{lemma:minimization}. We will consider the coherent and incoherent logical noise components as functions of individual qubit rotation angles and prove the coherent component is maximized relative to the incoherent component when all rotation angles are equal.

\begin{Lemma}
    Consider the toric code with qubits subject to single-qubit rotations, where each rotation axis lies in the $X{-}Z$ plane, and both the rotation axis and angle of rotation may vary from qubit to qubit. The bound on the coherence of the logical noise channel proved in Theorem \ref{theorem: Big Theorem} continues to apply if the rotations are sufficiently close to uniform; that is, provided that each rotation axis and angle deviates from a fixed constant value within a bounded region.
    \label{Lemma: general rotation angles}
\end{Lemma}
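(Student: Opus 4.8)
The plan is to mirror the argument for the repetition code in Lemma \ref{lemma:minimization}, but to carry it out \emph{string by string} so that the sum over logical strings in the toric code reduces to a collection of independent repetition-code-style comparisons. First I would reduce to the case of pure-$Z$ (equivalently pure-$X$) rotation axes. By the CSS structure the analysis of the $\tilde Z_1$ logical channel involves only the $Z$-type components of the physical noise, and a rotation about an axis $\hat n_\alpha$ in the $X$-$Z$ plane contributes an effective single-qubit $Z$-amplitude $s_\alpha \propto \sin(\theta_\alpha/2)\,|\cos\phi_\alpha|$, where $\phi_\alpha$ is the tilt of $\hat n_\alpha$ away from $\hat z$; the $X$-type pieces feed the independently analyzed $\tilde X_1$ channel. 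Tilting the axis toward $\hat x$ only decreases $s_\alpha$ and hence the coherent contribution, which together with Lemma \ref{lemma: Y rotations} (that $Y$-components can only reduce coherence) lets me replace each physical rotation by an effective $Z$-rotation of amplitude $s_\alpha$ without raising the coherent-to-incoherent ratio. Because the hypothesis confines every $\theta_\alpha$ and $\phi_\alpha$ to a bounded neighborhood of fixed constants with $\sin\theta_\alpha$ small, I can also guarantee $|\sin\theta_\alpha| < 1/L$ for every qubit, so the truncation Lemmas \ref{Lemma: Coherent path counting} and \ref{Lemma: High Weigh Incoherent Terms}, together with the negligibility of exceptional, mismatched-weight, and disconnected contributions, continue to hold: each such estimate was exponentially small in $L$ for uniform angles, and a bounded region changes every amplitude only by a bounded multiplicative factor, which cannot overturn an exponential suppression.

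With these reductions, for each fixed short typical logical string $\mathcal{L}$ of odd length $\ell$ the two components take exactly the inhomogeneous repetition-code form of Section \ref{sec:rep-code-revisted}. Summing over partitions of $\mathcal{L}$ as in equation (\ref{eq: Sum over partitions}), each string qubit contributes magnitude $s_\alpha c_\alpha$ regardless of which side of $\rho$ it lands on, so the partition-dependent phases multiply an angle-independent combinatorial factor and
\[
|\text{coherent}_{\mathcal{L}}| = \binom{\ell-1}{\tfrac{\ell-1}{2}}\prod_{\alpha\in\mathcal{L}} s_\alpha c_\alpha .
\]
For the incoherent component, restricting to matched-weight terms with $O_U=O_U'$ supported on $\mathcal{L}$ (justified by Lemmas \ref{Lemma: Mismatched weights} and \ref{Lemma: Incoherent string sum bound}), the per-string contribution is bounded below by the symmetric function
\[
\text{incoherent}_{\mathcal{L}} \ \geq\ \sum_{\substack{S\subset\mathcal{L}\\ |S|=\frac{\ell+1}{2}}}\ \prod_{\alpha\in S} s_\alpha^2 \ =\ f_{\frac{\ell+1}{2}}\!\big(\{s_\alpha^2\}_{\alpha\in\mathcal{L}}\big),
\]
which is precisely the object appearing in Lemma \ref{lemma:minimization} with $n\to\ell$ and $m\to(\ell+1)/2$.

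I would then apply Lemma \ref{lemma:minimization} along $\mathcal{L}$: for fixed $\prod_{\alpha\in\mathcal{L}} s_\alpha$ the function $f_{(\ell+1)/2}$ is minimized when all $s_\alpha$ are equal, giving $f_{(\ell+1)/2}\geq \binom{\ell}{(\ell+1)/2}\big(\prod_\alpha s_\alpha\big)^{(\ell+1)/\ell}$. Dividing and using $c_\alpha\le 1$ together with $\binom{\ell-1}{(\ell-1)/2}/\binom{\ell}{(\ell+1)/2}=\tfrac{\ell+1}{2\ell}$ yields the per-string ratio bound
\[
\frac{|\text{coherent}_{\mathcal{L}}|}{\text{incoherent}_{\mathcal{L}}} \ \le\ \frac{\ell+1}{2\ell}\,\bar s_{\mathcal{L}}^{-1},
\qquad \bar s_{\mathcal{L}} := \Big(\prod_{\alpha\in\mathcal{L}} s_\alpha\Big)^{1/\ell},
\]
the exact analog of equation (\ref{eq:delta-bound-inhomo-rot}), with equality approached in the uniform limit $c_\alpha\to1$ and $s_\alpha$ all equal. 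Since every $s_\alpha$ lies in a bounded region, the geometric mean $\bar s_{\mathcal{L}}$ is bounded below by a single constant $s_{\min}\sim\sin(\theta_{\min}/2)$ independent of $\mathcal{L}$, so the ratio is at most $R_{\max}=\tfrac{\ell+1}{2\ell}\,s_{\min}^{-1}$ for every string. I would close by summing over strings with the mediant inequality $|\sum_{\mathcal{L}}\text{coh}_{\mathcal{L}}|\le\sum_{\mathcal{L}}|\text{coh}_{\mathcal{L}}|\le R_{\max}\sum_{\mathcal{L}}\text{incoh}_{\mathcal{L}}$, valid because the incoherent contributions are positive; this reproduces the bound of Theorem \ref{theorem: Big Theorem} up to an $\mathcal{O}(1)$ enlargement of the constant and identifies uniform rotations as the worst case.

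The main obstacle is the string-by-string reduction itself. One must verify that the earlier approximations—exceptional terms, mismatched weights, and the disconnected sum equalling $1$—remain negligible once the amplitudes vary both along and off each string; this is exactly what forces the ``bounded region'' hypothesis, since only a bounded spread of angles keeps the earlier exponential-in-$L$ suppressions intact. Second, one must confirm that tilting the rotation axis within the $X$-$Z$ plane genuinely cannot raise the coherent-to-incoherent ratio rather than merely rescaling both components, i.e.\ that the pure-$Z$ axis is the worst sub-case within the plane; this is the step where I would lean most heavily on the CSS decoupling of the $\tilde Z_1$ and $\tilde X_1$ analyses and on the appendix treatment of off-plane ($Y$) rotations.
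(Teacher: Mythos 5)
Your route is genuinely different from the paper's: the paper proves this lemma by a local variational argument, perturbing away from the uniform-angle point along the submanifold of fixed coherent logical noise strength and checking (via the sign conditions on the coefficients $\gamma_i,\delta_i$ in equations (\ref{eq: coherent two qubit expansion})--(\ref{eq: delta conditions})) that every elementary perturbation increases the incoherent strength, so that uniform rotations are a local worst case. You instead propose a global, string-by-string reduction to the inhomogeneous repetition code, invoking Lemma \ref{lemma:minimization} along each short typical string. Your treatment of the coherent side is sound: for a fixed string the sum over partitions factorizes as an angle-independent signed count times $\prod_\alpha s_\alpha c_\alpha$, exactly as in equation (\ref{eq:delta-bound-inhomo-rot}), and your reduction to effective $Z$-amplitudes via the CSS structure matches what the paper does at the end of its proof.

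The gap is on the incoherent side. The toric-code incoherent component is not a single-string object: a term $(O_U\,\rho\,O_U')$ involves two independently fluctuating uncorrectable errors, and the paper's device for organizing this as a sum over strings requires the overcounting factor $1/|\{O_C'\}|$ in equation (\ref{eq: incoherent sum string form}). Your claimed per-string lower bound $\mathrm{incoherent}_{\mathcal{L}}\ge f_{(\ell+1)/2}(\{s_\alpha^2\})$ fails on one horn or the other: if you drop the $1/|\{O_C'\}|$ factor and keep only the diagonal terms $O_U=O_U'$, then summing over $\mathcal{L}$ counts the same physical term $(O_U\,\rho\,O_U)$ once for every string completing $O_U$, so $\sum_{\mathcal{L}}\mathrm{incoherent}_{\mathcal{L}}$ is \emph{not} bounded above by $\tilde\chi_{Z_1Z_1}$ and your final mediant step collapses; if instead you keep the factor and discard the off-diagonal $O_U\ne O_U'$ terms, you are left with $\sum_{O_U}1/|\{O_C'\}|$, which can be far smaller than $\binom{\ell}{(\ell+1)/2}$ --- the whole content of Lemma \ref{Lemma: Incoherent string sum bound} is that the off-diagonal terms compensate for that denominator. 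Moreover, Lemma \ref{Lemma: Incoherent string sum bound} as proved is an \emph{unweighted} counting statement (the pairing $c+1/c\ge 2$): its pairing map exchanges which qubits lie in the uncorrectable versus correctable part, so with qubit-dependent amplitudes the two partners no longer carry equal weights and the $\mathrm{AM}$--$\mathrm{GM}$ step does not go through verbatim. To make your reduction work you would need a weighted analogue of that lemma, which is exactly the point you flag as an ``obstacle'' but do not supply. Until that is established, the per-string ratio bound, and hence the conclusion, is not justified.
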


\begin{proof}
    
    Suppose at first that all rotations are about the $Z$-axis and denote the rotation angle for the $i$th qubit by $\theta_i$. Each logical coherent or incoherent component is a sum of physical noise terms, which are functions of all the angles $\theta_i$. We will refer to the coherent or incoherent logical noise strength; by this we mean the sum of norms squared of the off-diagonal or diagonal components of the chi matrix for the logical noise channel. We are interested in the coherence of the logical noise channel, that is, the relative magnitude of the coherent and incoherent logical noise strength. Our approach will be to fix the coherent logical noise strength and calculate how the incoherent logical noise strength varies as we change rotation angles while remaining in the submanifold with constant coherent logical noise strength.
    
    We begin at a point where all single-qubit rotation angles are equal. Suppose that this rotation angle is $>0$. The proof will be similar if the angle is $<0$. We will perturb away from this point, moving along the submanifold with fixed coherent logical noise strength. These perturbations can be built out of small elementary steps, in which two qubits, $i$ and $j$, are selected. We require that $\theta_i \geq \theta_j$. Then, the elementary step consists of increasing $\theta_i$ by some amount and decreasing $\theta_j$ such that we remain on the submanifold with constant coherent logical noise strength. We will prove that such elementary steps increase the incoherent logical noise strength. Therefore, we will conclude that the coherence of the logical noise is maximized when all single-qubit rotation angles are equal. Our calculation will be limited to configurations of angles not too far from the point where all angles are equal.

    In Lemmas \ref{Lemma: Coherent path counting} and \ref{Lemma: High Weigh Incoherent Terms}, we proved that when all the rotation angles are equal and satisfy $|\sin \theta |<1/L$, the logical noise is dominated by the contributions of the low-weight connected terms. We bounded the absolute magnitude of the sum over high-weight connected terms. These high-weight terms were negligible relative to the low-weight connected terms in the incoherent logical noise. If the rotation angles are allowed to differ, so long as all the angles $\theta_i$ satisfy $|\sin \theta_i|<1/L$, our upper bound on the absolute magnitude of the error from the high-weight terms continues to hold. We require that this error is negligible relative to the low-weight terms we keep in the incoherent logical noise components. This was true when all angles are equal and will continue to be true for a wide range of configurations; only certain edge cases will violate this condition. For instance, one such edge case arises if all the rotation angles are $0$ except for the qubits along a long logical string with a shape such that it contains no low-weight uncorrectable subsets.

    We previously defined the connected and disconnected parts of a noise term (Definitions \ref{Def: Connected coherent}, \ref{Def: Disconnected Coherent}, \ref{Def: Connected incoherent}, and \ref{Def: Disconnected Incoherent}). As we described in Section \ref{sec: disconnected pieces} and Lemmas \ref{lemma: added error exceptional terms} and \ref{lemma: incoherent disconnected part}, the disconnected part has a value of $1$ up to corrections. These corrections are small for low-weight connected terms when all rotation angles are equal. If the angles are different, we can still apply our analysis, so long as the absolute error from the corrections is small relative to the low-weight connected terms in the incoherent logical noise components. This holds in a region around the point where all angles are equal. Hence, in this proof we will compare only the connected terms in the coherent and incoherent logical noise components with the understanding that the error terms we are neglecting are small relative to the low-weight connected terms we have kept.

    We can build any general perturbation out of an elementary (non-infinitesimal) perturbation where we increase one rotation angle $\theta_i$ and decrease a second angle, $\theta_j$, such that the connected contribution to the coherent logical noise strength is unchanged. The perturbation will look different depending on how the two qubits are positioned. If the qubits $i$ and $j$ are adjacent to each other and aligned in the correct direction, they will appear together in many short logical strings. Otherwise, $i$ and $j$ will not appear together in short logical strings. Throughout this section, we will approximate $\sin \theta_i/2 \approx \theta_i/2$ to simplify the equations. We will incur a relative error of $\theta_i^2/4$, that will always be small, since we have assumed that $|\sin \theta_i| < 1/L$ for every $i$. Then the contribution to the coherent logical noise strength from the low-weight connected terms as a function of $\theta_i$ and $\theta_j$ is
    \begin{equation}
        \label{eq: coherent two qubit expansion}
        \text{coherent} = \gamma_0 + \gamma_1 (\theta_i + \theta_j) + \gamma_2 \theta_i \theta_j .
    \end{equation}
    The coherent logical noise strength is a sum of norms squared and is therefore positive. This implies that $\gamma_0 > 0$. Moreover, the sum over partitions has the same phase for every short logical string as in equation (\ref{eq: Sum over partitions}). This means that each logical string makes a positive contribution to the noise strength. Therefore, $\gamma_1$ and $\gamma_2$ are both non-negative. The relative size of $\gamma_1$ and $\gamma_2$ depends on how close the two chosen qubits $i$ and $j$ are. When $i$ and $j$ are both along the same horizontal or vertical line, many low-weight logical strings will contain both qubits. These strings contribute to $\gamma_2$, so that the $\gamma_2$ term may be comparable to the $\gamma_1$ term. On the other hand, if qubits $i$ and $j$ are not along a horizontal or vertical line, then none of the minimal-weight logical strings contain both qubits. Also, for any fixed length $\ell \leq L+2 \zeta$, the number of logical strings of length $\ell$ that contain both qubits $i$ and $j$ is negligible relative to the number of length $\ell$ logical strings that contain qubit $i$ and not qubit $j$. In this case, the $\gamma_2$ term is negligible relative to the $\gamma_1$ term. In either case, we can write down the perturbation that leaves the coherent logical noise strength unchanged. Let $\theta_i = c_i \theta$ and $\theta_j = c_j \theta$ for some $\theta$, and then we will solve for $c_j$ such that the connected coherent sum is constant. This yields
    \begin{equation}
        \label{eq: Coherent sum level set condition}
        c_j = \frac{(2-c_i) \gamma_1 + \gamma_2 \theta}{\gamma_1 + c_i \gamma_2 \theta} ,
    \end{equation}
    so that when $\gamma_2 = 0$ we have $c_j = 2-c_i$.
    
    We can expand the incoherent logical noise strength in the same way. The noise terms that enter into the incoherent logical noise have the form $(O_U \rho \, O_U^\prime)$. As we expand in the angles $\theta_i$ and $\theta_j$, we have cases where the qubits $i$ and $j$ are contained in neither, one of, or both $O_U$ and $O_U^\prime$:
    \begin{equation}
        \label{eq: incoherent two qubit expansion}
        \text{incoherent} = \delta_0 + \delta_1 (\theta_i^2 + \theta_j^2) + \delta_2 \theta_i^2 \theta_j^2 + \delta_3 (\theta_i + \theta_j) +\delta_4 \theta_i \theta_j + \delta_5 (\theta_i^2 \theta_j + \theta_i \theta_j^2) .
    \end{equation}
     By Lemma \ref{Lemma: High Weigh Incoherent Terms}, the contributions of high-weight logical strings to the logical incoherent noise components are negligible. The contributions for each short logical string are positive due to Lemma \ref{Lemma: Mismatched weights}. If we fix a short logical string that contains both qubit $i$ and qubit $j$ and require that $O_U$ and $O_U^\prime$ both contain $i$ and $j$ or one of $i$ and $j$, then the same proof as in Lemma \ref{Lemma: Mismatched weights} implies that these contributions are positive. Therefore, the coefficients $\delta_1$ and $\delta_2$ are positive when all rotation angles are equal. We can now substitute the perturbation from equation (\ref{eq: Coherent sum level set condition}) into each of the terms in equation (\ref{eq: incoherent two qubit expansion}). We compute the perturbed value of the incoherent logical noise strength and subtract the initial value when all the angles of rotation are equal. Let $\textrm{incoherent}(a)$ denote the value of the incoherent term in equation (\ref{eq: incoherent two qubit expansion}) with $c_i=a$. The difference between the perturbed and initial values is
    \begin{align}
    \label{eq: incoherent c_i expansion}
        \textrm{incoherent} ( c_i ) - \textrm{incoherent}(1) & = \delta_1 \frac{(c_i-1)^2 ( 2 \gamma_1^2 \theta^2 + 2 (2+c_i) \gamma_1 \gamma_2 \theta^3 + (c_i^2+2 c_i +1) \gamma_2^2 \theta^4) }{(\gamma_1 + c_i \gamma_2 \theta)^2} \nonumber
        \\
        &\quad - \delta_2 \frac{(c_i-1)^2 \left( \gamma_1^2 \theta^4 (2 - (c_i-1)^2) + 2 c_i \gamma_1 \gamma_2 \theta^5 \right) }{(\gamma_1 + c_i \gamma_2 \theta)^2} \nonumber
        \\
        &\quad + \delta_3 \frac{(c_i-1)^2 (\gamma_1 \gamma_2 \theta^2 + c_i \gamma_2^2 \theta^3)}{(\gamma_1 + c_i \gamma_2 \theta)^2} \nonumber
        \\
        &\quad - \delta_4 \frac{(c_i-1)^2(\gamma_1^2 \theta^2 + c_i \gamma_1 \gamma_2 \theta^3)}{(\gamma_1 + c_i \gamma_2 \theta)^2} \nonumber
        \\
        &\quad - \delta_5 \frac{(c_i-1)^2 (2 \gamma_1^2 \theta^3 + c_i^2 \gamma_1 \gamma_2 \theta^4 - c_i \gamma_2^2 \theta^5)}{(\gamma_1 + c_i \gamma_2 \theta)^2} .
    \end{align}
    We see that the $\delta_1$ term is positive for all $c_i > 1$. We will show that the positive terms are larger than the other terms for all $c_i>1$. Each term has the same denominator and contains a factor of $(c_i-1)^2$ in the numerator. This means immediately that the first derivative with respect to $c_i$ vanishes at the point $c_i = 1$. We pull out the shared factors of $\frac{(c_i-1)^2}{(\gamma_1 + c_i \gamma_2 \theta)^2}$ and rearrange terms in equation (\ref{eq: incoherent c_i expansion}):
    \begin{align}
    \label{eq: perturbation rewritten}
        & = \frac{(c_i-1)^2}{(\gamma_1 + c_i \gamma_2 \theta)^2} \times \nonumber
        \\
        & \quad \left( \gamma_1^2 (2 \theta^2 \delta_1 - 2 \theta^4 \delta_2 - \theta^2 \delta_4 - 2 \theta^3 \delta_5) \right. \nonumber
        \\
        &\quad \quad + \gamma_1 \gamma_2( 4 \theta^3 \delta_1 + \theta^2 \delta_3) \nonumber
        \\
        &\quad \quad + \gamma_1 \gamma_2 (2 c_i \theta^3 \delta_1 - 2 c_i \theta^5 \delta_2 - c_i \theta^3 \delta_4) \nonumber
        \\
        &\quad \quad + \gamma_2^2 (2 c_i \theta^4 \delta_1 + c_i \theta^3 \delta_3 + c_i \theta^5 \delta_5) \nonumber
        \\
        &\quad \quad + \left.\gamma_1^2 (c_i-1)^2 \theta^4 \delta_2 + \gamma_2^2 (c_i^2+1)\theta^4 \delta_1 - \gamma_1 \gamma_2 c_i^2 \theta^4 \delta_5 \right) .
    \end{align}
    If the conditions,
    \begin{align}
    \label{eq: delta conditions}
        2 \theta^2 \delta_1 &> 2 \theta^4 \delta_2 + \theta^2 |\delta_4| + 2 \theta^3 |\delta_5|, \nonumber
        \\
        4 \theta^3 \delta_1 &> \theta^2 |\delta_3|, \nonumber
        \\
        \theta^4 \delta_1&> \theta^3 |\delta_3| + \theta^5 |\delta_5| ,\nonumber
        \\
        \textrm{and} \quad  \gamma_2^2 (c_i^2 +1) \theta^4 \delta_1  + \gamma_1^2 (c_i-1)^2 \theta^4 \delta_2 &> \gamma_1 \gamma_2 c_i^2 \theta^4 |\delta_5| ,
    \end{align}
    are satisfied, then each line of equation (\ref{eq: perturbation rewritten}) is greater than $0$. Recall that $\gamma_1$, $\gamma_2$, $\delta_1$, and $\delta_2$ are non-negative. Each elementary perturbation increases the value of $c_i$. Therefore, if the conditions in equation (\ref{eq: delta conditions}) are satisfied, then each elementary step along the submanifold with constant coherent noise strength increases the incoherent logical noise strength. It remains for us to argue that these conditions are satisfied when the rotation angles are close to equal.
    
    Consider two cases for the relative positions of qubits $i$ and $j$. In the first case, suppose qubits $i$ and $j$ are positioned so that no short logical strings contain both. Then the strings that contribute to $\gamma_2$ as well as $\delta_2$, $\delta_4$, and $\delta_5$ are all long. Such strings do not appear in the sum over low-weight connected terms, so $\gamma_2 = 0$ in equation (\ref{eq: perturbation rewritten}). Therefore, the only condition is the first line of equation (\ref{eq: delta conditions}). In this inequality $\delta_2$, $\delta_4$ and $\delta_5$ are $0$, and the condition is satisfied.
    
    In the other case, qubits $i$ and $j$ are in a horizontal or vertical line so that both qubits are contained in several short logical strings. In this case $\theta \gamma_2$ is comparable to $\gamma_1$. Now consider the incoherent contribution from the strings that contain both qubits $i$ and $j$. For each weight-$2w-1$ logical string containing both qubits $i$ and $j$, the errors $O_U$ are weight-$w$ subsets of the logical string. Nearly half will contain exactly one of qubits $i$ and $j$ and one quarter will contain both qubits $i$ and $j$. This means that more terms contribute to $\theta^2 \delta_1$ and $\theta \delta_3$ than to $\theta^4 \delta_2$, $\theta^2 \delta_4$, and $\theta^3 \delta_5$. In particular, $\theta^2 \delta_1 > 2 \theta^4 \delta_2$. For each $O_U$, the set $\{O_U^\prime\}$ contains operators with the same syndrome, logical action and weight. $O_U^\prime$ differs from $O_U$ only near certain transverse steps along the logical string as we described in Section \ref{sec: Incoherent sum over strings}. For most logical strings, if $O_U$ does not contain qubit $i$, then most of the operators $O_U^\prime$ also will not. If $O_U$ contains qubit $i$, most operators $O_U^\prime$ will as well. This implies that 
    \begin{align}
        \theta^2 \delta_1 &\gg \theta |\delta_3|, \nonumber
        \\
        \theta^4 \delta_2 &\gg \theta^2 |\delta_4| ,\nonumber
        \\
        \theta^4 \delta_2 & \gg \theta^3 |\delta_5| , \nonumber
        \\
        \textrm{and} \quad  \theta^2 \delta_1 &\gg \theta^3 |\delta_5|.
    \end{align}
    Together with our earlier statement that $\theta^2 \delta_1 > 2 \theta^4 \delta_2$, this implies that even when $i$ and $j$ are near to each other, the conditions in equation (\ref{eq: delta conditions}) are satisfied. 
    
    We have proven that each of the elementary perturbations starting from uniform angles increases the incoherent logical noise strength. However, we must also consider elementary perturbations that are applied after a different elementary perturbation has taken us away from uniform angles. In that case, in equations (\ref{eq: coherent two qubit expansion}) and (\ref{eq: incoherent two qubit expansion}) we would no longer have exact symmetry between $\theta_i$ and $\theta_j$. In other words, equation (\ref{eq: coherent two qubit expansion}) would read 
    \begin{equation}
        \textrm{coherent} = \gamma_0 + \gamma_1^{(i)} \theta_i + \gamma_1^{(j)} \theta_j + \gamma_2 \theta_i \theta_j ,
    \end{equation}
    where $\gamma_1^{(i)}$ and $\gamma_1^{(j)}$ are different coefficients. However, as long as we are not too far from uniform angles, $\gamma_1^{(i)} \approx \gamma_1^{(j)}$, and the change in the incoherent logical noise strength will be almost the same as in equation (\ref{eq: incoherent c_i expansion}). We have argued that the conditions in equation (\ref{eq: delta conditions}) are satisfied by a large margin. Therefore, there exists a region around the point with uniform angles where every elementary perturbation on the submanifold with fixed coherent logical noise strength increases the incoherent logical noise strength. We conclude that in a region around the symmetric point where every single-qubit rotation is by the same angle $\theta$, this symmetric point gives the largest coherent logical noise strength relative to the incoherent logical noise strength. This means that the connected contribution to the incoherent logical noise strength has a local minimum at the point with uniform rotations within the submanifold with constant connected contribution to the coherent logical noise strength. This implies that the upper bound on logical coherence we derive in Theorem \ref{theorem: Big Theorem} for the case where all angles are equal also upper bounds the logical coherence in a region around the point where all angles are equal.
    
    Now consider changing the axes of rotation while keeping the total rotation angle the same. Let the noise model be a rotation in the $X{-}Z$ plane such that the rotation angles are $\theta_X$ and $\theta_Z$ for every qubit. We show that $Y$ rotations on the qubits will produce less coherent logical noise in Lemma \ref{lemma: Y rotations}. The $X$ and $Z$ rotations contribute to independent components of the logical noise channel. Then, each noise term that enters into the $X$-type logical noise components depend on at least $L$ powers of $\theta_X$. Similarly, the $Z$-type logical noise depends on at least $L$ powers of $\theta_Z$. Therefore, if the total rotation angle for each qubit, $\sqrt{\theta_X^2 + \theta_Z^2}$, is fixed, the logical noise strength is greatest when either $\theta_X$ or $\theta_Z$ is $0$. Also, because the $X$ and $Z$-type errors contribute to different logical noise components, we can apply our analysis of coherent and incoherent logical noise components to the two types of errors separately. If both $|\sin \theta_X|$ and $|\sin \theta_Z|$ are $< 1/L$, then Lemmas \ref{Lemma: Coherent path counting} and \ref{Lemma: High Weigh Incoherent Terms} imply that the logical coherent and incoherent $X$ and $Z$ noise is dominated by the contributions of short logical strings. With this noise model, we will in general expect logical $Y$ noise, but Lemma \ref{Lemma: Other logical maps} implies that logical $Y$-type errors are negligible relative to $X$ and $Z$-type errors. The $\theta_X$ rotations contribute to the $X_1$ and $X_2$ logical noise, while the $\theta_Z$ rotations contribute to the $Z_1$ and $Z_2$ logical noise. The bounds we proved on coherent and incoherent logical noise components apply equally well to the $X$ and $Z$-type noise separately in this noise model.
\end{proof}

Lemma \ref{Lemma: general rotation angles} states that among noise models consisting of single-qubit rotations where each rotation is close to the same, the coherence of the logical channel is greatest for the noise model consisting of $Z$-axis rotations on every qubit by the same angle. The same does not necessarily hold for noise models consisting of single-qubit rotations with wildly different angles of rotation on each qubit. This is not surprising because if we allow for wildly different rotation angles, we encounter the case where all the rotation angles are $0$ except for the qubits along some very long logical string. This kind of high-weight connected term is beyond the scope of the present work, \textit{cf.} Lemmas \ref{Lemma: Coherent path counting} and \ref{Lemma: High Weigh Incoherent Terms}.

\subsection{Correlations}
\label{sec: toric code with correlations}

We can apply Theorem \ref{Theorem: Correlated noise} to study the toric code with minimal-weight decoding subject to correlated unitary noise. In the repetition code, we found that adding two-body correlations did not change the relation between coherent and incoherent components of the logical noise when the code size $n$ is large. We can transfer that to the toric code using what we have already proven. Consider a single logical string. We can sum over its partitions. With correlated unitary noise, instead of $\sin \theta/2 \cos \theta/2$ for each qubit in the logical string, we have a sum over the one and two-body couplings in the Hamiltonian, $h_1$ and $h_2$. The model of correlated noise that we considered in Theorem \ref{Theorem: Correlated noise} included two-body coupling terms between every pair of qubits in the code. Therefore, the magnitude of each multi-qubit error is a function only of its weight. We found that the coherent and incoherent logical noise in the toric code is dominated by the contributions from short logical strings with typical shape. In Theorem \ref{theorem: Big Theorem}, we will finish proving a relation between the coherent and incoherent terms that is based on the number of terms and their magnitudes, which we always assumed to be $\sin \theta/2$ raised to the weight of the terms. In the correlated case we alter the magnitude of each term, but Theorem \ref{Theorem: Correlated noise} tells us that string by string we can bound the coherent logical noise contribution in terms of the incoherent logical noise contribution.

\subsection{Main Theorem}
\label{sec: Main Theorem}

\begin{Theorem}
    \label{theorem: Big Theorem}
    Consider the $L \times L$ toric code without boundaries subject to single-qubit unitary noise acting on each qubit. We chose minimal-weight decoding and assume that syndrome extraction is perfect. Suppose that each qubit is rotated by an angle $\theta$ about some axis and that $|\sin \theta|<1/L$. Our conclusion will also hold for angles and axes that differ among the qubits, so long as the deviation is small as discussed in Lemma \ref{Lemma: general rotation angles}. Let $\mathcal{\tilde{N}}$ be the logical noise channel produced by encoding into the toric code, acting with single-qubit unitary noise, and then decoding. Denote by $\tilde{\chi}$ the chi matrix for the logical noise channel $\mathcal{\tilde{N}}$. Then, the coherent and incoherent components of the logical channel are related by
    \begin{equation}
        \sum_{i,j| i \neq j} \left| \tilde{\chi}_{i,j} \right|^2 < \frac{2}{(\sin \theta)^2} \left( \sum_{l|l \neq 0} \tilde{\chi}_{l,l} \right)^2 (1+ \mathcal{E}) ,
        \label{eq: coherent and incoherent final statement}
    \end{equation}
    where 
    \begin{equation}
    \label{eq: Epsilon error term upper bound}
        |\mathcal{E}| \leq  \frac{24 \gamma \zeta^2}{\sqrt{L}} + \mathcal{O} \left( \frac{1}{L} \right) + \mathcal{O} \left((L \sin \theta)^{2\zeta} \right),
    \end{equation}
    and $\zeta$ is an arbitrary $L$-independent constant.
    We denote the diamond norm distance of $\mathcal{\tilde{N}}$ from the identity channel by $D_\Diamond (\mathcal{\tilde{N}})$. It follows that
    \begin{equation}
        \label{eq: Main theorem diamond distance}
        D_\Diamond (\mathcal{\tilde{N}}) \leq c r ,
    \end{equation}
    for a constant $c$ given by
    \begin{equation}
        c^2 = \frac{(d_L+1)^2}{2} \left(1 + \frac{1}{(\sin \theta)^2} \right) (1 + \mathcal{E}),
    \end{equation}
    where $d_L$ is the dimension of the code space ($d_L=4$ for the 2D toric code without boundaries). Here $r$ is the average infidelity of the logical noise channel $\mathcal{\tilde{N}}$, and  $\mathcal{E}$ is the error term bounded in equation (\ref{eq: Epsilon error term upper bound}). (If the logical noise channel $\mathcal{\tilde{N}}$ were unitary, then $D_\Diamond (\mathcal{\tilde{N}})$ would be proportional to $\sqrt{r}$.) We can also consider the growth of the average infidelity as we apply the logical noise channel many times in succession. Let $r_m$ be the average infidelity after $m$ applications of $\mathcal{\tilde{N}}$; then using equation (\ref{eq: coherent and incoherent final statement}), we can write
    \begin{equation}
    \label{eq: Main theorem growth of infidelity}
        r_m = m r \left(1 + \frac{d_L}{(d_L+1) \sin^2 \theta} (m-1) r (1+ \mathcal{E}) \right) ,
    \end{equation}
   where $d_L = 4$ for the 2D toric code without boundaries, $\mathcal{E}$ is the error term that is upper bounded in equation (\ref{eq: Epsilon error term upper bound}), and $r$ is the average infidelity for a single application of the logical noise channel. As long as the physical noise strength is below the fault-tolerant threshold, $r$ will be exponentially small in the code distance $L$. Therefore, equation (\ref{eq: Main theorem growth of infidelity}) states that the term growing quadratically in $m$ is exponentially small in $L$ relative to the term growing linearly with $m$. In this sense, the coherence of the logical channel is heavily suppressed. 
\end{Theorem}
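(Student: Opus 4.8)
The plan is to assemble the lemmas of Section~\ref{sec:toric-code} into a string-by-string comparison of the coherent element $\tilde\chi_{\scriptscriptstyle Z_1 I}$ and the incoherent element $\tilde\chi_{\scriptscriptstyle Z_1 Z_1}$, and then promote that comparison to the full off-diagonal and diagonal sums appearing in (\ref{eq: coherent and incoherent final statement}). First I would reduce both sides to sums over short, typical logical strings: Lemmas~\ref{Lemma: Coherent path counting} and \ref{Lemma: High Weigh Incoherent Terms} truncate to strings of length $\le L+2\zeta$ (using $|\sin\theta|<1/L$), Lemmas~\ref{Lemma: shape of strings A} and \ref{Lemma: shape of strings B} restrict to typical shape, and the disconnected factor is set to $1$ up to the corrections controlled in Appendices~\ref{app: Disconnected errors} and \ref{app: incoherent disconnected part}. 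The lemmas \ref{Lemma: Other logical maps}, \ref{Lemma: More general coherent terms}, and \ref{lemma: Y rotations} let me keep only the terms in which exactly one side is a single logical $X$ or $Z$ on one encoded qubit, and Lemma~\ref{Lemma: general rotation angles} reduces the analysis to the worst case of a uniform rotation.

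The core step is the per-string comparison. For a fixed typical string of length $\ell$, the partition sum giving the connected coherent contribution was evaluated in (\ref{eq: Sum over partitions}), using the alternating-sign identity (\ref{eq:binomial-identity}), as $i\binom{\ell-1}{(\ell-1)/2}(\sin\theta/2)^\ell$ up to the negligible exceptional-term error of Lemma~\ref{Lemma: exceptional terms}, while Lemmas~\ref{Lemma: Incoherent string sum bound} and \ref{Lemma: Mismatched weights} bound the same string's incoherent contribution below by $\binom{\ell}{(\ell+1)/2}(\sin\theta/2)^{\ell+1}$. Dividing these and invoking $\binom{\ell}{(\ell+1)/2}=\tfrac{2\ell}{\ell+1}\binom{\ell-1}{(\ell-1)/2}$ gives a ratio $\tfrac{\ell+1}{\ell}\,\tfrac{1}{\sin\theta}=\tfrac{1}{\sin\theta}(1+\mathcal{O}(1/L))$ since $\ell\ge L$. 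Because every coherent contribution carries the common phase $i$ (so the terms add constructively) and every incoherent contribution is positive, this per-string inequality survives summation, yielding $|\tilde\chi_{\scriptscriptstyle Z_1 I}|\le \tfrac{1}{\sin\theta}(1+\mathcal{E})\,\tilde\chi_{\scriptscriptstyle Z_1 Z_1}$. Squaring, summing over the four logical generators $Z_1,X_1,Z_2,X_2$ (and their conjugate off-diagonal partners, which supply the overall factor $2$), and using $\sum_a(\tilde\chi_{aa})^2\le(\sum_{l\ne0}\tilde\chi_{ll})^2$—valid because all diagonal entries are nonnegative—produces (\ref{eq: coherent and incoherent final statement}), with $\mathcal{E}$ collecting the atypical-string fraction (\ref{eq:atypical-fraction}), the exceptional/disconnected corrections, and the truncation errors $\mathcal{O}((L\sin\theta)^{2\zeta})$.

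Finally I would convert (\ref{eq: coherent and incoherent final statement}) into the two metric statements. Lemma~\ref{lemma:off-diagonal} rewrites $\sum_{i\ne j}|\tilde\chi_{ij}|^2$ as $d_L^{-2}\sum_{a\ne b}(N_u)_{ab}^2$; feeding this and the incoherent (diagonal) part into the upper bound (\ref{eq:diamond-bound-pauli-transfer}), equivalently (\ref{eq:kueng-bound}) together with (\ref{eq: f^2 bound diamond distance}), and using $r=\tfrac{1}{d_L(d_L+1)}\,\mathrm{Tr}(I-N)$ to relate the diagonal deviation to the infidelity, gives $D_\Diamond(\tilde{\mathcal{N}})\le c\,r$ after collecting constants into $c^2=\tfrac{(d_L+1)^2}{2}\bigl(1+\tfrac{1}{\sin^2\theta}\bigr)(1+\mathcal{E})$. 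For the repeated-channel infidelity I would use (\ref{eq:coherence-off-diagonal}) to identify $\Theta^2$ with $(d_L^2-1)^{-1}\sum_{i\ne j}(N_u)_{ij}^2$, bound it via (\ref{eq: coherent and incoherent final statement}) as $\Theta^2=\mathcal{O}(r^2/\sin^2\theta)$, and substitute into the unital bound (\ref{eq:coherence-angle-bound}) to obtain (\ref{eq: Main theorem growth of infidelity}); since $r$ is exponentially small in $L$ below threshold, the quadratic-in-$m$ term is suppressed by the factor $r/\sin^2\theta$, which stays exponentially small in $L$.

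The main obstacle is not in this final assembly—which is essentially the binomial-identity cancellation already seen for the repetition code—but in making the passage from the per-string bound to the summed bound airtight. One must verify that the common phase of the coherent partition sum genuinely persists string-by-string, so that the lower-bound direction does not implicitly rely on destructive interference, while the incoherent contributions remain uniformly nonnegative; and one must check that the several independent error sources (atypical strings, exceptional partitions, decoding-flipping disconnected loops, and the two string-length truncations) can all be absorbed into a single error $\mathcal{E}=\mathcal{O}(\gamma\zeta^2/\sqrt L)+\mathcal{O}(1/L)+\mathcal{O}((L\sin\theta)^{2\zeta})$ that is simultaneously small for a suitable choice of the constants $\zeta$ and $\gamma$.
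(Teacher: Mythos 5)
Your proposal reconstructs the paper's own proof essentially step for step: the same reduction to short typical strings via Lemmas \ref{Lemma: Coherent path counting}, \ref{Lemma: High Weigh Incoherent Terms}, \ref{Lemma: shape of strings A}, and \ref{Lemma: shape of strings B}; the same per-string comparison of $i\binom{\ell-1}{(\ell-1)/2}(\sin\theta/2)^{\ell}$ against $\binom{\ell}{(\ell+1)/2}(\sin\theta/2)^{\ell+1}$ using Lemmas \ref{Lemma: exceptional terms}, \ref{Lemma: Incoherent string sum bound}, and \ref{Lemma: Mismatched weights}; and the same conversion to the diamond-distance and $r_m$ statements via Lemma \ref{lemma:off-diagonal}, the bound (\ref{eq:coherence-angle-bound}), and Lemma \ref{Lemma: diamond distance bound}. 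The two points you flag as needing care (the common phase $i$ for every odd-length string, and the consolidation of the error sources into $\mathcal{E}$) are exactly the points the paper verifies, so the argument is sound as proposed.
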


\begin{proof}

    We start with a noise model consisting of $Z$ rotations by angle $\theta$ on every qubit in the $L \times L$ block of toric code. We seek to approximate the coherent logical noise component $\tilde{\chi}_{\scriptscriptstyle{Z_1 I}}$ and relate it to the incoherent logical noise component $\tilde{\chi}_{\scriptscriptstyle{Z_1 Z_1}}$. First, let us calculate the coherent component. We write $\tilde{\chi}_{\scriptscriptstyle{Z_1 I}}$ as a sum over strings and partitions with a connected and disconnected part as in equation (\ref{eq: coherent sum string form}). Applying Lemma \ref{Lemma: Coherent path counting} to the connected part, we neglect high-weight terms, leaving only the logical strings with length $\leq L+2\zeta$ for a fixed $\zeta$, and making an error which is exponentially small in $\zeta$. For this step, the magnitudes of the sines of the single-qubit rotation angles are required to be below a threshold value $1/L$. We apply Lemma \ref{lemma: added error exceptional terms} to argue that disconnected part is equal to $1$ up to a small correction. Lemmas \ref{Lemma: shape of strings A} and \ref{Lemma: shape of strings B} tell us that we can treat all short logical strings $\mathcal{L}$ as typical and make another small error. Now that we have only short typical logical strings, we apply Lemma \ref{Lemma: exceptional terms} to perform the sum over partitions. We conclude that
    \begin{equation}
    \label{eq: coherent component final}
        \tilde{\chi}_{\scriptscriptstyle{Z_1 I}} = \sum_{\mathcal{L}} i^\ell \binom{\ell-1}{\frac{\ell-1}{2}} \left( \frac{\sin \theta}{2}\right)^{\ell} \left( 1 + \mathcal{E}_{shape} \right) + \mathcal{E}_{Long} ,
    \end{equation}
    where the sum is over all connected logical $Z_1$ strings $\mathcal{L}$ with length $\ell$ such that $\ell \leq L+2\zeta$. The error from logical strings with length greater than $L+2 \zeta$ is bounded 
    \begin{equation}
        \left| \mathcal{E}_{Long} \right| \leq \alpha L^{2\zeta+1} (\sin \theta)^{L+2\zeta},
    \end{equation}
    where $\alpha = (1-L \sin \theta)^{-1}$. This error is from Lemma \ref{Lemma: Coherent path counting}. The other error term from from Lemmas \ref{Lemma: shape of strings B} and \ref{lemma: added error exceptional terms} is bounded
    \begin{equation}
            \left|\mathcal{E}_{shape} \right| \leq \frac{16 \gamma \zeta^2}{\sqrt{L}} + O(1/L) .
        \end{equation}
    A further error due to neglecting exceptional partitions is subdominant according to Lemma \ref{Lemma: exceptional terms}, and is not shown in equation (\ref{eq: coherent component final}).
    
    Now, we will lower bound the incoherent logical incoherent noise component $\tilde{\chi}_{\scriptscriptstyle{Z_1 Z_1}}$. Lemma \ref{Lemma: High Weigh Incoherent Terms} implies that we can neglect the contributions of noise terms $(O_U \rho \, O_U^\prime)$ such that $|O_U| + |O_U^\prime| >L+2\zeta+1$. The error we make by truncating the sum is exponentially small in $\zeta$, so long as the rotation angle $\theta$ satisfies $|\theta| < 1/L$. The incoherent logical noise component can be put in the form of a sum over logical strings as in equation (\ref{eq: incoherent sum string form}). Using Lemma \ref{Lemma: Mismatched weights}, we restrict to the connected terms where the logical string $\mathcal{L}$ is short and has typical shape and $|O_U| = |O_U^\prime|$. We can also keep only the terms with $|O_U| = (\ell+1)/2$, because just as in the repetition code, these higher-weight partitions are suppressed by factors of $\sin \theta^2$ and the binomial coefficients are decreasing as we consider higher-weight $O_U$. The disconnected part is equal to $1$ up to small correction according to Lemma \ref{lemma: incoherent disconnected part}. Finally, Lemma \ref{Lemma: Incoherent string sum bound} gives a lower bound on the contribution of each logical string to the incoherent logical noise. All together, we have the following lower bound on $\tilde{\chi}_{\scriptscriptstyle{Z_1 Z_1}}$:
    \begin{equation}
    \label{eq: incoherent component final}
        \tilde{\chi}_{\scriptscriptstyle{Z_1 Z_1}} \geq \sum_{\mathcal{L}} \binom{\ell}{\frac{\ell+1}{2}} \left(\frac{\sin \theta}{2} \right)^{\ell+1} \left( 1 - \frac{8 \gamma \zeta^2 }{\sqrt{L}} + \mathcal{O}(1/L) + \mathcal{O} \left((L \sin \theta)^{2\zeta} \right) \right) .
    \end{equation}
    The error terms come from Lemmas \ref{Lemma: shape of strings B}, \ref{Lemma: High Weigh Incoherent Terms}, and \ref{lemma: incoherent disconnected part}. 
    A subdominant error term from Lemma \ref{Lemma: exceptional terms} is suppressed in equation (\ref{eq: incoherent component final}).
   
    Putting together equation (\ref{eq: coherent component final}) and equation (\ref{eq: incoherent component final}), we conclude the following about how the coherent and incoherent terms in the logical chi matrix are related:
    \begin{equation}
    \label{eq: coherent incoherent ratio}
        \frac{|(\tilde{L_a} \tilde{\rho})|}{|(\tilde{L_a} \tilde{\rho} \tilde{L_a})|} \leq \frac{1}{|\sin \theta|} (1 + \mathcal{E}) ,
    \end{equation}
    where
    \begin{equation}
    \label{eq: relative error bound for ratio}
        |\mathcal{E}| \leq \left( \frac{24 \gamma \zeta^2}{\sqrt{L}} \right) + \mathcal{O} \left( \frac{1}{L} \right) + \mathcal{O} \left( (L \sin \theta)^{2\zeta}  \right) .
    \end{equation}
    We used
    \begin{equation}
        \binom{\ell}{\frac{\ell-1}{2}} \approx 2 \binom{\ell-1}{\frac{\ell-1}{2}}
    \end{equation}
    to arrive at equation (\ref{eq: coherent incoherent ratio}).
    
    We have restricted our attention to the coherent logical component $(\tilde{L_a} \tilde{\rho})$ and the corresponding incoherent component $(\tilde{L_a} \tilde{\rho} \tilde{L_a})$, where $L_a$ was either $X$ or $Z$ on one of the two encoded qubits. We did this because these are the largest components in the logical noise. This is proven in appendices \ref{app: X1X2 or Y1 terms} and \ref{app: a and b nontrivial}. In Lemma \ref{Lemma: Other logical maps}, we prove that we can neglect the logical noise components $(\tilde{L_a} \tilde{\rho})$ where $L_a$ is a logical $Y$ or a non-trivial operator on both encoded qubits. In Lemma \ref{Lemma: More general coherent terms}, we prove that we can neglect coherent terms of the form $(\tilde{L_a} \tilde{\rho} \tilde{L_b})$ with $a \neq b$. Using these results, we can bound the sum of all coherent logical terms relative to the sum of all incoherent logical terms. There are two off-diagonal terms for each diagonal term, e.g. $\tilde{\chi}_{\scriptscriptstyle{Z_1 I}}$ and $\tilde{\chi}_{\scriptscriptstyle{I Z_1}}$ are matched with $\tilde{\chi}_{\scriptscriptstyle{Z_1 Z_1}}$, so we have
    \begin{equation}
        \sum_{i,j|i \neq j} \left| \tilde{\chi}_{i,j} \right|^2 < \frac{2}{(\sin \theta)^2} \left( \sum_{k| k \neq 0} \tilde{\chi}_{k,k} \right)^2 (1+\mathcal{E}) .
        \label{eq: total coherent and incoherent}
    \end{equation}
    We have proven equation (\ref{eq: coherent and incoherent final statement}). The term on the right hand side is proportional to the infidelity by equation (\ref{eq: infidelity identity}). Going back to Lemma \ref{lemma:off-diagonal} and equation (\ref{eq:coherence-angle-bound}), we can write
    \begin{equation}
        r_m \leq m r + m (m-1) \frac{d_L}{2(d_L +1)} \sum_{i,j| i \neq j} \tilde{\chi}_{i,j}^2 ,
    \end{equation}
    where $d_L$ is the dimension of the physical Hilbert space, which is $4$ for the toric code without boundaries. We can combine this with equation (\ref{eq: total coherent and incoherent}).
    \begin{equation}
        r_m \leq m r \left(1 + \frac{d_L}{(d_L+1) \sin^2 \theta} (m-1) r (1+\mathcal{E}) \right).
    \end{equation}
    Finally, we can use Lemma \ref{Lemma: diamond distance bound} with equation (\ref{eq: total coherent and incoherent}) to derive equation (\ref{eq: Main theorem diamond distance}).
    
    So far we have considered a noise model consisting of the same $Z$ rotation by angle $\theta$ on every qubit in the code block. We can use Lemma \ref{Lemma: general rotation angles} and Lemma \ref{lemma: Y rotations} to prove that this noise channel produces maximally coherent logical noise in a region around uniform rotations. The single-qubit rotation angles are allowed to differ so long as the deviation is not too great. Therefore, the relation we found between coherent and incoherent logical noise components for the $Z$ rotation noise model bounds the coherence of the logical noise channel for small rotations about any axis, so long as the rotations are close to uniform across the qubits.
\end{proof}

There are some subtleties in the interpretation of Theorem \ref{theorem: Big Theorem}. We address these in the next subsection, but first we will make a remark about the error bound in equation (\ref{eq: Epsilon error term upper bound}). This error bound is satisfactory for finite code size, $L$; however, we will need make a small modification before the bound is suitable for the $L\rightarrow \infty$ limit. This is because the term $\mathcal{O}\left( (L \sin \theta)^{2\zeta} \right)$ in equation (\ref{eq: Epsilon error term upper bound}) contains a factor polynomial in $L$. If the single qubit rotation angle $\theta$ satisfies $|\sin \theta| \propto 1/L$, then this polynomial factor would make the truncation error large as $L \rightarrow \infty$. The polynomial factor comes partly from the fact that the truncation error in equation (\ref{eq: coherent component final}) has a factor of $2^L$ relative to the factor of $\binom{L}{\frac{L+1}{2}}$ that appears in the lowest-weight incoherent noise terms. The ratio is proportional to $\sqrt{L}$. The other contribution to this polynomial factor comes from the path counting in Lemma \ref{Lemma: High Weigh Incoherent Terms}, where we neglected a factor polynomial in $w+w^\prime$ in equation (\ref{eq: bound on number of terms of high-weight incoherent}). We can cancel the polynomial factor by slightly modifying our truncation procedure. Denote this polynomial factor $p(L)$. Instead of neglecting noise terms with weight $>L+2\zeta$ in the coherent logical noise components, we neglect noise terms with weight $>L+2 \lceil \zeta^\prime \log (L) \rceil $ for a constant $\zeta^\prime$. We perform a similar truncation for the incoherent logical noise components. Then, we can choose $\zeta^\prime$ large enough that $(L \sin \theta)^{2\zeta^\prime} p(L)$ is decreasing with $L$. The minimum value for $\zeta^\prime$ such that this is decreasing depends on the degree of $p(L)$ and the magnitude of $L \sin \theta$. If $\zeta^\prime$ is greater than this minimum value, then $(L \sin \theta)^{2\zeta^\prime} p(L)$ is bounded above by $a |L \sin \theta|^{\lambda \zeta^\prime \log(L) }$, where $a$ is a constant that is determined by the coefficients in the polynomial $p(L)$, and $\lambda$ is a constant that depends on $\zeta^\prime$, the degree of $p(L)$, and the magnitude of $L \sin \theta$. Our new truncation rule slightly alters the other error terms in equation (\ref{eq: relative error bound for ratio}). The new error term is
\begin{equation}
\label{eq: new error term main theorem}
    |\mathcal{E}| \leq \frac{24 \gamma \zeta^{\prime 2} (\log L)^2}{\sqrt{L}} + \mathcal{O} \left(\frac{1}{L} \right) + a |L \sin \theta|^{\lambda \zeta^\prime \log(L) } ,
\end{equation}
where $a$, $\zeta^\prime$, and $\lambda$ are constants. $a$ and $\lambda$ are determined as we described above. We are free to choose $\zeta^\prime$, so long it is greater than a minimum value. Now, if we take the limit $L \rightarrow \infty$, we find that the error term in equation (\ref{eq: new error term main theorem}) remains small. Therefore, we may apply Theorem \ref{theorem: Big Theorem} in the limit of $L \rightarrow \infty$ with equation (\ref{eq: new error term main theorem}) replacing equation (\ref{eq: Epsilon error term upper bound}).

\subsection{Interpreting Bounds on Coherence}

We have proved a relation between the diagonal and off-diagonal components of the chi matrix of the logical noise channel. The interpretation is a bit subtle, so it is worth commenting on here. 
We upper bounded the off-diagonal components by $1/|\sin \theta|$ times the diagonal components, and we were forced to assume that $|\sin \theta| < 1/L$ because our analysis only applies to logical strings with length $\leq L+2\zeta$ where $\zeta$ is a constant. With this assumption, the factor of $1/|\sin \theta|$ implies that the coherent component of the logical channel may be  $L$ times larger than the incoherent component. This might seem to indicate that the coherence of the logical channel is not suppressed for large $L$, but that is not the best way to think about the comparison.

In equation (\ref{eq: Main theorem growth of infidelity}), the term quadratic in $m$ has a coefficient proportional to $r/\left(\sin \theta\right)^2$ relative to the linear term. But the average infidelity $r$ is exponentially small in $L$. Thus the coefficient of the quadratic term is really exponentially smaller in $L$ relative to the coefficient of the linear term. In equation (\ref{eq: Main theorem diamond distance}), the constant $c^2$ is not really a constant, since it scales like $L^2$ if $\theta$ scales like $1/L$. The point is that if the logical noise channel were fully coherent, i.e. unitary, then $c$ would scale like $1/\sqrt{r}$, but we find that $1/\sqrt{r}$ scales like $L^{L/2}$, which is vastly greater than $L^2$. We conclude that, although the logical noise channel is not exactly incoherent, it is quite close to an incoherent channel  as measured by our statements about the growth of average infidelity and the relation between diamond distance and average infidelity.

We could also consider writing our logical noise channel as a product of a unitary rotation and a Pauli channel. We can solve for the single parameter in each of these two channels. In the limit of low logical noise strength, the angle of rotation of the unitary channel approximately equals one of the off-diagonal chi matrix elements, and the probability of error in the Pauli channel is comparable to one of the diagonal components of the chi matrix. Theorem \ref{theorem: Big Theorem} implies that the logical channel can be written as a product of a unitary channel and a Pauli channel where the angle of rotation of the unitary is larger than the error probability of the Pauli channel by a factor which is approximately $1/|\sin\theta|$, and therefore enhanced by a factor of $L$ if $\theta$ scales like $1/L$. Again, this might make it seem like the coherence is not suppressed; however, the coherent channel makes a contribution to the average infidelity proportional to the rotation angle squared. This is why we find that the growth of average infidelity becomes nearly linear in $m$ as the code size $L$ increases. As the code block becomes large, the diamond distance for the logical noise channel is much smaller than what one would expect for a coherent channel based on the value of the average infidelity $r$. This is another way of making the same point as in the previous paragraph.

One might wonder whether a tighter upper bound than Theorem \ref{theorem: Big Theorem} can be derived on the strength of the coherent part of the logical channel relative to the incoherent part. 
In fact, a substantially tighter upper bound is not possible, if we want this bound to hold for arbitrary small rotation angles. For instance, we could choose to set every rotation angle equal to zero except for the qubits along a single length $L$ logical string. For this case, the computation of the logical channel is similar to our computation for the repetition code, where we were able to compute the logical channel quite precisely. Alternatively, for a fixed code size we could choose sufficiently small uniform rotation angles $\theta$ such that the lowest-weight terms dominate in the logical noise. In this case the computation of the logical channel is again similar to that of the repetition code. Since the bound we proved for the toric code nearly matches what we found for the repetition code, we know that our result is optimal in this special case. Of course, for some other particular set of single-qubit rotations, the logical noise channel may be less coherent than our upper bound predicts.

\section{Conclusions}
\label{sec:conclusions}

We have studied characterizations of coherence in quantum channels. One useful method for diagnosing the coherence of a channel $\mathcal{N}$ is to consider applying $\mathcal{N}$ $m$ times in succession, and to investigate how the average infidelity $r$ of the composite channel $\mathcal{N}^m$ increases with $m$.  For incoherent channels $r$ is linear in $m$, while for highly coherent channels it can grow quadratically with $m$. Another useful diagnostic is provided by the relationship between $r$ and $D_\Diamond(\mathcal{N})$, the distance between $\mathcal{N}$ and identity channel as measured by the diamond norm. For incoherent channels this distance scales linearly with $r$, while for highly coherent channels it scales like $\sqrt{r}$.

Using these criteria we have investigated the coherence properties of \emph{logical} channels. To define a logical channel, we choose a particular quantum error-correcting code and decoding method; then we consider encoding an initial input state, subjecting the physical qubits to a noise model, and finally applying the decoder to obtain the channel's output. Our main conclusion is that, for the code families we examined, even if the physical noise model is highly coherent, the coherence of the logical channel is heavily suppressed in the limit of a large code block.

For the case of the quantum repetition code, we can compute the logical channel precisely, and verify that the logical channel is highly incoherent for large block size. Most of this paper was devoted to the analysis of a more challenging case, the $L\times L$ two-dimensional toric code subject to independent unitary noise. Our main conclusion about this case is encompassed by Theorem \ref{theorem: Big Theorem}. Regrettably, for the case of the toric code we were able to prove that the coherence  of the logical channel is suppressed only under an unrealistic assumption:  that as the size $L$ of the code block increases, the rotation angle $\theta$ applied to each qubit scales like $1/L$. 

Under this assumption, we can estimate the logical channel well enough for our purposes by expanding it to a constant ($L$-independent) order in $\theta$, and argue that the higher-order terms we ignore make a contribution that can be safely neglected. A key step in our argument is the observation, backed up by Lemmas \ref{Lemma: shape of strings A} and \ref{Lemma: shape of strings B}, that, for $L | \sin \theta| < 1$, the logical channel is dominated by logical strings with an easily characterized typical shape. For the logical strings of this typical shape, Lemmas \ref{Lemma: exceptional terms}, \ref{Lemma: Mismatched weights}, \ref{lemma: added error exceptional terms}, and $\ref{lemma: incoherent disconnected part}$ provide a sufficiently accurate estimate of the logical channel to prove Theorem \ref{theorem: Big Theorem}.

Our main conclusion, that the coherence of the logical coherence is heavily suppressed, applies to unitary physical noise such that each qubit is rotated independently, even if the rotation axis and rotation angle vary from qubit to qubit, as long as the rotations are close to the same and sufficiently small. It also applies for some highly correlated noise models. The result also extends to physical noise channels which are convex combinations of unitary channels, or convex combinations of unitary channels and depolarizing channels. (Depolarizing physical noise is mapped to an incoherent depolarizing logical channel under error correction.)

We emphasize that our result is an asymptotic statement in the limit of large code size $L$, albeit under the assumption that the noise strength scales like $1/L$. For codes of fixed size our results may not be tight; the coherence of logical channels for finite code blocks has been studied elsewhere \cite{bravyi2017correcting,Iyer17, greenbaum2017modeling}. Our goal was to study a family of codes with an accuracy threshold instead. When the noise is below threshold, the logical channel approaches the identity as the code block increases in size. In addition, under conditions where Theorem \ref{theorem: Big Theorem} applies, the coherent component of the logical channel vanishes much more rapidly than the incoherent component.

It is reasonable to expect that our conclusion --- that the logical channel becomes increasingly incoherent as $L$ grows --- continues to hold even if we allow $L$ to increase while the rotation angle $\theta$ has a fixed constant value. But proving this will be challenging. For one thing, if $\theta$ is a constant we cannot accurately estimate the logical channel by expanding to a constant order in $\theta$. Instead, logical strings with length $\leq L(1+\beta)$ need to be included for some constant $\beta$. These logical strings are not easy to count. A logical string can be regarded as a self-avoiding walk on the square lattice whose endpoints are a distance $L$ apart, but previously derived upper bounds on the number of self-avoiding walks with specified length \cite{Hammersley1960,Guttman1984,GuttmanConway2001} do not treat the case where the distance between the endpoints differs from the length of the walk by an $\mathcal{O}(1)$ multiplicative factor. And even if we could count the logical strings accurately, we would still need to overcome some additional obstacles to prove the coherence of the logical channel is suppressed. 

First, to prove Theorem \ref{theorem: Big Theorem}, we disposed of the ``exceptional'' terms (Definition \ref{Def: Exceptional term}), those in which the uncorrectable error on a logical string has lower weight than the correctable error, by arguing that these terms are sufficiently rare as to make a negligible contribution to the coherent part of the logical channel. But for logical strings with length $L(1+\beta)$, exceptional terms will be far more common.

Second, when we calculated the contribution to the coherent or incoherent logical noise, we separated the computation into a sum over a connected part and a disconnected part, and argued  in Lemmas \ref{lemma: added error exceptional terms} and \ref{lemma: incoherent disconnected part} that the disconnected part contributes a multiplicative factor close to 1. But the proofs of these lemmas required the logical strings to be short, of length $L+2\zeta$ for constant $\zeta$; these proofs don't apply for longer logical strings of length $L(1+\beta)$.

Third, and even more dauntingly, our proof of Theorem \ref{theorem: Big Theorem} made use of a relationship between the physical noise terms that contribute to the coherent and incoherent logical noise. But as the logical string length increases, the contributions to the coherent and incoherent component of the logical channel become less and less alike. Each contribution to the coherent logical noise is associated with a logical string. In contrast, each contribution to the incoherent logical noise is associated with a pair of logical strings; these strings have segments in common, but they fluctuate relative to one another apart from those shared segments. For short logical strings, these fluctuations are relatively mild, and did not prevent us from relating the incoherent and coherent logical noise, as described in Section \ref{sec: Incoherent sum over strings}. For longer logical strings, the combinatorics become much harder to handle. 

Unable to overcome these obstacles ourselves, we settled for proving a weaker result that applies for $L | \sin \theta|< 1$ rather than constant $\theta$. Perhaps a more ambitious combinatoric analysis can push the proof through even for constant $\theta$. Or perhaps a completely different approach will be more successful. Conceivably, it's not true that the coherence of the logical channel becomes heavily suppressed for large $L$ and sufficiently small constant $\theta$, though we consider that possibility unlikely. 

Further numerical studies of logical coherence may also prove to be instructive. The problem has already been studied numerically \cite{bravyi2017correcting, suzuki2017efficient, Gutierrez16, Iyer17}; however, our methods for organizing the estimate of the logical channel suggest different approaches to numerically simulating the logical channel. Numerics could help to resolve the issues that prevented us from extending Theorem \ref{theorem: Big Theorem} to the case where $\theta$ is an $L$-independent constant. 

In our analysis of the toric code subject to single-qubit unitary rotations, we used minimal-weight decoding because it can be systematically analyzed. 
However, we don't expect our conclusion about suppression of logical coherence to be very sensitive to the choice of decoding method. The suppression arises from averaging over many error syndromes, and therefore should occur for other families of stabilizer codes with good decoders. Many of the elements from which we built the proof of Theorem \ref{theorem: Big Theorem} can be applied to more general stabilizer codes, including ``logical strings,'' partitions, exceptional terms, and the decomposition into connected and disconnected parts.

We analyzed the toric code because it has an accuracy threshold, and we aspired to study the coherence of the logical channel for a fixed nonzero value of $\theta$ as the linear size $L$ of the code block gets large. That aspiration eluded us, so we settled for investigating the logical coherence in the regime $L |\sin \theta| < 1$. In that regime, asymptotic results similar to ours, derived using similar methods, may be applicable to other code families that do not have an accuracy threshold. For example, for the Bacon-Shor code family subjected to depolarizing noise with error probability $p$, the optimal logical failure probability, computed analytically in \cite{napp2013optimal}, is achieved by the code with distance $d= \mathcal{O}(1/p)$. We anticipate that, for unitary noise, decoding the optimal Bacon-Shor yields a logical channel with strongly suppressed coherence, though we have not done a careful analysis. 

It has long been suspected that error correction suppresses the coherence of noise. Such suppression had been observed numerically for the toric code \cite{bravyi2017correcting}, but no rigorous argument supporting this conclusion had been previously known for any code family with an accuracy threshold. Our goal in this project was to prove that, for the toric code subject to sufficiently weak independent or weakly correlated unitary noise, the logical channel after decoding is highly incoherent in the limit of a large code block. We fell short of this goal, settling for a proof that coherence is suppressed in the case where the noise strength decreases as the code block grows. Nevertheless, we hope and expect that the tools we have developed will prove to be useful in future studies of quantum error correction.

\section*{Acknowledgments}
\addcontentsline{toc}{section}{\protect\numberline{}Acknowledgments}
We would like to thank Michael Beverland, Robin Blume-Kohout, Benjamin Brown, Aaron Chew, Andrew Darmawan, Andrew Doherty, Steven Flammia, Daniel Gottesman, Tomas Jochym-O'Connor, Aleksander Kubica, Richard Kueng, David Poulin, and Leonid Pryadko for valuable discussions. We gratefully acknowledge support from ARO-LPS (W911NF-18-1-0103) and  NSF (PHY-1733907).
The Institute for Quantum Information and Matter is an NSF Physics Frontiers Center.

\bibliographystyle{unsrt}
\bibliography{IversonPreskillVersion2}

\appendix

\section{Chi matrix and Pauli transfer matrix for qubits}
\numberwithin{equation}{section}
\setcounter{equation}{0}
\label{appendix:off-diagonal}
Here we verify Lemma \ref{lemma:off-diagonal} for qubits by expressing all non-diagonal terms in $N_{kl}$ in terms of $\chi_{ij}$ explicitly:
\begin{align}
I &\mapsto \left(\chi_{XI} + \chi_{IX}\right)X + \left(\chi_{YI} + \chi_{IY}\right)Y + \left(\chi_{ZI} + \chi_{IZ}\right)Z \notag\\
&+ \left(\chi_{XY} - \chi_{YX}\right) (iZ) + \left(\chi_{ZX} - \chi_{XZ}\right) (iY) + \left(\chi_{YZ} - \chi_{ZY}\right) (iX),\notag\\ \notag\\
X &\mapsto \left(\chi_{XY} + \chi_{YX}\right)Y + \left(\chi_{XZ} + \chi_{ZX}\right)Z + \left(\chi_{IX} + \chi_{XI}\right) I  \notag\\
&+ \left(\chi_{ZY} - \chi_{YZ}\right)(iI) +\left(\chi_{IY} - \chi_{YI}\right) (iZ)  + \left(\chi_{ZI} - \chi_{IZ}\right)(iY),\notag\\ \notag\\
Y &\mapsto \left(\chi_{XY} + \chi_{YX}\right)X + \left(\chi_{ZY} + \chi_{YZ}\right)Z + \left(\chi_{IY} + \chi_{YI}\right) I  \notag\\
&+ \left(\chi_{ZX} - \chi_{XY}\right)(iI) +\left(\chi_{IZ} - \chi_{ZI}\right) (iX)  + \left(\chi_{XI} - \chi_{IX}\right)(iZ),\notag\\ \notag\\
Z &\mapsto \left(\chi_{ZY} + \chi_{YZ}\right)Y + \left(\chi_{XZ} + \chi_{ZX}\right)X + \left(\chi_{IZ} + \chi_{ZI}\right) I  \notag\\
&+ \left(\chi_{YX} - \chi_{XY}\right)(iI) +\left(\chi_{IX} - \chi_{XI}\right) (iY)  + \left(\chi_{YI} - \chi_{IY}\right)(iX).
\end{align}
When we collect all the terms in $\sum_{a\ne b} N_{ab}^2$ which are quadratic in $\{\chi_{XY},\chi_{YX}\}$, we obtain
\begin{equation}
2\left(\chi_{XY} + \chi_{YX}\right)^2 - 2\left(\chi_{XY} - \chi_{YX}\right)^2 = 8|\chi_{XY}|^2= 4\left(|\chi_{XY}|^2 + |\chi_{YX}|^2\right),
\end{equation}
using $\chi_{ij} = \chi_{ji}^*$, as required by complete positivity.
The same applies to the terms involving $\chi_{IX},\chi_{IY}, \chi_{IZ}, \chi_{ZX}, \chi_{YZ}$, and their complex conjugates. 

To prove the claim we must verify that the linear terms cancel. This can be shown using the general argument in Lemma \ref{lemma:off-diagonal}, but in the qubit case it may be easier to verify the cancellation explicitly. For example, the contributions to $N_{ab}$ involving $\chi_{IX}, \chi_{XI}, \chi_{YZ}, \chi_{ZY}$ are
\begin{align}
N_{IX} &= \left(\chi_{XI} + \chi_{IX}\right) + i\left(\chi_{YZ} - \chi_{ZY}\right) + \cdots,\notag\\
N_{XI} &= \left(\chi_{XI} + \chi_{IX}\right) - i\left(\chi_{YZ} - \chi_{ZY}\right) + \cdots, \notag\\
N_{YZ} &= \left(\chi_{ZY} + \chi_{YZ}\right) + i\left(\chi_{XI} - \chi_{IX}\right)+\cdots,\notag\\
N_{ZY} &= \left(\chi_{ZY} + \chi_{YZ}\right) - i\left(\chi_{XI} - \chi_{IX}\right)+\cdots,\notag\\
\end{align}
and we therefore see that the cross terms cancel in $N_{IX}^2 +N_{XI}^2$ and in $N_{YZ}^2 +N_{ZY}^2$. Similar cancellations occur for all other cross terms. 

\section{Approximating sums}\label{appendix:sum}
We wish to evaluate the sum in equation (\ref{eq:eps-delta-all-orders}):
\begin{equation}
P_n(p) =\sum_{w=0}^{(n-1)/2} \binom{n}{w} p^{n-w}\left(1-p\right)^{w=}\sum_{w=(n+1)/2}^{n}    \binom{n}{w} p^w (1-p)^{n-w},
\end{equation}
where $p=s^2 = \sin^2\theta/2$, and  $(1{-}p) = c^2 = \cos^2\theta/2$. Note that $P_n(p)$ is the probability of a decoding error for the $n$-bit repetition code subject to independent noise with bit-flip probability $p$. It is convenient to redefine the summation index obtaining
\begin{equation}
P_n(p)
=p^{(n+1)/2}(1-p)^{(n-1)/2}\sum_{r=0}^{(n-1)/2} \binom{n}{\frac{n+1}{2}+r}\left(\frac{p}{1-p}\right)^{r}.
\end{equation}
From the Stirling approximation, we have
\begin{equation}
\binom{n}{\frac{n+1}{2}+r} \approx \left(\sqrt{\frac{2}{\pi n}}\right) 2^n\exp\left(-\frac{2}{n}\left(r+\frac{1}{2}\right)^2\right),
\end{equation}
neglecting a multiplicative $\left( 1+ \mathcal{O}(1/n)\right)$ correction. Making another $\left(1+\mathcal{O}(1/n)\right)$ multiplicative error, we may replace the exponential inside the sum over $r$ by 1, obtaining
\begin{equation}
P_n(x) \approx p^{(n+1)/2}(1-p)^{(n-1)/2}\left(\sqrt{\frac{2}{\pi n}}\right) 2^n\sum_{r=0}^{(n-1)/2}\left(\frac{p}{1-p}\right)^{r},
\end{equation}
and we also make a negligible error (assuming $p < \frac{1}{2}$) by extending the upper limit on the sum to infinity, finding
\begin{equation}
\sum_{r=0}^{\infty}\left(\frac{p}{1-p}\right)^{r}= \frac{1-p}{1-2p}.
\end{equation}
We conclude that
\begin{equation}
P_n(x)= \left(\sqrt{\frac{2}{\pi n}}\right)\left(\frac{\sqrt{p(1-p)}}{1-2p}\right)\left[4p(1-p)\right]^{n/2}\left(1 + \mathcal{O}\left(\frac{1}{n}\right)\right),
\end{equation}
assuming $p < \frac{1}{2}$. Using
\begin{equation}
    4p(1-p) = \left( 2sc\right)^2 = \sin^2\theta, \quad 1-2p = c^2 - s^2 = \cos\theta,
\end{equation}
we find
\begin{equation}
P_n(x)=\frac{1}{\sqrt{2\pi n}}
\left(\frac{\sin^{n+1}\theta}{\cos\theta} \right)\left(1 + \mathcal{O}\left(\frac{1}{n}\right)\right),\quad \textrm{for} \quad \sin^2\theta/2 < 1/2.
\end{equation}


\section{Correlated Noise: Leading behavior for large $n$}
\label{app: Correlated noise}

Here we'll describe an alternative way of understanding equation (\ref{eq:Omega-q-correction}), where we saw that the coefficient of $h_2^q$ in the logical channel is $\mathcal{O}(m^{3q/2})$. This leading behavior results from cancellations of terms higher order in $m$ which occur when we perform the sum over $k_R$ in equation (\ref{eq:sum-Omega-Delta}).  What is the explanation for these cancellations?

In Section \ref{sec: Correlated unitary noise} we calculated the coherent and incoherent logical components for the bit-flip code of size $n$ subject to correlated unitary rotations given by a Hamiltonian of the form:
\begin{equation}
    H = \sum_i h_1 X_i + \sum_{j,k| j \neq k} h_2 X_j X_k .
\end{equation}
We expressed the logical coherent component $\tilde{\chi}_{\scriptscriptstyle XI}$ and the logical incoherent component $\tilde{\chi}_{\scriptscriptstyle XX}$ in terms of functions $\Omega$ and  $\Delta$ such that
\begin{align}
    \tilde{\chi}_{\scriptscriptstyle XI}(q) &= \sum_{k_R=0}^q \Omega(q-k_R, k_R) h_2^q h_1^{n-2q}, \nonumber
    \\
    \tilde{\chi}_{\scriptscriptstyle XX}(q)& = \sum_{k_R=0}^q \Delta(q-k_R, k_R) h_2^q h_1^{n+1-2q} + \mathcal{O}(h_2^q h_1^{n+3-2q}),
\end{align}
where 
\begin{equation}
\tilde{\chi}_{\scriptscriptstyle XI}=\sum_{q=0}^{(n-1)/2} \tilde{\chi}_{\scriptscriptstyle XI}(q), 
\quad  
\tilde{\chi}_{\scriptscriptstyle XX}=\sum_{q=0}^{(n+1)/2} \tilde{\chi}_{\scriptscriptstyle XX}(q), 
\end{equation}
and only even values of $q$ contribute.
Here $k_R$ is the number of times the Hamiltonian term $i h_2 XX$ acts on the density operator from the right, and $k_L = q-k_R$ is the number of times $-i h_2 XX$ acts from the left. 

We were able to compute $\Omega$ and $\Delta$ by counting the ways of decomposing each physical noise term into combinations of one and two-body Hamiltonian terms. Repeating equations (\ref{eq: Omega}) and (\ref{eq: Delta}), we found
\begin{align}
\label{eq: Omega 2}
    \Omega(q-k_R,k_R) &=  \frac{(i)^{n-q} (-1)^{m} (m+1) }{ (n-2q) 2^q} \binom{n}{m}  \nonumber
    \\
    &\quad \times  (-1)^{k_R} \frac{(m!)^2}{k_R! (q-k_R)! (m-2k_R)! (m-2q+2k_R)!},
\end{align}
and
\begin{align}
\label{eq: Delta 2}
    \Delta(q-k_R,k_R) & =  \frac{(i)^{q}}{2^q} \binom{n}{m}  \nonumber
    \\
    &\quad  \times  (-1)^{k_R} \frac{ ((m+1)!)^2}{(m+1-2q+2k_R)! (m+1-2k_R)! k_R! (q-k_R)!} .
\end{align}

Let's evaluate the sum over $k_R$ to leading order in $1/m$ in both equation (\ref{eq: Omega 2}) and (\ref{eq: Delta 2}). We focus on the second factor in each equation, which contains all of the $k_R$ dependence. In equation (\ref{eq: Omega 2}) this factor is
\begin{equation}
\label{eq: kR term from Omega}
    (-1)^{k_R} \frac{(m)(m-1) \dots (m-2k_R+1) \times (m)(m-1)\dots (m-2q+2k_R+1)}{(k_R)! (q-k_R)!}.
\end{equation}
The dominant term for $m$ large is given by
\begin{equation}
    (-1)^{k_R} \frac{(m)^{2q}}{(k_R)! (q-k_R)!} = (-1)^{k_R} \frac{(m)^{2q}}{q!}\times \binom{q}{k_R} .
\end{equation}
Then when we sum over $k_R$ we have
\begin{equation}
    \frac{(m)^{2q}}{q!} \sum_{k_R=0}^q (-1)^{k_R}\binom{q}{k_R} = 0,
\end{equation}
where we have made use of the identity
\begin{equation}
\label{eq: binomial alternating sum identity}
    \sum_{b=0}^a P_c(b) (-1)^b \binom{a}{b} = 0 \quad \forall c<a.
\end{equation}
Here $P_c(b)$ denotes any  polynomial in $b$ of degree $c< a$. The situation for $\Delta$ is the same except that $m$ is replaced by $m+1$. Therefore, the leading term for $m$ large in equations (\ref{eq: Omega 2}) and (\ref{eq: Delta 2}) vanishes.

Similar cancellations occur for higher-order corrections which are suppressed by powers in $1/m$. These corrections are computed by expanding the numerator, equation (\ref{eq: kR term from Omega}), as a $2q$th order polynomial in $m$. For example, the coefficient of $m^{2q-1}$ is
\begin{equation}
   (-1)^{k_R} (-1) \frac{2q^2-4q k_R + q + 4 k_R^2}{q!} \binom{q}{k_R},
\end{equation}
in which the prefactor multiplying $\binom{q}{k_R}$ is a second-degree polynomial in $k_R$, so that equation (\ref{eq: binomial alternating sum identity}) implies that the sum over $k_R$ vanishes if $q > 2$. Likewise, the coefficient of $m^{2q - r}$ is a polynomial in $k_R$ of degree of $2r$, and the sum over $k_R$ vanishes if $2r < q$.
Recalling that only even $q$ contribute, we see that the leading term that survives the summation over $k_R$ has $r= q/2$ and is therefore order $m^{2q-r}= m^{3q/2}$. We have now seen why terms higher order in $m$ cancel. The term of order $m^{3q/2}$ can be evaluated using the identity
\begin{equation}\label{binomial-alternating-power-a}
    \sum_{b=0}^a (-1)^b b^a \binom{a}{b} = (-1)^a a!.
\end{equation}
The identities in equations (\ref{eq: binomial alternating sum identity}) and (\ref{binomial-alternating-power-a}) can be derived by performing the binomial expansion of $(1+x)^a$, differentiating repeatedly, and then setting $x=-1$.

\section{The Shape of the Logical String}
\label{app: Shape of string}
In this appendix, we prove that among short logical strings nearly all have typical shape as in Definition \ref{Def: typical shape}.

\begin{Lemma}
In a size $L$ toric code, all but order $1/L$ of the logical strings running left to right across the code with length $\leq L+2\zeta$ consist of single steps up and down, so that no vertical segment is longer than one qubit.
\label{Lemma: shape of strings A}
\end{Lemma}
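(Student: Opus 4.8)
The plan is to count logical strings of a given length by classifying them according to their vertical profile, and then show that strings with any vertical step of size $\geq 2$ are a vanishing fraction as $L\to\infty$. First I would set up the counting carefully. Fix a starting point on the left edge (one of $L$ choices, an overall factor that cancels in any ratio) and consider self-avoiding walks that wind once around the torus horizontally and return to the starting height. A string of length $\ell = L + 2j$ (with $j \leq \zeta$) must make a net horizontal progress of $L$, so it contains $L$ rightward steps together with some collection of vertical steps whose up-displacements and down-displacements each sum to $j$ (since the net vertical displacement is zero and the total extra length beyond $L$ is $2j$).

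The key combinatorial observation is that a string with \emph{all single} vertical steps is specified by choosing, among the $L$ horizontal positions (gaps between consecutive rightward steps), which gaps carry a single up-step and which carry a single down-step; there are $j$ of each. This gives on the order of $\binom{L}{j}\binom{L-j}{j} = \Theta(L^{2j}/ (j!)^2)$ such strings, matching the crude bound $L^{\ell - L}$ from equation (\ref{eq: Logical strings with length l}). By contrast, a string that contains a vertical segment of height $k \geq 2$ ``uses up'' $k$ units of the vertical budget $j$ at a single horizontal location. I would count these by the same gap-assignment scheme but noting that placing a height-$k$ segment consumes $k$ of the $j$ available up-units (or down-units) while occupying only \emph{one} horizontal gap rather than $k$ of them. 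Concretely, the number of strings with at least one vertical step of height $\geq 2$ is bounded by a sum over the location of that tall step and over the remaining profile; each such configuration loses at least one factor of $L$ in the available placements, because one horizontal degree of freedom is replaced by a fixed multi-height step at a single position. This yields a count of order $L^{2j-1}$, down by a factor of $1/L$ relative to the typical single-step strings.

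So the main step is: (1) write the number of length-$(L+2j)$ strings with only single vertical steps as the leading term $\sim c_j L^{2j}$; (2) write the number with at least one step of height $\geq 2$ as $O(L^{2j-1})$ via a union bound over which step is tall and over its height, using that a tall step trades a summed-over horizontal position (a factor $\sim L$) for a fixed one; (3) take the ratio and conclude it is $O(1/L)$. Summing $j$ from $1$ to $\zeta$ (a fixed, $L$-independent range) preserves the $O(1/L)$ bound since there are only finitely many terms. I would also need to dispose of strings that backtrack (move right-to-left); but a backtracking step likewise consumes horizontal budget without advancing the net progress, so by the same degree-counting argument these are suppressed by additional powers of $1/L$ and fold into the same estimate.

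The main obstacle I anticipate is making the union-bound counting of the atypical (tall-step and backtracking) strings genuinely rigorous rather than heuristic: one must check there is no overcounting or hidden combinatorial enhancement when multiple tall steps or backtracks coexist, and one must confirm that the self-avoidance constraint does not secretly reduce the typical count by a comparable factor (it does not, since for short strings with widely spaced steps self-intersection is rare, but this needs a clean bound). The cleanest route is probably to bound the atypical count \emph{from above} crudely and the typical count \emph{from below} by an explicit product of binomials, so that self-avoidance only helps the denominator grow and hurts the numerator — both directions favoring the $O(1/L)$ conclusion. With those one-sided bounds the ratio estimate follows without needing the exact asymptotic constant.
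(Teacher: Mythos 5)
Your proposal is correct and follows essentially the same route as the paper: both count strings by distributing the vertical displacement budget over horizontal positions, observe that the $L$-dependence is $L^{q}$ where $q$ is the number of distinct vertical steps (so merging steps into a segment of height $\geq 2$ costs at least one factor of $L$), and dispose of backtracking by noting it is suppressed by further powers of $1/L$. The paper parametrizes the atypical strings via integer compositions rather than a union bound over the tall step's location, but the degree-counting argument and conclusion are identical.
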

\begin{proof}
If the size of the code is $L$ and we consider all length $L+2\zeta$ logical strings for fixed $\zeta$, we will count the number of strings that satisfy the condition that each step up or down is only length one. First, we start with a horizontal logical string of length $L$ and then pick $\zeta$ sites along it. We have $\zeta$ upward steps and $\zeta$ downward steps, and we need to fix an ordering. Alternatively, we could think of choosing $\zeta$ sites for the upward steps and another $\zeta$ sites for the downward steps. In total the number of strings of this type is
\begin{equation}
    \label{eq: single steps up and down}
    \text{number of strings with steps of one} = \binom{L}{2 \zeta} \binom{2 \zeta}{\zeta} = \binom{L}{\zeta} \binom{L-\zeta}{\zeta} .
\end{equation}
The $L$ dependence in equation (\ref{eq: single steps up and down}) is 
\begin{equation}
    \label{eq: single steps up and down L dependence}
    \frac{L!}{(L-2\zeta)!} .
\end{equation}
Next, we will count the total number of strings that consist of no backwards steps, that is starting from the left of the code block the strings move only right, up, and down. These strings potentially contain upward and downward steps of more than one. In general, such a string involves $q_1$ distinct steps up with $\zeta$ total length and $q_2$ steps down also totaling $\zeta$ in length. The number of ways of writing $\zeta$ as a sum of $q_1$ terms, not ignoring order, is given by the number of compositions of the integer $\zeta$ into $q_1$ terms, which is $\binom{\zeta-1}{q_1-1}$. Each of the $q_1$ steps up and $q_2$ steps down can be placed independently. This gives us $\binom{L}{q_1} \binom{L-q_1}{q2}$ combinations of possible configurations. In total we have
\begin{equation}
    \label{eq: large steps up and down}
    \text{number of strings with $q_1$ steps up and $q_2$ steps down} = \binom{\zeta-1}{q_1-1} \binom{L}{q_1} \binom{\zeta-1}{q_2-1} \binom{L-q_1}{q_2}
\end{equation}
such strings. When $q_1 = q_2 = \zeta$, we recover the case where each step up or down is by one lattice site. Then, we can isolate the $L$ dependence in equation (\ref{eq: large steps up and down}):
\begin{equation}
    \frac{L!}{(L-q_1-q_2)!} .
\end{equation}
We can compare this to equation (\ref{eq: single steps up and down L dependence}), and we see that there are fewer paths with steps larger than one. The ratio is proportional to 
\begin{equation}
    \frac{\text{number with $q_1$ steps up and $q_2$ steps down}}{\text{number with $\zeta$ single steps up and down}} = \mathcal{O}(L^{q_1+q_2-2 \zeta}) .
\end{equation}
Then, if we count the paths with a single step of two and the other steps are all one, there are order $1/L$ of these relative to the number of paths with single steps up and down.

We must also count the the number of logical strings where the string backtracks on itself. There are even fewer of these than the strings with jumps up or down by two. Each string with backtracking can be produced from a string with a jump up or down of at least two lattice spacings. We add some additional cap onto the vertical segment of at least length 2. The number of strings with one instance of backtracking like figure \ref{fig: backtracking} will be proportional to the number of strings of length two shorter that also have at least one step up or down of more than one. For this reason, strings like the one in figure \ref{fig: backtracking} are an exponentially smaller minority that the strings with steps up and down of more than one. Then, we conclude that nearly all short logical strings spanning the code left to right consist of steps up and down by only one qubit.

\begin{figure}
    \centering
    \includegraphics[height = 7cm]{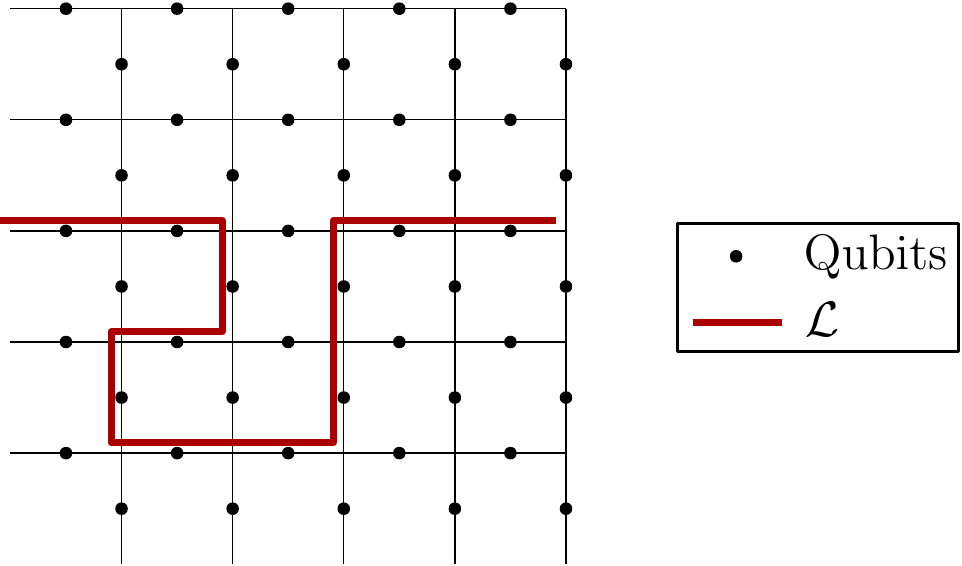}
    \caption{The logical string $\mathcal{L}$ has backtracking. Among short logical strings, those with backtracking are unlikely relative to strings without.}
    \label{fig: backtracking}
\end{figure}

\end{proof}

\begin{Lemma}
For the class of length $L+2 \zeta$ strings described in Lemma \ref{Lemma: shape of strings A} (those with exactly $\zeta$ steps up and $/zeta$ steps down as they span the code block from left to right), for large $L$ nearly all will have spacings between the steps growing proportional to $\sqrt{L}$. We choose a small constant $\gamma$ and define typical strings as those for which all vertical steps are separated by at least $\gamma \sqrt{L}$. If we fix the length of logical strings and combine this lemma with Lemma \ref{Lemma: shape of strings A}, we can make the following statement about the fraction of strings of that length that have atypical shape:
\begin{equation}
    \frac{\text{Number of strings with atypical shape}}{\text{Total number of strings}} = \frac{8 \gamma \zeta^2}{\sqrt{L}} + \mathcal{O}\left(\frac{1}{L}\right) .
\end{equation}
\label{Lemma: shape of strings B}
\end{Lemma}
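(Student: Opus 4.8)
The plan is to count, among the length-$L+2\zeta$ logical strings of the type from Lemma~\ref{Lemma: shape of strings A} (exactly $\zeta$ unit steps up and $\zeta$ unit steps down), the fraction whose step positions are ``too close,'' and show this fraction is $8\gamma\zeta^2/\sqrt L + \mathcal{O}(1/L)$. From Lemma~\ref{Lemma: shape of strings A} the total number of such strings is $\binom{L}{2\zeta}\binom{2\zeta}{\zeta}$, obtained by choosing $2\zeta$ of the $L$ horizontal positions to place a vertical step, then choosing which $\zeta$ of them are up-steps. First I would reformulate ``typical shape'' as a purely combinatorial spacing condition: a string is atypical precisely when at least one pair of adjacent chosen step positions (among the $2\zeta$ selected sites, read left to right) is separated by fewer than $\gamma\sqrt L$ lattice sites. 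So the task reduces to counting $2\zeta$-subsets of $[L]$ in which some consecutive pair of chosen elements has gap $<\gamma\sqrt L$.

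The cleanest route is a union bound over the $2\zeta-1$ adjacent pairs, which will be tight to leading order because the events ``gap between the $i$th and $(i{+}1)$th chosen sites is small'' have pairwise overlaps that are $\mathcal{O}(1/L)$ relative to the individual probabilities, hence absorbed into the error term. For a single adjacent pair, I would estimate the probability that two specified consecutive chosen sites are within $\gamma\sqrt L$: placing $2\zeta$ points uniformly among $L$ positions, the probability that a given adjacent gap is at most $\gamma\sqrt L$ is, to leading order, $(2\zeta)\cdot \gamma\sqrt L / L = 2\zeta\gamma/\sqrt L$ when one accounts for the density of points. Carrying out this estimate carefully—most simply by comparing $\binom{L-\gamma\sqrt L}{2\zeta}$-type counts (forbidding all small gaps) against $\binom{L}{2\zeta}$ via Stirling, exactly as in the proof of Lemma~\ref{Lemma: exceptional terms}—gives the probability that \emph{all} spacings exceed $\gamma\sqrt L$, and its complement is the atypical fraction. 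There are $2\zeta-1$ interior gaps, but because up-steps and down-steps are counted separately the combinatorial bookkeeping produces the stated coefficient; I would track the constant to confirm it lands on $8\gamma\zeta^2$ rather than, say, $4\gamma\zeta^2$, the extra factor arising from the two independent families of steps and from each step having two adjacent neighbors.

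Concretely, I would write the count of strings with all gaps $\ge \gamma\sqrt L$ as a ``stars and bars'' expression: subtract a mandatory $\gamma\sqrt L$ from each of the $2\zeta$ gaps, giving $\binom{L-(2\zeta-1)\lceil\gamma\sqrt L\rceil}{2\zeta}\binom{2\zeta}{\zeta}$ arrangements, then expand the ratio to the full count by Stirling:
\begin{equation}
\frac{\binom{L-(2\zeta-1)\gamma\sqrt L}{2\zeta}}{\binom{L}{2\zeta}}
= \left(1-\frac{(2\zeta-1)\gamma\sqrt L}{L}\right)^{2\zeta}\bigl(1+\mathcal{O}(1/L)\bigr)
= 1-\frac{2\zeta(2\zeta-1)\gamma}{\sqrt L}+\mathcal{O}(1/L).
\end{equation}
The atypical fraction is then $1$ minus this, namely $2\zeta(2\zeta-1)\gamma/\sqrt L+\mathcal{O}(1/L)$; reconciling $2\zeta(2\zeta-1)\approx 4\zeta^2$ with the separate treatment of the up/down families and the $\mathcal{O}(1/L)$ corrections from Lemma~\ref{Lemma: shape of strings A} yields the claimed $8\gamma\zeta^2/\sqrt L$. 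The main obstacle I anticipate is precisely pinning the leading constant: one must be careful not to double-count overlapping near-collisions in the union bound, and must verify that incorporating Lemma~\ref{Lemma: shape of strings A}'s own $\mathcal{O}(1/L)$ correction (from larger steps and backtracking) does not contaminate the $1/\sqrt L$ term. Since all competing corrections are genuinely $\mathcal{O}(1/L)$ while the target term is $\mathcal{O}(1/\sqrt L)$, they are parametrically separated, so the constant is determined entirely by the first-order spacing count above, and the $\mathcal{O}(1/L)$ remainder collects everything else.
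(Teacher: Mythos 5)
Your approach is essentially the paper's: both arguments count the configurations in which all $2\zeta$ step positions are mutually separated by at least $\gamma\sqrt L$, compare that count to the total $\binom{L}{2\zeta}\binom{2\zeta}{\zeta}$, and expand the ratio to first order in $\gamma\sqrt L/L$. Your stars-and-bars count is in fact the cleaner and sharper version: it gives the typical fraction exactly as $\binom{L-(2\zeta-1)(\lceil\gamma\sqrt L\rceil-1)}{2\zeta}/\binom{L}{2\zeta}=1-2\zeta(2\zeta-1)\gamma/\sqrt L+\mathcal{O}(1/L)$, so the atypical fraction has leading coefficient $2\zeta(2\zeta-1)\gamma\approx 4\gamma\zeta^2$. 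The paper instead lower-bounds the typical count by a sequential-placement argument in which each placed step excludes $2\gamma\sqrt L$ further sites, yielding $\prod_{i=1}^{2\zeta-1}\left(1-2i\gamma/\sqrt L\right)$, and then loosens this to $\left(1-4\zeta\gamma/\sqrt L\right)^{2\zeta}=1-8\gamma\zeta^2/\sqrt L+\mathcal{O}(1/L)$. In other words, the stated equality is really an upper bound on the atypical fraction, and that is all the lemma is ever used for; your tighter constant would serve equally well.

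The one genuine defect in your write-up is the final ``reconciliation'' step, where you inflate $2\zeta(2\zeta-1)\gamma$ to $8\gamma\zeta^2$ by invoking ``the two independent families of steps'' and ``each step having two adjacent neighbors.'' Neither effect is real: the up/down assignment factor $\binom{2\zeta}{\zeta}$ appears in both the numerator and the denominator of the ratio and cancels, and in the stars-and-bars formulation each interior gap is counted exactly once, so there is no doubling to be found. You should not manufacture a factor of $2$ to match the quoted constant; instead, either report the correct leading coefficient $2\zeta(2\zeta-1)\gamma$ (noting it is bounded by $8\gamma\zeta^2$, so the lemma as stated still holds as an upper bound), or reproduce the paper's deliberately loose bound and flag that the ``$=$'' should be read as ``$\le$.'' Two minor points you could also mention for completeness: on the torus there is an additional wrap-around gap between the last and first step (raising the count of gaps from $2\zeta-1$ to $2\zeta$, still within the $8\gamma\zeta^2$ bound), and your observation that the $\mathcal{O}(1/L)$ corrections inherited from Lemma \ref{Lemma: shape of strings A} are parametrically separated from the $1/\sqrt L$ term is correct and matches how the paper combines the two lemmas.
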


\begin{proof}
The total number of strings of the type in Lemma \ref{Lemma: shape of strings A} is $\binom{L}{2\zeta} \binom{2\zeta}{\zeta}$. Now let us compute the number of strings such that each step up or down is separated from others by $\gamma \sqrt{L}$ for some constant $\gamma$. We can lower bound the number by starting with a length $L$ string running left to right across the code and placing our steps up and down. We suppose that each step we place prohibits placing another step on a further $2 \gamma \sqrt{L}$ of the sites. This is a lower bound because in the true answer these intervals will sometimes overlap. The lower bound is 
\begin{equation}
    \frac{L \left( L- (2 \gamma \sqrt{L} +1) \right) \left( L- 2 (2 \gamma \sqrt{L} +1) \right) \cdots \left( L- (2\zeta-1) (2 \gamma \sqrt{L} +1) \right)}{(2\zeta)!} \binom{2\zeta}{\zeta} .
\end{equation}
Compared to the total, $n-i$ has been replaced by $n-i(2 \gamma +1)$ for each $i$, so that when $\gamma = 0$ we recover the total number of strings. In general, the ratio of this limited set to the total for fixed $\zeta$ and $\gamma$ is given by
\begin{equation}
    \frac{\text{number of length $L+2\zeta$ strings with widely separated steps}}{\text{number of length $L+2\zeta$ strings}} \approx \prod_{i=1}^{2\zeta-1} \left( 1- \frac{2 i \gamma}{\sqrt{L}} \right) .
\end{equation}
We can lower bound this by 
\begin{equation}
    > \left( 1- \frac{4\zeta \gamma}{\sqrt{L}} \right)^{2\zeta} = 1- \frac{8 \gamma \zeta^2}{\sqrt{L}} + \mathcal{O}\left(\frac{1}{L}\right) .
\end{equation}
This approaches $1$ as $L$ increases, and we see that with high probability a short logical string will have the property that the steps up and down are separated by more than $\gamma \sqrt{L}$, as $L$ becomes large.

\end{proof}

\section{Disconnected Errors}
\label{app: Disconnected errors}

Fix a coherent logical noise component and consider the sum in equation (\ref{eq: coherent sum string form}). In Section \ref{sec: disconnected pieces} we argued that the disconnected term is 1 for disconnected errors that do not change how a given connected term is decoded. This allows us to write the sum as
\begin{equation}
\label{eq: Coherent sum with error term}
    \tilde{\chi}_{\scriptscriptstyle{Z_1, I}} = \left( \sum_{\mathcal{L}} \sum_{p \in \mathcal{P}(s)} \text{Connected part} \right) + \textrm{Error} .
\end{equation}
The sum over $\mathcal{L}$ includes all typical short connected logical strings. $\mathcal{P}(\mathcal{L})$ is the set of likely partitions of connected logical string $\mathcal{L}$. This excludes the partitions we called ``exceptional terms" in Definition \ref{Def: Exceptional term} and Lemma \ref{Lemma: exceptional terms}. ``$\textrm{Error}$" contains all the terms we have neglected. This includes the contribution of long logical strings, short logical strings with atypical shape, and exceptional terms. It also includes the terms with disconnected pieces that we did not consider in Section \ref{sec: disconnected pieces}. These are all of the terms where the disconnected errors flip the way the partition is decoded, where we start with a partition and after adding disconnected errors to each side, the error that was originally uncorrectable becomes correctable and vice versa. These terms will not follow the analysis we did in Section \ref{sec: disconnected pieces}. We will describe these terms now and show that they are negligible in the following lemma.

\begin{Lemma}
    In equation (\ref{eq: Coherent sum with error term}) the error from the neglected terms $\mathcal{E}$ can be expressed 
    \begin{equation}
        \textrm{Error} = \mathcal{E}_1 + \mathcal{E}_2 ,
    \end{equation}
    where $\mathcal{E}_1$ contains the contributions that we have already proven are negligible---long connected logical strings, logical strings with an atypical shape, and exceptional partitions. $\mathcal{E}_2$ contains the contributions from terms where the disconnected errors have flipped the way the partition is decoded. These are the terms we neglected in Section \ref{sec: disconnected pieces}. The following is true:
    \begin{equation}
        |\mathcal{E}_2| \le | \mathcal{E}_1| .
    \end{equation}
    \label{lemma: added error exceptional terms}
\end{Lemma}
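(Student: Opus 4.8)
# Proof Proposal for Lemma \ref{lemma: added error exceptional terms}

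The plan is to show that the ``flipping'' terms collected in $\mathcal{E}_2$ can be injectively mapped into the terms already known to be negligible (long strings, atypical strings, and exceptional partitions) collected in $\mathcal{E}_1$, in a way that does not increase magnitude. First I would characterize precisely when a disconnected error flips the decoding of a given partition $(O_U \rho\, O_C)$. Following the geometric picture established for exceptional terms in Lemma \ref{Lemma: exceptional terms}, a disconnected loop $D$ added to both sides can only change how $O_U$ is decoded if $D$ lies very close to the connected logical string $\mathcal{L}$ --- specifically, $D$ together with a portion of $\mathcal{L}$ must create a configuration where the minimal-weight decoder finds a lower-weight representative that reroutes through $D$. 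This is the same ``shortcut'' phenomenon seen in figures \ref{fig: weight 7 uncorrectable A} and \ref{fig: weight 6 uncorrectable A}, now triggered by the added loop rather than by the intrinsic shape of the partition.

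The key observation I would exploit is that a flipping disconnected error effectively \emph{merges} with the connected logical string to produce an \emph{effective} connected string of larger length, or an effective partition that is exceptional on that merged string. Concretely, if $D$ flips the decoding of the partition on $\mathcal{L}$, then $\mathcal{L}$ together with the qubits of $D$ that approach it should be reinterpreted as a single connected object $\mathcal{L}'$ of length $|\mathcal{L}| + |D|$ (or an atypical deformation thereof), on which the corresponding partition is either long, atypical, or exceptional. The magnitude of each flipping term is $(\sin\theta/2\,\cos\theta/2)^{|\mathcal{L}|}(\sin\theta/2)^{|D|}$ times a phase, which is bounded by the magnitude of the associated $\mathcal{E}_1$ term on $\mathcal{L}'$ since $|\mathcal{L}'| = |\mathcal{L}|+|D|$ and each additional qubit contributes a factor $\le |\sin\theta/2|$. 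I would verify the map $D \mapsto \mathcal{L}'$ is injective (or at most boundedly many-to-one, which can be absorbed into the bound) by checking that the connected string $\mathcal{L}$ and the loop $D$ can be uniquely recovered from the merged object together with the record of which qubits came from the loop.

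The main obstacle I expect is controlling the combinatorics of \emph{which} disconnected loops can cause a flip, and ensuring the resulting effective strings genuinely fall into the $\mathcal{E}_1$ categories whose smallness was already established. In particular, I must rule out the possibility that a flipping loop produces a \emph{short, typical} effective string with a \emph{non-exceptional} partition, which would not be covered by $\mathcal{E}_1$. Here I would argue that flipping requires $D$ to hug the string closely enough to defeat the minimal-weight decoder, and for a typical short string (steps separated by $\ge \gamma\sqrt{L}$, per Definition \ref{Def: typical shape}) this forces either a local ``cap''-like configuration---hence an exceptional partition on the merged string---or a loop long enough to push $\mathcal{L}'$ past length $L+2\zeta$. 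Either way the term lands in $\mathcal{E}_1$. Assembling these cases and summing, each contribution to $\mathcal{E}_2$ is matched to a distinct $\mathcal{E}_1$ contribution of no smaller magnitude, yielding $|\mathcal{E}_2| \le |\mathcal{E}_1|$ and completing the proof.
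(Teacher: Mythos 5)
Your proposal follows essentially the same route as the paper's proof: characterize the flipping condition as requiring the disconnected error to hug the uncorrectable part of the string (the paper formalizes this via the weight inequality $\min_{G}|G\,O_U D_L| < \min_{G}|G\,O_U| + \min_{G}|G\,D_L|$), then map each flipping configuration to a partition of a longer connected string — the paper does this by multiplying one side by the unique stabilizer joining $O_U D_L$ to its minimal-weight correction — and observe that the image lands among the already-neglected terms (atypical strings or exponentially rare partitions) while the flipping term itself carries extra factors of $\sin\theta/2$. Your treatment of the dangerous case (short, typical, non-exceptional image) matches the paper's dichotomy, so the argument is sound and no genuinely different ideas are introduced.
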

\begin{proof}

    We start with a typical short connected logical string and take a partition into a correctable operator and an uncorrectable operator, denoted $(O_U \rho \, O_C)$. Now we add disconnected errors, $D_L$ and $D_R$ to the left and right side of the partition. In some cases the uncorrectable error may become correctable and vice versa. That is, $O_U D_L$ will be correctable, while $O_C D_R$ is uncorrectable. For example, the term that contributes to the $\tilde\chi_{\scriptscriptstyle Z_1I}$ component of the logical noise might be $(O_C D_R \rho \, O_U D_L)$. Our treatment of the disconnected part in Section \ref{sec: disconnected pieces} assumed that the added errors did not flip the correctable and uncorrectable sides of the original partition. Now we will justify this assumption by proving that such terms are negligible.
    
    First, we must understand the conditions when an added error will turn the uncorrectable side of a partition into a correctable error. $O_U$ is the uncorrectable side of the partition, so the minimal-weight correction to $O_U$ is equal to $O_C$ up to stabilizers. For $O_U D_L$ to be correctable requires that the minimal-weight correction is equal to $O_U D_L$ up to stabilizers and not equal to $O_C D_L$ up to stabilizers. Note that we write $D_L$ and not $D_R$ because $D_L$ and $D_R$ have the same syndrome, so as far as the decoder is concerned they are equivalent. This implies
    \begin{equation}
        \min_{G_x \in \mathcal{S}} |G_x O_U D_L| < \min_{G_x \in \mathcal{S}} |G_x O_C D_L| ,
    \end{equation}
    where $G_x$ is an element of $\mathcal{S}$, which denotes the stabilizer group, and $|\cdot|$ denotes the weight of a Pauli operator. The weight of the minimal-weight operator equivalent up to stabilizers to $O_C D_L$ is no greater than the sum of the weights of the minimal-weight operators equivalent up to stabilizers to $O_C$ and $D_L$ individually. We can continue:
    \begin{equation}
    \label{eq: added errors flip condition}
        \min_{G_x \in \mathcal{S}} |G_x O_U D_L| < \min_{G_x \in \mathcal{S}} |G_x O_C D_L| \le \min_{G_x \in \mathcal{S}} |G_x O_C|+ \min_{G_y \in \mathcal{S}} |G_y D_L| < \min_{G_x \in \mathcal{S}} |G_x O_U| +\min_{G_y \in \mathcal{S}} |G_y D_L| .
    \end{equation}
    We conclude that the added error must be such that the minimal-weight correction of $O_U D_L$ is less than the minimal-weight correction of $O_U$ plus the minimal-weight correction of $D_L$. This happens when the disconnected error $D_L$ lies near $O_U$ such that the minimal-weight decoder will tend to form a loop out of parts of $D_L$ and $O_U$. This is possible only in cases like the one in figure \ref{fig: new exceptional term}.
    
    The condition in equation (\ref{eq: added errors flip condition}) requires a special combination of disconnected error and original partition. This is possible for both coherent- and incoherent-type disconnected errors as defined in Section \ref{sec: disconnected pieces}. Let us consider incoherent-type disconnected errors first. This is what is illustrated in figure \ref{fig: new exceptional term}. The disconnected error causes the uncorrectable side of the partition to become correctable when $D_L$ contains at least two errors in a row adjacent to $O_U$. Based on our condition, we observe that the number of added errors that flip the partition is greatest for the lowest-weight partitions. These terms require the fewest added errors to flip. We also see that the number of these added errors increases with the length of the logical string. A longer string has more adjacent qubits. This implies that the value of the disconnected part is decreasing with string length. This fact was used in Lemma \ref{Lemma: Coherent path counting}.
    
    \begin{figure}
        \centering
        \includegraphics[height = 7cm]{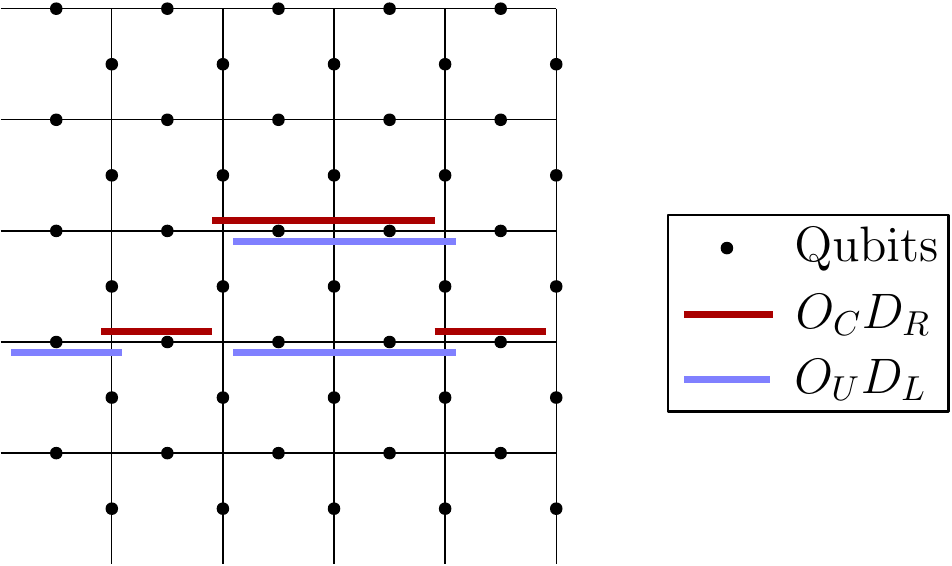}
        \caption{This figure shows a partition of a connected logical string together with a disconnected error. The disconnected error is incoherent-type, so $D_L = D_R$. The uncorrectable error $O_C D_R$ is in red, while the correctable error $O_U D_L$ is in blue. The two form a length-5 connected logical string that runs left to right across the code. Without the disconnected errors, $O_C$ would be correctable and $O_U$, uncorrectable. Therefore, the  added disconnected errors have flipped the original partition.}
        \label{fig: new exceptional term}
    \end{figure}
    
    We seek to prove that terms like the one in figure \ref{fig: new exceptional term} are negligible in the coherent logical noise components. We will do this by mapping each combination of a partition of a connected logical string and a set of disconnected errors such that uncorrectable and correctable errors in the partition are flipped to a partition of a longer connected logical string. There exists a unique stabilizer operator that will multiply the starting partition plus disconnected errors and produce a partition of a longer logical string. This is illustrated in figure \ref{fig: new exceptional term cousin}. Our condition in equation (\ref{eq: added errors flip condition}) says that the minimum stabilizer-equivalent operator to $O_U$ is lower weight than $O_U$. The stabilizer operator we need to map figure \ref{fig: new exceptional term} to figure \ref{fig: new exceptional term cousin} is the product of $O_U D_L$ and its minimal-weight correction. The resulting connected logical string is longer than the original connected logical string, but the total weight of the noise term (connected and disconnected) is smaller. This must be true because we have lowered the weight of the errors in blue $(O_U D_L)$, and we have not changed the weight of the errors in red $(O_C D_R)$.
    
    \begin{figure}
        \centering
        \includegraphics[height = 7cm]{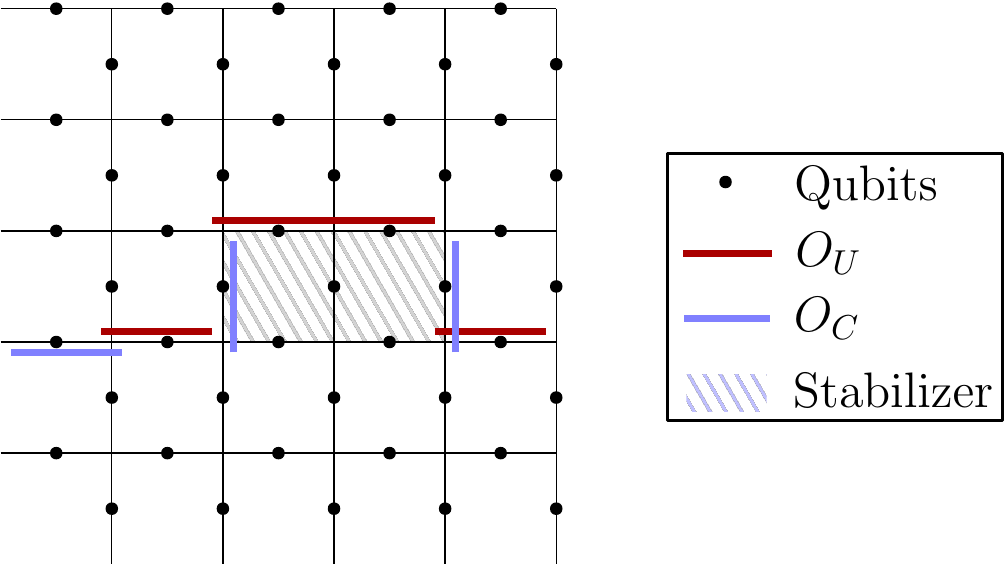}
        \caption{Here is the partition of a connected logical string corresponding to figure \ref{fig: new exceptional term}, with the new uncorrectable error $O_U$ in red and the new correctable error $O_C$ in blue. This term shares the same syndrome as the one in figure \ref{fig: new exceptional term}. We can always multiply right or left hand sides by a stabilizer to produce different coherent terms. This term is produced from figure \ref{fig: new exceptional term} by multiplying the correctable side in blue by the stabilizer operator in gray crosshatching. Notice that the connected logical string is longer but the total weight of the term is smaller.}
        \label{fig: new exceptional term cousin}
    \end{figure}
    
    In our previous analysis of the connected part of the coherent logical noise, we neglected logical strings with length $>L+2 \zeta$ for a cut-off constant $\zeta$. Then, we neglected short logical strings with atypical shape. Finally, we neglected the unlikely partitions of each string, which we called exceptional terms. In this proof we began with a likely partition of a short, typical connected logical string. We added disconnected errors, and in cases like the one in figure \ref{fig: new exceptional term} where the added errors flipped the partition, we mapped these terms to partitions of a different connected logical string like in figure \ref{fig: new exceptional term cousin}. The final part of our proof is to argue that we can neglect this class of terms, where disconnected errors changed how the connected partition is decoded.
    
    First, we observed above that the weight of the new connected term produced by our mapping is less than the weight of the original term with disconnected errors. This means that the term in figure \ref{fig: new exceptional term} is suppressed in powers of $\sin \theta/2$ relative to the term in figure \ref{fig: new exceptional term cousin}. Second, we will argue that the new connected term is one we have already neglected. Recall how we constructed new connected terms like the one in figure \ref{fig: new exceptional term cousin}. We took a likely partition of a typical, short, connected logical string and added disconnected errors to it in such a way that we flipped how the partition was decoded. The original uncorrectable side became correctable and vice versa. Then, we multiplied the correctable error $O_U D_L$ by a particular stabilizer operator to produce a new term that is a partition of a new connected logical string. We make the following observation. One of two things must be true. One is that the new logical string has an atypical shape, specifically if the logical string runs left to right across the code, the steps up and down in the lattice are separated by less than $\gamma L$. $\gamma$ was our chosen constant from Lemma \ref{Lemma: shape of strings B} that lower bounded the separation between the vertical steps in the logical string. The alternative is that the new connected logical string has a typical shape but the partition we produce is unlikely. If the stabilizer we multiply by in figure \ref{fig: new exceptional term cousin} has a width of at least $\gamma L$ the shape of the new connected logical string may be typical. However, in that case the partition we get for the longer connected logical string has a row of $\gamma L$ qubits all belonging to the blue error. We proved in Lemma \ref{Lemma: exceptional terms} that partitions with this feature are an exponentially small (in $\sqrt{L}$) fraction of the total partitions. We neglected these partitions in our earlier sum over connected terms. We find that the terms with disconnected errors that flip how the partition is decoded are in one to one correspondence with terms we have already neglected and moreover, the magnitudes of the terms with disconnected errors are smaller by a number of powers of $\sin \theta/2$. We conclude that such terms contribute less to the logical noise than the terms we have already neglected.
    
    So far in this proof we dealt with incoherent-type added errors. This was for simplicity, so that we had only one picture in mind. The argument for coherent-type added errors is the same. Coherent-type disconnected errors can also flip the correctable and uncorrectable sides of a partition of a connected logical string. We have already stated the condition when this occurs. The disconnected terms on the uncorrectable side $D_L$ must contain a contiguous set of errors near a contiguous set of errors in the uncorrectable part of the partition $O_U$.
    
    We bound the contribution of the disconnected coherent-type errors that flip the correctable and uncorrectable sides of the partition in the same way as we did the incoherent-type. We will use a mapping that takes such a term and produces a partition of a longer connected logical string. The mapping multiplies by a suitable stabilizer operator as depicted in figure \ref{fig: new exceptional term loop cousin}. In this case the new connected term is negligible for the same reasons as in the incoherent-type added error case. The connected logical string produced from the original partition plus the disconnected coherent-type errors either has an atypical shape or the original partition was exponentially unlikely (in $\sqrt{L}$). We conclude that the contribution of terms with disconnected errors that flip the partition is negligible in the logical noise.
    
    \begin{figure}
        \centering
        \includegraphics[height = 7cm]{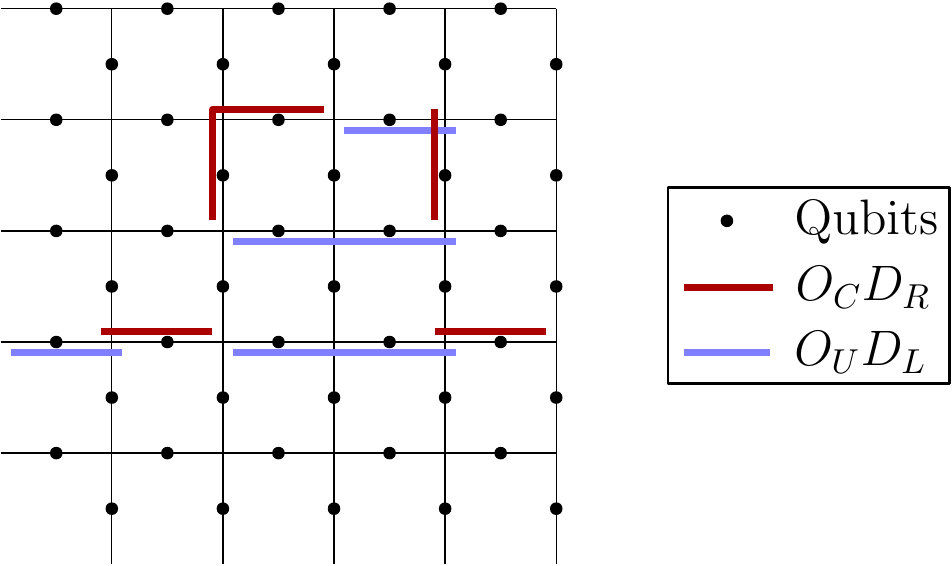}
        \caption{Here we have a partition of a connected logical string together with a disconnected error. The partition is the same as the one in figures \ref{fig: new exceptional term} and \ref{fig: new exceptional term cousin}. The disconnected error is now a coherent-type error. The uncorrectable error $O_C D_R$ is in red, while the correctable error $O_U D_L$ is in blue. Without the disconnected errors, $O_C$ would be correctable and $O_U$, uncorrectable. The added loop of disconnected errors has flipped the original partition.}
        \label{fig: new exceptional term loop}
    \end{figure}
    
    \begin{figure}
        \centering
        \includegraphics[height = 7cm]{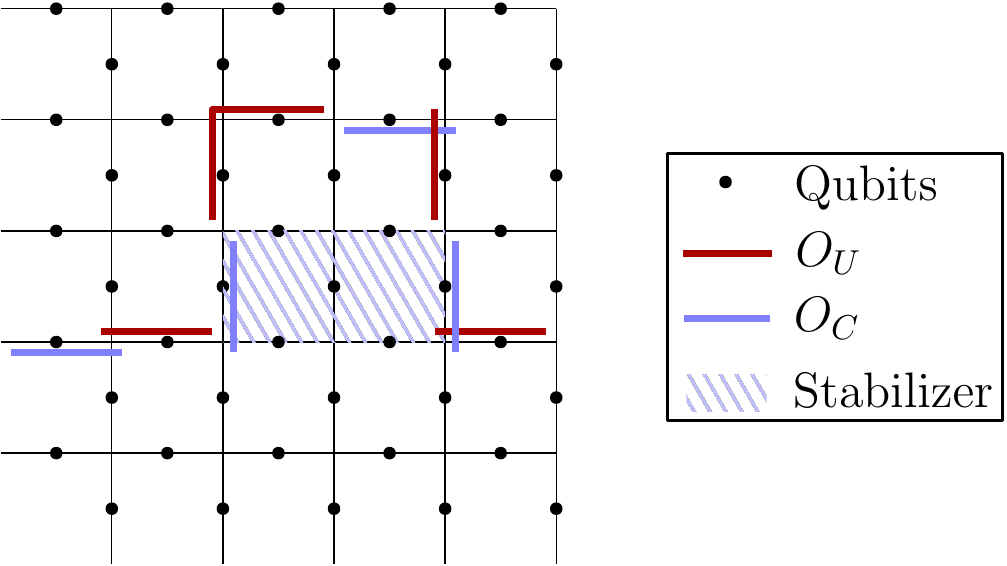}
        \caption{This is connected string and partition that corresponds to figure \ref{fig: new exceptional term loop}, with the new uncorrectable error $O_U$ in red and the new correctable error $O_C$ in blue. We can always multiply right or left hand sides by a stabilizer to produce different coherent terms. This term is produced from figure \ref{fig: new exceptional term loop} by multiplying the correctable side in blue by the stabilizer operator in gray crosshatching. The connected logical string is longer than the one in figure \ref{fig: new exceptional term loop} but the total weight of the term is less.}
        \label{fig: new exceptional term loop cousin}
    \end{figure}
    
    \end{proof}

\section{The Disconnected Part of the Incoherent Logical Noise}
\label{app: incoherent disconnected part}

In Appendix \ref{app: Disconnected errors} we proved that for the dominant noise terms in the coherent logical noise components, the disconnected part was equal to $1$ up to small corrections. We will now prove the same statement for the dominant noise terms in the incoherent logical noise components. 

\begin{Lemma}
\label{lemma: incoherent disconnected part}
    In equation (\ref{eq: incoherent sum string form}), we wrote an incoherent logical noise component as a sum over the contributions from individual logical strings. This included a disconnected factor. Here we prove that we can set the disconnected factor equal to $1$ and make only a small error. In other words suppose we write
    \begin{equation}
        \tilde{\chi}_{\scriptscriptstyle{Z_1 Z_1}} \geq \sum_{\mathcal{L}} \sum_{O_U \subset \mathcal{L}} \frac{1}{|\{O_C^\prime \}|} \sum_{O_U^\prime} (O_U \rho \, O_U^\prime) (1 + \mathcal{E}_1 + \mathcal{E}_2),
    \end{equation}
    where the sum over $\mathcal{L}$ includes all short typical logical strings, $O_U$, $|\{O_C^\prime\}|$, and $O_U^\prime$ are all as described in Section \ref{sec: Incoherent sum over strings}, and $\mathcal{E}_1$ is the error we make by neglecting various connected terms, including high-weight terms and terms with mismatched weight. In Lemma \ref{Lemma: High Weigh Incoherent Terms}, we proved that
    \begin{equation}
        |\mathcal{E}_1| < \mathcal{O}\left((\sin \theta)^{2\zeta} \right) .
    \end{equation}
    $\mathcal{E}_2$ represents the error we make when we set the disconnected factor equal to $1$. Then,
    \begin{equation}
        |\mathcal{E}_2| < \frac{8 \gamma \zeta^2}{\sqrt{L}} + \mathcal{O}\left( \frac{1}{L} \right) .
    \end{equation}
\end{Lemma}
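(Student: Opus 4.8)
The plan is to mirror the argument of Lemma \ref{lemma: added error exceptional terms}, which established the analogous statement for the \emph{coherent} logical noise components, adapting it to the two-sided structure of the incoherent terms $(O_U \rho\, O_U')$. First I would fix a short typical logical string $\mathcal{L}$ and a connected incoherent term built from it, as in Definition \ref{Def: Connected incoherent}, and decompose the attached disconnected factor as in Definition \ref{Def: Disconnected Incoherent}. The starting observation, already established in Section \ref{sec: disconnected pieces}, is that whenever an added disconnected error does \emph{not} alter how the connected term is decoded, the incoherent-type additions contribute $\cos^2(\theta/2)+\sin^2(\theta/2)=1$ on each qubit outside the connected part, while the coherent-type additions (loops split between left and right) cancel by the alternating-sign identity. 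Hence the entire deviation $\mathcal{E}_2$ of the disconnected factor from $1$ comes solely from added errors that flip the decoding of either $O_U$ (via $D_L$) or $O_U'$ (via $D_R$).

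Next I would characterize these flipping terms. Because $O_U$ and $O_U'$ share the syndrome $s$, and hence the same standard correction $E_s$, the flip condition for each side is exactly the one derived in equation \ref{eq: added errors flip condition}: the added error must sit close enough to the uncorrectable error that the minimal-weight decoder prefers to absorb part of it into a loop, lowering the stabilizer-reduced weight. This can happen independently on the two sides, so I would treat the left and right flips separately (a flip on $D_L$ involving $O_U$, and symmetrically for $D_R$ and $O_U'$), noting that the combinatorics of each is identical to the coherent case already analyzed.

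The heart of the proof is then the same weight-lowering injection used in Lemma \ref{lemma: added error exceptional terms}: multiply the flipped side by the unique stabilizer realizing the minimal-weight reduction, turning the configuration into a partition of a \emph{longer} connected logical string whose total weight is strictly smaller, by several powers of $\sin(\theta/2)$. As before, the image of this map is always a configuration we have already discarded, either a string of atypical shape (vertical steps separated by less than $\gamma\sqrt{L}$) or a typical string carrying an exceptional feature, namely a contiguous run of $\gamma\sqrt{L}$ qubits lying entirely on one side. Both populations were bounded in Lemmas \ref{Lemma: shape of strings B} and \ref{Lemma: exceptional terms}, the former contributing the dominant fraction $8\gamma\zeta^2/\sqrt{L}$ and the latter an exponentially small correction. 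Since the flipping terms are in injective correspondence with these neglected configurations and are additionally weight-suppressed relative to them, summing over strings gives $|\mathcal{E}_2| \le 8\gamma\zeta^2/\sqrt{L} + \mathcal{O}(1/L)$, as claimed.

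The main obstacle I anticipate is bookkeeping the two-sided structure together with the over-counting factor $1/|\{O_C'\}|$ appearing in equation \ref{eq: incoherent sum string form}: unlike the coherent case, a flip on the left and a flip on the right can each occur, and one must verify that the stabilizer-multiplication map remains injective and weight-decreasing once the over-counting weights are reinstated, so that no configuration is double-counted and the bound is not inflated. I would handle this by performing the map on whichever single side is flipped (the flips being generically non-simultaneous for short typical strings, since each requires a separate near-approach of a disconnected loop) and by checking that the induced change in $O_U'$, its weight, and its multiplicity $|\{O_C'\}|$ only strengthens the weight suppression. Given the close parallel to the coherent argument, the remaining estimates are routine.
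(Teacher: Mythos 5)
Your proposal correctly reproduces the first half of the paper's argument: the reduction of $\mathcal{E}_2$ to added errors that change the decoding, the flip condition of equation (\ref{eq: added errors flip condition}), and the stabilizer-multiplication map sending each flipped configuration to a partition of a longer string that is either atypical or exceptional and hence already accounted for by Lemmas \ref{Lemma: shape of strings B} and \ref{Lemma: exceptional terms}. One small correction to your bookkeeping worry: for incoherent-type additions one has $D_L=D_R=D$ added to both sides, and since $O_U$ and $O_U'$ share a syndrome, $DO_U$ is correctable if and only if $DO_U'$ is; the two sides cannot flip independently, so the resulting term simply migrates to the identity component of the logical chi matrix (it is \emph{absent} from $\tilde\chi_{\scriptscriptstyle Z_1Z_1}$ rather than entering with a sign flip as in the coherent case). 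Your injectivity concern therefore largely dissolves.

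The genuine gap is that you address only added errors adjacent to the \emph{uncorrectable} operators. In the incoherent case the connected part is defined (Definition \ref{Def: Connected incoherent}) relative to the strings $O_U E_s$ and $O_U' E_s$, so there is a second class of problematic additions with no coherent-case analogue: incoherent-type errors $D$ placed near or inside the standard correction $E_s$ (or inside one of the equally minimal corrections $O_C'$), satisfying $\min_{G_x}|G_x O_C D| < \min_{G_y}|G_y O_C| + \min_{G_z}|G_z D|$. Such a $D$ does not flip the decoding of $O_U$; instead it gets absorbed into a \emph{new} connected term $(O_U D\,\rho\,O_U' D)$ with an enlarged or re-routed connected part. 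The paper must handle these separately: those near but not inside $E_s$ map onto atypical or exponentially unlikely configurations as before, but those lying \emph{within} a minimal-weight correction leave the logical string unchanged and must instead be compared, in number, against the mismatched-weight terms with $|O_U'|=|O_U|+2$ that were discarded in Lemma \ref{Lemma: Mismatched weights} (there are $\mathcal{O}(w)$ of each, with the mismatched-weight family at least as numerous). Without this second comparison your bound on $\mathcal{E}_2$ does not account for all the configurations missing from the disconnected sum, so the stated estimate is not yet established.
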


\begin{proof}
    We can follow the argument from Section \ref{sec: disconnected pieces} and Appendix \ref{app: Disconnected errors}. The connected noise terms we considered in those sections had the form $(O_U \rho \, O_C)$. Here, we will consider noise terms with the form $(O_U \rho \, O_U^\prime)$. We will imagine arriving at these noise terms in the manner of Section \ref{sec: Incoherent sum over strings}. Namely, we begin with a short logical string with a typical shape. We partition the logical string into $O_U$ and $O_C$. Then we choose an operator $O_U^\prime$ with the same syndrome as $O_U$. We denoted the set of possible $O_U^\prime$ by $\{O_U^\prime\}$ 
    
    Now that we have a connected noise term $(O_U \rho \, O_U^\prime)$, we can think of dressing it with disconnected errors in exactly the same way as we did in the coherent case. In Section \ref{sec: disconnected pieces} we observed that the added errors that make up the disconnected part can be divided into coherent and incoherent-type. The coherent-type added errors are when we add different errors to $O_U$ and $O_U^\prime$. In this case the errors we add to $O_U$ and $O_U^\prime$ form a loop (with nonzero area). We saw that as long as the loop was positioned such that the added errors did not change how the connected noise term was decoded, the sum over the possible ways of dividing the errors in the loop between $O_U$ and $O_U^\prime$ gave zero. We considered incoherent-type added errors where we added the same error to $O_U$ and $O_U^\prime$. In this case the contribution was nonzero. As long as the added error did not change how the connected noise term was decoded, the incoherent-type added errors contributed a $\sin^2 \theta/2$ term on each qubit. Together with the $\cos^2 \theta/2$ term corresponding to no error on each qubit, this gives $1$ for the disconnected part. This applies to the added errors that do not change how the connected part was decoded. Therefore, our approach is to write the disconnected part as $1$ plus a correction that comes from the configurations of added errors that change how the connected part is decoded.
    
    In Lemma \ref{lemma: added error exceptional terms} we considered added errors that change how $O_U$ is decoded in the coherent logical noise components. For connected noise terms that enter into the incoherent logical noise the correction to the disconnected part comes from the same source, certain added errors that change how the connected term is decoded. In that case the added errors flipped the correctable and uncorrectable sides of the partition, which gives a phase of $-1$. In the incoherent case, if $O_U$ is made correctable by the added errors, then so is $O_U^\prime$, and the resulting term contributes to the identity part of the logical noise. In effect, there are disallowed added errors, which reduce the value of the disconnected part. The counting of such terms is identical to what we did in Lemma \ref{lemma: added error exceptional terms}. Recall that these added error terms were related to connected terms that either had an atypical shape or an unlikely partition. The contribution to $\mathcal{E}_2$ from these terms is proportional to the fraction of atypical logical strings from Lemma \ref{Lemma: shape of strings B}. The contribution from the unlikely partitions is exponentially small in $\sqrt{L}$ as before.
    
    There is another class of added errors that contribute to the correction to the disconnected part. Some errors are near the correction $E_s$ so that once the errors have been added, they become part of a new connected term. An example is shown in figure \ref{fig: incoherent added errors near Es}. This class of terms contains only incoherent-type added errors. Coherent-type added error placed here will still give $0$ after we sum over the ways of splitting the errors into the left and right disconnected errors. This is the same as for coherent-type added errors far from the connected logical string. For a likely partition of a short logical string with typical shape, $E_s$ will have the same weight as $O_C$. The incoherent-type added errors that join the connected part may either lie near to $E_s$ or they may be contained in $E_s$. We will first study the case where the added errors are not contained in $E_s$. These terms are closely related to the situation we just analyzed where the added errors sit next to $O_U$. The condition on the added error is analogous to equation (\ref{eq: added errors flip condition}). Let $D$ denote the added incoherent-type error and $\mathcal{S}$ denote the stabilizer group. Then,
    \begin{equation}
        \min_{G_x \in \mathcal{S}} |G_x O_C D| < \min_{G_y \in \mathcal{S}} |G_y O_c| + \min_{G_z \in \mathcal{S}} |G_z D| . 
    \end{equation}
    If this condition is satisfied, then the added error becomes part of a new connected term $(O_U D \rho \, O_U^\prime D)$. Together with the new lowest-weight correction $O_U D$ forms a new connected logical strings. This logical string is similar to the old logical string, but it contains a detour where it veers off to include the error $D$. This logical string either has an unlikely shape because it includes two closely spaced vertical steps, or the error $D$ has width $> \gamma L$. In the latter case we also require that the original correction $E_s$ included $>\gamma L$ consecutive qubits. This is exponentially unlikely according to the counting we did in Lemma \ref{Lemma: exceptional terms} and the bound we wrote in equation (\ref{eq: Exceptional terms error bound}). We conclude that the contribution to the error term $\epsilon_2$ from the noise terms we arrive at by adding errors proximate to $O_C$ is small.
    
    \begin{figure}
    \centering
    \includegraphics[width=16cm]{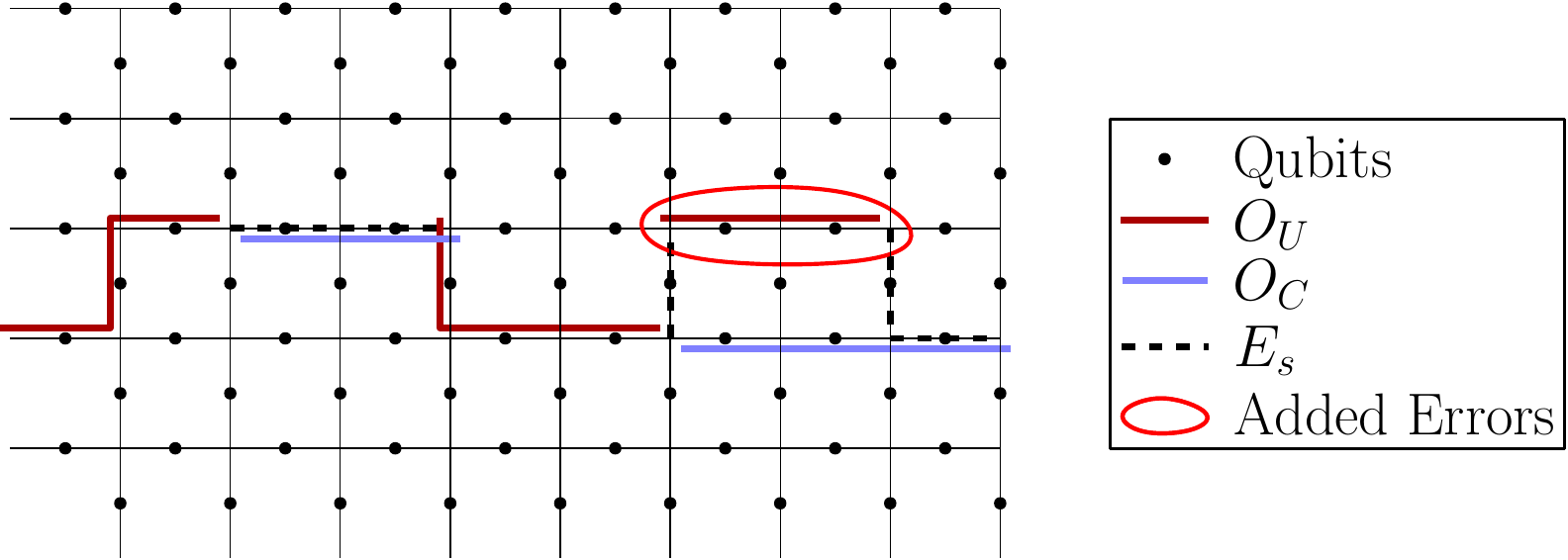}
    \caption{This figure illustrates a certain type of added error term. We start with a short, typical logical string and partition it into operators $O_U$ and $O_C$. Then, we would choose some operator $O_U^\prime$ (not pictured) with the same weight and syndrome as $O_U$ to produce an incoherent term. The errors in the red circle are added to both $O_U$ and $O_U^\prime$. The minimal-weight correction is shown as a black dashed line. The added errors are not disconnected but form a new connected term.}
    \label{fig: incoherent added errors near Es}
    \end{figure}

    This leaves the added errors the lie within $E_s$ or one of the operators with the same syndrome and weight. In this case the new connected logical string that results from adding the errors is the same as the old logical string. We will compare the set of terms we arrive at by adding errors in this manner to the set of terms with the same $O_U$ but a higher-weight $O_U^\prime$. We will argue that there are more of the terms with higher-weight $O_U^\prime$. We have already neglected such noise terms in Lemma \ref{Lemma: Mismatched weights}, so we conclude that the correction to the disconnected part is small.
    
    We start with a connected noise term $(O_U \rho \, O_U^\prime)$, where $|O_U| = |O_U^\prime| = w$. We form a higher-weight noise term by adding an incoherent-type error $D$ within one of the operators $O_C^\prime$ to produce a new connected noise term $(O_U D \rho \, O_U^\prime D)$. This adds a total of two to the weight of the term. We can place the added error anywhere within one of the operators $O_C^\prime$. The number of possibilities is $\mathcal{O}(w)$. Now consider the possible connected noise terms with the same $O_U$ but instead of choosing $O_U^\prime$ with weight $w$, we set $|O_U^\prime|= w+2$. Suppose we start with an operator $O_U^\prime$ with weight $w$. We can construct one with $w+2$ by adding an extra ``cap" consisting of three qubits around a single plaquette or star. This is illustrated in figure \ref{fig: incoherent added errors inside Oc}, where three possible choices of $O_U^\prime$ are drawn with the red dashed line. The number of possible choices is at least $\mathcal{O}(w)$, because we can place a cap at each location along $O_U$. The full set of possibilities will generally be larger. If we consider a pair of added errors lying within $O_C^\prime$, then we compare to the connected noise terms with $|O_U^\prime| = |O_U|+4$. The noise terms where $O_U$ and $O_U^\prime$ have different weights were discussed in Section \ref{sec: Mismatched weights}. We proved in Lemma \ref{Lemma: Mismatched weights} that the contribution of these terms is negligible.
    \begin{figure}
    \centering
    \includegraphics[width=16cm]{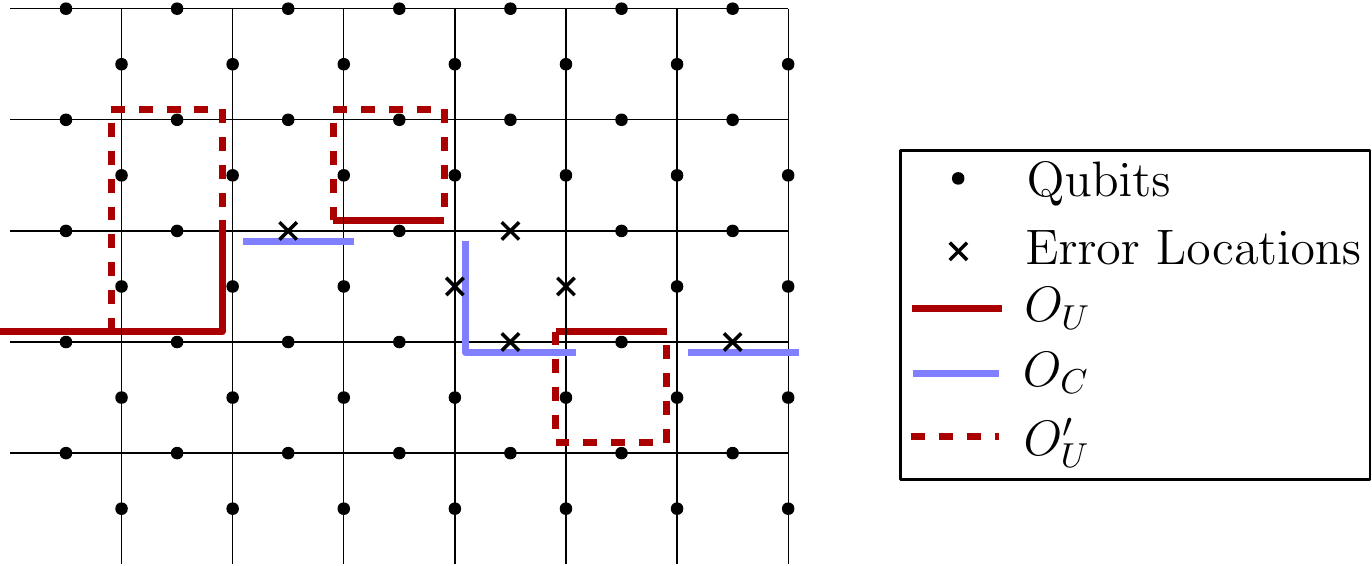}
    \caption{This figure illustrates the idea of the proof that a certain type of added error term contributes only a small error. Consider a connected logical string that we partition into $O_U$ and $O_C$, the solid red and blue lines. The class of added errors we consider are those where the added error lies along $O_C$ or one of the operators $O_C^\prime$ with the same weight and syndrome. The possible locations for such an added error are marked by the $\times$ symbols. We prove that such terms are negligible by comparing to the noise terms where the operator $O_U^\prime$ is chosen with weight 2 greater than $O_U$. Three of the possible choices for $O_U^\prime$ are drawn with the dashed red lines. In this example there are 12 possible $O_U^\prime$ operators and six possible added error terms.}
    \label{fig: incoherent added errors inside Oc}
    \end{figure}
    Finally we conclude, just as in Lemma \ref{lemma: added error exceptional terms}, that each of the disconnected errors that contribute to the error $\mathcal{E}_2$ can be matched with a connected noise term that we have already neglected. In other words, the error $\mathcal{E}_2$ is less than $\mathcal{E}_1$. Therefore, we can say that the disconnected part $=1$ and make only a small error for low-weight connected terms.
    
\end{proof}

\section{Physical $Y$ Errors}
\label{app: Physical Y Errors}

In Lemma \ref{Lemma: general rotation angles} we considered rotations in the $X{-}Z$ plane where the single-qubit rotation angles were allowed to differ. Here we prove that allowing for rotations partly around the $Y$ axis on the physical qubits will decrease the coherence of the logical noise channel.

\begin{Lemma}
\label{lemma: Y rotations}
    Consider an $L \times L$ toric code and a noise channel that consists of single-qubit rotations by an angle $\theta$ about an arbitrary axis. Suppose that $| \sin \theta|<1/L$ as in Lemmas \ref{Lemma: Coherent path counting} and \ref{Lemma: High Weigh Incoherent Terms}.
    Then, the connected contribution to the logical noise from low-weight terms is most coherent when the single-qubit rotations are about an axis in the $X{-}Z$ plane. We proved elsewhere that the low-weight connected contribution dominates the logical noise components.
\end{Lemma}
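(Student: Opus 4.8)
The plan is to follow the same strategy as Lemma \ref{Lemma: general rotation angles}: parametrize the single-qubit rotation by the component of its axis along $Y$, hold the connected, low-weight coherent logical noise strength fixed, and show that tilting the axis out of the $X$-$Z$ plane cannot increase the coherence of the logical channel. Write the single-qubit unitary as $U = \cos(\theta/2) I - i\sin(\theta/2)(n_X X + n_Y Y + n_Z Z)$ with $n_X^2 + n_Y^2 + n_Z^2 = 1$, so that the single-qubit chi matrix has $\tilde{\chi}_{ZI} = -i n_Z \sin(\theta/2)\cos(\theta/2)$, $\tilde{\chi}_{ZZ} = n_Z^2\sin^2(\theta/2)$, and $Y$-error amplitude $c_Y = -i n_Y\sin(\theta/2)$. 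First I would invoke the CSS structure of the toric code together with Lemma \ref{Lemma: Other logical maps}: the $X$-type stabilizer syndromes depend only on the $Z$-parts of the physical errors and the $Z$-type syndromes only on the $X$-parts, so that the dominant logical components are the pure $X$-type ($\tilde{\chi}_{X_a I}$, $\tilde{\chi}_{X_a X_a}$) and pure $Z$-type ($\tilde{\chi}_{Z_a I}$, $\tilde{\chi}_{Z_a Z_a}$) pieces, while genuine logical-$Y$ contributions are negligible.

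The key technical step is to show that physical $Y$ errors contribute to these dominant $X$- and $Z$-type components only at higher order in $\sin(\theta/2)$. Consider a contribution to $\tilde{\chi}_{Z_1 I}$: the product of the Pauli operators acting from the left and the right must be a $Z$-type logical operator, so the $X$-parts of the two sides must coincide and cancel. Consequently every $Y$ error appearing on a qubit (which carries a nontrivial $X$-part) must be matched by an $X$- or $Y$-type error on the conjugate side of $\rho$; each such matched pair carries an amplitude of order $n_X n_Y \sin^2(\theta/2)$ or $n_Y^2\sin^2(\theta/2)$, whereas the corresponding pure-$Z$ partition of a string qubit carries amplitude of order $n_Z\sin(\theta/2)\cos(\theta/2)$. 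Thus each $Y$ error costs at least one extra power of $\sin(\theta/2)$ relative to the pure-$Z$ string of the same logical action, so that to leading order the dominant $Z$-type logical noise uses only $Z$-type physical errors along the logical strings, and symmetrically for the $X$-type noise. I expect this enumeration of the $X$-part loops, showing they can never restore the missing power of $\sin(\theta/2)$, to be the main obstacle; it is the analogue in the mixed $X$-$Z$ setting of the connected/disconnected and exceptional-term bookkeeping used elsewhere, and must be combined with the path-counting bound so that the neglected $Y$-contributions remain subdominant even after summing over strings.

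Granting this, the dominant coherent and incoherent $Z$-type components are obtained from the pure $Z$-rotation computation of Theorem \ref{theorem: Big Theorem} simply by the replacement $\sin(\theta/2) \to n_Z\sin(\theta/2)$, and likewise the $X$-type components by $\sin(\theta/2)\to n_X\sin(\theta/2)$; since $n_Z, n_X \le 1$ the effective angles only shrink, so $|\sin\theta| < 1/L$ still guarantees the hypotheses of Lemmas \ref{Lemma: Coherent path counting} and \ref{Lemma: High Weigh Incoherent Terms} and the restriction to short strings remains valid. The coherent logical noise strength is then, up to the controlled errors, $\sum_{i\ne j}|\tilde{\chi}_{ij}|^2 \approx \sum_a\bigl[(n_Z\sin(\theta/2))^{2L} + (n_X\sin(\theta/2))^{2L}\bigr]$ summed over the encoded qubits. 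At fixed total rotation angle $\theta$ the constraint $n_X^2 + n_Z^2 = 1 - n_Y^2$ shows that any nonzero $n_Y$ strictly decreases the attainable $n_X$ and $n_Z$; because each term is increasing in the corresponding component (with the large exponent $2L$), the coherent strength is maximized precisely when $n_Y = 0$. Running this comparison in the differential form of Lemma \ref{Lemma: general rotation angles} --- fixing the coherent strength and checking that the incoherent strength does not decrease as $n_Y$ is turned on --- then establishes that the logical channel is most coherent for rotation axes in the $X$-$Z$ plane, as claimed.
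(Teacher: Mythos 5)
Your setup is sound and the first half of your argument matches the paper's: the CSS decoupling, the reduction to pure $X$- and $Z$-type logical components via Lemma \ref{Lemma: Other logical maps}, and the observation that a physical $Y$ error in a \emph{coherent} term must have its $X$-part cancelled by an extra $X$- or $Y$-error on the conjugate side, costing at least one additional power of $\sin(\theta/2)$, are all exactly the mechanisms the paper invokes. The gap is in the sentence ``the dominant coherent and incoherent $Z$-type components are obtained \dots simply by the replacement $\sin(\theta/2)\to n_Z\sin(\theta/2)$.'' That replacement is correct (to leading order) for the coherent component but false for the incoherent one. In a term contributing to $\tilde{\chi}_{Z_1Z_1}$, you may replace a $Z$ error on a qubit appearing in \emph{both} $O_U$ and $O_U'$ by a $Y$ error on both sides: the two $X$-parts are identical, so the syndrome is still shared, the single-qubit $X$-part is corrected to logical identity, and the product of the two sides is still a stabilizer. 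No extra weight is incurred, and the pair contributes $n_Y^2\sin^2(\theta/2)$, the same order as $n_Z^2\sin^2(\theta/2)$. Hence the incoherent component effectively carries a factor $(n_Y^2+n_Z^2)\sin^2(\theta/2)$ per overlapping qubit, while the coherent component carries $n_Z\sin(\theta/2)\cos(\theta/2)$ per qubit. This asymmetry --- coherent part suppressed by $n_Z^{\ell}$, incoherent part suppressed only by $(n_Y^2+n_Z^2)^{(\ell+1)/2}$ --- is precisely what the paper uses to conclude that tilting the axis toward $Y$ makes the logical channel \emph{less} coherent.

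The error is not cosmetic: with your model, fixing the coherent strength fixes $n_Z\sin(\theta/2)$ and therefore fixes your version of the incoherent strength as well, so the differential comparison you defer to would show no change in coherence; and at fixed $\theta$ your model gives a coherent-to-incoherent ratio $\sim 1/(n_Z\sin\theta)$, which \emph{increases} as $n_Z$ decreases --- the opposite of the claim. Your separate observation that the absolute coherent strength is maximized at $n_Y=0$ does not rescue the argument, since ``most coherent'' is a statement about the ratio of the coherent to the incoherent part, not about the absolute size of the coherent part. To close the gap you need the paper's second step: track the unsuppressed $Y$--$Y$ contributions to $\tilde{\chi}_{Z_1Z_1}$ (and the analogous doubly-suppressed structure of $\tilde{\chi}_{Z_1I}$ when $n_Z=0$) and show that the incoherent denominator grows relative to the coherent numerator as $n_Y$ is turned on.
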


\begin{proof}
Let $\theta_X$, $\theta_Y$, $\theta_Z$ denote the rotation angles about the $X$, $Y$, and $Z$-axis, respectively, so that $\theta_X^2 + \theta_Y^2 + \theta_Z^2 = \theta^2$. Of the coherent logical noise components, according to Lemmas \ref{Lemma: Other logical maps} and \ref{Lemma: More general coherent terms}, the dominant components are the ones $(\tilde{L}_a \tilde{\rho})$, where $\tilde{L}_a$ is a logical $X$ or $Z$ operator on one of the encoded qubits. We apply several of our lemmas to restrict the noise terms we consider, just as in Theorem \ref{theorem: Big Theorem}.
Among the noise terms that contribute to the coherent logical noise, we keep the terms with short, typical logical strings and non-exceptional partitions. Among the noise terms that contribute to the incoherent logical noise, we keep the terms where the logical string $\mathcal{L}$ is short and typical, $|O_U| = (|\mathcal{L}|+1)/2$, $|O_C| = |E_s|$, and $|O_U^\prime| = |O_U|$.

First suppose that $\theta_Z = 0$. Then, the logical $(X_1 \tilde{\rho})$ noise component is generated from noise terms $(O_U \rho \, O_C)$ where $O_U$ and $O_C$ together contain $X$ acting on every qubit along an $X_1$ logical string. Meanwhile, the incoherent logical $(X_1 \tilde{\rho} X_1)$ noise component is also generated by $X_1$ logical strings. In Theorem \ref{theorem: Big Theorem}, we state a bound on the relative magnitude of these logical noise components. Here $\theta_X$ plays the role of $\theta$ in equation (\ref{eq: coherent incoherent ratio}). Under our $\theta_X$ and $\theta_Y$ rotation noise model, we also have a non-zero logical $(Z_1 \tilde{\rho})$ noise component. This is generated by connected noise terms, $(O_U \rho \, O_C)$, where $O_U$ and $O_C$ together contain both $X$ and $Y$ acting on every qubit along a $Z_1$ logical string. The number of $Z_1$ logical strings with length $\ell$ is the same as the number of $X_1$ logical strings with length $\ell$. However, the weight of the noise terms that contribute to $(Z_1 \tilde{\rho})$ is $\ell^2$. The contribution of each noise term is $(\sin \theta_X/2)^\ell (\sin \theta_Y/2)^\ell$. In contrast, the noise terms that contribute to $(X_1 \tilde{\rho})$ are all proportional to $(\sin \theta_X/2)^\ell$. Therefore, $(Z_1 \tilde{\rho})$ is exponentially smaller in $L$ relative to $(X_1 \tilde{\rho})$ for any choice of rotation axis in the $X{-}Y$ plane. The $(Z_1 \tilde{\rho})$ noise component has a negligible effect on the relative magnitudes of the coherent and incoherent logical noise components. We also have an incoherent $(Z_1 \tilde{\rho} \, Z_1)$ noise component. This is generated by noise terms, $(O_U \rho \, O_U^\prime)$, where $O_U$ and $O_U^\prime$ contains $Y$ errors along an uncorrectable subset of a $Z_1$ logical string. These noise terms have magnitude $(\sin \theta_Y/2)^{\ell+1}$, which is exponentially large relative to the noise terms that contributed to $(Z_1 \tilde{\rho})$. It follows that the logical coherence is maximized when $\theta_Y = 0$ and $\theta_X=\theta$. We began by supposing $\theta_Z = 0$. Next, we will consider the case where $\theta_X$, $\theta_Y$, and $\theta_Z$ are all nonzero.

Suppose $|\theta_Z| \geq |\theta_X|$. If not, switch the role of $X$ and $Z$ in what follows. Fix a $Z_1$ logical string $\mathcal{L}$. The contribution of the logical string $\mathcal{L}$ to $\tilde{\chi}_{\scriptscriptstyle{Z_1 I}}$ is a sum over the partitions of $\mathcal{L}$. For each partition $(O_U \rho \, O_C)$, we can replace a $Z$ error in $O_U$ with a $Y$ error if we add an $X$ error on the same qubit to $O_C$. Similarly we can replace a $Z$ error in $O_C$ by a $Y$ error if we add an $X$ error on the same qubit to $O_U$. The $Z$ syndrome is unchanged, but now we also have a non-trivial $X$ syndrome corresponding to the $X$ error on the chosen site. This does not change how any partitions are decoded, but it does change the weight. The contribution of each partition to $\tilde{\chi}_{\scriptscriptstyle{Z_1 I}}$ is a sum over all combinations of either a $Z$ error or a $Y$ and an $X$ error on every qubit in $\mathcal{L}$. The terms with $Y$ errors have higher weight. This means they contain extra factors of $\sin \theta_Y/2$, which is small since $|\sin \theta| <1/L$. At the same time, the logical string $\mathcal{L}$ contributes to the $\tilde{\chi}_{\scriptscriptstyle{Z_1 Z_1}}$ logical noise component. These noise terms include some that feature only $Z$ errors and others with some number of $Z$ errors replaced by $Y$ errors. Unlike the contributions to $\tilde{\chi}_{\scriptscriptstyle{Z_1 I}}$, these terms with $Y$ errors are not higher-weight. There are no extra factors of $\sin \theta_Y/2$, and we conclude that the incoherent logical noise components are made larger relative to the coherent logical noise components. Therefore, the logical coherence is maximized when $\theta_Y = 0$.

\end{proof}

\section{Other Logical Maps}
\label{app: X1X2 or Y1 terms}

In the Section \ref{section: The Coherent Sum}, we restricted our attention to logical coherent terms of the form $(\tilde{L_a} \tilde{\rho})$, where $L_a$ is an $X$ or $Z$ operator on one of the encoded qubits. Now we would like to consider the case where $L_a$ acts nontrivially on both encoded qubits or as $Y$ on one or both of the encoded qubits. 

\begin{Lemma}
\label{Lemma: Other logical maps}
    Consider the toric code with minimal-weight decoding and a noise model that consists of uniform single-qubit unitary rotations about a fixed axis. Then, the coherent logical noise components, $(\tilde{L_a} \tilde{\rho})$, where $\tilde{L}_a$ is a $Y$-type logical operator or $\tilde{L}_a$ is a non-trivial logical operator on both encoded qubits, are negligible relative to the components where $\tilde{L}_a$ is an $X$ or $Z$-type logical operator on one encoded qubit.
\end{Lemma}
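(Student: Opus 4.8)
The plan is to show that each of these logical operators has a substantially larger minimal weight than a single-qubit $\tilde X$ or $\tilde Z$ logical operator, and then to feed this order gap into the same leading-order and path-counting estimates already used for the kept terms. Recall from equation (\ref{eq: Coherent Term}) that a coherent component $(\tilde L_a \tilde\rho)=\sum_{s,x,y}(E_s L_a G_x \rho\, G_y^\dagger E_s^\dagger)$ is a sum of physical noise terms whose left and right Pauli operators multiply to a logical operator in the class of $L_a$, namely $L_a G_x G_y^\dagger$. Since the chi-matrix coefficient of each term is a product of single-qubit amplitudes, each of magnitude at most $|\sin(\theta/2)|$ on every qubit in the support, a term with left and right weights $w_i,w_j$ has magnitude at most $|\sin(\theta/2)|^{w_i+w_j}$, and $w_i+w_j\ge w(L_a G_x G_y^\dagger)\ge d_a$, where $d_a$ is the minimal weight of the logical class of $L_a$. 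Hence the leading contribution to $(\tilde L_a\tilde\rho)$ first appears at order $|\sin(\theta/2)|^{d_a}$.

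First I would establish the weight bounds $d_a\ge 2L-1$ for the operators in question, to be contrasted with $d=L$ for $\tilde X_1,\tilde Z_1,\tilde X_2,\tilde Z_2$. For a logical operator acting nontrivially on both encoded qubits, its homology class is nontrivial around both independent cycles of the torus, so any representative must contain a nontrivial cycle winding in each direction; these two cycles each have weight $\ge L$ and share at most a constant number of qubits, giving $d_a\ge 2L-O(1)$ (for instance $\tilde Z_1\tilde Z_2$ has disjoint horizontal and vertical support, of weight exactly $2L$). For $\tilde Y_1$ I would use the CSS decomposition: writing a representative as a pure-$X$ operator on support $A$ times a pure-$Z$ operator on support $B$, the set $A$ realizes logical $\tilde X_1$ and $B$ realizes logical $\tilde Z_1$, so $|A|,|B|\ge L$, while $A$ and $B$ lie in transverse (perpendicular-winding) homology classes and can overlap only in the $Y$-support. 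When the rotation axis lies in the $X$–$Z$ plane there are no physical single-qubit $Y$ errors, so $A\cap B=\varnothing$ and the weight is $|A|+|B|\ge 2L$; for a general axis, Lemma \ref{lemma: Y rotations} shows that a nonzero $Y$-component only decreases the logical coherence, so it suffices to treat the worst case in which $A$ and $B$ are disjoint. Either way $d_{Y_1}\ge 2L-1$.

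With these bounds in hand I would compare leading orders. The kept components $\tilde\chi_{Z_1 I}$ and $\tilde\chi_{X_1 I}$ have a nonvanishing contribution already at order $|\sin(\theta/2)|^{L}$, arising from straight minimal-weight strings partitioned into an uncorrectable weight-$(L+1)/2$ error and a correctable weight-$(L-1)/2$ error, evaluated by the partition sum of equation (\ref{eq: Sum over partitions}). By the argument above, $\tilde\chi_{Y_1 I}$ and the two-qubit components first appear at order $|\sin(\theta/2)|^{2L-1}$. To turn this order gap into a bound on the full sums, I would invoke the truncation and path-counting estimate of Lemma \ref{Lemma: Coherent path counting}: the number of logical strings of a given class and length grows at most like a controlled power of $L$ per unit of excess length, so summing over all strings of the higher-distance class produces only subexponential prefactors, which are dominated by the extra factor $|\sin(\theta/2)|^{L-1}$. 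Using $|\sin\theta|<1/L$, the ratio of each neglected component to $\tilde\chi_{Z_1 I}$ is then $O\!\big(L^{-(L-1)}\,\mathrm{poly}(L)\big)$, exponentially small in $L$, which is the assertion of the lemma.

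The main obstacle is the geometric weight estimate for $\tilde Y_1$: unlike the two-qubit case, its $X$- and $Z$-supports lie in transverse classes that may in principle overlap on a large $Y$-support, so the clean bound $d_{Y_1}\ge 2L-O(1)$ is not immediate from weights alone. I expect the cleanest route is the reduction to vanishing $Y$-axis component supplied by Lemma \ref{lemma: Y rotations}, after which the supports are forced disjoint and the weight bound is elementary; the remaining work — propagating the order gap through the path-counting truncation — is routine given Lemma \ref{Lemma: Coherent path counting} and the partition sum already established, together with Lemma \ref{Lemma: More general coherent terms} for the coherent terms with nontrivial operators on both sides.
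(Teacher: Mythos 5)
Your argument reaches the right conclusion but by a genuinely different route from the paper. You prove a uniform minimal-weight bound $d_a\gtrsim 2L$ for every neglected class and then convert the resulting order gap in $|\sin(\theta/2)|$ into a bound on the full sums. The paper instead splits the cases: for $X_1Z_2$-type and $Y_1$-type operators it uses a factorization argument, observing that each connected contribution is (up to small corrections from overlap and the disconnected part) a product of a contribution to $(\tilde X_1\tilde\rho)$ and one to $(\tilde\rho\tilde Z_2)$, so that $(\tilde X_1\tilde Z_2\tilde\rho)\approx(\tilde X_1\tilde\rho)(\tilde\rho\tilde Z_2)$ is a product of two quantities that are each small whenever error correction succeeds; for same-type operators on both encoded qubits (e.g.\ $\tilde Z_1\tilde Z_2$), where the two windings can merge into a single connected string and the factorization fails, it counts the $(1,1)$-homology-class strings explicitly and notes that their length $\geq 2L$ places them beyond the truncation of Lemma \ref{Lemma: Coherent path counting}. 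Your weight-counting treatment of the $Y_1$ case, including the observation that with no physical $Y$ errors a qubit in the intersection of the $X$- and $Z$-supports costs two units of weight rather than one, is a legitimate alternative to the factorization shortcut, and your reduction to the $X$--$Z$ plane via Lemma \ref{lemma: Y rotations} is consistent with how the paper organizes the general-axis case.

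There is one step you should repair. You assert that ``the number of logical strings of a given class and length grows at most like a controlled power of $L$ per unit of excess length,'' and invoke Lemma \ref{Lemma: Coherent path counting} to conclude that the prefactors are subexponential. That lemma's count $L^{\ell-L+1}$ is anchored to classes with a unique minimal-weight representative; for the doubly-winding class $\tilde Z_1\tilde Z_2$ there are already $\binom{2L}{L}\sim 4^L/\sqrt{\pi L}$ representatives at the minimal weight $2L$ (this is exactly the shortest-lattice-path count the paper performs), so the prefactor is exponential in $L$, not polynomial. The conclusion survives because $4^L\cdot 2^{2L}\cdot(\sin(\theta/2))^{2L}\approx(2\sin\theta)^{2L}\leq(2/L)^{2L}$ is still exponentially small compared with the kept components of order $(\sin\theta)^{L}$, but as written your counting step is false for this class and needs the separate enumeration. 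With that correction, and with the two-sided coherent terms deferred to Lemma \ref{Lemma: More general coherent terms} as you indicate, the proposal is sound.
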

\begin{proof}
Suppose we have $\tilde{L}_a = X_1 Z_2$. Logical strings of this type are the product of two operators of the type we have already considered. Each connected noise term that contributes to the logical noise component, $(X_1 Z_2 \tilde{\rho})$, is a product of a connected noise term that contributes to $(X_1 \tilde{\rho})$ and a connected noise term that contributes to $(\tilde{\rho} Z_2)$. It follows that up to corrections that come from the disconnected part, $(X_1 Z_2 \tilde{\rho}) \approx (X_1 \tilde{\rho}) (Z_2 \tilde{\rho})$. The logical components, $(X_1 \tilde{\rho})$ and $(Z_2 \tilde{\rho})$, are both small if error correction is working, so the logical $(X_1 Z_2 \tilde{\rho})$ component will be negligible. If $\tilde{L}_a$ is $Y$-type operator on the first encoded qubit, the argument is the same, since $Y$-type logical strings are products of $X$ and $Z$-type logical strings, $Y_1 = X_1 Z_1$. 

If we have $\tilde{L}_a = Z_1 Z_2$, the logical component $(Z_1 Z_2 \tilde{\rho})$ is no longer a product of $(Z_1 \tilde{\rho})$ and $(Z_2 \tilde{\rho)}$. This is because $Z_1$ and $Z_2$-type logical strings can overlap, and this changes the counting of logical strings of a fixed weight. Figure \ref{fig: z1 z2 error} shows two examples of this kind of logical string. At length $2L$, where $L$ is the code distance, there are many connected logical strings because we can have a single connected string that wraps the torus along both directions. If we count the shortest paths between two points in the square lattice separated by distance $l_1$ in the horizontal and $l_2$ in the vertical, we get 
\begin{equation}
    \label{eq: square lattice paths}
    \text{number of shortest paths travelling $l_1$ horizontal and $l_2$ vertical spaces} = \binom{l_1+l_2}{l_1} .
\end{equation}
We can use this to bound the number of weight-$2L$ logical $Z_1 Z_2$ strings. Fix two sites in the code, qubit $i$ along the vertical edge of the code and qubit $j$ along the horizontal edge. Now count the number of shortest paths that connect these points. We have
\begin{equation}
    \binom{i+j}{i} \binom{2L-2-i-j}{L-1-i} + \binom{L-1-i+j}{j} \binom{i+L-1-j}{i}
\end{equation}
for the two ways of linking the edge points. We simply apply the result from equation (\ref{eq: square lattice paths}). In the end we find that
\begin{equation}
    \text{number of weight-$2L$ $Z_1 Z_2$ logical strings} \approx \frac{4^L}{\sqrt{\pi L}} .
\end{equation}
In Section \ref{sec: Path Counting}, we counted logical strings that act as $X$ or $Z$ on one of the encoded qubits starting from length $L$, and we found exponentially many logical strings at higher weights. If we consider weight-$2L$ logical strings, we find order $\mu^{2 L}$ logical strings, where $\mu \approx 2.64$ for the 2D square lattice. This is more than $4^L$, so we have more of the high-weight logical strings that act on only one encoded qubit. Further, in our path counting in Lemma \ref{Lemma: Coherent path counting}, we neglected all logical strings of length $>L+2 \zeta$ for a constant $\zeta$. The strings of length $\geq 2L$ contribute negligibly for large $L$. Then, we conclude that the logical noise components, $(\tilde{L}_a \tilde{\rho})$, where $L_a$ is a $Y$-type logical operator or acts on both encoded qubits are negligible relative to the noise components where $L_a$ acts as $X$ or $Z$ on one of the encoded qubits.

\begin{figure}
    \centering
    \includegraphics[height = 9cm]{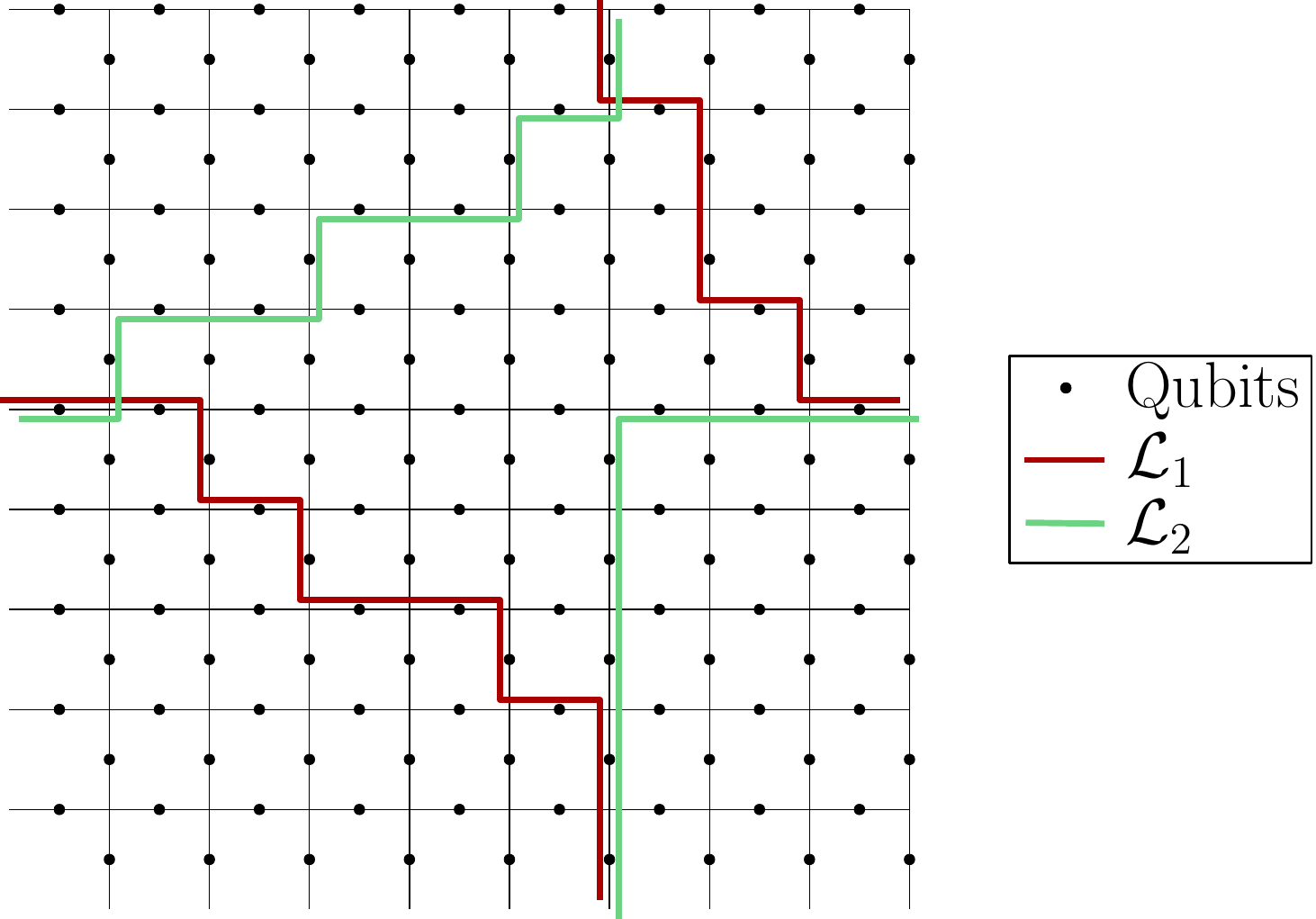}
    \caption{Here are two examples of lowest-weight $Z_1 Z_2$ logical strings, $\mathcal{L}_1$ and $\mathcal{L}_2$, that act as $Z$ on both encoded qubits. Notice that red and green connect the edge points in different (but topologically equivalent) ways.}
    \label{fig: z1 z2 error}
\end{figure}
\end{proof}

\section{More General Coherent Terms}
\label{app: a and b nontrivial}

We have considered coherent logical noise components $(\tilde{L}_a \tilde{\rho})$, where $\tilde{L}_a$ is a logical operator that acts as $X$ or $Z$ on exactly one of the encoded qubits. We must also consider logical noise components $(\tilde{L}_a \tilde{\rho} \tilde{L}_b)$, where $\tilde{L}_a$ and $\tilde{L}_b$ are different non-trivial operators on the encoded qubits.

\begin{Lemma}
    Consider the $L\times L$ toric code with noise that consists of single-qubit unitary rotations about a fixed axis by angle $\theta$ on every qubit, where $|\sin \theta|$ is $<1/L$ as in Lemma \ref{Lemma: Coherent path counting}. Each coherent logical noise component of the form $(\tilde{L}_a \tilde{\rho} \tilde{L}_b)$, where $\tilde{L}_a$ and $\tilde{L}_b$ are different nontrivial logical operators, is negligible relative to the coherent logical noise components with $\tilde{L}_b = \tilde{id}$. Each of the more general coherent terms is given by
    \begin{equation}
        (\tilde{L}_a \tilde{\rho} \tilde{L}_b) \approx (\tilde{L}_a \tilde{\rho}) (\tilde{\rho} \tilde{L}_b) .
    \end{equation}
    $(\tilde{L}_a \tilde{\rho})$ and $(\tilde{\rho} \tilde{L}_b)$ are both small (because we are interested in the regime where error correction succeeds with high probability.) Therefore, we may safely neglect all logical noise components $(\tilde{L}_a \tilde{\rho} \tilde{L}_b)$, where $\tilde{L}_a$ and $\tilde{L}_b$ are different nontrivial logical operators.
    \label{Lemma: More general coherent terms}
\end{Lemma}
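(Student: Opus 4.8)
The plan is to follow the strategy of Lemma \ref{Lemma: Other logical maps}, first establishing the factorization $(\tilde L_a\tilde\rho\tilde L_b)\approx(\tilde L_a\tilde\rho)(\tilde\rho\tilde L_b)$ and then arguing that each factor is exponentially small because it requires a full-length logical string. First I would write the component using equation (\ref{eq: General logical component sum}),
\begin{equation}
(\tilde L_a\tilde\rho\tilde L_b)=\sum_{s,x,y}(E_s L_a G_x\,\rho\,G_y^\dagger L_b^\dagger E_s^\dagger),
\end{equation}
and, invoking Lemma \ref{Lemma: general rotation angles}, restrict to the worst case of uniform $Z$-rotations so that every contributing noise term has the form $(O_a\,\rho\,O_b)$ with $O_a=E_sL_aG_x$ an uncorrectable error decoding to $L_a$, $O_b=E_sL_bG_y$ an uncorrectable error decoding to $L_b$, and $s(O_a)=s(O_b)$. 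Decomposing each term into connected and disconnected parts as in Definitions \ref{Def: Connected coherent} and \ref{Def: Disconnected Coherent}, the left operator carries a logical string in the class of $L_a$ and the right operator a logical string in the class of $L_b$; when these strings are supported independently the connected part splits as a product of a connected term contributing to $(\tilde L_a\tilde\rho)$ and one contributing to $(\tilde\rho\tilde L_b)$, and the disconnected sums multiply, yielding the claimed factorization up to the small disconnected corrections already controlled by Lemmas \ref{lemma: added error exceptional terms} and \ref{lemma: incoherent disconnected part}.

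The rigorous core is a weight bound that forces these terms beyond the cutoff of Lemma \ref{Lemma: Coherent path counting}. Since $O_a$ and $O_b$ share a syndrome, their product $O_aO_b$ is a logical operator acting as $L_aL_b$, so $|O_a|+|O_b|\ge|O_aO_b|\ge d(L_aL_b)$, where $d(\cdot)$ denotes the minimal weight of a Pauli representative of a logical class. Together with the elementary bounds $|O_a|\ge\tfrac12(d(L_a)+1)$ and $|O_b|\ge\tfrac12(d(L_b)+1)$ for minimal-weight decoding, I would run a short case analysis over the distinct nontrivial logical classes $L_a\ne L_b$: if both are among the single-qubit generators $\{X_1,Z_1,X_2,Z_2\}$ then $L_aL_b$ wraps the torus in two independent homology classes and $d(L_aL_b)=2L-O(1)$; if instead one of $L_a,L_b$ already involves a $Y$ factor or acts nontrivially on both encoded qubits, then its own distance is $2L-O(1)$ and the half-distance bound gives $|O_a|+|O_b|\ge\tfrac32 L-O(1)$. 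In every case $|O_a|+|O_b|>L+2\zeta$ for $L$ large, so the path-counting truncation applies and the contribution is suppressed by at least $(\sin\theta/2)^{2L}$ relative to $(\sin\theta/2)^{L}$ for the retained single-string coherent components; consistently, the factorized product $(\tilde L_a\tilde\rho)(\tilde\rho\tilde L_b)$ is of order $(\sin\theta/2)^{2L}$, since each factor is of order $(\sin\theta/2)^{L}$ by equation (\ref{eq: coherent component final}).

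The main obstacle will be the overlap case, in which $L_a$ and $L_b$ are of a type whose strings can share lattice edges, so that $O_aO_b$ is not simply a disjoint union and the naive factorization of the connected part fails. Here I would argue exactly as in Lemma \ref{Lemma: Other logical maps}: even allowing overlap, the homological constraint $s(O_a)=s(O_b)$ forces $O_aO_b$ into the nontrivial class $L_aL_b$, whose distance cannot be reduced by any choice of shared support, so the distance lower bound, and hence the cutoff violation, survives. Since each factor $(\tilde L_a\tilde\rho)$ and $(\tilde\rho\tilde L_b)$ is individually small in the regime where error correction succeeds with high probability, the product is doubly small, and the components $(\tilde L_a\tilde\rho\tilde L_b)$ with $L_a\ne L_b$ both nontrivial are negligible relative to the single-sided coherent components retained in Theorem \ref{theorem: Big Theorem}, completing the argument.
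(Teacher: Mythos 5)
Your proposal is correct and follows essentially the same route as the paper's proof: the component factorizes (up to small corrections from string overlap and from the disconnected part) into a product of two single-sided coherent components, each of which is small in the regime where decoding succeeds, so the product is doubly small. The only substantive addition is your explicit distance-based case analysis showing $|O_a|+|O_b|\gtrsim \tfrac{3}{2}L$ in every case of distinct nontrivial $L_a, L_b$, which the paper leaves implicit but which usefully covers the overlap case where the naive factorization of the connected part degrades.
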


\begin{proof}

Our approach here is to bound the coherent logical noise components $(\tilde{L}_a \tilde{\rho} \tilde{L}_b)$, where $\tilde{L}_a$ and $\tilde{L}_b$ are different nontrivial logical operators, by the coherent logical noise components we have already considered. This follows because the short connected logical strings with different logical action do not overlap much. Overlap here means that the strings contain the same error acting on the same qubits. One possible overlap is between $Z_1$ and $Z_2$ logical strings. Pick a $Z_1$ logical string, $\mathcal{L}_1$, and a $Z_2$ logical string, $\mathcal{L}_2$. One string runs left to right, and the other runs top to bottom. If the horizontal string is longer than $L$, the code distance, then it has vertical steps along it, and these steps may overlap with the vertical logical string. An example is given in figure \ref{fig: z1 string and z2 string}. We assume $\mathcal{L}_1$ and $\mathcal{L}_2$ both have length $\leq L+2/zeta$ because of Lemma \ref{Lemma: Coherent path counting}. Then, we use Lemma \ref{Lemma: shape of strings A} to restrict to the case where all the steps are one lattice spacing at a time. Any possible overlap is on at most two sites as shown in figure \ref{fig: z1 string and z2 string}. Further, if we consider all possible pairs of a $Z_1$ logical string and a $Z_2$ logical string, only order $1/L$ strings have any overlap at all, so we can neglect possible overlap.
\begin{figure}
    \centering
    \includegraphics[height = 7cm]{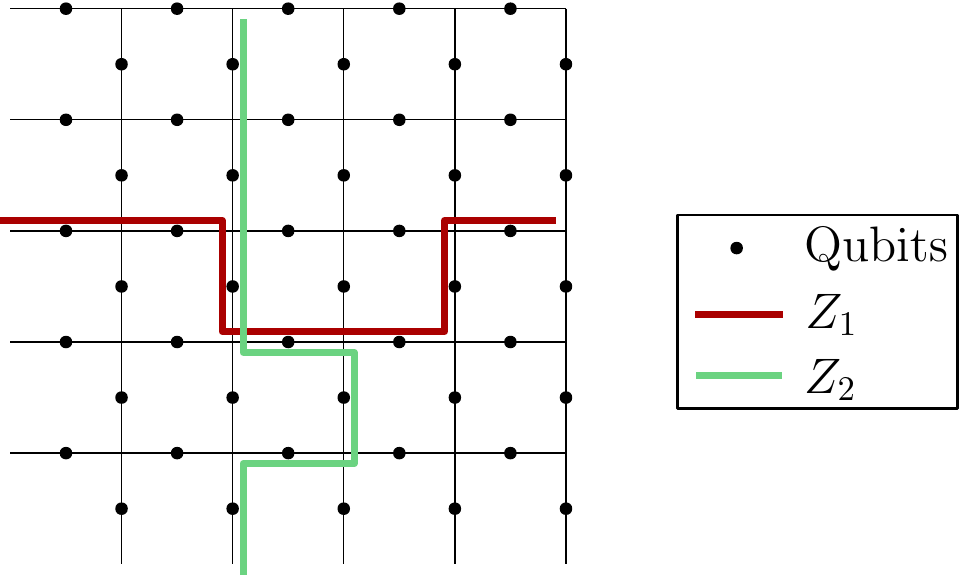}
    \caption{Here we have a $Z_1$ and a $Z_2$ logical string. They have an overlap of two qubits, but if we fix one string and consider all possible paths for the other string, we see that only order $1/L$ have any overlap.}
    \label{fig: z1 string and z2 string}
\end{figure}

Because the two logical strings $\mathcal{L}_1$ and $\mathcal{L}_2$ are approximately disjoint, when we sum over partitions, each partition approximately factors into a partition of $\mathcal{L}_1$ times a partition of $\mathcal{L}_2$. That is, each connected noise term in the sum for the logical $\tilde{\chi}_{\scriptscriptstyle{Z_1 Z_2}}$ is a partition $(O_U^{(1)} O_C^{(2)}\rho \, O_C^{(1)} O_U^{(2)})$ which is approximately equal to $(O_U^{(1)} \rho \, O_C^{(1)}) (O_C^{(2)} \rho \, O_U^{(2)})$ where $O_U^{(1)} O_C^{(1)} = \mathcal{L}_1$ and $O_U^{(2)} O_C^{(2)} = \mathcal{L}_2$. Therefore, $\tilde{\chi}_{\scriptscriptstyle{Z_1 Z_2}} \approx \tilde{\chi}_{\scriptscriptstyle{Z_1 I}} \tilde{\chi}_{\scriptscriptstyle{I Z_2}}$ up to small corrections from the overlap between $\mathcal{L}_1$ and $\mathcal{L}_2$ and from the disconnected part. Each of the terms $\tilde{\chi}_{\scriptscriptstyle{Z_1 I}}$ and $\tilde{\chi}_{\scriptscriptstyle{I Z_1}}$ will be $\ll 1$ if we are in a regime where error correction succeeds. Therefore, the $\tilde{\chi}_{\scriptscriptstyle Z_1 Z_2}$ logical noise component will be negligible relative to the $\tilde{\chi}_{\scriptscriptstyle Z_1 I}$ logical noise component. The same holds for the other logical noise components with a nontrivial logical operator on each side of $\tilde{\rho}$. Then, we may safely neglect the more general coherent terms and consider only the $(\tilde{L}_a \tilde{\rho})$ components.

\end{proof}

\section{Growth of Infidelity}

The expression for the average infidelity after $m$ applications of the noise channel from \cite{dugas2016efficiently} is an upper bound.
\begin{equation}
\label{eq: growth of infidelity generic bound}
	r_m \leq r m + \frac{(d-1) \Theta^2}{2d} m(m-1),
\end{equation}
where $r_m$ is the average infidelity after $m$ applications of a fixed noise channel, $r$ is the average infidelity after one application of the channel, $d$ is the dimension of the Hilbert space on which the channel acts, and $\Theta$ is the coherence angle. For anything save unitary or completely coherent channels, the upper bound has a linear component. We expect that this linear part is not only an upper bound, but that the average infidelity will grow linearly to lowest order.

Working in the Pauli transfer matrix representation, a unital noise channel is written as
\begin{equation}
	\left( \begin{array}{cccc}
		1 & 0 &  0 & \dots\\
		0 & 1-\lambda_{2} & \beta_{2,3} & \dots\\
		0 & \beta_{3,2} & 1- \lambda_3 & \\
		\vdots & \vdots & & \ddots
	\end{array} \right) .
\end{equation}
When channels are composed, we multiply the Pauli transfer matrices. After applying the same noise channel twice, we have diagonal entries
\begin{equation}
	(1-\lambda_j)^2 + \sum_{l | l \neq j} \beta_{j,l} \beta_{l,j}.
\end{equation}
After $m$ applications of the noise channel, the diagonal entries are
\begin{equation}
	(1-\lambda_j)^m + \binom{m}{1} (1-\lambda_j)^{m-1} \sum_{l | l \neq j} \beta_{j,l} \beta_{l,j}  + \cdots .
\end{equation}
Then, the infidelity after composing the channel $m$ times is proportional to
\begin{align}
    \sum_{j=1}^d 1-(1-\lambda_j)^m - \binom{m}{1} (1-\lambda_j)^{m-1} \sum_{i| i \neq j} \beta_{j,i} \beta_{i,j} - \cdots \nonumber
    \\
    = \sum_{j=1}^d m \lambda_j - \lambda_j^2 \frac{m(m-1)}{2} + \cdots -m \sum_{l |l \neq j} \beta_{j,l} \beta_{l,j}  \cdots .
\end{align}
To lowest order the infidelity grows proportional to $r$ the first term in the upper bound in equation (\ref{eq: growth of infidelity generic bound}).

\section{Diamond Distance Bound}

The diamond distance from identity can be bounded in terms of the average infidelity, $r$, and the sum of squares of the off-diagonal (coherent) components of the chi matrix. 

\begin{Lemma}
    In equation (\ref{eq:kueng-bound}), we upper bounded the diamond distance from identity for a channel by a function $f$ based on \cite{kueng2016comparing}. This function depended on the components of the Pauli transfer matrix for the channel. With a little algebra, we can show
    \begin{equation}
    \label{eq: diamond distance lemma statement}
        f^2 \leq c_1 \left( \sum_{i,j | i \neq j} \chi_{i,j}^2 \right) + c_2 r^2 .
    \end{equation}
    where the constants are given by $c_1 = d_L^2$ and $c_2 = 2 (d_L+1)^2$ and $d_L$ is the dimension of the logical space.
    \label{Lemma: diamond distance bound}
\end{Lemma}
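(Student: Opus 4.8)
The plan is to start directly from the decomposition already recorded in equation (\ref{eq: f^2 bound diamond distance}),
\begin{equation}
f^2 = \frac{1}{d_L^2-1}\left(\sum_{i,j\mid i\neq j}(N_u)_{ij}^2 + \sum_i\bigl(1-(N_u)_{ii}\bigr)^2\right),
\end{equation}
and to bound the off-diagonal sum and the diagonal sum separately, identifying the first with the coherent chi-matrix weight and the second with $r^2$. Since the logical noise channel is a composition of the (unital) unitary physical noise with a recovery map that is a convex combination of Pauli conjugations, it is itself unital and trace preserving; hence $N_{0i}=\delta_{0i}$ and $N_{i0}=\delta_{i0}$, so the only nonzero off-diagonal entries of the full Pauli transfer matrix $N$ are exactly those of $N_u$. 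This lets me equate $\sum_{a,b\mid a\neq b}N_{ab}^2 = \sum_{i,j\mid i\neq j}(N_u)_{ij}^2$ and then invoke Lemma \ref{lemma:off-diagonal} to write $\sum_{i,j\mid i\neq j}(N_u)_{ij}^2 = d_L^2\sum_{i,j\mid i\neq j}|\chi_{ij}|^2$ (interpreting the $\chi_{i,j}^2$ in the statement as $|\chi_{ij}|^2$, which is real by Hermiticity $\chi_{ij}=\chi_{ji}^*$).

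For the diagonal piece I would use the infidelity formula $r = \frac{1}{d_L(d_L+1)}\,\mathrm{Tr}(I_{d_L^2-1}-N_u) = \frac{1}{d_L(d_L+1)}\sum_i\bigl(1-(N_u)_{ii}\bigr)$, so that $\sum_i\bigl(1-(N_u)_{ii}\bigr)=d_L(d_L+1)\,r$. The one point that needs care is passing from the sum to the sum of squares: writing $x_i:=1-(N_u)_{ii}$, I want $\sum_i x_i^2 \le \bigl(\sum_i x_i\bigr)^2$, which holds precisely because each $x_i\ge 0$. This nonnegativity follows from the standard fact that the diagonal entries of the Pauli transfer matrix of any quantum channel lie in $[-1,1]$ (equivalently $(N_u)_{ii}=\tfrac{1}{d_L}\mathrm{Tr}(P_i\mathcal{N}(P_i))$ with $|(N_u)_{ii}|\le 1$ by contractivity of $\mathcal{N}$). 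Granting this, $\sum_i x_i^2 \le d_L^2(d_L+1)^2 r^2$. This elementary nonnegativity step is really the only substantive input beyond bookkeeping, so I expect it to be the main (mild) obstacle—everything else is algebra.

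Finally I would assemble the two bounds and absorb the overall prefactor. Combining gives
\begin{equation}
f^2 \le \frac{d_L^2}{d_L^2-1}\sum_{i,j\mid i\neq j}|\chi_{ij}|^2 + \frac{d_L^2(d_L+1)^2}{d_L^2-1}\,r^2,
\end{equation}
and since $d_L\ge 2$ (indeed $d_L=4$ for the toric code) one has $d_L^2\ge 2$, whence $\frac{d_L^2}{d_L^2-1}\le d_L^2$ for the first coefficient and $\frac{d_L^2}{d_L^2-1}\le 2$ for the second. This yields $f^2 \le d_L^2\sum_{i,j\mid i\neq j}|\chi_{ij}|^2 + 2(d_L+1)^2 r^2$, exactly the claimed inequality (\ref{eq: diamond distance lemma statement}) with $c_1=d_L^2$ and $c_2=2(d_L+1)^2$, completing the proof. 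I would remark that the crude replacements $\frac{d_L^2}{d_L^2-1}\mapsto d_L^2,2$ are what fix the stated constants; sharper constants are available but are not needed, since in the application to Theorem \ref{theorem: Big Theorem} the coherent sum $\sum_{i\neq j}|\chi_{ij}|^2$ is itself bounded by $r^2/\sin^2\theta$ up to the error term $\mathcal{E}$.
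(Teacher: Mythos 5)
Your proof is correct and follows essentially the same strategy as the paper: start from the decomposition in equation (\ref{eq: f^2 bound diamond distance}), convert the off-diagonal Pauli-transfer-matrix sum to the coherent chi-matrix weight via Lemma \ref{lemma:off-diagonal}, bound the diagonal sum by a multiple of $r^2$ using a nonnegativity argument, and absorb the $1/(d_L^2-1)$ prefactor crudely to land on $c_1=d_L^2$, $c_2=2(d_L+1)^2$. The one place you diverge is the diagonal piece: the paper rewrites $N_{ii}$ in terms of the diagonal chi-matrix elements, uses the identity $\sum_i N_{ii}^2 = d_L^2\sum_j \chi_{jj}^2$ together with $\chi_{jj}\ge 0$ to get $\sum_i(1-N_{ii})^2 \le 2d_L^2\bigl(\sum_{l\ne 0}\chi_{ll}\bigr)^2 = 2(d_L+1)^2 r^2$, whereas you stay in the Pauli-transfer-matrix picture and use $1-N_{ii}\ge 0$ to write $\sum_i(1-N_{ii})^2 \le \bigl(\sum_i(1-N_{ii})\bigr)^2 = d_L^2(d_L+1)^2 r^2$. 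Your intermediate bound is looser by a factor of order $d_L^2$, but after dividing by $d_L^2-1$ and rounding up, both routes give the same stated constants, so nothing is lost for the purposes of Theorem \ref{theorem: Big Theorem}.
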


\begin{proof}
    We start with equation (\ref{eq:kueng-bound}), and rewrite the Pauli transfer matrix in terms of chi matrix. We expand $(1-N_{i,i})^2$ and compare to $r^2$. Equation (\ref{eq: f^2 bound diamond distance}) reads
    \begin{equation}
        f^2 = \frac{1}{d_L^2-1} \left( \sum_{i,j|i \neq j} N_{i,j}^2 + \sum_l (1-N_{l,l})^2 \right),
        \label{eq: f equation}
    \end{equation}
    where $N$ is the Pauli transfer matrix representation of the noise channel. The diamond distance from identity is bounded by a constant times $f$. We can expand $f$ in terms of the chi matrix elements. Recall that we already have Lemma \ref{lemma:off-diagonal} concerning the off-diagonal elements. Also, the infidelity $r$ is related to the trace of the Pauli transfer matrix or the (0,0) element of the chi matrix.
    
    We can write the diagonal components of Pauli transfer matrix in terms of the diagonal components of the chi matrix in the following way:
    \begin{equation}
        N_{i,i} = \sum_{j \in C_i} \chi_{j,j} - \sum_{l \in A_i} \chi_{l,l} ,
    \end{equation}
    where the set $C_i$ includes all the Pauli operators $\sigma^j$ that commute with $\sigma^i$ and the set $A_i$ is all Pauli operators $\sigma^l$ that anticommute with $\sigma^i$. For example, in the case of a single-qubit channel
    \begin{equation}
        N_{1,1} = \chi_{0,0} + \chi_{1,1} - \chi_{2,2} - \chi_{3,3} .
    \end{equation}
    Then, we can sum over all the diagonal components of $N$ using the fact that the identity operator commutes with every operator.
    \begin{equation}
        \sum_{i=0}^{d_L^2} N_{i,i} = d_L^2 \chi_{0,0} ,
    \end{equation}
    where $d_L$ is the dimension of the logical space. Next, we can expand the diagonal term from equation (\ref{eq: f equation}):
    \begin{align}
        \sum_{i=0}^{d_L^2} (1-N_{i,i})^2 & = \sum_{i=0}^{d_L^2} \left( 1 - 2 N_{i,i} + N_{i,i}^2 \right) \nonumber
        \\
        & = d_L^2 - 2 d_L^2 \chi_{0,0} + d_L^2 \sum_{j} \chi_{j,j}^2 \nonumber
        \\
        & = d_L^2 \left(1 - \chi_{0,0} \right)^2 + d_L^2 \sum_{j| j \neq 0} \chi_{j,j}^2 \nonumber
        \\
        & = d_L^2 \left(\sum_{l| l \neq 0} \chi_{l,l} \right)^2 + d_L^2 \sum_{j| j \neq 0} \chi_{j,j}^2 ,
    \end{align}
    where we have used the trace preservation condition $\sum_{i} \chi_{i,i} = 1$. Because the noise channel is unitary, the diagonal components of the chi matrix are real and greater than 0. Then, we can bound
    \begin{equation}
        \sum_{i=0}^{d_L^2} (1-N_{i,i})^2 \leq 2 d_L^2 \left(\sum_{l| l \neq 0} \chi_{l,l} \right)^2 .
    \end{equation}
    When we substitute into equation (\ref{eq: f equation}) and use Lemma \ref{lemma:off-diagonal} for the off-diagonal terms, we have the following bound the diamond norm distance from identity:
    \begin{equation}
        \label{eq: Diamond norm chi matrix bound}
        D_\Diamond (N)^2 \leq \frac{d_L^2}{4} \left( \sum_{i,j|i \neq j} \chi_{i,j}^2 \right) + \frac{d_L^2}{2} \left( \sum_{l \neq 0} \chi_{l,l} \right)^2 .
    \end{equation}
    Finally, the average infidelity $r$ is given by
    \begin{equation}
        \label{eq: infidelity identity}
        r = \frac{d_L}{d_L+1}(1-\chi_{0,0}) = \frac{d_L}{d_L+1} \sum_{l \neq 0} \chi_{l,l}
    \end{equation}
    in the chi matrix representation. Equation (\ref{eq: diamond distance lemma statement}) follows.
\end{proof}

\end{document}